\newcommand{\M}{\mathcal{M}}
\newcommand{\N}{\mathcal{N}}
\newcommand{\mH}{\mathcal{H}}
\DeclareMathOperator{\tr}{tr}
\DeclareMathOperator{\Tr}{Tr}
\title{One-shot holography}
\author[1]{Chris Akers,}
\author[1,2]{Adam Levine,}
\author[2,3]{Geoff Penington,}
\author[3]{Elizabeth Wildenhain}
\affiliation[1]{Center for Theoretical Physics,\\
Massachusetts Institute of Technology, Cambridge, MA 02139, USA}
\affiliation[2]{Institute for Advanced Study,\\ Princeton, NJ 08540 USA}
\affiliation[3]{Center for Theoretical Physics,\\ University of California, Berkeley, CA 94720 USA}
\emailAdd{cakers@mit.edu}
\emailAdd{arlevine@mit.edu}
\emailAdd{geoffp@berkeley.edu}
\emailAdd{elizabeth\_wildenhain@berkeley.edu}
\abstract{Following the work of \cite{Akers:2020pmf}, we define a generally covariant  \emph{max-entanglement wedge} of a boundary region $B$, which we conjecture to be the bulk region reconstructible from $B$.
We similarly define a covariant \emph{min-entanglement wedge}, which we conjecture to be the bulk region that can influence the state on $B$. 
We prove that the min- and max-entanglement wedges obey various properties necessary for this conjecture, such as nesting, inclusion of the causal wedge, and a reduction to the usual quantum extremal surface prescription in the appropriate special cases. 
These proofs rely on one-shot versions of the (restricted) quantum focusing conjecture (QFC) that we conjecture to hold. We argue that these QFCs imply a one-shot generalized second law (GSL) and quantum Bousso bound.
Moreover, in a particular semiclassical limit we prove this one-shot GSL directly using algebraic techniques.
Finally, in order to derive our results, we extend both the frameworks of one-shot quantum Shannon theory and state-specific reconstruction to finite-dimensional von Neumann algebras, allowing nontrivial centers.}
\begin{document}
\maketitle

\section{Introduction}

In AdS/CFT, the entanglement wedge $\text{EW}(B)$ of a boundary region $B$ is a bulk region $b$ such that  \cite{Ryu:2006bv, Hubeny:2007xt, Wall:2012uf, Lewkowycz:2013nqa, Faulkner:2013ana, Engelhardt:2014gca, Jafferis:2015del, Dong:2016eik,   Harlow:2016vwg, Cotler:2017erl, Hayden:2018khn, Faulkner:2020aa, Akers:2021fut, Kang:2021aa, Faulkner:2022aa}
\begin{align*}
	&\text{1.~~All information within $b$ can be reconstructed from $B$,}
	\\& \text{2.~~No information outside $b$ can be reconstructed from $B$.} 
\end{align*}
In this sense, $\text{EW}(B)$ is holographically dual to $B$. 
Whether a given $b$ satisfies each condition depends on the state, and it was shown in \cite{Akers:2020pmf} that there are many semiclassical gravity states for which no bulk region simultaneously satisfies both. 
For such states, therefore, no entanglement wedge exists.\footnote{One example is the following. Consider an AdS-size black hole, and let $B$ be a spherical region that is 60\% of the boundary. If the black hole is in a pure state $\rho_\mathrm{pure}$, $\text{EW}(B)$ includes the black hole. If it is in a thermal state $\rho_\mathrm{therm}$, then $\text{EW}(B)$ excludes the black hole. However, if the black hole is in a mixture $\frac{1}{2} \rho_\mathrm{pure} + \frac{1}{2} \rho_\mathrm{therm}$, then $\text{EW}(B)$ does not exist. $B$ has partial information about the black hole.}

When EW$(B)$ does exist, however, one can find it using the following well-known ``quantum extremal surface'' (QES) prescription.
Consider all bulk regions $b$ with conformal boundary $B$. 
To each $b$ assign the generalized entropy
\begin{equation}\label{eq:sgen}
	S_\mathrm{gen}(b) := \frac{A(\eth b)}{4G} + S(b)~,
\end{equation}
where $A(\eth b)$ is the area of the edge $\eth b$ of $b$, and $S(b)$ is the von Neumann entropy of quantum fields in $b$. The region $b$ is said to be quantum extremal -- and its edge $\eth b$ is called a quantum extremal surface -- if $S_\mathrm{gen}(b)$ is unchanged at linear order under local deformations of $\eth b$. 
The entanglement wedge $\text{EW}(B)$ is the quantum extremal region $b$ with minimal generalized entropy. The QES prescription further says that the boundary entanglement entropy is then given by
\begin{align} \label{eq:qes}
  S(B) = S_\mathrm{gen}(\mathrm{EW}(B)).
\end{align}

For states where no region satisfies both conditions 1 and 2, there is no entanglement wedge for the QES prescription to find, and the region $b$ found by it has no operational significance.  Still, one might hope to classify the regions satisfying condition 1 and 2 separately by similar prescriptions.
Exactly this was proposed in \cite{Akers:2020pmf}. The largest region $b$ satisfying condition 1 was conjectured to be a region named the \emph{max-entanglement wedge} (max-EW).
Meanwhile, the smallest $b$ satisfying condition 2 was conjectured to be a different region named the \emph{min-entanglement wedge} (min-EW).
Both regions were defined using prescriptions analogous to the QES prescription. 
Indeed, it was shown that whenever the min-EW and max-EW coincided -- and hence an entanglement wedge satisfying both conditions existed -- the max-EW and min-EW always agreed with the region found by the traditional QES prescription. It is only when this occurs that the formula \eqref{eq:qes} for the entanglement entropy $S(B)$ is correct (even as a leading-order semiclassical approximation).\footnote{See also \cite{Wang:2021ptw, Cheng:2022ori, Wang:2022ots} for additional discussion.}

The definitions of max- and min-EWs given in \cite{Akers:2020pmf}, however, were valid only for two special classes of spacetime. 
The first was spacetimes with a moment of time-reflection symmetry, for example static spacetimes. 
The second was spacetimes where all but two quantum extremal surfaces could be neglected in replica trick computations.
In this work, we propose generally covariant definitions of the max- and min-EWs that are  applicable in any spacetime, thus significantly extending the conjecture of \cite{Akers:2020pmf}.

To define the min- and max-EWs more precisely, we must first review some ideas from ``one-shot quantum Shannon theory'', which lie at the heart of our conjectures. Traditional (non-one-shot) quantum Shannon theory quantifies the information in a quantum state by studying tasks involving an infinite number of copies of the same state (often referred to as ``the asymptotic i.i.d. limit''). 
Consider for example the communication task of quantum state merging, which will be important for us. 
The goal is to extract all information in a system $AB$ given access only to the subsystem $B$, along with a minimal number of additional qubits containing information about $A$.\footnote{One also has access to unlimited classical bits (or more generally zero-bits) containing information about $A$.} 
When merging a large number of copies of the same quantum state, the minimal number of qubits required, per copy, is given by the conditional von Neumann entropy $S(AB) - S(B)$ \cite{horodecki2007quantum}.\footnote{Note that this conditional entropy may be negative! Bell pairs shared between $A$ and $B$ act as a resource that can be used to teleport other qubits via the free classical information. When negative, $S(AB) - S(B)$ counts how many such Bell pairs can be recovered from the state.} On the other hand, the number of qubits required to merge a single copy of the state, up to errors set by some small $\varepsilon$, is given by a different, one-shot entropic quantity called the smooth conditional max-entropy $H_\mathrm{max}^{\varepsilon}(AB|B)$ \cite{berta2009single}. (We give a formal definition of this quantity in Section \ref{section:OSreview}.)

A rough definition of $\text{max-EW}(B)$ is that it is the largest region $b_1$ such that all information in $b_1$ can flow to $B$ through some Cauchy slice of $b_1$ via one-shot quantum state merging.\footnote{As we shall see, for this prescription to make sense one must additionally require that $b_1$ be (max-)antinormal.} By this we mean that every subregion $b_2 \subseteq b_1$ with edge in that Cauchy slice satisfies
\begin{align}
H_\mathrm{max}^{\varepsilon}(b_1 | b_2) < \frac{A(\eth b_2)}{4G} - \frac{A(\eth b_1)}{4G}~.
\end{align}
Similarly, $\text{min-EW}(B)$ is roughly the smallest region $b_3$ such that all information outside $b_3$ can flow through some Cauchy slice to the complementary boundary subregion $B'$ via one-shot quantum state-merging.

Notably, the distinction between these new definitions and the QES prescription comes entirely from the difference between one-shot and traditional quantum state merging. If traditional state merging through a Cauchy slice was sufficient to allow bulk reconstruction, the max- and min-EWs would always be the same and the traditional QES prescription would always be valid. 
However, it is instead \emph{one-shot} quantum Shannon theory that determines whether bulk information is accessible from a boundary subregion. This is perhaps unsurprising, because the holographic (bulk-to-boundary) map acts only on a single copy of the bulk state.

Having defined the min- and max-EW, we then corroborate their conjectured interpretations by proving that they satisfy a number of important properties. 
First, whenever the min- and max-EW coincide, we show that they match the traditional QES prescription for the entanglement wedge.
Second, they limit to (a minor modification of) the definitions of \cite{Akers:2020pmf} in the appropriate special cases. 
Finally, we show that they satisfy important consistency checks, such as nesting: $\text{max-EW}(B_1) \supseteq \text{max-EW}(B_2)$ if $B_1 \supseteq B_2$.

To prove these results, we assume the validity of two new conjectures, closely related to the ``quantum focusing conjecture'' (QFC) of \cite{Bousso:2015mna}, which we call the min-QFC and max-QFC.
Like the original QFC, these min- and max-QFCs imply many interesting results of independent interest.

The structure of the paper is as follows. In Section \ref{sec:oneshot}, we briefly review definitions from one-shot quantum Shannon theory and then generalize them to finite-dimensional von Neumann algebras. In Section \ref{section:genminmaxentropies}, we apply those ideas to quantum gravity to define ``generalized min- and max-entropies'' that combine one-shot bulk entropies with area term contributions. In Section \ref{section:OSqmexpansionfocusing}, we define one-shot quantum expansions and conjecture one-shot versions of the quantum focusing conjecture. In Section \ref{section:covariantminmaxEW}, we propose our definition of the max- and min-EWs and establish various properties for them. In Section \ref{sec:continuum}, we explain how one-shot generalized entropies are concretely realized in the recently discovered Type II von Neumann algebras describing semiclassical black holes. In Section \ref{section:discussion}, we discuss the conceptual significance of our results along with open questions. Finally, in appendices, we prove various technical results about one-shot quantum Shannon theory for algebras and give a definition of state-specific reconstruction for algebras with centers, generalizing earlier work in \cite{Akers:2021fut}.

\section{One-shot entropies for algebras} \label{sec:oneshot}

It is our goal to discuss the one-shot quantum Shannon theory of subregions in semiclassical gravity.
In this section we take the first step.
In Section \ref{sec:os_past}, we briefly review the main definitions from one-shot quantum Shannon theory in the traditional setting of a tensor product factorization of Hilbert space. (For a gentler introduction for a quantum gravity audience see \cite{Akers:2020pmf}. For a thorough treatment see \cite{tomamichel2013framework}, and also \cite{renner2004smooth, renner2004universally, renner2005security, konig2009operational, Berta_2011, dupuis2014one, Vitanov2013ChainRF}.)
Then in Section \ref{sec:algebras} we generalize those definitions to (finite-dimensional) von Neumann algebras.

\subsection{Review: one-shot quantum Shannon theory}\label{sec:os_past}
\label{section:OSreview}

For all proofs of theorems in this subsection see \cite{tomamichel2013framework}.

\begin{defn}[Conditional entropies]\label{defn:conditional_entropies_factors}
Given a density matrix $\rho_{AB}$ on $\mathcal{H}_A \otimes \mathcal{H}_B$, the min-entropy, von Neumann entropy, and max-entropy of $AB$ conditioned on $B$ are
	\begin{align}
		H_\mathrm{min}(AB|B)_\rho &:= - \min_\sigma \inf\{ \lambda : \rho_{A} \le e^{\lambda} \mathds{1}_A \otimes \sigma_{B} \}~,\\
		S(AB|B)_\rho &:= -\Tr_{AB} [\rho_{AB} \log \rho_{AB}] + \Tr_B [\rho_B \log \rho_B]~, \\
		H_\mathrm{max}(AB|B)_\rho &:= \sup_\sigma \log \left( \Tr_A \left[\sqrt{\sigma_B^{1/2} \rho_{AB}  \sigma_B^{1/2}} \right]\right)^2~,
	\end{align}
	where $\rho_B = \Tr_A [\rho_{AB}]$, $\mathds{1}_A$ is the identity operator on $\mathcal{H}_A$, and the minimization and supremum are taken over all sub-normalized density matrices $\sigma_B$ on $\mathcal{H}_B$. 
\end{defn}

The (conditional) min-entropy and max-entropy are sometimes called the (conditional) one-shot entropies.

\begin{rem}
	The terminology and notation used in Definition \ref{defn:conditional_entropies_factors} is non-standard. 
    More commonly, one would refer for example to the conditional von Neumann entropy of $A$ conditioned on $B$ as $$S(A|B) = S(AB) - S(B).$$ with $S(C) = -\Tr (\rho_C \log \rho_C)$.
    Similar notation is also standard for the conditional min- and max-entropies.
    However our choice of notation will be convenient later in the algebraic context where there is no analogue of the subsystem $A$ independent of $B$.
\end{rem}

\begin{rem}
	In the special case that $\mathcal{H}_B$ is trivial, we write $H_\mathrm{min}(A)_\rho$, $S(A)_\rho$, and $H_\mathrm{max}(A)_\rho$ and call them the (unconditional) min-entropy, von Neumann entropy, and max-entropy respectively. 
\end{rem}

\begin{rem}\label{rem:one-shot_cond_not_difference}
    While the conditional von Neumann entropy equals the difference of two unconditional von Neumann entropies, in general the conditional one-shot entropies do not.
    Instead, they are bounded by such differences via the chain rule inequality, Theorem \ref{thm:chainrule} below. 
\end{rem}

It is often useful to allow for small errors, and for this one defines the smooth one-shot entropies.
Let $\mathcal{P}_{\le}(AB)$ denote the set of density matrices on $\mathcal{H}_A \otimes \mathcal{H}_B$ with trace less than or equal to $1$.

\begin{defn}[Purified distance]
	Let $\rho, \sigma \in \mathcal{P}_{\le}(AB)$. The purified distance between $\rho$ and $\sigma$ is
	\begin{equation}\label{eq:purified_distance_factor}
		P(\rho,\sigma) := \sqrt{1 - F_{*}(\rho,\sigma)^2}~,
	\end{equation}
	where $F_{*}(\rho,\sigma)$ is the generalized fidelity between $\rho$ and $\sigma$, defined as
	\begin{equation}
		F_{*}(\rho,\sigma) := F(\rho,\sigma) + \sqrt{(1 - \Tr[\rho])(1-\Tr[\sigma])}~,
	\end{equation} 
	and $F(\rho,\sigma) := \lVert \sqrt{\rho}\sqrt{\sigma} \rVert_1$ is the (standard) fidelity, with $\lVert X \rVert_1 := \Tr \sqrt{ X^\dagger X }$.
\end{defn}

\begin{defn}[Smooth conditional one-shot entropies]
    Let $\rho_{AB}$ be a normalized density matrix on $\mathcal{H}_A \otimes \mathcal{H}_B$, and let $\varepsilon > 0$.
	The smooth conditional min-entropy and max-entropy are
	\begin{align}
		H_\mathrm{min}^\varepsilon(AB|B)_\rho &:= \sup_{\rho^\varepsilon \in \mathcal{P}_{\le}(AB), P(\rho^\varepsilon,\rho) \le \varepsilon} H_\mathrm{min}(AB|B)_{\rho^\varepsilon}~,\\
		H_\mathrm{max}^\varepsilon(AB|B)_\rho &:= \inf_{\rho^\varepsilon \in \mathcal{P}_{\le}(AB), P(\rho^\varepsilon,\rho) \le \varepsilon} H_\mathrm{max}(AB|B)_{\rho^\varepsilon}~.
	\end{align}
\end{defn}

These have the following important properties -- see \cite{tomamichel2013framework} for proofs. 
\begin{thm}[Duality between min- and max-entropies]\label{thm:duality_factor}
    For all $\ket{\psi} \in \mathcal{H}_A \otimes \mathcal{H}_B \otimes \mathcal{H}_C$,
	\begin{equation}
		H_\mathrm{min}(AB|B)_\psi = - H_\mathrm{max}(AC|C)_\psi~.
	\end{equation} 
	Furthermore, this continues to hold under smoothing:
	\begin{equation}
		H^\varepsilon_\mathrm{min}(AB|B)_\psi = - H_\mathrm{max}^\varepsilon(AC|C)_\psi~.
	\end{equation} 
\end{thm}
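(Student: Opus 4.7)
The plan is to prove the unsmoothed equality first by identifying both sides as optimal values of a pair of semidefinite programs related by duality, and then lift to the smoothed version using Uhlmann's theorem for sub-normalized purifications.

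\textbf{Unsmoothed case.} The first step is to rewrite the min-entropy as an SDP. Absorbing the scalar factor $e^{\lambda}$ into $\sigma_B$ by setting $\tau_B := e^{\lambda} \sigma_B$, the definition becomes
\begin{equation*}
e^{-H_\mathrm{min}(AB|B)_\psi} \;=\; \inf\bigl\{\Tr(\tau_B) \;:\; \tau_B \ge 0,\; \rho_{AB} \le \mathds{1}_A \otimes \tau_B\bigr\}.
\end{equation*}
This is a standard SDP; I would compute its Lagrangian dual. Introducing a positive semidefinite multiplier $X_{AB}$ for the operator inequality and eliminating $\tau_B$, the dual is
\begin{equation*}
\sup\bigl\{\Tr(\rho_{AB} X_{AB}) \;:\; X_{AB} \ge 0,\; \Tr_A X_{AB} \le \mathds{1}_B\bigr\}.
\end{equation*}
Strong duality holds because the primal is strictly feasible (take $\tau_B = (1+\epsilon)\|\rho_{AB}\|_\infty\, \mathds{1}_B$).

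\textbf{Matching to $H_\mathrm{max}$.} Next, apply Uhlmann's theorem to the max-entropy: $F(\rho_{AC}, \mathds{1}_A \otimes \sigma_C) = \max_{|\phi\rangle_{ABC}} |\langle \psi|\phi\rangle|$, where the max runs over purifications of $\mathds{1}_A \otimes \sigma_C$ into $\mathcal{H}_{ABC}$. Every such purification has the form $|\phi\rangle = (M_B \otimes \mathds{1}_{AC})|\Omega\rangle$ for a suitable maximally entangled state and operator $M_B$ with $\Tr(M_B^\dagger M_B \cdot \text{(something)})$ reproducing $\sigma_C$. Substituting and using the Schmidt-like relation $\sqrt{\rho_{AB}}$--$\sqrt{\rho_{AC}}$ inherent to the pure state $|\psi\rangle$, I would rewrite $|\langle \psi|\phi\rangle|^2$ as $\Tr(\rho_{AB} X_{AB})$ for $X_{AB} = M_B^\dagger(\cdots)M_B$, and identify the constraint $\sigma_C \le \mathds{1}_C$ (sub-normalization) with $\Tr_A X_{AB} \le \mathds{1}_B$. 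This matches the SDP dual above and establishes the unsmoothed identity.

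\textbf{Smoothed case.} Lift using Uhlmann for sub-normalized states: every sub-normalized $\bar\rho_{AB} \in \mathcal{P}_{\le}(AB)$ with $P(\bar\rho_{AB},\rho_{AB}) \le \varepsilon$ admits a sub-normalized purification $|\bar\psi\rangle_{ABC}$ with $P(\bar\psi,\psi) \le \varepsilon$; by monotonicity of purified distance under the partial trace on $B$, its reduction $\bar\rho_{AC}$ satisfies $P(\bar\rho_{AC},\rho_{AC}) \le \varepsilon$. Conversely, any $\bar\rho_{AC}$ in the $\varepsilon$-ball of $\rho_{AC}$ arises this way. Applying the unsmoothed duality to $|\bar\psi\rangle$ and taking the supremum over smoothings on the min-entropy side, which becomes an infimum on the max-entropy side with opposite sign, yields $H^\varepsilon_\mathrm{min}(AB|B)_\psi = -H^\varepsilon_\mathrm{max}(AC|C)_\psi$.

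\textbf{Main obstacle.} The delicate step is the SDP-matching in the unsmoothed case: while Lagrangian dualization is routine, carefully parametrizing purifications of $\mathds{1}_A \otimes \sigma_C$ so that the overlap $|\langle\psi|\phi\rangle|^2$ maps exactly onto $\Tr(\rho_{AB}X_{AB})$ requires tracking how the pure state $|\psi\rangle_{ABC}$ links $\sqrt{\rho_{AB}}$ on $AB$ to $\sqrt{\rho_{AC}}$ on $AC$ (essentially a transpose/antilinear-isomorphism bookkeeping). The smoothing step, by contrast, is a mechanical consequence of Uhlmann and the definition of $P$ via generalized fidelity, which was designed precisely to behave well for sub-normalized states.
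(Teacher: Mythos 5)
Your proposal is correct in outline but follows a genuinely different route from the paper's. The paper's own proof (Appendix A.1, which establishes the general algebraic version, Theorem \ref{thm:duality}, and specializes to the factor case) proceeds by embedding $H_\mathrm{min}$ and $H_\mathrm{max}$ into the one-parameter family of sandwiched R\'enyi conditional entropies $H_\alpha$, using the variational (H\"older-type) characterization of Schatten $p$-norms to write $H_\alpha(A|B)$ as an inf-sup of $\bra{\psi}\sigma_B^{(1-\alpha)/\alpha}\tau_{A'}^{(\alpha-1)/\alpha}\ket{\psi}$, swapping the optimizations by the minimax theorem, and reading off the symmetry under $(A,B,\alpha)\to(B',A',\beta)$ with $1/\alpha+1/\beta=2$; the case $\alpha=\infty$, $\beta=1/2$ then yields the min/max duality. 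You instead use the SDP dual of $2^{-H_\mathrm{min}(AB|B)}$ and match it to $F(\rho_{AC},\mathds{1}_A\otimes\sigma_C)^2$ via Uhlmann's theorem -- essentially the original Tomamichel--Colbeck--Renner argument. Your route is more elementary and operational; the paper's route buys two things you do not get for free: the full family of R\'enyi dualities, and, more importantly, applicability to von Neumann algebras with nontrivial centers and non-canonical traces, where the object $\mathds{1}_A\otimes\tau_B$ in your primal SDP has no direct analogue. For the tensor-product statement at hand, both are legitimate.

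One technical point you should not gloss over: both your Uhlmann matching in the unsmoothed step and your smoothing step assume that the relevant states can be purified on the \emph{given} complementary factor. A sub-normalized $\bar\rho_{AB}$ in the $\varepsilon$-ball of $\rho_{AB}$ need not admit a purification on $\mathcal{H}_{ABC}$ if its rank exceeds $\dim\mathcal{H}_C$ (e.g.\ when $C$ is trivial and $\bar\rho_{AB}$ is mixed), and likewise a purification of $\mathds{1}_A\otimes\sigma_C$ on $B$ need not exist if $\dim\mathcal{H}_B<d_A\,\mathrm{rank}\,\sigma_C$. The standard repair -- and the one the paper uses -- is to tensor on a reference system via a trace-preserving isometry and invoke invariance of the (smooth) conditional entropies under such embeddings (the analogue of Lemma \ref{lem:5.3} and the construction in Theorem \ref{thm:smoothedduality}). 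Without that lemma your chain of equalities only yields inequalities in one direction in the degenerate-dimension cases, so you should state and use the embedding-invariance explicitly.
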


\begin{rem}
	Theorem \ref{thm:duality_factor} is the ``one-shot version'' of the easily-verifiable equality 
	\begin{equation}
		S(AB|B)_\psi = -S(AC | C)_\psi~.
	\end{equation}
\end{rem}

\begin{thm}[Quantum asymptotic equipartition principle]
    Let $\rho_{AB}$ be a normalized density matrix on Hilbert space $\mathcal{H}_A \otimes \mathcal{H}_B$, and let $0 < \varepsilon < 1$.
    It holds that
    \begin{align}
        \lim_{n\to\infty} \frac{1}{n} H^\varepsilon_\mathrm{min}(A^n B^n |B^n)_{\rho^{\otimes n}} &= S(AB|B)_{\rho} = \lim_{n\to\infty} \frac{1}{n} H^\varepsilon_\mathrm{max}(A^n B^n|B^n)_{\rho^{\otimes n}}~,
    \end{align}
    where $A^n, B^n$ denote the union of each $A, B$ factor respectively from each of the $n$ copies.
\end{thm}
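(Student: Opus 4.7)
The plan is to reduce the two asymptotic equalities to a single one via duality, then sandwich the smooth conditional min-entropy between conditional sandwiched R\'enyi entropies of orders slightly above and below $1$, and finally exploit additivity of those R\'enyi quantities on i.i.d.\ states together with their continuity at order $1$.

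First I would apply Theorem \ref{thm:duality_factor}: purifying $\rho_{AB}$ to $\ket{\psi}_{ABC}$, so that $\ket{\psi}^{\otimes n}$ purifies $\rho_{AB}^{\otimes n}$, the smooth-duality identity gives
\begin{equation}
H^\varepsilon_\mathrm{max}(A^n B^n \,|\, B^n)_{\rho^{\otimes n}} \;=\; -H^\varepsilon_\mathrm{min}(A^n C^n \,|\, C^n)_{\psi^{\otimes n}},
\end{equation}
while the unsmoothed identity gives $S(AB|B)_\rho = -S(AC|C)_\psi$. So the max-entropy limit follows from the min-entropy limit applied to the complementary purification, and it suffices to prove the min-entropy equality for an arbitrary normalized bipartite state.

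The main input would then be the standard R\'enyi-sandwich bounds (see \cite{tomamichel2013framework}): for any normalized $\tau_{AB}$, any $\alpha > 1$, and any $\beta \in [1/2, 1)$,
\begin{equation}
\widetilde{H}_\alpha(AB|B)_\tau - g_\varepsilon(\alpha) \;\leq\; H^\varepsilon_\mathrm{min}(AB|B)_\tau \;\leq\; \widetilde{H}_\beta(AB|B)_\tau + g_\varepsilon(\beta),
\end{equation}
where $\widetilde{H}_\alpha(AB|B)_\tau := -\inf_{\sigma_B} \widetilde{D}_\alpha(\tau_{AB}\,\|\,\mathds{1}_A \otimes \sigma_B)$ is built from the sandwiched R\'enyi divergence $\widetilde{D}_\alpha$, and $g_\varepsilon(\alpha)$ is a $\tau$-independent constant with leading behaviour $\log(1/\varepsilon^2)/|\alpha - 1|$. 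Specializing to $\tau = \rho^{\otimes n}$ and using additivity $\widetilde{D}_\alpha(\rho^{\otimes n}\,\|\,\sigma^{\otimes n}) = n\,\widetilde{D}_\alpha(\rho\,\|\,\sigma)$ on product states, I would conclude
\begin{equation}
\widetilde{H}_\alpha(AB|B)_\rho - \frac{g_\varepsilon(\alpha)}{n} \;\leq\; \frac{1}{n}\, H^\varepsilon_\mathrm{min}(A^n B^n \,|\, B^n)_{\rho^{\otimes n}} \;\leq\; \widetilde{H}_\beta(AB|B)_\rho + \frac{g_\varepsilon(\beta)}{n}.
\end{equation}

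Taking $n \to \infty$ kills the $O(1/n)$ correction, and then taking $\alpha, \beta \to 1$ sends both flanks to $S(AB|B)_\rho$; this convergence follows from a Taylor expansion of $\widetilde{D}_\alpha(\rho\,\|\,\mathds{1}_A \otimes \sigma_B)$ about $\alpha = 1$, combined with the fact that the minimiser $\sigma_B$ converges to $\rho_B$, so that $\widetilde{H}_\alpha(AB|B)_\rho \to -S(\rho_{AB}\,\|\,\mathds{1}_A \otimes \rho_B) = S(AB|B)_\rho$. Squeezing yields the desired limit. The main obstacle is establishing the R\'enyi sandwich with the explicit $1/|\alpha - 1|$ dependence on $\varepsilon$: this requires a careful purified-distance argument trading the $D_\mathrm{max}$ appearing in the definition of $H_\mathrm{min}$ against a $\widetilde{D}_\alpha$ for $\alpha$ near $1$, paying a smoothing cost that remains finite as $n \to \infty$. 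Once those bounds are in hand, additivity on product states and the Taylor expansion at $\alpha = 1$ are comparatively routine.
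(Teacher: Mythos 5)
Your proposal is correct in outline and lives in the same circle of ideas as the paper's proof (given for the algebraic generalization in Appendix \ref{app:QAEP}): both directions hinge on interpolating between the one-shot entropies and conditional R\'enyi entropies of order near $1$, followed by additivity on i.i.d.\ states and continuity at $\alpha=1$. The differences are worth noting. First, the paper couples the R\'enyi index to $n$, taking $\alpha = 1 + O(1/\sqrt{n})$ in Theorem \ref{thm:Sinfty_on_ge_S}, which yields an explicit $O(1/\sqrt{n})$ convergence rate; your fixed-$\alpha$, iterated-limit argument proves the limit but gives no rate. Second, and more substantively, the two converses differ: you upper-bound $H^\varepsilon_\mathrm{min}$ directly by $\widetilde{H}_\beta$ with $\beta<1$, which forces you to show that the infimum over \emph{correlated} states $\sigma_{B^n}$ in $\widetilde{H}_\beta(A^nB^n|B^n)$ is attained (up to $n$-sublinear error) by a product state --- the nontrivial direction of additivity of the optimized conditional sandwiched R\'enyi entropy. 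This is true for $\beta\ge 1/2$ (it follows from the Sibson-type closed form for the optimizer), but it is a genuine extra input you should cite or prove. The paper's converse (Lemma \ref{lem:strong_min_le_max} plus duality) sidesteps this entirely: it only ever needs the trivial superadditivity direction, because the direct bounds (Proposition \ref{prop:Hmin_ge_Sinfty} and Theorem \ref{thm:aep_direct_bound}) are proved against the fixed product reference $\Psi_B^{\otimes n}$ rather than the optimizer. Third, the paper's key estimates (Lemmas \ref{lem:Sinfty_ge_Salpha} and \ref{lem:Salpha_ge_S}) are phrased for the Petz divergences $D_\alpha$ rather than the sandwiched ones; either family works here since both converge to the relative entropy at $\alpha=1$ and are additive on products. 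None of these differences is a gap in your argument --- the R\'enyi sandwich bounds with $g_\varepsilon(\alpha)\sim \log(1/\varepsilon^2)/|\alpha-1|$ that you defer are standard --- but the additivity point is the one step you should not wave through.
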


\begin{thm}[Chain rule]\label{thm:chainrule}
	Let $\rho_{ABC}$ be a normalized density matrix on Hilbert space $\mathcal{H}_A \otimes \mathcal{H}_B \otimes \mathcal{H}_C$.
    For $\varepsilon > 2\varepsilon' > 0$,
\begin{align}
   H^{\varepsilon}_\mathrm{min}(ABC|C)_\rho &\geq H^{\varepsilon'}_\mathrm{min}(ABC|BC)_\rho + H^{\varepsilon'}_\mathrm{min}(BC|C)_\rho + \mathcal{O}\left(\log \left(\frac{1}{\varepsilon - 2\varepsilon'}\right)\right)~, \\
   S(ABC|C)_\rho &= S(ABC|BC)_\rho + S(BC|C)_\rho~,\\
    H^{\varepsilon}_\mathrm{max}(ABC|C)_\rho &\leq H^{\varepsilon'}_\mathrm{max}(ABC|BC)_\rho + H^{\varepsilon'}_\mathrm{max}(BC|C)_\rho + \mathcal{O}\left(\log \left(\frac{1}{\varepsilon - 2\varepsilon'}\right)\right)~.
\end{align}
\end{thm}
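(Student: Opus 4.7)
Plan: The von Neumann line is immediate from the definition $S(XY|Y) := S(XY)-S(Y)$, which by telescoping gives $S(ABC|C) = [S(ABC)-S(BC)] + [S(BC)-S(C)] = S(ABC|BC) + S(BC|C)$. The real content lies in the one-shot inequalities. Purifying $\rho_{ABC}$ to some $\ket{\psi}_{ABCD}$ and applying Theorem \ref{thm:duality_factor}, the smoothed max chain rule on $(A,B,C)$ is equivalent, after relabeling, to the smoothed min chain rule on $(A,B,D)$: $H^{\varepsilon}_\mathrm{min}(ABC|C)_\rho = -H^{\varepsilon}_\mathrm{max}(ABD|D)_\psi$, and similarly for the two terms on the right. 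So it suffices to prove one of the two; I target the max version and dualize at the end.

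Step 1 (unsmoothed base case). Let $\sigma_{BC}$ and $\tau_C$ be sub-normalized optimizers achieving $H_\mathrm{max}(ABC|BC)_\rho$ and $H_\mathrm{max}(BC|C)_\rho$ in the variational formulas of Definition \ref{defn:conditional_entropies_factors}. An operator Cauchy--Schwarz manipulation on the quantity $\Tr_A \sqrt{\sigma_{BC}^{1/2} \rho_{ABC} \sigma_{BC}^{1/2}}$---the standard move that proves the Petz--R\'enyi-$1/2$ chain rule---combines them into a single sub-normalized witness on $C$ and yields the unsmoothed bound $H_\mathrm{max}(ABC|C)_\rho \leq H_\mathrm{max}(ABC|BC)_\rho + H_\mathrm{max}(BC|C)_\rho$ with no correction.

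Step 2 (smoothing, then duality). Let $\rho^{\varepsilon'}_1 \in \mathcal{P}_{\le}(ABC)$ and $\bar\rho^{\varepsilon'}_2 \in \mathcal{P}_{\le}(BC)$ be optimizers of the two smoothed entropies on the right. These live on different subsystems and need not be restrictions of a common state, so one must construct a single $\rho^{\varepsilon} \in \mathcal{P}_{\le}(ABC)$ within purified distance $\varepsilon$ of $\rho$ whose full state well-approximates $\rho^{\varepsilon'}_1$ and whose $BC$-marginal well-approximates $\bar\rho^{\varepsilon'}_2$. The triangle inequality for $P$ forces $\varepsilon \geq 2\varepsilon'$, and the operator-norm gap between the assembled witness and the true smoothing gives the $\mathcal{O}(\log(1/(\varepsilon-2\varepsilon')))$ correction via gentle-measurement-style bounds. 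Applying Theorem \ref{thm:duality_factor} to the purification $\ket{\psi}_{ABCD}$, reversing the inequality under the sign flip, and relabeling $D \to C$ then translates the max chain rule into the min chain rule on the complementary systems.

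Main obstacle: the gluing construction in Step 2. Independently chosen $\varepsilon'$-smoothings of two entropic quantities do not in general arise as restrictions of a common state in the purified-distance ball of radius $\varepsilon$ around $\rho$, and the natural constructions incur a loss that scales polynomially in $1/(\varepsilon-2\varepsilon')$. Extracting an overhead that is only \emph{logarithmic} in $1/(\varepsilon-2\varepsilon')$ is the technical heart of all smooth chain-rule arguments, and I would follow the dual-SDP-based gluing used in \cite{Vitanov2013ChainRF, tomamichel2013framework}, carefully tracking how the two approximants' operator inequalities can be absorbed into a single sub-normalized witness.
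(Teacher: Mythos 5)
Your high-level architecture is sound: the min and max inequalities are indeed dual to one another under purification (your bookkeeping with the fourth system $D$ checks out), and the condition $\varepsilon>2\varepsilon'$ does trace back to a triangle inequality for the purified distance. But Step 1 is false, and that is a genuine gap rather than a technicality. The exact unsmoothed inequality $H_\mathrm{max}(ABC|C)\le H_\mathrm{max}(ABC|BC)+H_\mathrm{max}(BC|C)$ does not hold; equivalently, by the very duality you set up, neither does $H_\mathrm{min}(ABC|C)\ge H_\mathrm{min}(ABC|BC)+H_\mathrm{min}(BC|C)$. A counterexample to the latter with $C$ trivial: take
\begin{equation*}
\rho_{AB}=\tfrac12\,\ket{\psi_0}\!\bra{\psi_0}_A\otimes\ket{0}\!\bra{0}_B+\tfrac{1}{2d_A}\,\mathds{1}_A\otimes\ket{1}\!\bra{1}_B~.
\end{equation*}
Then $H_\mathrm{min}(AB)=H_\mathrm{min}(B)=\log 2$, while choosing the optimizer $\sigma_B$ in Definition \ref{defn:conditional_entropies_factors} to weight the two $B$-branches unequally gives $H_\mathrm{min}(AB|B)=\log\frac{2d_A}{1+d_A}>0$ for $d_A\ge 2$, so $H_\mathrm{min}(AB)<H_\mathrm{min}(AB|B)+H_\mathrm{min}(B)$. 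The obstruction is precisely what your Step 1 glosses over: the two conditional entropies on the right-hand side are each witnessed by their own optimizer ($\sigma_{BC}$ and $\tau_C$), and the first optimizer is not the marginal that appears in the second, so there is no Cauchy--Schwarz step that composes them exactly.

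Consequently Step 2 misattributes where the $\mathcal{O}\bigl(\log(1/(\varepsilon-2\varepsilon'))\bigr)$ comes from. It is not the cost of gluing two independently smoothed states; it is the cost of a projector/truncation lemma (Lemma 21 of Tomamichel--Renner, reproduced as Lemma \ref{lem:D1} in Appendix \ref{app:chain_rule}, where this theorem is proved in its algebraic generalization) which says that after discarding an $\varepsilon$-small piece of the state one may replace the optimizer $\sigma_B$ in $H_\mathrm{min}(A|B)$ by the \emph{actual} marginal $\rho_B$, at a price of $\log(2/\varepsilon^2)$. Once that substitution is made, the two operator inequalities $\rho'_A\le e^{-\lambda_1}\rho_B$ and $\rho_B\le e^{-\lambda_2}\sigma_C$ compose trivially, and the remaining work (the operator $T_B$ relating purifications of the two smoothed marginals) only spends smoothing budget, not another logarithm. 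This is also why the proof in Appendix \ref{app:chain_rule} runs in the opposite direction from yours: it proves the \emph{min} version directly, because the min-entropy is defined by an operator inequality that can be chained, whereas the fidelity-based max-entropy admits no comparably simple exact composition; the max version is then obtained by duality. If you repair Step 1 by inserting the projector lemma, you will essentially have reproduced that argument.
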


\begin{thm}[Strong subadditivity]
    Let $\rho_{ABC}$ be a normalized density matrix on Hilbert space $\mathcal{H}_A \otimes \mathcal{H}_B \otimes \mathcal{H}_C$.
    For $\varepsilon \ge 0$, it holds that
    \begin{align}
   H^\varepsilon_\mathrm{min}(ABC|BC)_\rho &\leq H^\varepsilon_\mathrm{min}(AB|B)_\rho~, \\
   S(ABC|BC)_\rho &\le S(AB|B)_\rho~,\\
    H^\varepsilon_\mathrm{max}(ABC|BC)_\rho &\leq H^\varepsilon_\mathrm{max}(AB|B)_\rho~.
\end{align}
\end{thm}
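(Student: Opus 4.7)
The plan is to establish each inequality by using the partial trace $\Tr_C$ as a CPTP map sending the conditioning system $BC$ to $B$. The von Neumann case is the standard Lieb--Ruskai strong subadditivity, which I would invoke as a black box. For the one-shot entropies, I would prove the min-entropy version directly from the variational definition, extend it to the smoothed case using monotonicity of the purified distance, and then obtain the max-entropy version via the smoothed duality of Theorem \ref{thm:duality_factor}.

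For the unsmoothed min-entropy inequality, suppose $\sigma_{BC}$ is a subnormalized state on $\mathcal{H}_B \otimes \mathcal{H}_C$ and $\lambda \in \mathbb{R}$ satisfy $\rho_{ABC} \le e^\lambda\, \mathds{1}_A \otimes \sigma_{BC}$. Applying the positive map $\Tr_C$ to both sides yields $\rho_{AB} \le e^\lambda\, \mathds{1}_A \otimes \sigma_B$, where $\sigma_B := \Tr_C \sigma_{BC}$ is still subnormalized. Hence every pair $(\sigma_{BC},\lambda)$ feasible in the definition of $H_\mathrm{min}(ABC|BC)_\rho$ induces a feasible pair for $H_\mathrm{min}(AB|B)_\rho$, so the infimum attainable on the right side is no larger than the one on the left. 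Translating back through the minus sign gives $H_\mathrm{min}(ABC|BC)_\rho \le H_\mathrm{min}(AB|B)_\rho$.

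To upgrade this to the smoothed statement, let $\tilde\rho_{ABC}$ attain the supremum in $H^\varepsilon_\mathrm{min}(ABC|BC)_\rho$, so $P(\tilde\rho_{ABC},\rho_{ABC})\le \varepsilon$. By monotonicity of the purified distance under the CPTP map $\Tr_C$, also $P(\tilde\rho_{AB},\rho_{AB}) \le \varepsilon$, so $\tilde\rho_{AB}$ is an admissible smoothing of $\rho_{AB}$. Applying the unsmoothed inequality to $\tilde\rho$ and combining with the definition of the smooth min-entropy gives
\begin{equation*}
H^\varepsilon_\mathrm{min}(ABC|BC)_\rho = H_\mathrm{min}(ABC|BC)_{\tilde\rho} \le H_\mathrm{min}(AB|B)_{\tilde\rho} \le H^\varepsilon_\mathrm{min}(AB|B)_\rho.
\end{equation*}

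For the max-entropy inequality, I would purify $\rho_{ABC}$ to $\ket{\psi}_{ABCD}$ and apply Theorem \ref{thm:duality_factor} twice:
\begin{align*}
H^\varepsilon_\mathrm{max}(ABC|BC)_\rho &= -H^\varepsilon_\mathrm{min}(AD|D)_\psi, \\
H^\varepsilon_\mathrm{max}(AB|B)_\rho &= -H^\varepsilon_\mathrm{min}(ACD|CD)_\psi.
\end{align*}
The claimed inequality then becomes $H^\varepsilon_\mathrm{min}(ACD|CD)_\psi \le H^\varepsilon_\mathrm{min}(AD|D)_\psi$, which is the min-entropy version already proved, with $C$ now playing the role of the extra conditioning subsystem. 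The only real obstacle is the bookkeeping around the smoothing: one must verify that the suprema in the smooth entropy definitions are attained (compactness of the smoothing ball in finite dimensions) and that partial trace maps the $\varepsilon$-ball of $\rho_{ABC}$ into the $\varepsilon$-ball of $\rho_{AB}$. Both are standard consequences of the properties of purified distance established in \cite{tomamichel2013framework}, so the proof reduces to the data-processing observation at the heart of the min-entropy step.
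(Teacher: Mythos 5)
Your proof is correct and follows essentially the same route as the paper's own argument (given for the algebraic generalization in Appendix B.5): positivity of the partial trace for the unsmoothed min-entropy, monotonicity of the purified distance to handle smoothing, and duality to transfer the result to the max-entropy. The only cosmetic difference is that you invoke Lieb--Ruskai strong subadditivity directly for the von Neumann case where the paper derives it from monotonicity of the relative entropy under CPTP maps, which is an equivalent fact.
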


\subsection{One-shot entropies for von Neumann algebras}
\label{sec:algebras}

We now generalize the statements of one-shot quantum Shannon theory to finite-dimensional von Neumann algebras, possibly with non-trivial center. 
This requires us to handle a number of additional subtleties, including an ambiguity in the trace which will be important in gravity.

Although we restrict to finite-dimensional algebras here for simplicity (and because the subtleties of von Neumann algebras in infinite-dimensions are not very important for our purposes), we expect that our framework generalizes straightforwardly to any finite von Neumann algebras (including e.g. Type II$_1$ algebras) and that large parts generalize to any semifinite algebra. We will briefly discuss how our results are related to the semifinite Type II$_\infty$ algebras that describe black holes in the semiclassical $G \to 0$ limit in Section \ref{sec:continuum}.

Our presentation here will be self-contained, although closely related ideas have previously appeared in the literature. In particular, a related but different definition of conditional one-shot entropies for von Neumann algberas was considered in \cite{Berta:2016aa}, which restricted to algebras of the form $\M_{AB} = \mathcal{B}(\mH_A) \otimes \M_B$, for a general Hilbert space $\mathcal{H}_A$ and general von Neumann algebra $\M_B$, where $\mathcal{B}(\mH_A)$ denotes the algebra of bounded operators. In contrast, here we let $AB$ be associated with a finite-dimensional algebra which does not necessarily factorize between $A$ and $B$. Indeed, we avoid talking about the analog of $AB \setminus B$ at all, because in our applications it is not necessarily associated to an algebra. In line with this, our notation starting in this subsection is to denote the joint algebra as simply $A$. Additionally, entropic certainty relations closely related to duality (Theorem \ref{thm:duality}) were proven for von Neumann algebras in \cite{Hollands:2020aa} and an asymptotic equipartition principle (Theorem \ref{thm:QAEP}) was proven for the max-relative entropy in any von Neumann algebra in \cite{Fawzi:2022zqz}.

We use the following notation.
Let $\mathcal{L}(\mathcal{H})$ denote the set of linear operators acting on a Hilbert space $\mathcal{H}$.
For a von Neumann algebra $\mathcal{M} \subseteq \mathcal{L}(\mathcal{H})$, let $\mathcal{M}'$ denote its commutant, the subset of $\mathcal{L}(\mathcal{H})$ that commutes with $\mathcal{M}$. 
Let $Z(\mathcal{M}) = \mathcal{M} \cap \mathcal{M}'$ denote its center.
$\mathcal{M}$ is called a factor if $Z(\mathcal{M})$ is trivial, meaning it contains only multiples of the identity operator.

Recall the following theorem.
\begin{thm}[Structure theorem of finite-dimensional algebras (Theorem A.6 of \cite{Harlow:2016vwg})]\label{thm:classification}
Let $\mathcal{M}_A$ be a von Neumann algebra acting on $\mathcal{H}$ and let $\mathrm{dim}\, \mathcal{H} <\infty$. 
Then there is a direct sum decomposition $\mathcal{H} = \oplus_{\alpha} \left( \mathcal{H}_{A_{\alpha}} \otimes \mathcal{H}_{A'_{\alpha}}\right)$ such that 
\begin{align}\label{eqn:alphasectordecomp}
    &\mathcal{M}_A = \bigoplus_{\alpha} \left( \mathcal{L}(\mathcal{H}_{A_{\alpha}}) \otimes \mathds{1}_{A'_{\alpha}} \right), \nonumber \\
    &\mathcal{M}'_A = \bigoplus_{\alpha} \left(\mathds{1}_{A_{\alpha}} \otimes \mathcal{L}(\mathcal{H}_{A'_{\alpha}})  \right).
\end{align}
\end{thm}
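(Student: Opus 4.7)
The plan is to reduce the general case to that of a factor by decomposing over the center, and then prove the factor case by constructing a canonical system of matrix units.

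First, I would decompose over the center. In finite dimensions $Z(\M_A) = \M_A \cap \M_A'$ is a commutative unital $*$-algebra acting on $\mH$, hence is spanned by a unique family of orthogonal minimal projections $\{p_\alpha\} \subset Z(\M_A)$ with $\sum_\alpha p_\alpha = \mathds{1}$. Setting $\mH_\alpha := p_\alpha \mH$ gives the orthogonal direct sum $\mH = \oplus_\alpha \mH_\alpha$. Because each $p_\alpha$ lies in both $\M_A$ and $\M_A'$, both algebras split as direct sums of their compressions to these blocks. Writing $\N_\alpha := p_\alpha \M_A p_\alpha \subseteq \mathcal{L}(\mH_\alpha)$, a short argument shows $\N_\alpha$ is a factor on $\mH_\alpha$: any element of $Z(\N_\alpha)$, extended by zero on the remaining blocks, lies in $Z(\M_A)$, whose only elements proportional to $p_\alpha$ within that block are scalar multiples.

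Second, for a factor $\N$ on a finite-dimensional Hilbert space $\mH_\alpha$, I would build matrix units. Choose a maximal family $\{e_1,\ldots,e_n\}$ of pairwise orthogonal nonzero minimal projections in $\N$ summing to $\mathds{1}$. Because $\N$ is a factor, any two minimal projections are Murray--von Neumann equivalent within $\N$, so one can pick partial isometries $v_i \in \N$ with $v_i^* v_i = e_1$ and $v_i v_i^* = e_i$ (with $v_1 = e_1$), and set $e_{ij} := v_i v_j^* \in \N$. These satisfy $e_{ij} e_{kl} = \delta_{jk} e_{il}$, $e_{ij}^* = e_{ji}$, and $\sum_i e_{ii} = \mathds{1}$. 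The key lemma is that minimality of $e_1$ forces $e_1 \N e_1 = \mathbb{C} e_1$, so every $m \in \N$ decomposes as $m = \sum_{i,j} v_i (v_i^* m v_j) v_j^* = \sum_{i,j} c_{ij}(m)\, e_{ij}$ for scalars $c_{ij}(m)$. Hence $\N \cong M_n(\mathbb{C})$ as a $*$-algebra.

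Third, I would produce the tensor factorization. Fix an orthonormal basis $\{\xi_k\}$ of $e_1 \mH_\alpha$ and define $\eta_{ik} := v_i \xi_k$. Using $v_j^* v_i = \delta_{ij} e_1$ and $\sum_i v_i v_i^* = \mathds{1}$, the set $\{\eta_{ik}\}$ is an orthonormal basis of $\mH_\alpha$. Identifying $\mH_{A_\alpha} := \mathbb{C}^n$ with orthonormal basis $\{|i\rangle\}$ and $\mH_{A'_\alpha} := e_1 \mH_\alpha$, the unitary $U: \mH_\alpha \to \mH_{A_\alpha} \otimes \mH_{A'_\alpha}$ sending $\eta_{ik} \mapsto |i\rangle \otimes \xi_k$ conjugates each $e_{ij}$ into $|i\rangle\langle j| \otimes \mathds{1}_{A'_\alpha}$, so $U \N U^* = \mathcal{L}(\mH_{A_\alpha}) \otimes \mathds{1}_{A'_\alpha}$. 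Summing over blocks gives the claimed form of $\M_A$, and the commutant formula follows from the standard finite-dimensional identity $(\mathcal{L}(\mH_{A_\alpha}) \otimes \mathds{1}_{A'_\alpha})' = \mathds{1}_{A_\alpha} \otimes \mathcal{L}(\mH_{A'_\alpha})$ applied block by block.

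The main obstacle is the lemma that $e_1 \N e_1 = \mathbb{C} e_1$ for a minimal projection $e_1$ in a factor. This can be dispatched by spectral theory: any self-adjoint $T \in e_1 \N e_1$ has spectral projections lying in $\N$ and bounded above by $e_1$, which by minimality must equal $0$ or $e_1$, forcing $T \in \mathbb{R} e_1$. The existence of the equivalences $v_i$ follows from the comparison theorem for factors; in finite dimensions it admits a more elementary proof via polar decomposition of a nonzero element of $e_i \N e_1$, which exists because in a factor the two-sided ideal generated by any nonzero projection is all of $\N$.
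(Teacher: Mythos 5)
The paper does not actually prove this statement: it is imported verbatim as Theorem A.6 of \cite{Harlow:2016vwg} and used as a black box, so there is no in-paper proof to compare against. Your argument is the standard Wedderburn-style proof --- reduce to factors by cutting with the minimal central projections, then build a system of matrix units --- and it is correct and essentially complete: the observation that an element of $Z(\N_\alpha)$ extended by zero lies in $Z(\M_A)$ cleanly shows each block is a factor; the lemma $e_1 \N e_1 = \mathbb{C}\, e_1$ via spectral projections is right; and the basis $\{v_i \xi_k\}$ does implement the unitary identifying $\N$ with $\mathcal{L}(\mH_{A_\alpha}) \otimes \mathds{1}_{A'_\alpha}$. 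Two small tightening remarks. First, your justification that $e_i \N e_1 \neq 0$ via ``the two-sided ideal generated by a nonzero projection is all of $\N$'' risks circularity if that ideal statement is itself derived from simplicity of $M_n(\mathbb{C})$; the non-circular route is to note that the projection $q$ onto $\overline{\N e_1 \mH_\alpha}$ lies in $\N \cap \N' = \mathbb{C}\,\mathds{1}$ (the subspace is invariant under both $\N$ and $\N'$), hence $q = \mathds{1}$, and $e_i \N e_1 = 0$ would force $e_i q = 0$. Second, in the last step the commutant formula for the direct sum requires the one-line observation that any element of $\M_A'$ commutes with the central projections $p_\alpha \in \M_A$ and is therefore block-diagonal, after which the per-block identity $\left(\mathcal{L}(\mH_{A_\alpha}) \otimes \mathds{1}\right)' = \mathds{1} \otimes \mathcal{L}(\mH_{A'_\alpha})$ applies. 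Neither point is a gap in substance; with those two sentences added your proof stands on its own and supplies exactly what the paper leaves to the citation.
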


\begin{rem}
    From now on we will let \emph{algebra} denote finite-dimensional von Neumann algebra unless otherwise stated.
\end{rem}

Given operators $p, q \in \mathcal{L}(\mathcal{H})$, we say $q \le p$ if $p - q$ is a positive semidefinite operator.

\begin{defn}[Minimal central projector]\label{def:minimalcentralproj}
An operator $p \in \mathcal{L}(\mathcal{H})$ is a projector if $p^\dagger = p$ and $p^2 = p$. 
An operator $p \in Z(\mathcal{M})$ is a \emph{minimal central projector} if it is a projector and for any projector $q \in Z(\mathcal{M})$ we have $q \le p$ if and only if $q = 0$ or $q = p$.
\end{defn}

\begin{rem}
    Let $\mathcal{M}$ be an algebra on $\mathcal{H}$.
    With respect to the decomposition of $\mathcal{H}$ in Theorem \ref{thm:classification}, each minimal central projector in $\mathcal{M}$ projects onto a single $\alpha$-sector in \eqref{eqn:alphasectordecomp}, and hence we will call them $p_\alpha$:
    \begin{equation}
        p_\alpha \mathcal{H} = \mathcal{H}_{A_{\alpha}} \otimes \mathcal{H}_{A'_{\alpha}}~.
    \end{equation}
    Note these $p_\alpha$ satisfy $p_{\alpha}p_{\beta} = p_{\alpha} \delta_{\alpha \beta}$ and $\sum_{\alpha} p_{\alpha}= \mathbb{1}$.
\end{rem}

\begin{rem}\label{rem:C_expansion}
Any operator $C \in Z(\mathcal{M})$ can be expanded as $C = \sum_{\alpha} C_{\alpha} p_{\alpha}$ with $C_{\alpha} \in \mathbb{C}$.
\end{rem}

We now introduce the general notion of a trace that will play an important role.
\begin{defn}[Trace]\label{defn:trace}
	Let $\mathcal{M}_A$ be an algebra on a Hilbert space $\mathcal{H}$.
	A map $\tr_{A}: \mathcal{M}_A \to \mathbb{C}$ is said to be a trace on $\mathcal{M}_A$ if for all non-zero $m_1, m_2 \in \mathcal{M}_A$, 
	\begin{align}
		\tr_A[m_1 m_2] &= \tr_A[m_2 m_1]~,\\
		\tr_A[m_1 m_1^\dagger] &> 0~.
	\end{align}
\end{defn}

\begin{rem}\label{rem:trace}
Given a trace $\tr_A$ on $\mathcal{M}_A$, then the linear functional $\tr'_A$ is a trace if and only if there exists a positive invertible central operator $C \in Z(\mathcal{M}_A)$ such that \begin{align}
\tr'_A[\cdot] = \tr_A \left[C (\ \cdot\  )\right].
\end{align} 
\end{rem}

\begin{rem}\label{rem:upperlowercase}
We will distinguish between the trace on a Hilbert space and a trace on an algebra, denoting the former by upper-case, $\text{Tr}$, and the latter by lower-case, $\tr$.
\end{rem}
\begin{rem}\label{rem:tracesectors}
By Remark \ref{rem:C_expansion}, we can relate any algebraic trace on $\mathcal{M}_A$ to the Hilbert space trace on each sector $\mathcal{H}_{A_{\alpha}}$ by
\begin{align}
    \tr_A \left( \cdot \right) = \sum_{\alpha} C^A_{\alpha} \,\text{Tr}_{A_\alpha} \left[p_{\alpha}\ (\cdot)\ p_{\alpha}\right].
\end{align}
for some set of coefficients $C^A_{\alpha}>0$ that can be computed as
\begin{align}\label{eqn:tracecoeff}
    C^A_{\alpha} = \frac{1}{\dim\, A_{\alpha}} \tr_A[p_{\alpha}]~.
\end{align}
\end{rem}

\begin{defn}[Canonical trace]\label{def:canontrace}
We define the \emph{canonical trace} $\tr_{A,\mathrm{can}}$ on $\mathcal{M}_A$ to be the trace with $C^A_{\alpha} = 1$ for all sectors $\alpha$.
\end{defn}

Note the trace $\mathrm{Tr}$ is defined by a sum over a complete set of states on a Hilbert space.
The canonical trace is its natural extension to algebras.

\begin{defn}[Complementary traces]\label{def:comptrace}
Let $\mathcal{M}_A$ be an algebra acting on $\mathcal{H}$ and let $\mathcal{M}_{A'} := \mathcal{M}_A' $. Let $\tr_A$, $\tr_{A'}$ be traces for $\mathcal{M}_A$, $\mathcal{M}_{A'}$ respectively. We say these traces are complementary if 
\begin{align}
    C^{A}_{\alpha}= C^{A'}_{\alpha},
\end{align}
with $C^{A}_{\alpha},\ C^{A'}_{\alpha}$ as defined in \eqref{eqn:tracecoeff}.
\end{defn}

\begin{rem}
As we will see below, gravity will naturally assign complementary traces to the algebras associated to complementary subregions.
\end{rem}
\begin{defn}[Density matrix]\label{defn:dm}
	Let $\mathcal{M}$ be an algebra on Hilbert space $\mathcal{H}$ with trace $\tr$.
	A  positive semi-definite $\rho \in \mathcal{M}$ is a normalized density matrix if $\tr(\rho) = 1$, and is subnormalized if $\tr(\rho) \le 1$.
	It is said to be a density matrix on $\mathcal{M}$ for $\ket{\psi} \in \mathcal{H}$ if
	\begin{equation}
		\tr(\rho \,m) = \braket{\psi| m |\psi}~,~~~~\forall m \in \mathcal{M}.
	\end{equation}
\end{defn}

\begin{rem}
Density matrices always exist and are unique. Note that the density matrix $\rho$ depends not only on the state $\ket{\psi}$ and the algebra $\mathcal{M}$ but also on the trace $\tr$. In contrast, the \emph{reduced state} $\psi$ of $\ket{\psi}$ on the algebra $\mathcal{M}$ is defined as the linear functional
\begin{align}
    \psi(m) = \braket{\psi|m|\psi} ~,~~~~\forall m \in \mathcal{M}
\end{align}
and is trace-independent.
\end{rem}

\begin{rem}
The canonical density matrix $\rho_{A,\mathrm{can}}$ for the state $\ket{\psi}$ on the algebra $\mathcal{M}_A$ with respect to the canonical trace $\tr_{A,\mathrm{can}}$ can be written as
\begin{align}
    \rho_{A,\mathrm{can}} = \oplus_\alpha q_\alpha \rho_{A_\alpha}
\end{align}
where $q_\alpha = \braket{\psi|p_\alpha|\psi}$ is the probability of the state $\ket{\psi}$ being in sector $\alpha$ and $\rho_{A_\alpha}$ is the reduced density matrix of $p_\alpha \ket{\psi}$ on $\mathcal{H}_{A_\alpha}$.
\end{rem}

\begin{rem}\label{rem:general_dm}
    The density matrix $\rho_A$ associated to an arbitrary trace $\tr_A$ can be written as
    \begin{equation}\label{eq:general_dm}
        \rho_A = C^{-1} \rho_{A,\mathrm{can}}~,
    \end{equation}
    where $C = \sum_\alpha p_\alpha C_\alpha^A$ is defined as in Remark \ref{rem:tracesectors}.
\end{rem}

\begin{defn}[Conditional entropies]\label{defn:conditional_entropies}
Let $\mathcal{M}_B \subseteq \mathcal{M}_A$ be algebras on a Hilbert space $\mathcal{H}$, with traces $\tr_B$ and $\tr_A$ respectively.
Given a state $\ket{\psi} \in \mathcal{H}$, the min-entropy, von Neumann entropy, and max-entropy of $A$ conditioned on $B$ are
	\begin{align}
		H_\mathrm{min}(A|B)_\psi &:= - \min_\sigma \inf\{ \lambda : \rho_{A} \le e^{\lambda} \sigma_{B} \}~, \label{eqn:Hmincon}\\
		S(A|B)_\psi &:= -\tr_A (\rho_A \log \rho_A) + \tr_B (\rho_B \log \rho_B)~,\label{eqn:vncon} \\
		H_\mathrm{max}(A|B)_\psi &:= \sup_\sigma \log \left( \tr_A \sqrt{\sigma_B^{1/2} \rho_A  \sigma_B^{1/2}} \right)^2~ \label{eqn:Hmaxcon},
	\end{align}
	where $\rho_A,\, \rho_B$ are sub-normalized density matrices on $\mathcal{M}_A,\, \mathcal{M}_B$ for $\ket{\psi}$ and the minimization and supremum are taken over all sub-normalized density matrices $\sigma_B$ on $\mathcal{M}_B$, i.e. $\tr_B \sigma_B \leq 1$. 
\end{defn}

The (conditional) min-entropy and max-entropy are sometimes called the (conditional) one-shot entropies.
We will often drop the $\psi$ subscript when it is clear from context.

\begin{rem}
	In the special case that $\mathcal{M}_B$ is trivial, including only multiples of the identity, we write $H_\mathrm{min}(A)_\psi$, $S(A)_\psi$, and $H_\mathrm{max}(A)_\psi$ and call them the (unconditional) min-entropy, von Neumann entropy, and max-entropy respectively.
\end{rem}

Given a von Neumann algebra $\mathcal{M}$ with trace $\tr$, let $\mathcal{P}_{\le}(\mathcal{M})$ denote the set of subnormalized density matrices on $\mathcal{M}$.

\begin{defn}[Purified distance]\label{def:purifieddistance}
	Let $\rho, \sigma \in \mathcal{P}_{\le}(\mathcal{M})$. The purified distance between $\rho$ and $\sigma$ is
	\begin{equation}\label{eq:purified_distance}
		P(\rho,\sigma) := \sqrt{1 - F_{*}(\rho,\sigma)^2}~,
	\end{equation}
	where $F_{*}(\rho,\sigma)$ is the generalized fidelity between $\rho$ and $\sigma$, defined as
	\begin{equation}
		F_{*}(\rho,\sigma) := F(\rho,\sigma) + \sqrt{(1 - \tr[\rho])(1-\tr[\sigma])}~,
	\end{equation} 
	and $F(\rho,\sigma) := \lVert \sqrt{\rho}\sqrt{\sigma} \rVert_1$ is the (standard) fidelity, with $\lVert X \rVert_1 := \tr \sqrt{ X^\dagger X }$.
\end{defn}

\begin{defn}[Smooth conditional one-shot entropies]
	Let $\mathcal{M}_B \subseteq \mathcal{M}_A$ be algebras on a Hilbert space $\mathcal{H}$.
	Let $\ket{\psi} \in \mathcal{H}$, $\varepsilon > 0$. Furthermore, let $\rho \in \mathcal{M}_A$ be a density matrix on $\mathcal{M}_A$ for $\ket{\psi}$.
	The smooth conditional min-entropy and max-entropy are
	\begin{align}
		H_\mathrm{min}^\varepsilon(A|B)_\psi &:= \max_{\rho^\varepsilon \in \mathcal{P}_{\le}(\mathcal{M}_A), P(\rho^\varepsilon,\rho) \le \varepsilon} H_\mathrm{min}(A|B)_{\rho^\varepsilon}~,\\
		H_\mathrm{max}^\varepsilon(A|B)_\psi &:= \min_{\rho^\varepsilon \in \mathcal{P}_{\le}(\mathcal{M}_A), P(\rho^\varepsilon,\rho) \le \varepsilon} H_\mathrm{max}(A|B)_{\rho^\varepsilon}~.
	\end{align}
\end{defn}

\begin{thm}\label{thm:order}
    Let $\mathcal{M}_B \subseteq \mathcal{M}_A$ be algebras on a Hilbert space $\mathcal{H}$, and $\ket{\psi} \in \mathcal{H}$. Then
    \begin{equation}
        H_\mathrm{min}(A|B)_\psi \le S(A|B)_\psi \le H_\mathrm{max}(A|B)_\psi~.
    \end{equation}
    Furthermore, this continues to hold for sufficiently small $\varepsilon > 0$,
    \begin{equation}
        H^\varepsilon_\mathrm{min}(A|B)_\psi \le S(A|B)_\psi \le H^\varepsilon_\mathrm{max}(A|B)_\psi~.
    \end{equation}
\end{thm}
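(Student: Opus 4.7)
The plan is to sandwich $S(A|B)$ between $H_\mathrm{min}(A|B)$ and $H_\mathrm{max}(A|B)$ by writing all three as optima over the same class of subnormalized $\sigma_B \in \mathcal{M}_B$ of three ordered quantum divergences. Introduce the max-relative entropy $D_\mathrm{max}(\rho\|\sigma) := \inf\{\lambda : \rho \le e^\lambda \sigma\}$ and the Umegaki relative entropy $D(\rho\|\sigma) := \tr_A(\rho\log\rho) - \tr_A(\rho\log\sigma)$ on $\mathcal{M}_A$, so that directly from Definition \ref{defn:conditional_entropies} we have $H_\mathrm{min}(A|B) = -\inf_{\sigma_B} D_\mathrm{max}(\rho_A\|\sigma_B)$ and $H_\mathrm{max}(A|B) = -\inf_{\sigma_B}\bigl(-2\log F(\rho_A,\sigma_B)\bigr)$.

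The first step is the pointwise divergence ordering $D_\mathrm{max}(\rho_A\|\sigma_B) \ge D(\rho_A\|\sigma_B) \ge -2\log F(\rho_A,\sigma_B)$ for every subnormalized $\sigma_B$. The left-hand inequality follows from operator monotonicity of $\log$: from $\rho_A \le e^\lambda \sigma_B$ one gets $\log \rho_A \le \lambda \mathds{1} + \log\sigma_B$, and sandwich-tracing against $\rho_A$ with $\tr_A$ yields $D(\rho_A\|\sigma_B) \le \lambda$. The right-hand inequality is the standard monotonicity of the sandwiched R\'enyi divergences in the order parameter, between $\alpha=1/2$ and $\alpha=1$; in our finite-dimensional algebraic setting it reduces via Theorem \ref{thm:classification} to the same statement on each matrix block, and it is insensitive to the specific trace since rescaling both $\rho$ and $\sigma$ by a positive central $C$ leaves both $D$ and $F$ unchanged.

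The second step identifies the middle infimum with $S(A|B)$. The crucial algebraic input, immediate from Definition \ref{defn:dm}, is that $\tr_A(\rho_A m) = \psi(m) = \tr_B(\rho_B m)$ for every $m \in \mathcal{M}_B$, so in particular $\tr_A(\rho_A \log\sigma_B) = \tr_B(\rho_B\log\sigma_B)$ whenever $\sigma_B\in\mathcal{M}_B$. Hence
\begin{equation*}
  -D(\rho_A\|\sigma_B) \;=\; -\tr_A(\rho_A\log\rho_A) + \tr_B(\rho_B\log\sigma_B),
\end{equation*}
and Klein's inequality on $(\mathcal{M}_B,\tr_B)$ together with a short subnormalization-slack argument ($\sigma_B = t \tau_B$ with $\tau_B$ normalized and $t \le 1$ so $\log t \le 0$) shows the supremum over subnormalized $\sigma_B$ is attained at $\sigma_B=\rho_B$, giving exactly $S(A|B)$. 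Chaining the three optima then yields the unsmoothed inequality. For the smoothed version I would apply the unsmoothed bound pointwise over the smoothing ball and absorb the discrepancy using a finite-dimensional Fannes--Audenaert-type continuity estimate for $S(A|B)$ on $\mathcal{M}_A$, which again reduces to the matrix case blockwise via Theorem \ref{thm:classification}; the $H_\mathrm{max}^\varepsilon$ bound is dual.

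The hardest part will be making the divergence-ordering lemma $D_\mathrm{max}\ge D\ge -2\log F$ fully rigorous in the algebraic setting with non-trivial centers and with potentially distinct traces $\tr_A,\tr_B$. The compatibility $\tr_A(\rho_A m)=\tr_B(\rho_B m)$ is what licenses the trace switch in the middle of step two, but the outer two inequalities must be carried out entirely inside $(\mathcal{M}_A,\tr_A)$, so one has to handle $\log\sigma_B$ on the kernel of $\sigma_B$ via the usual support conventions for relative entropy and avoid illegal trace-swaps when $\mathcal{M}_B$ has a different central structure than $\mathcal{M}_A$. The smoothed version carries the additional mild subtlety that the smoothing ball is a sup/inf rather than a single state, so the continuity estimate must beat the slack in the pointwise ordering, which is why one restricts to \emph{sufficiently small} $\varepsilon$.
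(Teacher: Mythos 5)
Your unsmoothed argument is, at its core, the same as the paper's: the pointwise ordering $D_\mathrm{max}(\rho_A\|\sigma_B) \ge D(\rho_A\|\sigma_B) \ge -2\log F(\rho_A,\sigma_B)$ is exactly the monotonicity in $\alpha$ of the sandwiched R\'enyi divergences at $\alpha = \infty, 1, 1/2$, which is the paper's Lemma \ref{lem:monotonicityalpha} combined with Proposition \ref{prop:HinfHhalf}. What you add, usefully, is the explicit verification via the trace-compatibility $\tr_A(\rho_A m) = \tr_B(\rho_B m)$ and Klein's inequality that $\sup_{\sigma_B}\bigl(-D(\rho_A\|\sigma_B)\bigr)$ is attained at $\sigma_B = \rho_B$ and equals $S(A|B)$ of Definition \ref{defn:conditional_entropies}; the paper simply defines the middle quantity as $\lim_{\alpha\to 1} H_\alpha(A|B)$ without checking it reproduces \eqref{eqn:vncon}. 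For the smoothed statement your route genuinely diverges from the paper's: you apply the unsmoothed bound pointwise over the smoothing ball and control the discrepancy with a Fannes--Audenaert continuity estimate (plus duality for the max side), whereas the paper (Lemma \ref{lem:5.2} and Theorem \ref{thm:maxminbounds}) embeds into a larger algebra so the optimal smoothed state is normalized and then runs a triangle-inequality/fidelity chain, obtaining $H^{\varepsilon}_\mathrm{max}(A|B) \ge S^{\varepsilon'}(A|B) - \log\frac{1}{(1-2(\varepsilon+\varepsilon'))^2}$. Your version is more elementary but its error term depends on the Hilbert-space dimension through the continuity modulus, while the paper's correction $-2\log(1-2\varepsilon)$ is dimension-independent, which matters in the gravitational application where dimensions are enormous; that is the one substantive advantage of the paper's route.

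One caveat applies equally to both proofs: neither actually yields the exact inequality $H^\varepsilon_\mathrm{min}(A|B) \le S(A|B)$ for fixed $\varepsilon>0$, only up to an additive correction vanishing as $\varepsilon\to 0$. Indeed the exact inequality is false as stated: for trivial $\mathcal{M}_B$ and $\rho_A$ maximally mixed on dimension $d$, the subnormalized state $(1-\varepsilon^2)\rho_A$ lies in the $\varepsilon$-ball and gives $H^\varepsilon_\mathrm{min}(A) \ge \log d - \log(1-\varepsilon^2) > S(A)$. So your smoothed argument is on the same footing as the paper's own; the imprecision lies in the theorem statement, not in your proof.
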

\begin{proof}
    See Appendix \ref{app:boundingminmax}.
\end{proof}

\begin{thm}[Duality between min- and max-entropies]\label{thm:duality}
	Let $\mathcal{M}_B \subseteq \mathcal{M}_A$ be algebras on a Hilbert space $\mathcal{H}$ and denote their commutants by $\mathcal{M}_{A'} := \mathcal{M}'_{A}$ and $\mathcal{M}_{B'} := \mathcal{M}'_{B}$.
	Assuming that the traces for $\mathcal{M}_A$, $\mathcal{M}_{A'}$ and $\mathcal{M}_{B}$, $\mathcal{M}_{B'}$ are respectively complementary, then for any pure state $\ket{\psi} \in \mathcal{H}$ it holds that
	\begin{equation}
		H_\mathrm{min}(A|B)_\psi = - H_\mathrm{max}(B'|A')_\psi~.
	\end{equation} 
	Furthermore, this continues to hold under smoothing:
	\begin{equation}
		H^\varepsilon_\mathrm{min}(A|B)_\psi = - H_\mathrm{max}^\varepsilon(B'|A')_\psi~.
	\end{equation} 
\end{thm}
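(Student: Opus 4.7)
The plan is to mimic the standard factor-case proof of duality, which combines a semidefinite-programming (SDP) reformulation of the min-entropy with the fidelity characterization of the max-entropy, and then to use the structure theorem (Theorem \ref{thm:classification}) together with the complementary-traces hypothesis to handle the non-trivial centers.

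First I would reformulate each side as an optimization. Absorbing the scalar $e^\lambda$ into $\sigma_B$ in Definition \ref{defn:conditional_entropies} gives the SDP
\begin{equation}
e^{-H_\mathrm{min}(A|B)_\psi} = \min\{\tr_B \sigma_B : \sigma_B \in \mathcal{M}_B,\ \sigma_B \ge 0,\ \rho_A \le \sigma_B\}.
\end{equation}
The max-entropy can be rewritten as a squared generalized fidelity,
\begin{equation}
e^{H_\mathrm{max}(B'|A')_\psi} = \sup\{F(\rho_{B'},\sigma_{A'})^2 : \sigma_{A'} \in \mathcal{M}_{A'},\ \sigma_{A'} \ge 0,\ \tr_{A'}\sigma_{A'} \le 1\},
\end{equation}
using $\mathcal{M}_{A'}\subseteq\mathcal{M}_{B'}$ and the fidelity from Definition \ref{def:purifieddistance}.

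Second, I would decompose $\mathcal{H}$ by applying Theorem \ref{thm:classification} to $\mathcal{M}_A$, obtaining $\mathcal{H} = \bigoplus_\alpha\mathcal{H}_{A_\alpha}\otimes\mathcal{H}_{A'_\alpha}$ and writing $\ket\psi = \sum_\alpha\sqrt{q_\alpha}\ket{\psi_\alpha}$ with $q_\alpha = \langle\psi|p_\alpha|\psi\rangle$. By Remark \ref{rem:general_dm},
\begin{equation}
\rho_A = \bigoplus_\alpha (q_\alpha/C^A_\alpha)\rho_{A_\alpha},\qquad \rho_{A'} = \bigoplus_\alpha (q_\alpha/C^{A'}_\alpha)\rho_{A'_\alpha},
\end{equation}
so the complementarity $C^A_\alpha = C^{A'}_\alpha$ matches the sector weights on the two sides. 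Since $Z(\mathcal{M}_A)$ commutes with $\mathcal{M}_B$, every $\sigma_B\in\mathcal{M}_B$ commutes with each $p_\alpha$, and both optimizations above are diagonal in this $\alpha$-decomposition. Further decomposing each sector by the center of the restriction of $\mathcal{M}_B$ to it reduces the problem to a sum over blocks on which both $\mathcal{M}_A$ and $\mathcal{M}_B$ act as factors, and in each such block Theorem \ref{thm:duality_factor} supplies the local min-max duality. The complementary-traces hypothesis is precisely what makes the trace prefactors arising in $\rho_A, \rho_{A'}, \rho_B, \rho_{B'}$ cancel pairwise, so that assembling the blockwise dualities yields the unsmoothed identity.

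For the smoothed statement, I would enlarge $\mathcal{H}$ to $\mathcal{H}\otimes\mathcal{H}_E$, extending $\mathcal{M}_A\mapsto\mathcal{M}_A\otimes\mathds{1}_E$ and $\mathcal{M}_B\mapsto\mathcal{M}_B\otimes\mathds{1}_E$. The commutants become $\mathcal{M}_{A'}\otimes\mathcal{L}(\mathcal{H}_E)$ and $\mathcal{M}_{B'}\otimes\mathcal{L}(\mathcal{H}_E)$, and tensoring by the factor $\mathcal{L}(\mathcal{H}_E)$ does not alter central coefficients, so complementarity is preserved. Any subnormalized $\rho^\varepsilon$ in the $\varepsilon$-ball on $\mathcal{M}_A$ purifies to a subnormalized pure state on the enlarged space with its purified distance preserved, putting the smoothing balls on the two sides into bijection; applying the unsmoothed duality pointwise and taking the appropriate extrema gives the smoothed identity. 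The main obstacle I anticipate is tracking trace prefactors through the nested central decompositions: because $\mathcal{M}_B$ and $\mathcal{M}_A$ may have incomparable centers (for example $\mathcal{M}_B$ embedded diagonally across $\mathcal{M}_A$-sectors), the $\alpha$-wise description of $\mathcal{M}_B$ requires care, but the complementarity assumption on both pairs $(\mathcal{M}_A,\mathcal{M}_{A'})$ and $(\mathcal{M}_B,\mathcal{M}_{B'})$ is exactly designed to make this bookkeeping close.
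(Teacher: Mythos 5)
There is a genuine gap at the heart of your reduction. You propose to cut each $\mathcal{M}_A$-sector further along the center of the restricted algebra $p_\alpha \mathcal{M}_B p_\alpha$, obtain blocks on which both algebras act as factors, apply Theorem \ref{thm:duality_factor} blockwise, and reassemble. The problem is that the optimizations defining $H_\mathrm{min}(A|B)$ and $H_\mathrm{max}(B'|A')$ do not decouple across the minimal central projectors $q_\mu$ of $\mathcal{M}_B$. While $Z(\mathcal{M}_A)\subseteq \mathcal{M}_A'\subseteq\mathcal{M}_B'$ guarantees that every $\sigma_B$ is block-diagonal with respect to the $p_\alpha$, the density matrix $\rho_A\in\mathcal{M}_A$ need \emph{not} commute with $q_\mu\in Z(\mathcal{M}_B)\subseteq\mathcal{M}_B$ (the inclusion $\mathcal{M}_B\subseteq\mathcal{M}_A$ gives $\mathcal{M}_A'\subseteq\mathcal{M}_B'$, not $\mathcal{M}_A\subseteq\mathcal{M}_B'$). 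Hence the SDP constraint $\rho_A\le\sigma_B$ couples the $q_\mu$-blocks of $\sigma_B$ through the off-block-diagonal components of $\rho_A$, and the blockwise SDP value is in general strictly smaller than the true one. A minimal example: $\mathcal{M}_A=\mathcal{L}(\mathbb{C}^2)\otimes\mathds{1}$, $\mathcal{M}_B$ the diagonal subalgebra, $\rho_A$ supported on $|+\rangle\langle+|$; the global constraint forces $\tr\sigma_B=2$ while the blockwise constraints only force $\tr\sigma_B=1$. Your closing remark pins the hope of repairing this on the complementary-traces hypothesis, but that hypothesis only matches the normalization coefficients $C^X_\alpha$ between an algebra and its commutant; it does nothing to decouple an operator inequality across $Z(\mathcal{M}_B)$. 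So the "assemble the blockwise dualities" step fails, and with it the smoothed statement, which you build on top of the unsmoothed one.

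For comparison, the paper never decomposes into factor blocks. It writes $H_\mathrm{min}=H_\infty$ and $H_\mathrm{max}=H_{1/2}$ as limits of sandwiched R\'enyi conditional entropies, uses Schatten $p$-norm duality to introduce a second dual variable $\tau_A$, converts $\tau_A$ to $\tau_{A'}$ via the Schmidt decomposition of $\ket{\psi}$ (this is where complementarity of traces is used, to ensure $\tr_A\tau_A=\tr_{A'}\tau_{A'}$), and then applies von Neumann's minimax theorem to the resulting concave--convex functional of $(\sigma_B,\tau_{A'})$; the expression is then manifestly antisymmetric under $(A,B)\to(B',A')$ with $\alpha\leftrightarrow\beta$. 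The coupling between blocks that defeats your approach is handled there globally by the minimax swap rather than by decomposition. Your smoothing step (enlarge the Hilbert space, purify with Uhlmann, use monotonicity of the purified distance) is essentially the paper's, but it cannot rescue the unsmoothed identity.
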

\begin{proof}
    See Appendix \ref{app:duality}. Using the appendix, one can check that the equality continues to hold under smoothing because of the choice to use the purified distance \eqref{eq:purified_distance} as the metric on states. Other metrics -- like the trace distance -- would have led to an inequality.\footnote{These duality relations are an example of so-called entropic certainty relations which were explored in the setting of finite dimensional quantum systems in \cite{Magan:2021aa} and discussed in the context of QFT in \cite{Hollands:2020aa}. We thank Thomas Faulkner for pointing out this connection to us.}
\end{proof}

\begin{thm}[Quantum asymptotic equipartition principle]\label{thm:QAEP}
    Let $\mathcal{M}_B \subseteq \mathcal{M}_A$ be algebras on a Hilbert space $\mathcal{H}$, let $\ket{\psi} \in \mathcal{H}$, and let $0 < \varepsilon < 1$.
    It holds that
    \begin{align}
        \lim_{n\to\infty} \frac{1}{n} H^\varepsilon_\mathrm{min}(A^n|B^n)_{\psi^{\otimes n}} &= S(A|B)_{\psi} = \lim_{n\to\infty} \frac{1}{n} H^\varepsilon_\mathrm{max}(A^n|B^n)_{\psi^{\otimes n}}~.
    \end{align}
\end{thm}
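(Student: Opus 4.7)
The plan is to reduce the algebraic QAEP to the known Hilbert-space version by exploiting the structure theorem (Theorem~\ref{thm:classification}), which decomposes $\mathcal{H} = \oplus_\alpha (\mathcal{H}_{A_\alpha} \otimes \mathcal{H}_{A'_\alpha})$ and yields a block-diagonal canonical density matrix $\rho_{A,\mathrm{can}} = \oplus_\alpha q_\alpha \rho_{A_\alpha}$, with a compatible decomposition for $\mathcal{M}_B$ (since $\mathcal{M}_B \subseteq \mathcal{M}_A$). For $n$ copies, $\ket{\psi}^{\otimes n}$ induces an i.i.d.\ distribution $q^n_{\vec\alpha} = \prod_i q_{\alpha_i}$ over joint sector tuples, and within each such tuple the reduced state is a tensor product of ordinary Hilbert-space density matrices. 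By Remark~\ref{rem:general_dm}, any other trace only dresses $\rho_A$ by a central operator $C^{-1}$, so it suffices to treat the canonical trace and restore the classical shift at the end.

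One direction is immediate from Theorem~\ref{thm:order} together with additivity of the von Neumann entropy. Since $H^\varepsilon_\mathrm{min}(A|B)_\psi \le S(A|B)_\psi \le H^\varepsilon_\mathrm{max}(A|B)_\psi$ on any algebra, and $S(A^n|B^n)_{\psi^{\otimes n}} = n\,S(A|B)_\psi$ by additivity of the block-diagonal von Neumann entropy, we immediately get
\begin{equation*}
    \limsup_{n\to\infty} \tfrac{1}{n} H^\varepsilon_\mathrm{min}(A^n|B^n)_{\psi^{\otimes n}} \le S(A|B)_\psi \le \liminf_{n\to\infty} \tfrac{1}{n} H^\varepsilon_\mathrm{max}(A^n|B^n)_{\psi^{\otimes n}}.
\end{equation*}

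For the reverse inequalities I would construct explicit smoothing states that witness the desired bounds. First, apply the classical asymptotic equipartition principle to $q^n_{\vec\alpha}$, discarding a set of joint sectors of total weight $\le \varepsilon/4$ so that the survivors have probabilities $\exp(-n H(q) \pm o(n))$ and cardinality $\exp(n H(q) + o(n))$. Within each surviving joint sector the state is a tensor product of ordinary density matrices, to which the standard (factor) QAEP applies: it produces a per-sector smoothing whose conditional min-entropy is at least $n\,(S(A|B)_\psi - \delta)$ (and symmetrically for max-entropy; alternatively one handles the max case by invoking duality, Theorem~\ref{thm:duality}, after passing to commutants with complementary traces). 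Gluing these per-sector smoothings into a single block-diagonal $\rho^\varepsilon \in \mathcal{P}_\le(\mathcal{M}_A^{\otimes n})$ provides the required witness on the algebra.

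The main obstacle I anticipate is controlling the purified distance between $\rho$ and the assembled $\rho^\varepsilon$ \emph{globally} on the algebra. Because the generalized fidelity $F_*$ is additive across orthogonal central blocks, the global purified distance should decompose into a classical piece from the sector truncation and a quantum piece summed over surviving sectors. Making this quantitative—choosing per-sector tolerances appropriately rescaled (roughly by $(\text{number of typical sectors})^{-1/2}$) and propagating the $O(1/\sqrt{n})$ convergence rates of the factor QAEP through the gluing so that the total purified distance stays below $\varepsilon$—is the delicate bookkeeping step. Once this is done, the classical entropy of the sector distribution plus the per-sector quantum entropies reproduce $n\,S(A|B)_\psi$ to leading order, completing the proof.
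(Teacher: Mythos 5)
Your overall strategy (decompose over central sectors, run the classical AEP on the sector distribution, and invoke the factor QAEP inside each typical sector) is genuinely different from the paper's, which instead generalizes the Tomamichel-style proof wholesale: it defines Petz and sandwiched R\'enyi divergences directly on the algebra, proves $H^\varepsilon_\mathrm{min}(A|B) \ge -S^\varepsilon_\infty(\rho_A\|\rho_B)$, bounds $S^\varepsilon_\infty$ by $D_\alpha$ plus $g(\varepsilon)/(\alpha-1)$ via the operator $\Delta = \{\rho - e^\lambda\sigma\}_+$ and a pinching map, exploits additivity $D_\alpha(\rho^{\otimes n}\|\sigma^{\otimes n}) = n D_\alpha(\rho\|\sigma)$ with $\alpha-1 \sim 1/\sqrt{n}$, and obtains the max-entropy statement by duality. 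That route never needs to decompose the conditional entropy over sectors, which is precisely where your sketch has a genuine gap: the conditional one-shot entropies do \emph{not} decouple over the $\alpha$-sectors of $\mathcal{M}_A$. The optimization is over a single $\sigma_B \in \mathcal{M}_B$ with a global normalization $\tr_B \sigma_B \le 1$ that couples all sectors, and the central decomposition of $\mathcal{M}_B$ is generally coarser than that of $\mathcal{M}_A$, so the claim that ``within each surviving joint sector the state is a tensor product of ordinary density matrices to which the standard (factor) QAEP applies'' fails as stated --- inside a single central sector of $\mathcal{M}_B$ the inclusion $\mathcal{M}_B \subseteq \mathcal{M}_A$ is still an algebra-with-center problem. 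You would at minimum need to specify how the weight of $\sigma_B$ is allocated across sectors and show the resulting loss is $o(n)$; this is exactly the phenomenon (Remark \ref{rem:one-shot_cond_not_difference}) that prevents conditional one-shot entropies from being differences of unconditional ones. A similar issue affects ``restore the classical shift at the end'': changing the trace multiplies $\rho_A$ and the admissible $\sigma_B$ by central operators acting differently in different sectors, so it is not an overall additive shift of a quantity that decomposes over sectors.

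Separately, your ``immediate'' direction is not immediate for the $\varepsilon$ in the theorem statement. Theorem \ref{thm:order} gives $H^\varepsilon_\mathrm{min} \le S \le H^\varepsilon_\mathrm{max}$ only for \emph{sufficiently small} $\varepsilon$, whereas the QAEP is asserted for all $0<\varepsilon<1$; for large smoothing the smooth min-entropy can exceed the von Neumann entropy. The paper closes the converse direction instead with Lemma \ref{lem:strong_min_le_max}, $H^\varepsilon_\mathrm{min}(A|B) \le H^{\varepsilon'}_\mathrm{max}(A|B) + \log\frac{1}{1-(\varepsilon+\varepsilon')^2}$, whose additive error is $O(1)$ and vanishes after dividing by $n$, combined with the direct-part upper bound on $\frac{1}{n}H^{\varepsilon'}_\mathrm{max}$. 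Even if you repair the sector-coupling problem, you should replace the appeal to Theorem \ref{thm:order} with an argument of this type.
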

\begin{proof}
	See Appendix \ref{app:QAEP}.\footnote{During the preparation of this manuscript, a proof of a (closely related) AEP for the max-relative entropy on \emph{any} von Neumann algebra (including infinite-dimensional ones) was independently given in \cite{Fawzi:2022zqz}.}
\end{proof}

\begin{thm}[Chain rule]
	Let $\mathcal{M}_A \supseteq \mathcal{M}_B \supseteq \mathcal{M}_C$ be von Neumann algebras on Hilbert space $\mathcal{H}$, and let $\ket{\psi} \in \mathcal{H}$. 
	The chain rule states that for $\varepsilon > 2\varepsilon' > 0$, then 
\begin{align}\label{eqn:chainrule}
   H^{\varepsilon}_\mathrm{min}(A|C) &\geq H^{\varepsilon'}_\mathrm{min}(A|B) + H^{\varepsilon'}_\mathrm{min}(B|C) + \mathcal{O}\left(\log \left(\frac{1}{\varepsilon - 2\varepsilon'}\right)\right)~, \\
   S(A|C) &= S(A|B) + S(B|C)~,\\
    H^{\varepsilon}_\mathrm{max}(A|C) &\leq H^{\varepsilon'}_\mathrm{max}(A|B) + H^{\varepsilon'}_\mathrm{max}(B|C) + \mathcal{O}\left(\log \left(\frac{1}{\varepsilon - 2\varepsilon'}\right)\right)~.
\end{align}
\end{thm}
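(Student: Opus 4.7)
The von Neumann entropy equality is immediate from the definition \eqref{eqn:vncon}: the $\tr_B(\rho_B \log \rho_B)$ contributions on the right-hand side cancel, leaving $-\tr_A(\rho_A \log \rho_A) + \tr_C(\rho_C \log \rho_C) = S(A|C)$, with no smoothing or inequalities involved. For the two one-shot inequalities, the first observation is that duality (Theorem \ref{thm:duality}) relates them: after purifying $\ket{\psi}$ if necessary, applying the smooth duality relation to each term converts the min-entropy chain rule for $\mathcal{M}_C \subseteq \mathcal{M}_B \subseteq \mathcal{M}_A$ into the max-entropy chain rule for the reversed tower of commutants $\mathcal{M}_{A'} \subseteq \mathcal{M}_{B'} \subseteq \mathcal{M}_{C'}$, with matching smoothing parameters. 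So it suffices to prove one of them, and the plan is to focus on the min-entropy inequality.

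The main strategy for the min-entropy chain rule is to reduce to the Hilbert-space version (Theorem \ref{thm:chainrule}) via the structure theorem. Apply Theorem \ref{thm:classification} to each of the nested algebras. Since $\mathcal{M}_C \subseteq \mathcal{M}_B \subseteq \mathcal{M}_A$, the minimal central projectors $p^A_\alpha, p^B_\beta, p^C_\gamma$ all lie inside $\mathcal{M}_A$ and commute pairwise (each $p^B_\beta$ commutes with everything in $\mathcal{M}_B$, in particular with $p^C_\gamma$; and each $p^A_\alpha$ is central in $\mathcal{M}_A$ and hence commutes with all $p^B_\beta, p^C_\gamma \in \mathcal{M}_A$). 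These projectors jointly decompose $\mathcal{H}$ into sectors labelled by triples $(\alpha, \beta, \gamma)$, and within each such sector the three algebras are matrix algebras on tensor subfactors, embeddable into a single tensor product Hilbert space. Rewriting the algebraic density matrices as in Remark \ref{rem:general_dm}, one expresses each conditional min-entropy as a sector-wise combination of standard Hilbert-space conditional min-entropies and classical log-probability contributions weighted by the sector probabilities $q_\alpha, q_\beta, q_\gamma$ and the trace coefficients $C^A_\alpha, C^B_\beta, C^C_\gamma$ of Remark \ref{rem:tracesectors}. Applying Theorem \ref{thm:chainrule} in each sector and then combining via a convexity argument yields the algebraic chain rule with the same $\mathcal{O}(\log(1/(\varepsilon - 2\varepsilon')))$ correction, since purified distance (Definition \ref{def:purifieddistance}) is defined through the algebraic trace and behaves correctly under the embedding.

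The main technical obstacle is bookkeeping the trace-coefficient data across three algebras with distinct center decompositions, and verifying that the classical entropy contributions from the distributions $\{q_\alpha\}, \{q_\beta\}, \{q_\gamma\}$ telescope cleanly between the three conditional entropies on the right-hand side rather than leaving residual center-classical terms. A secondary subtlety is that the smoothed density matrix achieving the supremum must lie in $\mathcal{P}_\le(\mathcal{M}_A)$ and should be taken to respect the joint $(\alpha, \beta, \gamma)$ decomposition; this should follow from a standard pinching argument, since dephasing with respect to the central projectors $p^A_\alpha, p^B_\beta, p^C_\gamma$ can only decrease the purified distance. As a fallback, if the sector-wise accounting becomes awkward, one can instead directly adapt the SDP-based proof of the chain rule in \cite{tomamichel2013framework} to the algebraic setting: $H_\mathrm{min}(A|B)$ remains a semidefinite program with optimization variable $\sigma_B \in \mathcal{M}_B$, the smoothing and all intermediate quantities are already defined via algebraic traces, and each step of Tomamichel's argument translates with essentially no change once Hilbert-space traces are replaced by the traces $\tr_A, \tr_B, \tr_C$.
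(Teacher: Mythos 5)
Your reduction of the max-entropy inequality to the min-entropy inequality via Theorem \ref{thm:duality}, and the one-line verification of the von Neumann equality, are both fine and match what the paper does (the appendix only proves the min-entropy version explicitly). The problem is with your primary strategy for the min-entropy chain rule. Decomposing $\mathcal{H}$ into joint $(\alpha,\beta,\gamma)$-sectors and applying the tensor-factor chain rule sector by sector does not recombine by ``a convexity argument.'' Conditional one-shot entropies of direct sums are not convex combinations of the sector-wise quantities: already in the unconditional case $H_\mathrm{min}(\oplus_\alpha q_\alpha \rho_\alpha) = \min_\alpha \bigl(H_\mathrm{min}(\rho_\alpha) - \log q_\alpha\bigr)$, a minimum rather than an average, and in the conditional case the optimization over $\sigma_B$ couples the sectors through the single global constraint $\tr_B \sigma_B \le 1$ (how the trace budget of $\sigma_B$ is distributed over the $\beta$-sectors is itself part of the optimization). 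The same coupling afflicts the smoothing ball: the optimal $\rho^\varepsilon$ distributes its ``error budget'' across sectors nontrivially. Telescoping three distinct center decompositions through two such coupled optimizations, while keeping the error at $\mathcal{O}(\log(1/(\varepsilon-2\varepsilon')))$, is the entire content of the theorem, and you have not supplied it. As written, this is a genuine gap, not bookkeeping.

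Your fallback, by contrast, is exactly what the paper does. The paper's proof never decomposes into sectors: it first establishes an algebraic version of the projector lemma (Lemma 21 of \cite{Tomamichel:2010aa}), producing a projector $\Pi_{B'} \in \mathcal{M}_B'$ with $P(\ket{\Psi},\Pi_{B'}\ket{\Psi}) \le \varepsilon$ such that the projected state satisfies $(\rho_P^*)_A \le \rho_B^*\, e^{-H_\mathrm{min}(A|B)_{\rho^*} + \log(2/\varepsilon^2)}$, then conjugates this operator inequality by the transfer operator $T_B$ of \cite{dupuis2014one} to chain it with the operator inequality $\tilde\rho_B^* \le e^{-H^{\varepsilon''}_\mathrm{min}(B|C)}\sigma_C$ witnessing the second term, and finally controls the purified distance of the resulting state by the triangle inequality. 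Every step is phrased directly in terms of the algebraic traces and the Schmidt decomposition adapted to the center, so no sector accounting is needed. If you carry out the fallback in detail you will have reproduced the paper's argument; the route you actually proposed as your main line does not close.
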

\begin{proof}
	See Appendix \ref{app:chain_rule}.
\end{proof}

\begin{defn}[Partial trace]\label{def:partialtrace}
Let $\mathcal{M} \supset \mathcal{N}$ be algebras with corresponding traces $\tr_{\mathcal{M}}$ and $\tr_{\mathcal{N}}$. A partial trace from $\mathcal{M}$ to $\mathcal{N}$ is a completely positive and trace-preserving linear map $\tr_{\M \to \mathcal{N}}: \mathcal{M} \to \mathcal{N}$ which obeys the so-called bi-module property\footnote{For a definition of complete-positivity, see for example \cite{nielsen&chuang}.}
\begin{align}
 \tr_{\M \to \mathcal{N}} [n_1 m n_2] = n_1 \tr_{\M \to \mathcal{N}}[m] n_2,\ \forall\, n_1, n_2 \in \mathcal{N}\ \mathrm{and}\ m \in \mathcal{M}~.
\end{align}
\end{defn}

\begin{rem}
 One can check that in the setting of the previous section, if we have $\mathcal{H}_{XY}= \mathcal{H}_X \otimes \mathcal{H}_Y$, with algebras $\mathcal{M}_X = \mathcal{L}(\mathcal{H}_X)$ and $\mathcal{M}_Y = \mathcal{L}(\mathcal{H}_Y)$, then the map
\begin{align}
    \tr_{XY \to Y}\left[ \cdot \right] = \text{Tr}_X \left[ \cdot \right]
\end{align}
defines a partial trace from $\mathcal{L}(\mathcal{H}_{XY}) \to \mathcal{L}(\mathcal{H}_{Y})$.
\end{rem}

\begin{thm}\label{thm:parttracerestriction}
There exists a unique partial trace $\tr_{\M \to \mathcal{N}}$ for any algebras $\mathcal{M} \supseteq \mathcal{N}$ and pair of traces $\tr_{\mathcal{M}}$ and $\tr_{\mathcal{N}}$. 
\end{thm}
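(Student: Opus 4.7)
The plan is to construct the partial trace $E := \tr_{\M \to \mathcal{N}}$ by a Riesz-type duality using the two traces, and then verify each required axiom in turn, leaving complete positivity for last as the only non-routine step.

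The key observation is that any map satisfying both the bi-module property and trace preservation automatically obeys the identity $\tr_{\mathcal{N}}[n\, E(m)] = \tr_{\mathcal{N}}[E(n m)] = \tr_{\mathcal{M}}[n m]$ for all $n \in \mathcal{N}$ and $m \in \mathcal{M}$. Since the pairing $(n, n') \mapsto \tr_{\mathcal{N}}[n n']$ on the finite-dimensional algebra $\mathcal{N}$ is non-degenerate (by faithfulness of $\tr_\mathcal{N}$ on positive elements), this identity determines $E(m)$ uniquely, giving uniqueness. Conversely, it allows me to \emph{define} $E(m) \in \mathcal{N}$ as the unique element representing the linear functional $n \mapsto \tr_{\mathcal{M}}[n m]$ on $\mathcal{N}$, giving existence.

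I would then dispatch the easier axioms in sequence. Linearity of $E$ is immediate from the defining relation, and trace preservation follows by setting $n = \mathds{1}$. The bi-module property follows from the manipulation $\tr_\mathcal{N}[n E(n_1 m n_2)] = \tr_\mathcal{M}[n n_1 m n_2] = \tr_\mathcal{M}[n_2 n n_1 m] = \tr_\mathcal{N}[n (n_1 E(m) n_2)]$ together with non-degeneracy. The *-preservation $E(m)^\dagger = E(m^\dagger)$ follows from $\overline{\tr[\cdot]} = \tr[(\cdot)^\dagger]$, making $E(m^\dagger m)$ self-adjoint. Positivity then follows from $\tr_\mathcal{N}[n E(m^\dagger m)] = \tr_\mathcal{M}[(m n^{1/2})^\dagger (m n^{1/2})] \geq 0$ for all $n \geq 0$, together with the fact that a self-adjoint $x \in \mathcal{N}$ with $\tr_\mathcal{N}[n x] \geq 0$ on all positive $n$ must itself be positive.

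The main obstacle is complete positivity, which I would establish by reducing to a unital conditional expectation and invoking Tomiyama's theorem. Set $c := E(\mathds{1})$. The bi-module property forces $n c = E(n) = c n$ for every $n \in \mathcal{N}$, so $c \in Z(\mathcal{N})$. Expanding $c = \sum_\alpha c_\alpha p_\alpha$ in terms of the minimal central projectors of $\mathcal{N}$, the defining relation applied to $n = m = p_\alpha$ yields $c_\alpha = \tr_\mathcal{M}[p_\alpha] / \tr_\mathcal{N}[p_\alpha] > 0$, so $c$ is positive and invertible. The rescaled map $\tilde E(m) := c^{-1/2} E(m)\, c^{-1/2}$ is then unital, still satisfies the bi-module property (since $c$ is central), and therefore restricts to the identity on $\mathcal{N}$, making it a genuine conditional expectation onto $\mathcal{N}$. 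By Tomiyama's theorem, any norm-one projection from a C*-algebra onto a C*-subalgebra is completely positive, so $\tilde E$ is CP. Finally, $E(\cdot) = c^{1/2}\, \tilde E(\cdot)\, c^{1/2}$ is CP because sandwiching by a fixed operator manifestly preserves complete positivity.
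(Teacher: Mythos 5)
Your proposal is correct, and its core — defining $E(m) := \tr_{\M \to \N}[m]$ as the Riesz representative of the functional $n \mapsto \tr_\M[nm]$ with respect to the non-degenerate pairing induced by $\tr_\N$, then reading off trace preservation, the bi-module property, and uniqueness directly from the defining identity $\tr_\N[n\,E(m)] = \tr_\M[nm]$ — is the same route the paper takes (the paper phrases the construction in terms of mapping density matrices to density matrices and extending linearly, but this yields the identical map and the identical uniqueness argument). The substantive difference is complete positivity. The paper asserts that the map is completely positive ``by construction,'' which at best establishes ordinary positivity (density matrices go to density matrices); it never addresses $E \otimes \mathrm{id}_k$. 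You actually prove it: showing $c = E(\mathds{1})$ is a positive, invertible, central element of $\N$ with $c_\alpha = \tr_\M[p_\alpha]/\tr_\N[p_\alpha]$, rescaling to the unital map $\tilde E = c^{-1/2}E(\cdot)\,c^{-1/2}$, observing that it is a positive $\N$-bimodule projection onto $\N$ (hence a conditional expectation, hence CP by Tomiyama), and undoing the rescaling by a sandwich, which preserves complete positivity. This closes a genuine gap in the paper's argument. The only cosmetic point is that to cite Tomiyama's theorem as stated you should note that $\tilde E$, being unital and positive, has norm one; alternatively, since you have already established positivity and the bi-module property directly, the weaker standard fact that conditional expectations are completely positive suffices.
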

\begin{proof}
Given a density matrix $\rho_{\M} \in \mathcal{M}$, there is a unique density matrix $\rho_\N \in \N$ such that for all $n \in \N$,
\begin{align}
    \tr_\M[\rho_\M n] = \tr_\N [\rho_\N n]~.
\end{align}
Define $\tr_{\M \to \N}: \M \to \N$ such that for all $\rho_\M$ it holds that $\tr_{\M \to \N}[\rho_\M] = \rho_\N$.
Then linearly extend $\tr_{\M \to \N}$ to all operators in $\M$.
It follows that for all $m \in \M$ and $n \in \N$,
\begin{equation}\label{eq:partial_trace_prop}
    \tr_\N [ \tr_{\M \to \N}[m] n ] = \tr_\M [ m n ]~.
\end{equation}
By construction, $\tr_{\M \to \N}$ is trace-preserving and completely positive. 
Moreover, $\tr_{\M \to \N}$ obeys the bi-module property, because for all $n \in \mathcal{N}$,
\begin{align}
\tr_{\mathcal{N}} \left( \tr_{\M \to \N} (n_1 m n_2) n\right) = \tr_{\mathcal{\M}} \left(n_1 m n_2 n\right) = \tr_{\N} \left( n_1\tr_{\M \to \N}(m)n_2 n \right)~,
\end{align}
where we used cyclicity of the trace and twice used \eqref{eq:partial_trace_prop}. 

Now we prove this $\tr_{\M \to \N}$ is the unique trace-preserving linear map satisfying the bi-module property. 
Suppose $\hat{\tr}_{\M \to \N}$ is another partial trace.
Then for all density matrices $\rho \in \M$,
\begin{align}
\tr_{\N}\left(\hat{\tr}_{\M \to \N}(\rho) n\right) = \tr_{\N}\left(\hat{\tr}_{\M \to \N}(\rho n)\right) = \tr_{\M}\left( \rho n\right)
\end{align}
where in the first equality we used the bi-module property and in the second we used the fact that $\hat{\tr}_{\M \to \N}$ is trace-preserving. We see that $\hat{\tr}_{\M \to \N}(\rho) = \tr_{\M \to \N}(\rho)$ for any density matrix $\rho$ and hence by linearity $\hat{\tr}_{\M \to \N} = \tr_{\M \to \N}$.
\end{proof}

\begin{rem}
     Note that this construction of a partial trace used the fact that all operators $m \in \M$ have a well-defined trace.
     Semifinite (infinite-dimensional) von Neumann algebras of Type I$_{\infty}$ and Type II$_{\infty}$, do not have this property.
\end{rem}

\begin{thm}[Strong subadditivity]\label{thm:ssa}
    Let $\mathcal{M}_{A_0}$, $\mathcal{M}_{A_1}$, $\mathcal{M}_{B_0}$, and $\mathcal{M}_{B_1}$ be von Neumann algebras, each with a trace, acting on $\mathcal{H}$ with the following inclusion structure: $\mathcal{M}_{A_0} \supset \mathcal{M}_{B_0}\supset \mathcal{M}_{B_1}$ and $\mathcal{M}_{A_0} \supset \mathcal{M}_{A_1}\supset \mathcal{M}_{B_1}$.
    Finally, let the partial trace $\tr_{B_0 \to B_1}: \mathcal{M}_{B_0} \to \mathcal{M}_{B_1}$ be no less than the restriction to $\mathcal{M}_{B_0}$ of the partial trace $\tr_{A_0 \to A_1}: \mathcal{M}_{A_0} \to \mathcal{M}_{A_1}$, i.e. $\tr_{B_0 \to B_1} \ge \tr_{A_0 \to A_1}\vert_{B_0}$.
    Then for $\varepsilon >0$
    \begin{align}\label{eqn:SSA}
   H^{\varepsilon}_\mathrm{min}(A_0|B_0) &\leq H^{\varepsilon}_\mathrm{min}(A_1|B_1)~, \\
   S(A_0|B_0) &\le S(A_1|C_1)~,\\
    H^{\varepsilon}_\mathrm{max}(A_0|B_0) &\leq H^{\varepsilon}_\mathrm{max}(A_1|B_1)~.
\end{align}
\end{thm}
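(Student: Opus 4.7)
The plan is to prove each of the three inequalities in \eqref{eqn:SSA} in turn, handling min directly, von Neumann via monotonicity of Umegaki's relative entropy, and max by dualizing to min through Theorem~\ref{thm:duality}.

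For the unsmoothed min inequality, I would unpack the variational definition \eqref{eqn:Hmincon}. Given any feasible pair $(\lambda,\sigma_{B_0})$ with $\rho_{A_0}\le e^\lambda \sigma_{B_0}$, applying the completely positive map $\tr_{A_0\to A_1}$ preserves the inequality to yield $\rho_{A_1}\le e^\lambda \tr_{A_0\to A_1}(\sigma_{B_0})$, and the hypothesis $\tr_{B_0\to B_1}\ge \tr_{A_0\to A_1}|_{\mathcal{M}_{B_0}}$ then bounds this by $e^\lambda \tr_{B_0\to B_1}(\sigma_{B_0})$. The candidate $\sigma_{B_1}:=\tr_{B_0\to B_1}(\sigma_{B_0})$ lies in $\mathcal{M}_{B_1}$ and is subnormalized because partial traces are trace-preserving, so minimizing over $(\lambda,\sigma_{B_0})$ delivers the unsmoothed min inequality. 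Smoothing is then immediate: given the maximizer $\rho^\varepsilon_{A_0}$, I would set $\rho^\varepsilon_{A_1}:=\tr_{A_0\to A_1}(\rho^\varepsilon_{A_0})$, which satisfies $P(\rho^\varepsilon_{A_1},\rho_{A_1})\le \varepsilon$ by monotonicity of the purified distance under CP maps (which in turn follows from monotonicity of the fidelity and the trace), and then apply the unsmoothed result to $\rho^\varepsilon$.

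For the von Neumann case, I would use the identity $S(A|B)_\psi = -D(\rho_A\,\|\,\rho_B)$, where $\rho_B$ is regarded as an element of $\mathcal{M}_A$ via the inclusion $\mathcal{M}_B\subseteq\mathcal{M}_A$ and $D$ is Umegaki's relative entropy. This equality holds because $\tr_A(\rho_A m)=\tr_B(\rho_B m)=\langle\psi|m|\psi\rangle$ for all $m\in\mathcal{M}_B$, in particular for $m=\log\rho_B$. Monotonicity of $D$ under the CPTP map $\tr_{A_0\to A_1}$, which by the partial-trace compatibility sends $\rho_{B_0}\mapsto \rho_{B_1}$ as well as $\rho_{A_0}\mapsto \rho_{A_1}$, then yields the desired $D(\rho_{A_1}\|\rho_{B_1})\le D(\rho_{A_0}\|\rho_{B_0})$.

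For the smooth max inequality, I would apply Theorem~\ref{thm:duality} to write $H^\varepsilon_\mathrm{max}(A_i|B_i)=-H^\varepsilon_\mathrm{min}(B_i'|A_i')$. On the commutants, $\mathcal{M}_{A_1'}\supseteq\mathcal{M}_{A_0'}$ and $\mathcal{M}_{B_1'}\supseteq\mathcal{M}_{B_0'}$, so applying the smooth min inequality just established with the induced partial traces $\tr_{A_1'\to A_0'}$ and $\tr_{B_1'\to B_0'}$ (which exist uniquely by Theorem~\ref{thm:parttracerestriction}) would give $H^\varepsilon_\mathrm{min}(B_1'|A_1')\le H^\varepsilon_\mathrm{min}(B_0'|A_0')$, and negating yields the max inequality. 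The main obstacle is verifying that these commutant partial traces, defined via the complementary traces, satisfy the analogous compatibility hypothesis $\tr_{A_1'\to A_0'}|_{\mathcal{M}_{B_1'}}\le\tr_{B_1'\to B_0'}$. I expect this to reduce to a direct check using the $\alpha$-sector decomposition of Theorem~\ref{thm:classification} together with Definition~\ref{def:comptrace}, since in each sector both sides reduce to ordinary partial traces weighted by the same central coefficients.
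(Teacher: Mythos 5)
Your overall route is the same as the paper's: prove the unsmoothed min inequality by pushing a feasible pair $(\lambda,\sigma_{B_0})$ through the positive map $\tr_{A_0\to A_1}$ and invoking $\tr_{A_0\to A_1}\vert_{B_0}\le\tr_{B_0\to B_1}$, smooth by mapping the optimizer forward and using monotonicity of the purified distance, get the von Neumann case from monotonicity of relative entropy, and get max from min via Theorem~\ref{thm:duality}. The min and max parts are fine, and your flagging of the commutant compatibility condition needed to legitimately dualize is actually a detail the paper's own proof passes over in silence; your proposed sector-by-sector check with complementary traces is the right way to discharge it.

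There is, however, a gap in your von Neumann step. You assert that $\tr_{A_0\to A_1}$ ``by the partial-trace compatibility sends $\rho_{B_0}\mapsto\rho_{B_1}$.'' That does not follow: the hypothesis is only the \emph{inequality} $\tr_{A_0\to A_1}\vert_{B_0}\le\tr_{B_0\to B_1}$, so monotonicity of $D$ only yields $D(\rho_{A_0}\Vert\rho_{B_0})\ge D\bigl(\rho_{A_1}\Vert\tr_{A_0\to A_1}(\rho_{B_0})\bigr)$ with $\tr_{A_0\to A_1}(\rho_{B_0})\le\rho_{B_1}$ but not necessarily equal to it. You need one more step: since $\log$ is operator monotone, $\log\tr_{A_0\to A_1}(\rho_{B_0})\le\log\rho_{B_1}$, hence $D(\rho_{A_1}\Vert\sigma)$ is antitone in $\sigma$ and $D\bigl(\rho_{A_1}\Vert\tr_{A_0\to A_1}(\rho_{B_0})\bigr)\ge D(\rho_{A_1}\Vert\rho_{B_1})$. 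This is exactly how the paper closes the argument. Note that the same issue does not arise in your min step, because there you only ever need an upper bound $e^\lambda\tr_{B_0\to B_1}(\sigma_{B_0})$, and the inequality between the two partial traces points the right way.
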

\begin{proof}
    See Appendix \ref{app:SSA}.
\end{proof}

\section{One-shot entropies for gravity}
\label{section:genminmaxentropies}

In this section we propose how to discuss the one-shot quantum Shannon theory of subregions in semiclassical gravity, specializing from the algebraic definitions of the previous section.

\subsection{Definitions}
Let $M$ be an (AdS-)globally hyperbolic Lorentzian spacetime with conformal boundary $M_\partial$ and let $J^\pm$ denote the causal future and past.
Given any set $s \subset M$, $\partial s$ denotes the boundary of $s$ in $M$. 
The interior of $s$ is $s \setminus \partial s$ and is denoted $\mathrm{int}(s)$.
For figures illustrating the following definitions, we refer the reader to Section 4.1 of \cite{Bousso:2022aa}.

\begin{defn} \label{def:complementregion}
    The \emph{spacelike complement} of a set $s \subset M$ is denoted $s'$, and is defined as the interior of the set of points that are spacelike related to all points in $s$,
    \begin{equation}
        s' := \mathrm{int} \left(M \setminus J^+(s) \setminus J^-(s) \right)~.
    \end{equation}
\end{defn}

\begin{defn}
    A \emph{wedge} is a set $a \subset M$ that satisfies $a = a''$. 
\end{defn}

\begin{rem}
    Wedges are open.
\end{rem}

\begin{rem}
    The intersection of two wedges can be shown to be a wedge. Similarly, the spacelike complement of a wedge is itself a wedge.
\end{rem}

\begin{defn}
    Given two wedges $a$ and $b$, the \emph{wedge union} is defined as
    \begin{equation}
        a \doublecup b := (a' \cap b')'~.
    \end{equation}
    By the above remark, $a \doublecup b$ is a wedge.
\end{defn}

\begin{defn}
    The \emph{edge} of wedge $a$ is defined as 
    \begin{equation}
        \eth a := \partial a \cap \partial a'~.
    \end{equation}
    Conversely, a wedge is fully characterized by specifying its edge and one spatial side of that edge as the inside. 
\end{defn}

\subsection{Generalized one-shot entropies}

We take semiclassical gravity to mean quantum field theory (QFT) on a curved background, coupled to gravity with Newton's constant $G$ sufficiently small for perturbative approximations to be valid.

In regular QFT -- without the coupling to gravity -- the algebra $\mathcal{M}_b$ of operators associated to a wedge $b$ is generally of type III and density matrices do not exist.
Nonetheless, one can regulate the theory, for example by introducing a lattice cutoff with spacing $\delta$.
The von Neumann entropy $S(b)$ is then well defined in the regulated theory but diverges as the regulator is taken away, $\delta \to 0$, with the leading divergence proportional to the area $A(\eth b)$.

In semiclassical gravity the situation is expected to be better (see for example \cite{Bousso:2015mna} and references therein, and \cite{Witten:2021unn, Chandrasekaran:2022eqq,Sorce:2023aa} for relevant recent work).
The physical entropy associated to a wedge is the generalized entropy
\begin{equation}
	S_\mathrm{gen}(b) = \frac{A(\eth b)}{4G} + S(b)~,
\end{equation}
which is thought to be UV finite, the divergence in $S(b)$ cancelling against a counterterm in $A(\eth b)/4G$.\footnote{Subleading divergences in $S(b)$ are expected to be renormalized by other geometric terms in the gravitational entropy \cite{Bousso:2015mna, Dong:2013qoa}.}

In the same spirit, we conjecture that the min-entropy and max-entropy also admit UV finite ``generalized'' versions \cite{Akers:2020pmf}.
To introduce them, it will be helpful to UV regulate semiclassical gravity, say again by some $\delta$ such that $\delta \to 0$ removes the regulator.
In this cutoff theory, the algebra $\mathcal{M}_b$ has a non-trivial center, generated by the observables measurable in both $b$ and its complement $b'$ \cite{Harlow:2016vwg}.
In particular this includes geometric features of the surface $\eth b$, such as the operator $\hat{A}(\eth b)$ measuring the area of $\eth b$, which by Remark \ref{rem:C_expansion} takes the form
\begin{equation}
    \hat{A}(\eth b) = \oplus_\alpha A_\alpha~,
\end{equation}
where $A_\alpha \in \mathds{R}$ is the area of states in sector $\alpha$. 

If the regulated algebras are finite-dimensional, we can also define canonical density matrices for the cutoff algebra $\mathcal{M}_b$. As discussed in Section \ref{sec:algebras}, these take the form
\begin{equation}
    \rho_{b, \mathrm{can}} = \oplus_{\alpha} \, q_\alpha \rho_{b,\alpha}~,
\end{equation}
where $q_\alpha$ is a probability distribution over $\alpha$ sectors and $\rho_{b,\alpha}$ is the normalized density matrix of the quantum fields in $b$ conditioned on the center observables being in sector $\alpha$.

Canonical density matrices are not regulator independent, however, and are not expected to have a nice limit as we take $\delta \to 0$.
Instead, we focus on a trace which is expected to be UV finite. 
\begin{defn}[Generalized trace]\label{eq:gen_tr}
The \emph{generalized trace} is the canonical trace with an insertion of the exponential of the area operator,
\begin{align}
	\tr_{b,\mathrm{gen}}[ \cdot ] := \mathrm{tr}_{b, \mathrm{can}} \left[ e^{\hat{A}(\eth b)/4G} ( \cdot ) \right]~.
 \end{align}
\end{defn}
We will sometimes drop the subscript $b$ when it is clear from context.
Since $e^{\hat{A}(\eth b)/4G}$ is central in the algebra $\mathcal{M}_b$, $\tr_\mathrm{gen}$ is a trace, with coefficients $C^b_{\alpha}$ as defined in \eqref{eqn:tracecoeff} given by 
\begin{align}
    C^b_{\alpha} = e^{A_\alpha/4G}.
\end{align}
\begin{defn}[Generalized density matrices]
The \emph{generalized density matrices} are
\begin{equation}\label{eq:gen_dm}
	\rho_{b,\mathrm{gen}} := e^{-\hat{A}(\eth b)/4G} \rho_{b, \mathrm{can}}~.
\end{equation}
\end{defn}
The von Neumann entropy of a generalized density matrix is given by
\begin{align}
S(\rho_{b,\mathrm{gen}}) = -\braket{\log \rho_{b,\mathrm{gen}}} = \braket{\hat{A}}/4G - \braket{\log \rho_{b, \mathrm{can}}} = S_\mathrm{gen}(b)~.
\end{align}
Since generalized entropy is strongly expected to be UV-finite and regulator independent, it is reasonable to expect that generalized density matrices -- unlike canonical density matrices -- are also regulator independent. Indeed, as we discuss in Section \ref{sec:continuum},  the continuum algebra $\mathcal{M}_b$ describing a black hole in the strict $G \to 0$ limit is a Type II$_\infty$ von Neumann factor \cite{Witten:2021unn, Chandrasekaran:2022eqq}. As a result, the continuum algebra has a unique trace and hence unique density matrices (up to normalization); the ambiguity present in regulated descriptions where the algebras have centers vanishes. One can show that this trace indeed describes the $\delta \to 0$ limit of the generalized trace rather than e.g. the canonical trace.

With the definition of generalized traces and density matrices in hand, we can define conditional generalized one-shot entropies using the definitions given in Section \ref{sec:algebras}.

\begin{defn}[Generalized conditional entropies]\label{def:oneshotgenent} For any pair of wedges $a \subset b$, we define
	\begin{align}
		H_\mathrm{min,gen}(b|a)_\psi &:= - \min_\sigma \inf\{ \lambda : \rho_{b,\mathrm{gen}} \le e^{\lambda} \sigma_{a,\mathrm{gen}} \}~, \label{eq:gen_cond_min}\\
		S_\mathrm{gen}(b|a)_\psi &:= S_\mathrm{gen}(b)_\psi - S_\mathrm{gen}(a)_\psi ~, \\
		H_\mathrm{max,gen}(b|a)_\psi &:= \sup_\sigma \log \left( \tr_{b,\mathrm{gen}} \sqrt{\sigma_{a,\mathrm{gen}}^{1/2} \rho_{b,\mathrm{gen}}  \sigma_{a,\mathrm{gen}}^{1/2}} \right)^2~.\label{eq:gen_cond_max}
	\end{align}
	where $S_\mathrm{gen}(x) = - \tr_\mathrm{gen} [\rho_{x,\mathrm{gen}} \log \rho_{x,\mathrm{gen}} ]$.
\end{defn}

\begin{rem}
After smoothing, these define the \emph{smooth conditional generalized entropies}. 
\end{rem}

\begin{rem}
For notational convenience, we will sometimes define generalized entropies for sets $s$ that are not a wedge. In this case, $S_\mathrm{gen}(s) := S_\mathrm{gen}(s'')$.
\end{rem}

Of these three quantities, the difference in generalized entropies $S_\mathrm{gen}(b|a)$ is the most familiar, with a straightforward physical interpretation:  
\begin{equation}\label{eq:sgen_decomp}
	S_\mathrm{gen}(b|a) = \frac{\braket{A(\eth b)} - \braket{A(\eth a)}}{4G} + S(b) - S(a)~,
\end{equation}
where $\braket{A(\eth b)}, \braket{A(\eth a)}$ are the expectation value of area for the edges of regions $b$ and $a$ respectively.

What about the (smooth) generalized one-shot entropies?
Consider the unconditional generalized min-entropy, 
\begin{align}
    H_\mathrm{min,gen}(b) = - \inf\{ \lambda : e^{-\hat{A}(\eth b)/4G}\rho_{b,\mathrm{can}} \le e^{\lambda} \}.
\end{align}
This equals $H_\mathrm{min,gen}(b) = - \log \lambda_\mathrm{largest}$, where $\lambda_\mathrm{largest}$ is the largest eigenvalue of the operator $e^{-\hat{A}(\partial b)/4G}\rho_{b}$. In other words, while the generalized von Neumann entropy is the expectation value of $\hat A/4G - \log \rho$, the generalized min-entropy is the minimal possible value for the operator $\hat A/4G - \log \rho$. The smooth generalized min-entropy is closely related: it is a lower confidence bound on $\hat A/4G - \log \rho$.

The unconditional generalized max-entropy \begin{align}
    H_\mathrm{max,gen}(b) = 2 \log( \tr_\mathrm{gen} \rho^{1/2}_{b,\mathrm{gen}})
\end{align} 
is the R\'{e}nyi-1/2 entropy of the density matrix $e^{-\hat{A}(\eth b)/4G}\rho_{b,\mathrm{can}}$ with respect to the generalized trace. Just like ordinary R\'{e}nyi-1/2 entropies, it is typically dominated by the many small eigenvalues of $e^{-\hat{A}(\eth b)/4G}\rho_{b,\mathrm{can}}$.
As a result, the smooth generalized max-entropy is an upper confidence bound on $\hat A/4G - \log \rho$.

As emphasized in Section \ref{section:OSreview}, conditional one-shot entropies cannot generally be written as differences between unconditional entropies. Instead they are best understood operationally; see e.g. \cite{konig2009operational}. However there exist interesting classes of states \cite{Akers:2020pmf} for which (regulated) bulk smooth min-, von Neumann, and smooth max-entropies all differ at $O(1/G)$ while fluctations in areas are $O(1/\sqrt{G})$. In that case, we can treat the area terms in Definition \ref{def:oneshotgenent} as c-numbers at leading order. We then obtain
\begin{align}
H^\varepsilon_\mathrm{min/max,gen}(b|a)_\psi \approx H^\varepsilon_\mathrm{min/max, can}(b|a)_\psi + \frac{A(\eth b) - A(\eth a)}{4G},
\end{align}
where $A(\eth b)$ and $A(\eth a)$ are the classical areas of the respective surfaces.

We emphasize however that this approximation only makes sense if $H^\varepsilon_\mathrm{min/max, can}$ is explicitly regulated. While the leading divergence in $H^\varepsilon_\mathrm{min/max, can}$ as $\delta \to 0$ is proportional to $A(\eth b) - A(\eth a)$ as for the conditional von Neumann entropy, the subleading divergences will be different.\footnote{UV-divergences in QFT entanglement entropies come from UV Rindler-like modes near the edges of regions. The leading divergence is linear in the number of such modes $n$ that are below the UV-cutoff. Thanks to the asymptotic equipartition principle, this $O(n)$ divergence is the same for both one-shot and von Neumann entropies. However there will be subleading $O(\sqrt{n})$ differences between them that will still diverge as $n \to \infty$.} As a result, $H^\varepsilon_\mathrm{min/max, can}$ cannot be rendered UV-finite by the addition of the same area difference that works for the conditional von Neumann entropy. On the other hand we do expect Definition \ref{def:oneshotgenent} to be genuinely UV-finite. We provide some evidence for this in Section \ref{sec:continuum} where we show how to define certain examples of finite conditional generalized one-shot entropies in the continuum $G\to 0$ theory.

We conclude this section by noting two important properties of generalized one-shot entropies that are inherited from the corresponding properties of general algebraic one-shot entropies from Section \ref{sec:algebras}.

\begin{prop}[Duality]\label{prop:duality_gen}
    Let $a \subset b$ be wedges and let $a'$, $b'$ be their complements. Then for any pure state $\ket{\psi} \in \mathcal{H}$ and $\varepsilon \ge 0$, it holds that
	\begin{equation}
		H^\varepsilon_\mathrm{min, gen}(b|a)_\psi = - H_\mathrm{max, gen}^\varepsilon(a'|b')_\psi~.
	\end{equation} 
\end{prop}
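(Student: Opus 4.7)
The plan is to reduce Proposition \ref{prop:duality_gen} directly to the algebraic duality theorem (Theorem \ref{thm:duality}) by making the substitutions $\mathcal{M}_A \to \mathcal{M}_b$ and $\mathcal{M}_B \to \mathcal{M}_a$, so that $\mathcal{M}_{A'} \to \mathcal{M}_{b'}$ and $\mathcal{M}_{B'} \to \mathcal{M}_{a'}$. The wedge inclusion $a \subset b$ implies the inclusion of (regulated) local algebras $\mathcal{M}_a \subseteq \mathcal{M}_b$ by isotony, and the regulated algebras of spacelike-complementary wedges are commutants of one another, $\mathcal{M}_{b'} = \mathcal{M}_b'$ and $\mathcal{M}_{a'} = \mathcal{M}_a'$, so the hypotheses on algebra structure in Theorem \ref{thm:duality} are automatic.

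The one non-trivial check is that the pairs of generalized traces $(\tr_{b,\mathrm{gen}}, \tr_{b',\mathrm{gen}})$ and $(\tr_{a,\mathrm{gen}}, \tr_{a',\mathrm{gen}})$ are complementary in the sense of Definition \ref{def:comptrace}. I would first note that a wedge and its spacelike complement share the same edge, $\eth b = \eth b'$, so that the area operator $\hat{A}(\eth b) = \hat{A}(\eth b')$ is literally the same operator, lying in the common center $Z(\mathcal{M}_b) = Z(\mathcal{M}_{b'})$. Under the decomposition of Theorem \ref{thm:classification}, the minimal central projectors $p_\alpha$ of $\mathcal{M}_b$ and $\mathcal{M}_{b'}$ are identified, and the area operator decomposes as $\hat{A}(\eth b) = \sum_\alpha A_\alpha p_\alpha$ in either algebra with the same eigenvalues $A_\alpha$. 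By Definition \ref{eq:gen_tr} together with Remark \ref{rem:tracesectors}, the trace coefficients are therefore
\begin{equation}
C^b_\alpha = e^{A_\alpha/4G} = C^{b'}_\alpha,
\end{equation}
which is exactly the complementarity condition. The identical argument applies to the pair $(a, a')$.

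With complementarity verified, Theorem \ref{thm:duality} applied to the pure state $\ket{\psi}$ yields
\begin{equation}
H^\varepsilon_\mathrm{min}(b|a)_\psi = - H^\varepsilon_\mathrm{max}(a'|b')_\psi,
\end{equation}
where the left-hand side is the smooth conditional min-entropy computed with the generalized traces (hence equal to $H^\varepsilon_\mathrm{min,gen}(b|a)_\psi$) and similarly the right-hand side is $H^\varepsilon_\mathrm{max,gen}(a'|b')_\psi$. The unsmoothed $\varepsilon = 0$ case is just the first equation of Theorem \ref{thm:duality}, and the smoothed case follows from its second equation; no separate smoothing argument is needed since the purified distance of Definition \ref{def:purifieddistance} is defined in terms of the same trace used in the entropies.

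The main conceptual obstacle, and the only place where gravitational input is actually used, is the identification of sectors and area eigenvalues between $\mathcal{M}_b$ and $\mathcal{M}_{b'}$; everything else is a direct appeal to the algebraic result. This identification is clean in the regulated finite-dimensional setting assumed here because the center is literally shared, but one should expect that in the continuum limit (as discussed in Section \ref{sec:continuum}) some care is needed to ensure the complementary-trace condition survives, since the generalized trace is designed precisely to have a well-defined $\delta \to 0$ limit.
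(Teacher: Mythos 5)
Your proof is correct and follows the same route as the paper's: both verify that the shared edge $\eth b = \eth b'$ (and $\eth a = \eth a'$) forces $C^b_\alpha = e^{A_\alpha/4G} = C^{b'}_\alpha$, so the generalized traces are complementary in the sense of Definition \ref{def:comptrace}, and then invoke Theorem \ref{thm:duality}. You simply spell out the sector/center identification in more detail than the paper does.
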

\begin{proof}
Since each pair of complementary regions shares a common edge $\eth b = \eth b'$ and $\eth a = \eth a'$, we have
\begin{align}
    C^b_{\alpha} = C^{b'}_{\alpha}= e^{A_\alpha(\eth b)/4G}~,
\end{align}
with a similar equality holding for $a$ and $a'$. The generalized traces on $b$ and $b'$ (and $a$ and $a'$) are therefore complementary in the sense of Definition \ref{def:comptrace}. Consequently, the result follows from Theorem \ref{thm:duality}.
\end{proof}

\begin{prop}[Strong subadditivity]\label{prop:SSA_gen}
    Let $a \supseteq b,c$ be bulk subregions with $a \cap b' \subseteq c$. Then
\begin{align}
   H^\varepsilon_\mathrm{min, gen}(a |c) &\leq H^\varepsilon_\mathrm{min,gen}(b|c \cap b)~, \\
   S_\mathrm{gen}(a | c) &\le S_{\mathrm{gen}}(b|c \cap b)~,\\
    H^\varepsilon_\mathrm{max,gen}(a| c) &\leq H^\varepsilon_\mathrm{max,gen}(b|c \cap b)~.
\end{align}
\end{prop}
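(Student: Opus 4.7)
The plan is to reduce Proposition \ref{prop:SSA_gen} to the algebraic strong subadditivity of Theorem \ref{thm:ssa}, via the identifications $\mathcal{M}_{A_0} \leftrightarrow \mathcal{M}_a$, $\mathcal{M}_{A_1} \leftrightarrow \mathcal{M}_c$, $\mathcal{M}_{B_0} \leftrightarrow \mathcal{M}_b$, and $\mathcal{M}_{B_1} \leftrightarrow \mathcal{M}_{b \cap c}$, each equipped with its generalized trace from Definition \ref{eq:gen_tr}. The required inclusion structure $\mathcal{M}_a \supset \mathcal{M}_b \supset \mathcal{M}_{b \cap c}$ and $\mathcal{M}_a \supset \mathcal{M}_c \supset \mathcal{M}_{b \cap c}$ is immediate from the monotonicity of the subregion-to-algebra map applied to $a \supseteq b \supseteq b \cap c$ and $a \supseteq c \supseteq b \cap c$.

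The substantive step is verifying the partial-trace hypothesis $\tr_{b \to b \cap c} \ge \tr_{a \to c}\big|_{\mathcal{M}_b}$. My strategy is to first note that the nesting condition $a \cap b' \subseteq c$ translates algebraically to $\mathcal{M}_{a \cap b'} \subseteq \mathcal{M}_c$. For any $x \in \mathcal{M}_b$ and any $y \in \mathcal{M}_{a \cap b'}$, one has $[x,y]=0$ (since $y \in \mathcal{M}_{b'}$) and $y \in \mathcal{M}_c$, so the bi-module property of $\tr_{a \to c}$ (Definition \ref{def:partialtrace}) yields
\begin{equation*}
y\, \tr_{a \to c}(x) = \tr_{a \to c}(y x) = \tr_{a \to c}(x y) = \tr_{a \to c}(x)\, y.
\end{equation*}
Thus $\tr_{a \to c}(x)$ commutes with all of $\mathcal{M}_{a \cap b'}$, which by a finite-dimensional bicommutant version of Haag duality (i.e.\ $\mathcal{M}_a \cap (\mathcal{M}_{a \cap b'})' = \mathcal{M}_b$) places $\tr_{a \to c}(x)$ in $\mathcal{M}_b$. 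Combined with $\tr_{a \to c}(x) \in \mathcal{M}_c$, this forces $\tr_{a \to c}(x) \in \mathcal{M}_{b \cap c}$. Hence $\tr_{a \to c}\big|_{\mathcal{M}_b}$ is itself a completely positive, trace-preserving, bi-module map $\mathcal{M}_b \to \mathcal{M}_{b \cap c}$.

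To close the argument I would appeal to the uniqueness statement of Theorem \ref{thm:parttracerestriction}: provided the pair $(\tr_{b,\mathrm{gen}}, \tr_{b \cap c, \mathrm{gen}})$ is the same pair induced by restricting $(\tr_{a,\mathrm{gen}}, \tr_{c,\mathrm{gen}})$, the restricted partial trace constructed above must coincide with $\tr_{b \to b \cap c}$, trivially giving the required `$\ge$'. The three inequalities of the proposition then drop out of the three inequalities of Theorem \ref{thm:ssa}.

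The main obstacle is the trace-compatibility check just mentioned: one must verify that the generalized trace on $\mathcal{M}_{b \cap c}$, built directly from $e^{\hat A(\eth(b\cap c))/4G}$ via Definition \ref{eq:gen_tr}, matches the trace induced by pushing $\tr_{b,\mathrm{gen}}$ down through the partial trace from $\mathcal{M}_b$, and analogously for $\mathcal{M}_b$ sitting inside $\mathcal{M}_a$. This reduces to a purely geometric statement about how the area operators on $\eth a, \eth b, \eth c, \eth(b \cap c)$ sit in the regulated central algebras and match across the wedge inclusions. I expect this to hold for the natural regulator since each $e^{\hat A(\eth \cdot)/4G}$ is a central element localized on its own edge, but it is the one step that requires geometric input beyond Theorem \ref{thm:ssa} and the uniqueness of the bi-module partial trace.
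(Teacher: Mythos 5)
Your reduction to Theorem \ref{thm:ssa} with the identifications $\mathcal{M}_{A_0}=\mathcal{M}_a$, $\mathcal{M}_{A_1}=\mathcal{M}_c$, $\mathcal{M}_{B_0}=\mathcal{M}_b$, $\mathcal{M}_{B_1}=\mathcal{M}_{b\cap c}$ is exactly the paper's strategy, and your bi-module argument that $\tr_{a\to c,\mathrm{gen}}\big|_{\mathcal{M}_b}$ lands in $\mathcal{M}_{b\cap c}$ is a reasonable (if slightly stronger-than-needed, since it invokes a Haag-duality statement $\mathcal{M}_a\cap(\mathcal{M}_{a\cap b'})'=\mathcal{M}_b$) substitute for the paper's locality assumption on the \emph{canonical} partial traces. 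But the final step is where the proof breaks: you argue that, granted a ``trace-compatibility check,'' the uniqueness of the bi-module partial trace forces $\tr_{a\to c,\mathrm{gen}}\big|_{\mathcal{M}_b}=\tr_{b\to b\cap c,\mathrm{gen}}$, so that the hypothesis of Theorem \ref{thm:ssa} holds trivially as an equality. That compatibility is false, and the equality does not hold. The restriction $\tr_{a,\mathrm{gen}}\big|_{\mathcal{M}_b}$ is built from $e^{\hat A(\eth a)/4G}$ while $\tr_{b,\mathrm{gen}}$ is built from $e^{\hat A(\eth b)/4G}$; these are different central weightings, so $\tr_{a\to c,\mathrm{gen}}\big|_{\mathcal{M}_b}$ is \emph{not} the partial trace associated with the pair $(\tr_{b,\mathrm{gen}},\tr_{b\cap c,\mathrm{gen}})$ and uniqueness cannot be invoked.

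Concretely, writing both maps in terms of the canonical partial traces (whose restrictions do agree by locality), one finds
\begin{equation*}
\tr_{a\to c,\mathrm{gen}}\big|_{\mathcal{M}_b} \;=\; e^{\frac{1}{4G}\left(\hat A(\eth a)-\hat A(\eth b)-\hat A(\eth c)+\hat A(\eth(b\cap c))\right)}\;\tr_{b\to b\cap c,\mathrm{gen}}\,,
\end{equation*}
where the prefactor is central (using $a\cap b'\subseteq c$ to pull $e^{(\hat A(\eth a)-\hat A(\eth b))/4G}\in\mathcal{M}_c$ through the partial trace). The two maps therefore differ unless the exponent vanishes, and the needed hypothesis $\tr_{b\to b\cap c,\mathrm{gen}}\ge\tr_{a\to c,\mathrm{gen}}\big|_{\mathcal{M}_b}$ holds if and only if
\begin{equation*}
A(\eth a)+A(\eth(b\cap c)) \;\le\; A(\eth b)+A(\eth c)\,,
\end{equation*}
i.e.\ strong subadditivity of geometric areas. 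This area inequality is the genuine content of the proposition and is entirely absent from your proposal; without it the argument does not close, and the ``compatibility'' you hoped would make the partial traces coincide would amount to demanding that this inequality be saturated, which is false in general.
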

\begin{proof}
    Note that we have the inclusion structure $\M_{a} \supseteq \M_c \supseteq \M_{b \cap c}$ and $\M_{a} \supseteq \M_{b} \supseteq \M_{b\cap c}$. According to Theorem \ref{thm:ssa}, we then just need to verify that $\tr_{a  \to c,\mathrm{gen}} \vert_b \le \tr_{b \to b \cap c,\mathrm{gen}}$.
    Consider a general operator $O_{b} \in \mathcal{M}_{b}$. By definition,
    \begin{align}
        \tr_{a \to c,\mathrm{gen}}[O_{b}] &= e^{-\hat{A}(\eth c)/4G} \mathrm{tr}_{a \to c, \,\mathrm{can}}[e^{\hat{A}(\eth a)/4G}O_{b}]~, \\
        \tr_{b \to b \cap c,\mathrm{gen}}[O_{b}] &= e^{-\hat{A}(\eth(b\cap c))/4G} \mathrm{tr}_{b \to b \cap c, \,\mathrm{can}}[e^{\hat{A}(\eth b)/4G}O_{b}]~.
    \end{align}
    Because $\M_a \supset \M_b$, then $\hat{A}(\eth a)$ commutes with $\hat{A}(\eth b)$ and so we can write the exponential for the area operator for $a$ as 
    \begin{align}
        e^{\hat{A}(\eth a)/4G} = e^{\hat{A} (\eth b)/4G} e^{\hat{A}(\eth a)/4G - \hat{A}(\eth b)/4G}~.
    \end{align}
    By the assumption that $a \cap b' \subseteq c$, we further know that $\M_a \cap \M_b' \subseteq \M_c$. Since $\hat{A}(\eth a), \hat{A}(\eth b) \in \M_a \cap \M_b'$, then also $\hat{A}(\eth a) -\hat{A}(\eth b) \in \M_c$. Using the bi-module property, we can then pull $e^{\hat{A}(\eth a)/4G - \hat{A}(\eth b)/4G}$ out of the partial trace so that
    \begin{align}
        \tr_{a \to c, \mathrm{gen}}[O_b] = e^{\frac{1}{4G} \left(-\hat{A}(\eth c) + \hat{A}(\eth a) - \hat{A}(\eth b)\right)} \mathrm{tr}_{a \to c, \mathrm{can}}[e^{\hat{A}(\eth b)/4G}O_{b}]~.
    \end{align}
    If we use the fact that in a local (regulated) quantum field theory, the restriction of $\tr_{a \to c,\, \mathrm{can}}$ to $\mathcal{M}_b$ is simply $\tr_{b \to (b \cap c),\, \mathrm{can}}$, then the necessary inequality holds if we can prove the following inequality on areas
    \begin{equation}
        -A(\eth c) + A(\eth a) - A(\eth b) \le -A(\eth(b\cap c))~,
    \end{equation}
    but this is just the statement of strong sub-additivity for areas, a true fact about geometric area.
\end{proof}

\section{One-shot quantum expansion and focusing conjectures}
\label{section:OSqmexpansionfocusing}
The goal of this section is to define new, one-shot versions of ideas that have been important in the study of quantum gravity: min- and max-quantum expansions and min- and max-quantum focusing conjectures (QFC).
While also of intrinsic interest themselves, these will play vital roles in Section \ref{section:covariantminmaxEW}, helping us define and prove theorems about covariant min- and max-entanglement wedges.

\subsection{Min- and max-quantum expansions}

Given a wedge $a$, there are two outwards-directed null hypersurfaces orthogonal to $\eth a$, one future-directed (past-directed) which we will call $N_+$ ($N_-$), forming part of the boundary of the causal future and past of $a$ respectively.
Let $N$ denote either one.
Through each point of $\eth a$ passes one generator of $N$.
Let $\lambda$ be an affine parameter along this generator, such that $\lambda = 0$ on $\eth a$ and $\lambda$ increases away from $\eth a$.
This defines a coordinate system $(\lambda, y)$ on $N$. 
A continuous function $V(y) \ge 0$ defines a slice of $N$, consisting of the point on each generator $y$ for which $\lambda = V$.
Any such $V$ defines a new wedge $a(V)$ with $\eth a(V) = V$ and the inside chosen in the direction of decreasing $\lambda$.

A local deformation of wedge $a$ can be defined as follows.
Consider $\eth a$ and a second slice of $N$ that differs from $\eth a$ only in a neighborhood of generators with infinitesimal area $\mathcal{A}$ around a generator $y_0$:
\begin{equation}\label{eq:local_def}
    V_{\mathcal{A},\delta,y_0}(y) := \delta f_{\mathcal{A},y_0}(y)~.
\end{equation}
Here $\delta \ge 0$ and we define $f_{\mathcal{A},y_0} = 1$ in a neighborhood of area $\mathcal{A}$ around point $y_0$ and $f_{\mathcal{A},y_0} = 0$ everywhere else (smoothed out to be appropriately continuous).
See Figure \ref{fig:local_def}.

\begin{defn}[von Neumann expansion]
    Let $a$ be a wedge, let $y_0 \in \eth a$, and let $V^+$ ($V^-$) be associated to a future-directed (past-directed) outwards null hypersurface orthogonal to $\eth a$.
    The future (past) \emph{von Neumann expansion} $\Theta_+[a,y_0]$ $(\Theta_-[a,y_0])$ is the derivative of the generalized entropy with respect to local deformation \eqref{eq:local_def} along the future (past) null congruence:\footnote{$\Theta_\pm$ is often called the \emph{quantum} expansion, to emphasize the use of generalized entropy instead of just the area. We use this new name to distinguish the use of generalized von Neumann entropy from the generalized one-shot entropies.}
    \begin{equation}\label{eq:vn_exp}
        \Theta_\pm[a;y_0] := \lim_{\mathcal{A}\to 0} \lim_{\delta \to 0} \frac{4 G}{\mathcal{A} \delta} S_\mathrm{gen}(a(V^\pm_{\mathcal{A},\delta, y_0})|a)~.
    \end{equation}
\end{defn}

\begin{figure}
    \centering
    \includegraphics[width=0.5\textwidth]{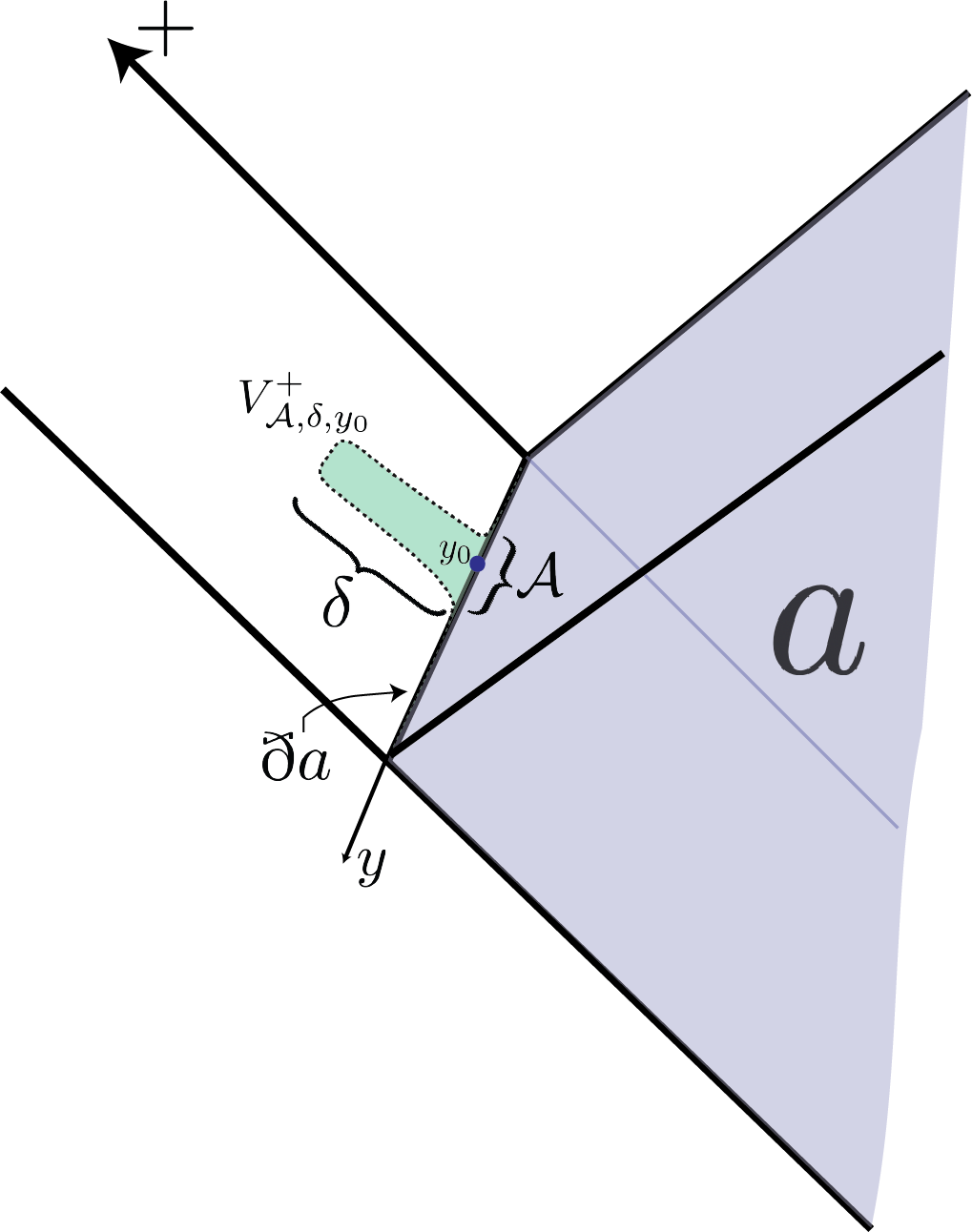}
    \caption{This figure depicts the deformation of a wedge. The undeformed wedge is $a$ with edge $\eth a$ drawn with the solid line. We deform the region $a$ by deforming $\eth a$ in the null direction by the bump $V^+_{\mathcal{A},\delta, y_0}$ at transverse coordinate $y_0$ with width $\mathcal{A}$ and height $\delta$. This takes $\eth a$ to the dashed line. The new, deformed wedge $a(V^+_{\mathcal{A},\delta, y_0})$ then has edge given by $V^+_{\mathcal{A},\delta, y_0}$. Expansions are then defined via limits as $\mathcal{A}, \delta\to 0$.}
    \label{fig:local_def}
\end{figure}

\begin{rem}
	 An equivalent but perhaps more familiar definition is 
	\begin{align}
		\Theta_\pm \left[a;y_0\right] = \frac{4G}{\sqrt{h(y_0)}}\frac{\delta}{\delta V(y_0)} S_\mathrm{gen}(a)
	\end{align}
	where $h$ is the induced area element on $\eth a$.
	We use \eqref{eq:vn_exp} because it nicely generalizes to the one-shot expansions.
\end{rem}

\begin{rem}\label{rem:vn_exp_decomp}
The von Neumann expansion can be decomposed as
    \begin{equation}
        \Theta_\pm \left[a;y_0\right] = \theta[a;y_0] + 4G \lim_{\mathcal{A}\to 0} \lim_{\delta \to 0} \frac{1}{\mathcal{A} \delta} S(a(V^\pm_{\mathcal{A},\delta, y_0})|a)~,
    \end{equation}
    where $\theta$ is the classical expansion and $S(a(V)|a)$ is the conditional von Neumann entropy of $a(V)$ conditioned on $a$.
\end{rem}

This von Neumann expansion is used in a number of conjectures, such as the generalized second law (GSL) and QFC, which we will review momentarily.
We first construct the following one-shot versions of the quantum expansions.

\begin{defn}[One-shot expansions]
    Let $a$ be a wedge, let $y_0 \in \eth a$, and let $V^+$ ($V^-$) be associated to a future-directed (past-directed) outwards null hypersurface orthogonal to $\eth a$.
    Let $\varepsilon > 0$.
    The future (past) \emph{max-expansion} $\Theta^\varepsilon_{+,\mathrm{max}}[a,y_0]$ $(\Theta^\varepsilon_{-,\mathrm{max}}[a,y_0])$ is the smooth conditional generalized max-entropy associated to local deformation \eqref{eq:local_def} along the future (past) null congruence:
    \begin{equation}
        \Theta_{\pm,\mathrm{max}}^\varepsilon[a;y_0] := \lim_{\mathcal{A}\to 0} \lim_{\delta \to 0} \frac{4 G}{\mathcal{A} \delta} H^\varepsilon_\mathrm{max,gen}( a(V^\pm_{\mathcal{A},\delta,y_0}) | a) ~.
    \end{equation}
    The future (past) \emph{min-expansion} $\Theta^\varepsilon_{+,\mathrm{min}}[a,y_0]$ $(\Theta^\varepsilon_{-,\mathrm{min}}[a,y_0])$ is the smooth conditional generalized min-entropy associated to local deformation \eqref{eq:local_def} along the future (past) null congruence:
    \begin{equation}
        \Theta_{\pm,\mathrm{min}}^\varepsilon[a;y_0] := \lim_{\mathcal{A}\to 0} \lim_{\delta \to 0} \frac{4 G}{\mathcal{A} \delta} H^\varepsilon_\mathrm{min,gen}( a(V^\pm_{\mathcal{A},\delta,y_0}) | a) ~.
    \end{equation}
\end{defn}

\begin{rem}\label{rem:continuity}
We shall assume that these limits are well defined and depend continuously on the wedges $a(V)$ for semiclassical states.
\end{rem}

\begin{rem}
    Unlike the von Neumann expansion, the one-shot expansions cannot in general be decomposed as in Remark \ref{rem:vn_exp_decomp}, with one term pertaining to the area and a separate term to the one-shot entropy.
    Furthermore, the one-shot conditional generalized entropies, e.g. $H^\varepsilon_{\mathrm{max,gen}}(a(V)|a)$, cannot be written as a difference by Remark \ref{rem:one-shot_cond_not_difference}, and therefore under the limits they do not describe a standard derivative.
\end{rem}

These min- and max-expansions inherit useful properties from the generalized min- and max-entropies.
In the following we assume the global state is pure for simplicity, such that for example $S_\mathrm{gen}(a) = S_\mathrm{gen}(a')$.
This can always be achieved by purifying the system with a reference $R$ and including $R \subset a'$ when $R \not\subset a$. 

\begin{lem}[Complementary expansions]\label{lem:comp_exp}
It holds that
\begin{equation}
	\Theta_{\pm,\mathrm{min}}^\varepsilon[a;y_0] = - \Theta_{\mp,\mathrm{max}}^\varepsilon[a';y_0]~.
\end{equation}
\end{lem}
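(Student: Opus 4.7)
The plan is to apply the duality relation of Proposition \ref{prop:duality_gen} to convert the min-entropy into a max-entropy on complementary wedges, and then use the continuity assumption of Remark \ref{rem:continuity} to re-identify the resulting limit as the desired expansion on the complement.

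Consider first $\Theta^{\varepsilon}_{+,\mathrm{min}}[a;y_0]$. Since $a \subseteq a(V^+_{\mathcal{A},\delta,y_0})$, Proposition \ref{prop:duality_gen} gives
\begin{equation*}
H^\varepsilon_\mathrm{min,gen}\bigl(a(V^+_{\mathcal{A},\delta,y_0}) \,\big|\, a\bigr) \;=\; -\,H^\varepsilon_\mathrm{max,gen}\bigl(a' \,\big|\, (a(V^+_{\mathcal{A},\delta,y_0}))'\bigr).
\end{equation*}
Multiplying by $4G/(\mathcal{A}\delta)$ and taking the defining double limit yields
\begin{equation*}
\Theta^{\varepsilon}_{+,\mathrm{min}}[a;y_0] \;=\; -\lim_{\mathcal{A}\to 0}\lim_{\delta\to 0}\frac{4G}{\mathcal{A}\delta}\,H^\varepsilon_\mathrm{max,gen}\bigl(a' \,\big|\, (a(V^+_{\mathcal{A},\delta,y_0}))'\bigr),
\end{equation*}
so the identity reduces to showing that this limit equals $\Theta^\varepsilon_{-,\mathrm{max}}[a';y_0]$.

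The geometric key is that the future-outward null congruence of $a$ and the past-outward null congruence of $a'$, both emanating orthogonally from the common edge $\eth a = \eth a'$, are generated by the same null direction extended in opposite affine senses, and therefore together form a single null hypersurface $N$ through the edge. Parametrize $N$ by an affine parameter $u$ with $u=0$ on $\eth a$, $u>0$ along the future-outward direction of $a$, and $u<0$ along the past-outward direction of $a'$, with the affine parameter on each side chosen to agree with the one used in the corresponding expansion. Then $(a(V^+_{\mathcal{A},\delta,y_0}))'$ is obtained from $a'$ by displacing its edge locally near $y_0$ from $u=0$ to $u=+\delta$, while $a'(V^-_{\mathcal{A},\delta,y_0})$ is obtained by displacing the edge from $u=0$ to $u=-\delta$. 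Both conditional entropies $H^\varepsilon_\mathrm{max,gen}\bigl(a' \,\big|\, (a(V^+))'\bigr)$ and $H^\varepsilon_\mathrm{max,gen}\bigl(a'(V^-) \,\big|\, a'\bigr)$ therefore describe an edge displacement of amplitude $\delta$ in the $-u$ direction along $N$; they differ only in the starting point of the displacement ($u=\delta$ versus $u=0$), both of which collapse to $\eth a'$ as $\delta \to 0$. By the continuity assumption in Remark \ref{rem:continuity}, the two limits coincide, so $\Theta^{\varepsilon}_{+,\mathrm{min}}[a;y_0] = -\Theta^{\varepsilon}_{-,\mathrm{max}}[a';y_0]$. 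The companion identity $\Theta^{\varepsilon}_{-,\mathrm{min}}[a;y_0] = -\Theta^{\varepsilon}_{+,\mathrm{max}}[a';y_0]$ follows by the same argument with past and future swapped.

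The principal obstacle is the final continuity step. Smoothed conditional one-shot entropies are not in general differences of unconditional ones (Remark \ref{rem:one-shot_cond_not_difference}), so one cannot equate the two limits by an algebraic manipulation; the argument requires genuine use of the assumption that the leading behavior of the generalized conditional max-entropy under an infinitesimal null deformation depends only on the limiting wedge $a'$ and the deformation direction, not on which of the two nested wedges holds its edge fixed at the infinitesimal base point.
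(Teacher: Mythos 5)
Your proof is correct and follows essentially the same route as the paper's: apply the duality of Proposition \ref{prop:duality_gen} at finite $(\mathcal{A},\delta)$ to obtain $H^\varepsilon_\mathrm{min,gen}(a(V^\pm)|a) = -H^\varepsilon_\mathrm{max,gen}\bigl(a'\,\big|\,(a(V^\pm))'\bigr)$, then invoke the continuity assumption of Remark \ref{rem:continuity} to pass to the limit. The paper's version is terser and leaves implicit both the identification of $(a(V^\pm))'$ as an inward null deformation of $a'$ along the same null hypersurface and the base-point shift that continuity must absorb, so your explicit treatment of that final step is a faithful (and more careful) expansion of the same argument.
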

\begin{proof}    
    Let $b := a(V_{\mathcal{A},\delta,y_0})$ denote a wedge defined by local deformation of $a$, for some finite $\mathcal{A},\delta$.
    By Theorem \ref{thm:duality}, it holds that $H^\varepsilon_{\mathrm{min,gen}}(b|a) = - H^\varepsilon_{\mathrm{max,gen}}(a'|b')$.
    By Remark \ref{rem:continuity}, this continues to hold in the limits $\mathcal{A},\delta \to 0$.
\end{proof}

\begin{lem}[Ordering of expansions]\label{lem:exp_order}
For sufficiently small $\varepsilon > 0$,
\begin{equation}
	 \Theta^{\varepsilon}_{\pm,\mathrm{min}}[a;y_0] \leq \Theta_\pm[a;y_0] \leq \Theta^{\varepsilon}_{\pm,\mathrm{max}}[a;y_0]~.
\end{equation}
\end{lem}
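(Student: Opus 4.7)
The plan is to reduce the claim to Theorem \ref{thm:order} applied at each finite level of the null deformation, and then pass to the limit. First I would fix $\mathcal{A}, \delta > 0$ and set $b := a(V^{\pm}_{\mathcal{A},\delta,y_0})$. Because the deformation is outward along the null congruence, we have the inclusion $\mathcal{M}_a \subseteq \mathcal{M}_b$, and both algebras carry their generalized traces (Definition \ref{eq:gen_tr}). The generalized conditional entropies of Definition \ref{def:oneshotgenent} are exactly the algebraic entropies of Section \ref{sec:algebras} applied to this pair of algebras with the generalized trace, so Theorem \ref{thm:order} applies without modification and yields, for $\varepsilon>0$ sufficiently small,
\[
H^\varepsilon_\mathrm{min,gen}(b|a) \;\leq\; S_\mathrm{gen}(b|a) \;\leq\; H^\varepsilon_\mathrm{max,gen}(b|a).
\]

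Next I would multiply this sandwich through by the positive factor $4G/(\mathcal{A}\delta)$ and take the iterated limit $\delta\to 0$ followed by $\mathcal{A}\to 0$. By the continuity assumption of Remark \ref{rem:continuity}, each of the three expressions converges to its respective expansion $\Theta^\varepsilon_{\pm,\mathrm{min}}[a;y_0]$, $\Theta_{\pm}[a;y_0]$, and $\Theta^\varepsilon_{\pm,\mathrm{max}}[a;y_0]$. Since weak inequalities are preserved under limits, the desired ordering of expansions follows.

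The main subtlety I anticipate is the phrase ``sufficiently small $\varepsilon$'' in Theorem \ref{thm:order}: a priori, the allowed upper bound on $\varepsilon$ could depend on the state and on the algebra pair, hence on $(\mathcal{A},\delta)$. To commute the threshold with the limit, I need a single $\varepsilon$ that works uniformly as $\mathcal{A},\delta\to 0$. Since both $H^\varepsilon_\mathrm{min,gen}(b|a)$ and $H^\varepsilon_\mathrm{max,gen}(b|a)$ vary continuously with the deformation by Remark \ref{rem:continuity}, and the unsmoothed sandwich $H_\mathrm{min,gen} \leq S_\mathrm{gen} \leq H_\mathrm{max,gen}$ holds unconditionally, such a uniform threshold is expected to exist. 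Making this rigorous would rely on the quantitative form of the bounds inside the proof of Theorem \ref{thm:order} given in Appendix \ref{app:boundingminmax}.
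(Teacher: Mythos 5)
Your proposal is correct and follows essentially the same route as the paper: apply Theorem \ref{thm:order} to the pair of algebras for $a$ and the deformed wedge $a(V^{\pm}_{\mathcal{A},\delta,y_0})$ at finite $\mathcal{A},\delta$, then pass to the limit using the continuity assumption of Remark \ref{rem:continuity}. Your extra remark about needing the $\varepsilon$-threshold in Theorem \ref{thm:order} to be uniform as $\mathcal{A},\delta\to 0$ is a legitimate subtlety that the paper's one-line proof silently absorbs into Remark \ref{rem:continuity}, so flagging it is a point in your favor rather than a gap.
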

\begin{proof}
    Let $b := a(V_{\mathcal{A},\delta,y_0})$ denote a wedge defined by local deformation of $a$, for some finite $\mathcal{A},\delta$.
	By Theorem \ref{thm:order}, for sufficiently small $\varepsilon$ it holds that $H^\varepsilon_\mathrm{min,gen}( b|a) \le S( b | a) \le H^\varepsilon_\mathrm{max,gen}( b | a) $.
	By Remark \ref{rem:continuity} this continues to hold in the limits $\mathcal{A},\delta \to 0$.
\end{proof}

\begin{lem}[Strong subadditivity of expansions]\label{lem:SSA_of_exp}
	Let $a \subseteq b$ be wedges in $M$.
        Let $y_0 \in \eth a, \eth b$, and let there be a non-zero open ball $O \subset M$ containing $y_0$ such that $a \cap O = b \cap O$.
	Then
\begin{align}
	 \Theta^{\varepsilon}_{\pm,\mathrm{min}}[b;y_0] &\le \Theta^{\varepsilon}_{\pm,\mathrm{min}}[a;y_0]~, \\
	 \Theta_{\pm}[b;y_0] &\le \Theta_{\pm}[a;y_0]~, \\
	 \Theta^{\varepsilon}_{\pm,\mathrm{max}}[b;y_0] &\le \Theta^{\varepsilon}_{\pm,\mathrm{max}}[a;y_0]~.
\end{align}
\end{lem}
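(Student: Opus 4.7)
The plan is to apply the generalized strong subadditivity (Proposition \ref{prop:SSA_gen}) to the four wedges $b(V), a(V), b, a$ arising from a common local deformation $V = V^\pm_{\mathcal{A},\delta,y_0}$, then divide by $\mathcal{A}\delta/(4G)$ and take $\delta \to 0$ followed by $\mathcal{A} \to 0$, invoking the continuity assumption of Remark \ref{rem:continuity}. Since $\eth a \cap O = \eth b \cap O$, the outward null hypersurfaces through $y_0$ for the two edges coincide, so $V$ is simultaneously a valid local deformation of both edges; the inclusions $a \subseteq a(V)$, $b \subseteq b(V)$, and $a(V) \subseteq b(V)$ are then immediate from $a \subseteq b$.

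The main task is to verify the geometric hypothesis of Proposition \ref{prop:SSA_gen} under the identifications ``$a$'' $\to b(V)$, ``$b$'' $\to a(V)$, and ``$c$'' $\to b$, namely $b(V) \cap (a(V))' \subseteq b$, together with the auxiliary identity $b \cap a(V) = a$ needed to recognize the conditioning wedge on the right-hand side of the resulting SSA inequality. Both statements reduce to the following observation: for $\mathcal{A}$ and $\delta$ small enough that the support of $V$ lies inside $O$, the ``null slivers'' $a(V) \setminus a$ and $b(V) \setminus b$ coincide, since both are generated by the same outward null rays emanating from the common edge $\eth a \cap O = \eth b \cap O$. This immediately yields $b(V) \setminus b \subseteq a(V)$, hence $b(V) \cap (a(V))' \subseteq b$; it also yields $(a(V) \setminus a) \cap b = \emptyset$, because the sliver lies outside $a \cap O = b \cap O$ and therefore outside $b$, so that $b \cap a(V) = a$.

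With these hypotheses in hand, Proposition \ref{prop:SSA_gen} gives the three inequalities
\[ H^\varepsilon_{\mathrm{min,gen}}(b(V)|b) \le H^\varepsilon_{\mathrm{min,gen}}(a(V)|a), \quad S_\mathrm{gen}(b(V)|b) \le S_\mathrm{gen}(a(V)|a), \quad H^\varepsilon_{\mathrm{max,gen}}(b(V)|b) \le H^\varepsilon_{\mathrm{max,gen}}(a(V)|a). \]
Dividing by $\mathcal{A}\delta/(4G)$ and taking $\delta, \mathcal{A} \to 0$ produces the three claimed inequalities for the future-directed expansions; the past-directed case is identical upon replacing $V^+$ with $V^-$. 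The main obstacle is the wedge-theoretic bookkeeping above — in particular the rigorous identification of the two null slivers inside $O$, which genuinely uses both the hypothesis $a \cap O = b \cap O$ and the locality of the deformation $V$ — while everything else is essentially automatic given Proposition \ref{prop:SSA_gen} and Remark \ref{rem:continuity}.
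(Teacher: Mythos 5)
Your proposal is correct and follows essentially the same route as the paper: deform both wedges by a common local deformation $V^\pm_{\mathcal{A},\delta,y_0}$ (valid for small enough $\mathcal{A},\delta$ because $a\cap O=b\cap O$), apply strong subadditivity of the generalized conditional entropies to the resulting inclusions $b(V)\supset b\supset a$ and $b(V)\supset a(V)\supset a$, and pass to the limit using Remark \ref{rem:continuity}. The paper's proof simply cites Proposition \ref{prop:SSA_gen} and Theorem \ref{thm:ssa} at this point, whereas you explicitly verify the hypotheses $b(V)\cap (a(V))'\subseteq b$ and $b\cap a(V)=a$ via the coincidence of the two null slivers inside $O$ --- a detail the paper leaves implicit, so your write-up is a faithful (and slightly more careful) version of the same argument.
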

\begin{proof}
	By assumption, there exists a small enough $\mathcal{A}, \delta$ such that we can take $V_{\mathcal{A},\delta,y_0}$ from \eqref{eq:local_def} to describe a deformation of both $b$ and $a$. 
	Then, for any finite $\mathcal{A}, \delta$ smaller than that, we have
    $b(V_{\mathcal{A},\delta,y_0}) \supset b \supset a$ and $b(V_{\mathcal{A},\delta,y_0}) \supset a(V_{\mathcal{A},\delta,y_0}) \supset a$.
    Furthermore, by Proposition \ref{prop:SSA_gen} the generalized conditional entropies satisfy strong subadditivity, Theorem \ref{thm:ssa}.
    Therefore
    \begin{equation}
        H^\varepsilon_\mathrm{min,gen}(b(V_{\mathcal{A},\delta,y_0})|b) \le H^\varepsilon_\mathrm{min,gen}(a(V_{\mathcal{A},\delta,y_0})|a)~,
    \end{equation}
    and similarly for $S_\mathrm{gen}$ and $H^\varepsilon_\mathrm{max,gen}$. 
    This continues to hold in the limits $\mathcal{A}, \delta \to 0$ by Remark \ref{rem:continuity}.
\end{proof}

\subsection{One-shot quantum focusing conjectures}

\begin{defn}[Quantum focusing conjecture \cite{Bousso:2015mna, Shahbazi-Moghaddam:2022hbw}]\label{defn:qfc}
    Let $a$ be a wedge, and let $V_1$ and $V_2 \ge V_1$ each define a slice of the same outwards-directed null hypersurface orthogonal to $\eth a$.
    Let $\Theta$ be the von Neumann expansion associated to this null hypersurface.
    For all $y \in \eth a$ such that $V_2(y) > 0$ (i.e. $y \in \mathrm{supp}\,V_2$), let $\Theta[a;y] \le 0$.
    Then
    \begin{equation}
        S_\mathrm{gen}(a(V_2)|a(V_1)) \le 0~.
    \end{equation}
\end{defn}

\begin{rem}
    The above QFC is weaker than the original version defined in \cite{Bousso:2015mna}, and was first defined in \cite{Shahbazi-Moghaddam:2022hbw} where it was called the \emph{restricted} QFC.\footnote{Technically our QFC is different than the restricted QFC of \cite{Shahbazi-Moghaddam:2022hbw} in the following sense. One could obtain our QFC from that restricted QFC by integrating it and using the assumption that generators which exit the null hypersurface do not increase $S_\mathrm{gen}$.}
    We use it for three reasons:
    (1) While there are no proofs of the original QFC, there are settings where this (restricted) QFC can be derived \cite{Shahbazi-Moghaddam:2022hbw}.
    (2) While weaker, it seems to be sufficient to obtain the desirable implications of the original QFC.
    (3) It generalizes nicely to a one-shot version.
\end{rem}

\begin{conj}[Max-quantum focusing]\label{conj:os_qfc}
    Let $a$ be a wedge, and let $V_1$ and $V_2 \ge V_1$ each define a slice of the same outwards-directed null hypersurface orthogonal to $\eth a$.
    Let $\varepsilon > 0$, and let $\Theta^\varepsilon_\mathrm{max}$ be the max-expansion associated to this null hypersurface.
    For all $y \in \eth a$ such that $V_2(y) > 0$, let $\Theta^\varepsilon_\mathrm{max}[a;y] \le 0$.
    Then
    \begin{equation}
        H^\varepsilon_\mathrm{max,gen}(a(V_2)|a(V_1)) \le 0~.
    \end{equation}
\end{conj}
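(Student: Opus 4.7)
The plan is to discretize the deformation from $V_1$ to $V_2$ into many small local bumps, apply the chain rule for smooth max-entropy iteratively so that the global conditional max-entropy is upper bounded by a sum of local max-expansions, and then bound each such expansion by the hypothesis. Concretely, I would introduce a sequence of slices $V_1 = W_0 \le W_1 \le \ldots \le W_N = V_2$, obtained by stacking bumps of small transverse area $\mathcal{A}$ and small affine height $\delta$ at transverse points $y_k \in \mathrm{supp}\,V_2$. Using the inclusion $\mathcal{M}_{a(W_0)} \supseteq \mathcal{M}_{a(W_1)} \supseteq \ldots \supseteq \mathcal{M}_{a(W_N)}$ and iterated application of Theorem \ref{thm:chainrule}, one obtains
\begin{align*}
H^\varepsilon_\mathrm{max,gen}(a(V_2)|a(V_1)) \le \sum_{k=1}^{N} H^{\varepsilon_k}_\mathrm{max,gen}(a(W_k)|a(W_{k-1})) + \mathcal{O}\Big(\sum_k \log \tfrac{1}{\varepsilon_{k+1} - 2\varepsilon_k}\Big),
\end{align*}
with the smoothing parameters $\varepsilon_k$ chosen so that the accumulated error stays controlled. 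In the joint limit $\mathcal{A}, \delta \to 0$ each summand approaches $\tfrac{\mathcal{A}\delta}{4G}\,\Theta^{\varepsilon_k}_\mathrm{max}[a(W_{k-1});y_k]$ by the definition of the max-expansion, so the global conclusion follows provided every local max-expansion along the staircase is non-positive.

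The crux is therefore establishing $\Theta^{\varepsilon_k}_\mathrm{max}[a(W_{k-1}); y_k] \le 0$ from the hypothesis $\Theta^\varepsilon_\mathrm{max}[a;y_k] \le 0$. For bumps that do not overlap affinely with any previous bump, so that $y_k$ sits on both $\eth a$ and $\eth a(W_{k-1})$ and the two wedges agree in an open neighborhood of $y_k$, this is immediate from Lemma \ref{lem:SSA_of_exp}. The real difficulty appears in the second and later layers of bumps, where $W_{k-1}(y_k) > 0$: the deformation point now sits deeper on the null congruence, $a$ and $a(W_{k-1})$ no longer coincide in a neighborhood of $y_k$, and Lemma \ref{lem:SSA_of_exp} no longer directly applies. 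What is needed is a one-shot monotonicity of the max-expansion along the null direction, namely that $\Theta^\varepsilon_\mathrm{max}[a;y] \le 0$ implies $\Theta^\varepsilon_\mathrm{max}[a(V);y] \le 0$ for all $V \ge 0$ on the outgoing congruence.

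The main obstacle is exactly this one-shot focusing statement, which is essentially a local version of the conjecture itself. One attack would be to try to derive it from an entropic or one-shot Raychaudhuri-type input combined with algebraic properties of smooth max-entropy, perhaps dualizing to the min-entropy side via Theorem \ref{thm:duality} and Lemma \ref{lem:comp_exp} and invoking a data-processing inequality along the null direction. A secondary obstacle is controlling the chain-rule residual: the per-step error scales logarithmically in the smoothing, and summing $N$ such contributions will generically diverge as $N \to \infty$, so a tomographic choice of the $\varepsilon_k$, or more ambitiously a continuum chain rule in the spirit of the asymptotic equipartition principle (Theorem \ref{thm:QAEP}), would be needed to close the argument. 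The difficulty of both steps is consistent with the authors leaving this as a conjecture rather than a theorem.
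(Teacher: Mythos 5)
The statement you are trying to prove is Conjecture \ref{conj:os_qfc}; the paper offers no proof of it, so there is no argument of the authors' to compare yours against. What can be assessed is whether your sketch closes the gap, and it does not --- though you diagnose the obstructions correctly. Your strategy (discretize $V_1 \to V_2$ into local bumps, chain-rule the conditional max-entropy, bound each increment by a local max-expansion) is exactly the reduction the authors themselves describe in Remark \ref{rem:unrestricted_osQFC}: the missing monotonicity statement you isolate, namely that $\Theta^\varepsilon_\mathrm{max}[a;y] \le 0$ propagates to $\Theta^\varepsilon_\mathrm{max}[a(V);y] \le 0$ deeper along the congruence, is precisely the ``unrestricted one-shot QFC'' of \eqref{eq:strong_os_QFC}, which the paper explicitly declines to assume because it is itself an open conjecture at least as strong as the one you are trying to prove. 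Note also that the paper's proposition stating that $\Theta^\varepsilon_{\mathrm{max}}$ remains non-positive is \emph{derived from} Conjecture \ref{conj:os_qfc}, so invoking anything of that form here would be circular.

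The second obstacle you flag is also real and is implicitly acknowledged by the authors: each application of Theorem \ref{thm:chainrule} costs an additive $\mathcal{O}\left(\log \frac{1}{\varepsilon - 2\varepsilon'}\right)$, so even granting \eqref{eq:strong_os_QFC} the discretized argument only yields the conclusion up to $\mathcal{O}(\log \varepsilon)$ corrections rather than $\le 0$ (this is exactly the caveat in Remark \ref{rem:unrestricted_osQFC}), and taking the number of bumps $N \to \infty$ makes the accumulated error uncontrolled without a genuinely new continuum chain rule. Your dualization idea (Theorem \ref{thm:duality} plus Lemma \ref{lem:comp_exp}) does not rescue the monotonicity step either, because it converts a hypothesis on the max-expansion at the \emph{start} of the congruence into a condition on the min-expansion at the \emph{end}, which is precisely why the min- and max-QFCs are logically independent. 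In short, your proposal is an honest reduction of the conjecture to a strictly stronger conjecture plus an unresolved error-accumulation problem, not a proof; this matches the status the statement has in the paper.
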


\begin{conj}[Min-quantum focusing]\label{conj:osmin_qfc}
    This conjecture takes the same form as Conjecture \ref{conj:os_qfc} but with $\mathrm{min}$ replacing $\mathrm{max}$ everywhere.
\end{conj}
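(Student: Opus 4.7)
My plan is to attempt a duality-based proof, since one-shot min- and max-quantities are related by Proposition~\ref{prop:duality_gen} and Lemma~\ref{lem:comp_exp}. First I would rewrite the desired conclusion $H^\varepsilon_\mathrm{min,gen}(a(V_2)|a(V_1)) \le 0$ via Proposition~\ref{prop:duality_gen} as the equivalent statement $H^\varepsilon_\mathrm{max,gen}(a(V_1)'|a(V_2)') \ge 0$, where $a(V_2)' \subseteq a(V_1)' \subseteq a'$ forms a nested family of complementary wedges that shrinks as $V$ grows along the null hypersurface. Similarly, the hypothesis $\Theta^\varepsilon_{+,\mathrm{min}}[a;y] \le 0$ on $\mathrm{supp}\,V_2$ translates via Lemma~\ref{lem:comp_exp} into $\Theta^\varepsilon_{-,\mathrm{max}}[a';y] \ge 0$. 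The dualized target is thus an ``anti-focusing'' max-entropy inequality on the complementary side; this is \emph{not} an instance of Conjecture~\ref{conj:os_qfc}, since the max-QFC only controls non-positive max-expansions. So duality alone does not reduce min-QFC to max-QFC, and I anticipate needing a genuinely independent argument.

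The next natural step is to mimic the standard discretization proof one would use for max-QFC: partition the null deformation from $V_1$ to $V_2$ into many small increments $V_1 = W_0 \le W_1 \le \cdots \le W_n = V_2$, bound each infinitesimal conditional min-entropy $H^\varepsilon_\mathrm{min,gen}(a(W_{i+1})|a(W_i))$ by integrating the assumed non-positive min-expansion over the sliver, and then chain the local bounds into a global inequality via Theorem~\ref{thm:chainrule}. The central obstacle is the direction of that chain rule: for min-entropy it gives only a \emph{lower} bound
\begin{align*}
H^\varepsilon_\mathrm{min}(A|C) \ge H^{\varepsilon'}_\mathrm{min}(A|B) + H^{\varepsilon'}_\mathrm{min}(B|C) + \mathcal{O}\bigl(\log(1/(\varepsilon - 2\varepsilon'))\bigr),
\end{align*}
so chaining locally non-positive terms cannot yield an upper bound of zero on the global $H^\varepsilon_\mathrm{min,gen}(a(V_2)|a(V_1))$. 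Worse, dualizing each link via Proposition~\ref{prop:duality_gen} regenerates the identical inequality on the complementary max side, so duality does not flip the direction.

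I therefore expect that closing the proof requires a new ingredient beyond the tools used for max-QFC. One promising avenue is to work in the semiclassical regime described after Definition~\ref{def:oneshotgenent}, where area fluctuations are $O(1/\sqrt{G})$ while smooth bulk entropies differ at $O(1/G)$: there the area operators can be treated as c-numbers, the conditional generalized one-shot entropies reduce to differences of unconditional ones, and one can hope to argue directly about bulk one-shot entropies on a fixed background. A second avenue is to replace chaining entirely by a monotonicity argument for the one-parameter family $W \mapsto H^\varepsilon_\mathrm{min,gen}(a(V_2)|a(W))$, leveraging the strong subadditivity inequalities of Theorem~\ref{thm:ssa} and Proposition~\ref{prop:SSA_gen} to transport the trivial anchor $H^\varepsilon_\mathrm{min,gen}(a(V_2)|a(V_2)) = 0$ backward in $W$ under the hypothesis $\Theta^\varepsilon_{+,\mathrm{min}} \le 0$. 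Either way, the main obstacle is reconciling the wrong-direction min chain rule with the desired non-positive global upper bound, uniformly in the discretization and the smoothing parameter $\varepsilon$, and I expect this to be the hardest technical step.
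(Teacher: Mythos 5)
The statement you are trying to prove is a \emph{conjecture} in the paper (Conjecture \ref{conj:osmin_qfc}); the authors offer no proof of it, and indeed explicitly assume it (together with the max-QFC) as a hypothesis for their later theorems. So there is no proof in the paper to compare your attempt against, and your proposal, as you yourself acknowledge, does not close the argument either --- it is an analysis of why the two obvious strategies fail rather than a proof. That analysis is essentially correct and matches the paper's own stated reasons for treating the min- and max-QFCs as independent conjectures: Remark \ref{rem:unrestricted_osQFC} makes exactly your first point, that duality (Lemma \ref{lem:comp_exp}, Theorem \ref{thm:duality}) converts the hypothesis $\Theta^\varepsilon_{\mathrm{max}}\le 0$ at the \emph{beginning} of the congruence into a condition on $\Theta^\varepsilon_{\mathrm{min}}$ at the \emph{end}, so the min-QFC is not the dual image of the max-QFC. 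Your second point, that the min chain rule $H^{\varepsilon}_\mathrm{min}(A|C) \geq H^{\varepsilon'}_\mathrm{min}(A|B) + H^{\varepsilon'}_\mathrm{min}(B|C) + \mathcal{O}(\log(1/(\varepsilon-2\varepsilon')))$ runs in the wrong direction to convert local non-positivity into a global upper bound, is also correct and is precisely why the conjectures are stated as global inequalities rather than derived from infinitesimal ones (only the stronger ``unrestricted'' statement \eqref{eq:strong_os_QFC} would imply both via the chain rule, and the authors decline to assume it).

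The concrete gap, then, is simply that neither of your proposed rescue routes is carried out, and neither is likely to succeed with the tools in the paper. The semiclassical c-number limit after Definition \ref{def:oneshotgenent} only applies to a restricted class of states and would at best verify the conjecture there, not prove it; and the monotonicity-of-$W$ idea transports the anchor $H^\varepsilon_\mathrm{min,gen}(a(V_2)|a(V_2))=0$ in a direction that strong subadditivity (Theorem \ref{thm:ssa}, Proposition \ref{prop:SSA_gen}) does not control, since SSA shrinks the \emph{conditioned-on} algebra together with the larger one rather than moving $W$ along the congruence at fixed $V_2$. If you want something provable, the closest results the paper actually establishes are in the other direction: Proposition \ref{prop:min_QFC_to_QFC} shows the min-QFC \emph{implies} the ordinary restricted QFC via the asymptotic equipartition principle, and Section \ref{sec:continuum} proves a one-shot GSL (a consequence of the one-shot QFCs, not the QFCs themselves) directly for the Type II$_\infty$ algebras using the trace-preserving inclusion $\widetilde{\mathcal{A}}_R \subseteq \mathcal{A}_R$. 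You should present your text as evidence for why the conjecture is independent of the max-QFC, not as a proof.
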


\begin{rem}\label{rem:unrestricted_osQFC}
    The min- and max-quantum focusing conjectures are not equivalent because the requirement $\Theta^\varepsilon_\mathrm{max}[a;y] \le 0$ at the \emph{beginning} of a null congruence is dual to a condition on $\Theta^\varepsilon_\mathrm{min}[a;y]$ at the \emph{end} of a congruence. 

    One could instead conjecture the following stronger statement, analogous to the QFC of \cite{Bousso:2015mna}, that one could call the ``unrestricted one-shot QFC'':
    \begin{equation}\label{eq:strong_os_QFC}
        \Theta^\varepsilon_\mathrm{max}[a(V);p] \le \Theta^\varepsilon_\mathrm{max}[a;p] ~.
    \end{equation}
    This is equivalent by Lemma \ref{lem:comp_exp} to the same statement with max replaced by min.
    It is easy to verify that \eqref{eq:strong_os_QFC} alone would therefore imply both Conjectures \ref{conj:os_qfc} and \ref{conj:osmin_qfc} (up to $\mathcal{O}(\log \varepsilon)$ corrections) using the chain rule.
    However since Conjectures \ref{conj:os_qfc} and \ref{conj:osmin_qfc} are sufficient for all our results, we will never assume \eqref{eq:strong_os_QFC}.
\end{rem}

 \begin{prop}[$\Theta_\mathrm{max/min}^\varepsilon$ remains non-positive]
     Let $a$ be a wedge, let $V$ define a slice of an outwards-directed null hypersurface orthogonal to $\eth a$, let $\varepsilon > 0$, and let $\Theta^\varepsilon_\mathrm{max/min}$ be the max-expansion associated to this null hypersurface.
     Denote by $X_{\mathrm{max/min}}$ the set of $y \in \eth a$ such that $\Theta^\varepsilon_\mathrm{max/min}[a;y] \le 0$, and denote by $Y_{\mathrm{max/min}} \subseteq X_{\mathrm{max/min}}$ the set of $y \in \eth a$ such that $V(y) > 0$.
     Then assuming Conjectures \ref{conj:os_qfc} and \ref{conj:osmin_qfc}, it holds for all $y \in X_{\mathrm{max/min}}$ that
    \begin{equation}
        \Theta^\varepsilon_\mathrm{max/min}[a(V);y] \le 0~.
    \end{equation}
 \end{prop}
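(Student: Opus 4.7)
The plan is to reduce the statement to a direct application of the one-shot QFCs (Conjectures \ref{conj:os_qfc} and \ref{conj:osmin_qfc}) taken with respect to the \emph{original} wedge $a$, rather than $a(V)$. The key geometric observation is that the outward null hypersurface orthogonal to $\eth a(V)$ is simply the continuation of the outward null hypersurface orthogonal to $\eth a$ along the same generators, so $y$ labels the same generator in either description. Consequently a local deformation of $a(V)$ at $y$ with profile $V^\pm_{\mathcal{A},\delta,y}$ produces the same wedge as the deformation of $a$ with profile $V_2 := V + V^\pm_{\mathcal{A},\delta,y}$. In particular
\begin{equation*}
\Theta^\varepsilon_{\pm,\mathrm{max}}[a(V);y] = \lim_{\mathcal{A}\to 0}\lim_{\delta\to 0}\frac{4G}{\mathcal{A}\delta}\, H^\varepsilon_\mathrm{max,gen}\bigl(a(V_2)\,\big|\,a(V)\bigr),
\end{equation*}
and likewise for the min-expansion.

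Next I would invoke Conjecture \ref{conj:os_qfc} with base wedge $a$ and slices $V_1 := V$, $V_2 := V + V^\pm_{\mathcal{A},\delta,y}$: its conclusion is exactly $H^\varepsilon_\mathrm{max,gen}(a(V_2)|a(V)) \le 0$. To use the conjecture I must verify its hypothesis, namely that $\Theta^\varepsilon_\mathrm{max}[a;y'] \le 0$ for every $y' \in \mathrm{supp}\,V_2$. The support splits as $\mathrm{supp}\,V \,\cup\, \mathrm{supp}\,V^\pm_{\mathcal{A},\delta,y}$. The first piece equals $Y_\mathrm{max}$, which lies in $X_\mathrm{max}$ by the hypothesis of the proposition. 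The second piece is an arbitrarily small transverse neighborhood of $y$; since $y \in X_\mathrm{max}$ and the one-shot expansion $\Theta^\varepsilon_\mathrm{max}[a;\cdot]$ depends continuously on the transverse coordinate by the standing assumption of Remark \ref{rem:continuity}, this neighborhood also lies in $X_\mathrm{max}$ once $\mathcal{A}$ is sufficiently small. Dividing the resulting inequality by the positive number $\mathcal{A}\delta$ and passing to the limit yields $\Theta^\varepsilon_{\pm,\mathrm{max}}[a(V);y] \le 0$. The min case is identical after substituting Conjecture \ref{conj:osmin_qfc} for Conjecture \ref{conj:os_qfc} throughout.

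The step I expect to require the most care is the containment $\mathrm{supp}\,V^\pm_{\mathcal{A},\delta,y} \subseteq X_\mathrm{max/min}$ for sufficiently small $\mathcal{A}$. This rests entirely on continuity of the one-shot expansion as a function of the transverse coordinate, which is already built into Remark \ref{rem:continuity} for the semiclassical states of interest; no additional input should be required. The remaining geometric identification---namely that the local deformation of $a(V)$ at generator $y$ coincides with $a(V_2)$---is immediate because the outward null congruence from $\eth a(V)$ is carried by the same generators as the outward null congruence from $\eth a$, with only the affine parameter shifted, so the wedge-valued map $V\mapsto a(V)$ is literally additive in the profile $V$.
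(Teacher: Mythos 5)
Your proof is correct and follows essentially the same route as the paper: both apply the one-shot QFC to the slices $V_1 = V$ and $V_2 = V + V^\pm_{\mathcal{A},\delta,y}$, verify the hypothesis on $\mathrm{supp}\,V_2$, and pass to the limit $\mathcal{A},\delta \to 0$ using Remark \ref{rem:continuity}. The only difference is how the hypothesis is checked — the paper restricts to $y_0 \in Y$ so that the bump's support lies in $Y \subseteq X$ by continuity of $V$, whereas you cover all of $X$ by continuity of $\Theta^\varepsilon_{\mathrm{max}}[a;\cdot]$; strictly, $\{\,y : \Theta^\varepsilon_{\mathrm{max}}[a;y] \le 0\,\}$ is closed rather than open, so that step is only airtight where the inequality is strict, but this caveat is at the same level of rigor as the paper's own argument (which as written does not treat $y \in X \setminus Y$ at all).
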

\begin{proof}
    Consider a local deformation \eqref{eq:local_def} of $a(V)$ at a point $y_0 \in Y_{\mathrm{max}}$,
    \begin{equation}
        \widetilde{V}_{\mathcal{A},\delta,y_0}(y) := V_{\mathcal{A},\delta,y_0}(y) + V(y)~.
    \end{equation}
    Because $V(y)$ is continuous, there are small enough $\mathcal{A},\delta$ such that $\widetilde{V}_{\mathcal{A},\delta,y_0}(y) - V(y) > 0$ only for $y \in Y$.
    Therefore, for sufficiently small $\mathcal{A},\delta$, Conjecture \ref{conj:os_qfc} implies that
    \begin{equation}
        H^\varepsilon_\mathrm{max,gen}(a(\widetilde{V}_{\mathcal{A},\delta,y_0})|a(V)) \le 0~.
    \end{equation}
    By Remark \ref{rem:continuity} this continues to hold in the limits $\mathcal{A},\delta \to 0$. The proof for the min-entropy works analogously.
\end{proof}

\begin{prop}\label{prop:min_QFC_to_QFC}
	The min-QFC implies the (restricted) QFC. 
\end{prop}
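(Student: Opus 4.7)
The plan is to apply the min-QFC to the $n$-fold tensor-product system $|\psi\rangle^{\otimes n}$ and then invoke the quantum asymptotic equipartition principle (AEP, Theorem~\ref{thm:QAEP}) to convert the one-shot min-entropy conclusion into the required von Neumann statement. The key point is that, although the ordering of entropies (Theorem~\ref{thm:order}) only gives $H^\varepsilon_{\mathrm{min}} \le S$ and therefore cannot turn a min-QFC conclusion into a QFC conclusion on a single copy, the AEP precisely inverts this inequality in the large-$n$ limit.

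First I would observe that the generalized structure tensor-factorizes cleanly: on $n$ copies the area operators add, so the generalized trace factorizes as $\tr_{\mathrm{gen}}^{(n)} = \tr_{\mathrm{gen}}^{\otimes n}$, the generalized density matrices satisfy $\rho_{b,\mathrm{gen}}^{(n)} = \rho_{b,\mathrm{gen}}^{\otimes n}$, and $S_\mathrm{gen}^{(n)}(b|a) = n\, S_\mathrm{gen}(b|a)$. Applying Theorem~\ref{thm:QAEP} to the generalized density matrices then gives
\begin{equation*}
\lim_{n\to\infty} \tfrac{1}{n}\, H^\varepsilon_\mathrm{min,gen}\!\bigl(a(V_2)^n \,\big|\, a(V_1)^n\bigr)_{\psi^{\otimes n}} = S_\mathrm{gen}\bigl(a(V_2)\,\big|\,a(V_1)\bigr)_\psi,
\end{equation*}
and, interchanging limits inside the definition of $\Theta^\varepsilon_\mathrm{min}$,
\begin{equation*}
\lim_{n\to\infty} \tfrac{1}{n}\, \Theta^\varepsilon_\mathrm{min}[a^n;y] = \Theta[a;y].
\end{equation*}

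Next, assume first the strict form of the QFC hypothesis, $\Theta[a;y] < 0$ for all $y \in \mathrm{supp}\, V_2$. The second limit above then implies that for all sufficiently large $n$ we have $\Theta^\varepsilon_\mathrm{min}[a^n;y] \le 0$ on $\mathrm{supp}\, V_2$, so the min-QFC (Conjecture~\ref{conj:osmin_qfc}) applied to the $n$-copy system yields $H^\varepsilon_\mathrm{min,gen}\!\bigl(a(V_2)^n \,\big|\, a(V_1)^n\bigr) \le 0$. Dividing by $n$ and passing to the limit using the first AEP identity gives $S_\mathrm{gen}(a(V_2)|a(V_1)) \le 0$, which is the QFC conclusion. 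The non-strict case $\Theta[a;y] \le 0$ follows by a standard perturbation/limiting argument: slightly deform the state or the geometry so as to push the expansion strictly negative, apply the strict case, then take the deformation to zero using the assumed continuity in Remark~\ref{rem:continuity}.

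The main obstacle is technical rather than conceptual: one must upgrade the pointwise AEP convergence $\Theta^\varepsilon_\mathrm{min}[a^n;y]/n \to \Theta[a;y]$ to convergence that is uniform in $y \in \mathrm{supp}\, V_2$, which is what the min-QFC hypothesis actually demands. Compactness of $\mathrm{supp}\, V_2$ together with the continuity assumption of Remark~\ref{rem:continuity} should suffice, but this point deserves care. A secondary subtlety is the justification that the min-QFC, as formulated for semiclassical gravity, applies to the tensor-product system $|\psi\rangle^{\otimes n}$; this is natural because the $n$-copy system is algebraically still a semiclassical gravity system with well-defined area operators and generalized traces, but it should be spelled out.
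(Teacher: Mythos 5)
Your overall strategy --- replicate the spacetime $n$ times, apply the min-QFC to the replicated system, and use the quantum asymptotic equipartition principle to convert $H^\varepsilon_{\mathrm{min,gen}}\le 0$ into $S_{\mathrm{gen}}\le 0$ --- is exactly the paper's. The difference, and the weakness, lies entirely in how you verify the \emph{hypothesis} of the min-QFC on the replicated wedge $a_n$. You argue $\tfrac{1}{n}\Theta^\varepsilon_{\mathrm{min}}[a^n;y]\to\Theta[a;y]$ via the AEP, which forces you to (i) interchange the $n\to\infty$ limit with the $\mathcal{A},\delta\to 0$ limits defining the expansion, (ii) assume the strict inequality $\Theta[a;y]<0$ and then patch the non-strict case with an unspecified perturbation argument, and (iii) upgrade pointwise convergence to convergence uniform on $\mathrm{supp}\,V_2$. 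None of (i)--(iii) is actually established in your write-up, and (i) in particular is a genuine analytic gap: the AEP (Theorem \ref{thm:QAEP}) is stated for a fixed pair of algebras, while the expansion involves a family of shrinking deformations, so the claimed per-copy limit for $\Theta^\varepsilon_{\mathrm{min}}$ does not follow from anything proved in the paper.

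All three issues evaporate if you verify the hypothesis the way the paper does. The generalized von Neumann entropy is additive over tensor-product copies, so the von Neumann expansion of $a_n$ at a point $y_i$ in the $i$th copy satisfies $\Theta[a_n;y_i]=\Theta[a;y]\le 0$ exactly, for every $n$. Then the ordering of expansions (Lemma \ref{lem:exp_order}, i.e.\ $\Theta^\varepsilon_{\mathrm{min}}\le\Theta$) immediately gives $\Theta^\varepsilon_{\mathrm{min}}[a_n;y_i]\le 0$ for every $n$, every $y_i$, and sufficiently small $\varepsilon$ --- no strictness, no uniformity argument, no limit interchange. The AEP is then needed only once, for the conditional entropies $H^\varepsilon_{\mathrm{min,gen}}(a_n(V_2)|a_n(V_1)) = nS_{\mathrm{gen}}(a(V_2)|a(V_1))+\mathcal{O}(\sqrt{n})$, which is the step you already handle correctly. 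I would accept your proof only after you either fill the three gaps above or, preferably, replace that portion of the argument with the additivity-plus-ordering observation.
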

\begin{proof}
        Our strategy is to apply the min-QFC to many independent copies of the spacetime, then use the quantum asymptotic equipartition principle to relate the min-entropy of this replicated setup to the von Neumann entropy of the original setup.
        
        Say we are given a spacetime $M$, a wedge $a \subset M$, and an outwards-directed null hypersurface $N$ orthogonal to $\eth a$.
        Let $\Theta$ and $\Theta^\varepsilon_\mathrm{min}$ be the von Neumann and min-expansion associated to $N$.

        Consider $n$ copies of $M$, which we will denote $M_n$.
        Let $a_n$ denote the union of each copy of $a$ in $M_n$, which is itself a wedge in $M_n$.
        Finally, let $V_1$ and $V_2 \ge V_1$ be slices of $N$, and let $a_n(V_i)$ for $i \in \{1,2\}$ denote the union of $a(V_i)$ over each copy in $M_n$. 
        
        Suppose that $\Theta[a;y] \leq 0$ for all $y \in \eth a$ such that $V_2(y) >0$. Denote by $y_i$ the transverse position along $\eth a_n$ in the $i$th copy of the spacetime. By the fact that the generalized entropy of a tensor product of two states is the sum of the generalized entropy for each state, we find that
        \begin{align}
            \Theta[a_n;y_i] = \Theta[a;y]~,
        \end{align}
        and so $\Theta[a_n;y_i] \leq 0$ for all $1 \leq i \leq n$. By Lemma \ref{lem:exp_order}, we then have that $\Theta^\varepsilon_{\mathrm{min}}[a_n;y_i] \leq 0$ for small enough $\varepsilon$. By the min-QFC applied to the replicated spacetime, we then have that 
        \begin{align}
            H^\varepsilon_{\mathrm{min,gen}}(a_n(V_2)|a_n(V_1)) \leq 0~.
        \end{align}
        for slices $V_2 \geq V_1$. By the quantum asymptotic equipartition principle, Theorem \ref{thm:QAEP}, as applied to the generalized conditional entropies, we see that 
        \begin{align}
            H^\varepsilon_{\mathrm{min,gen}}(a_n(V_2)|a_n(V_1)) = n S_{\mathrm{gen}}(a(V_2)|a(V_1)) + \mathcal{O}(\sqrt{n}) \leq 0~,
        \end{align}
        as we take $n \to \infty$. 
        Therefore $S_{\mathrm{gen}}(a(V_2)| a(V_1)) \leq 0$ as we wanted to show.
\end{proof}

\begin{rem}(One-shot covariant entropy bound)
    The one-shot QFCs imply a one-shot covariant entropy bound (see \cite{Bousso_1999} for the original). That is, for a wedge $a$, slice $V$, and $\varepsilon >0$, if $V(y) > 0$ only for $y$ such that $\Theta^\varepsilon_\mathrm{max/min,gen}[a;y] \le 0$, then
    \begin{equation}
        H^\varepsilon_\mathrm{max/min,gen}(a(V)|a) \le 0~.
    \end{equation}
\end{rem}

\begin{prop}[One-shot generalized second law]
    The one-shot QFCs imply a min- and max-GSL.
    Let $a_1, a_2$ be wedges such that $\eth a_1, \eth a_2$ are slices of a future (past) causal horizon, with $\eth a_2$ everywhere to the future (past) of $\eth a_1$, and $a_2 \subseteq a_1$.
    Let $\varepsilon > 0$. Then assuming the one-shot QFCs,
    \begin{equation}
        H^\varepsilon_\mathrm{max/min,gen}(a_1|a_2) \le 0~.
    \end{equation}
\end{prop}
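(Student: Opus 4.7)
The plan is to realize $a_1$ as an outward null deformation of $a_2$ along the causal horizon, then invoke the corresponding one-shot QFC. Focus on the future causal horizon case; the past case follows by time reversal. The past-directed outward null hypersurface $N_-(a_2)$ orthogonal to $\eth a_2$ has generators running along the horizon toward the past. Since $\eth a_1$ lies strictly to the past of $\eth a_2$ on the horizon, there is a slice $V$ of $N_-(a_2)$ with $V(y) > 0$ for all $y \in \eth a_2$ whose image is precisely $\eth a_1$. Unpacking the convention that the inside of $a_2(V)$ lies in the direction of decreasing $\lambda$, one verifies $a_2(V) = a_1$: the new wedge keeps $a_2$ and additionally includes the null slab swept between $\eth a_1$ and $\eth a_2$.

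Next verify the hypothesis $\Theta^{\varepsilon}_{-,\mathrm{max/min}}[a_2; y] \le 0$ for all $y \in \eth a_2$. By the complementary-expansion identity (Lemma~\ref{lem:comp_exp}), this is equivalent to $\Theta^{\varepsilon}_{+,\mathrm{min/max}}[a_2'; y] \ge 0$ for the future-directed outward expansion of the complementary wedge $a_2'$ along the horizon. For the min-GSL this reduces to $\Theta^{\varepsilon}_{+,\mathrm{max}}[a_2'; y] \ge 0$, which follows from Lemma~\ref{lem:exp_order} together with the generalized area theorem $\Theta_+ \ge 0$ for future causal horizons. For the max-GSL one needs instead $\Theta^{\varepsilon}_{+,\mathrm{min}}[a_2'; y] \ge 0$; this is not implied by Lemma~\ref{lem:exp_order}, and should be posited as a genuinely one-shot strengthening of the area theorem that holds on causal horizons at the level of one-shot expansions.

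Finally, apply Conjecture~\ref{conj:os_qfc} or Conjecture~\ref{conj:osmin_qfc} to the base wedge $a_2$ with $V_1 = 0$ and $V_2 = V$. With the hypothesis verified on the support of $V_2$, the conclusion reads
\[
H^{\varepsilon}_{\mathrm{max/min},\mathrm{gen}}(a_2(V) \mid a_2) \le 0,
\]
which is the desired inequality under the identification $a_2(V) = a_1$. The past causal horizon case is the time-reversed analog, deforming $a_2$ along $N_+(a_2)$ up to $\eth a_1$. The principal obstacle is the one-shot area-theorem input needed for the max case; once this is granted, the remainder of the argument is a single geometric identification of $a_2(V) = a_1$ followed by a direct application of the relevant QFC.
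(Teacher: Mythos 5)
Your overall strategy---realize $a_1$ as the outward past-null deformation $a_2(V)$ of $a_2$ along the horizon and apply the one-shot QFC with $V_1=0$, $V_2=V$---is the right skeleton, and your geometric identification $a_2(V)=a_1$ is fine. The problem is the hypothesis-verification step, and you have correctly located where it breaks: for the max-GSL you need $\Theta^{\varepsilon}_{-,\mathrm{max}}[a_2;y]\le 0$ at an \emph{arbitrary} cut, equivalently $\Theta^{\varepsilon}_{+,\mathrm{min}}[a_2';y]\ge 0$, and Lemma \ref{lem:exp_order} points the wrong way. But your resolution---``posit a genuinely one-shot strengthening of the area theorem''---is not acceptable here, because that posited statement is essentially the differential form of the max-GSL itself, i.e.\ the thing being proved, and the proposition explicitly claims the result follows from the one-shot QFCs alone. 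As written, your max case is circular (or requires a new axiom the paper does not have).

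The paper's proof closes this gap without any new assumption by changing \emph{where} the antinormality hypothesis is checked. Rather than verifying $\Theta^{\varepsilon}_{-,\mathrm{max/min}}\le 0$ at $a_2$, one takes the base wedge $\tilde a$ of the null congruence to be a cut of the future causal horizon $\partial J^-(\Sigma_\partial)$ sufficiently close to asymptotic infinity. There the one-shot expansions (both max and min) approach the classical expansion of the horizon traversed toward the past, which is strictly negative; so the hypotheses of Conjectures \ref{conj:os_qfc} and \ref{conj:osmin_qfc} hold at $\tilde a$ with no appeal to any area theorem, one-shot or otherwise. Applying the QFC from $\tilde a$ with $\tilde a(V_1)=a_2$ and $\tilde a(V_2)=a_1$ then yields $H^{\varepsilon}_{\mathrm{max/min,gen}}(a_1|a_2)\le 0$ directly, and general causal horizons are handled by approximating them with a sequence $\partial J^-(\Sigma_\partial^n)$ and invoking continuity. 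Note also that even your min case quietly relies on the differential vN-GSL ($\Theta_+[a_2']\ge 0$ at an arbitrary cut), which is itself standardly derived by exactly this ``anchor at infinity and propagate'' argument---so the mechanism you are missing is the one you are already implicitly using.
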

\begin{proof}[Proof sketch]
Without loss of generality we restrict to future causal horizons. Let $\Sigma_\partial \subset M_\partial$ be a spacelike Cauchy slice for (a subregion of) the asymptotic boundary $M_\partial$. The boundary (in the bulk) of the past of $\Sigma_\partial$, $\partial J^-(\Sigma_\partial)$, forms a future causal horizon in the bulk. Now consider a wedge $\tilde a$ with edge $\eth \tilde a \subseteq \partial J^-(\Sigma_\partial)$, such that $\Theta_{-,\mathrm{max/min}}^\varepsilon$ is the expansion of the causal horizon. For $\eth \tilde a$ sufficiently close to asymptotic infinity, $\Theta^\varepsilon_{-,\mathrm{max/min}}$ will approach its classical value which is negative everywhere. The desired result for the causal horizon $\eth \tilde a \subseteq \partial J^-(\Sigma_\partial)$ then follows directly from the max-/min-QFC. To extend this result to all causal horizons in asymptotically-AdS spacetimes, we note that all such causal horizons can be approached uniformly at any finite affine parameter by $J^-(\Sigma_\partial^n)$ for a sequence of spacelike boundary Cauchy slices $\Sigma_\partial^n$, indexed by $n$. The result therefore follows from the special case above by assuming continuity of $H^\varepsilon_\mathrm{max/min,gen}(a_1|a_2)$.
\end{proof}

\section{Covariant min- and max-entanglement wedges}
\label{section:covariantminmaxEW}

We now turn to the central goal of this paper: proposing a fully covariant generalization of the min- and max-entanglement wedges (EW) of \cite{Akers:2020pmf} that can be applied in arbitrary time-dependent spacetimes. We first review known results about one-shot quantum Shannon theory and information flow in tensor networks and gravity in Section \ref{sec:mergegrav}. In Section \ref{sec:QESreform}, we then explain the intuition behind our proposal for the generalization of those results to arbitrary time-dependent spacetimes and give formal definitions of the min- and max-EWs. Finally, in Section \ref{sec:presc_props}, we show that the min- and max-EWs satisfy many desirable properties that support their conjectured operational interpretations.

\subsection{State merging and gravity} \label{sec:mergegrav}
Let $V: \mathcal{H}_b \to \mathcal{H}_B \otimes \mathcal{H}_C$ be a Haar random isometry\footnote{$V: \mathcal{H}_1 \to \mathcal{H}_2$ is a Haar random isometry if it can be written as $V = U V_0$, with $V_0: \mathcal{H}_1 \to \mathcal{H}_2$ a fixed isometry and $U$ a Haar random unitary on $\mathcal{H}_2$.} with output Hilbert space dimensions $d_B$ and $d_C$, as in Figure \ref{fig:singletensor}. Let $\ket{\psi} \in \mathcal{H}_a \otimes \mathcal{H}_b \otimes \mathcal{H}_c$ be an arbitrary state with reduced density matrix $\psi_c$ on $\mathcal{H}_c$. 
    \begin{figure}
    \centering
    \includegraphics[width=0.5\textwidth]{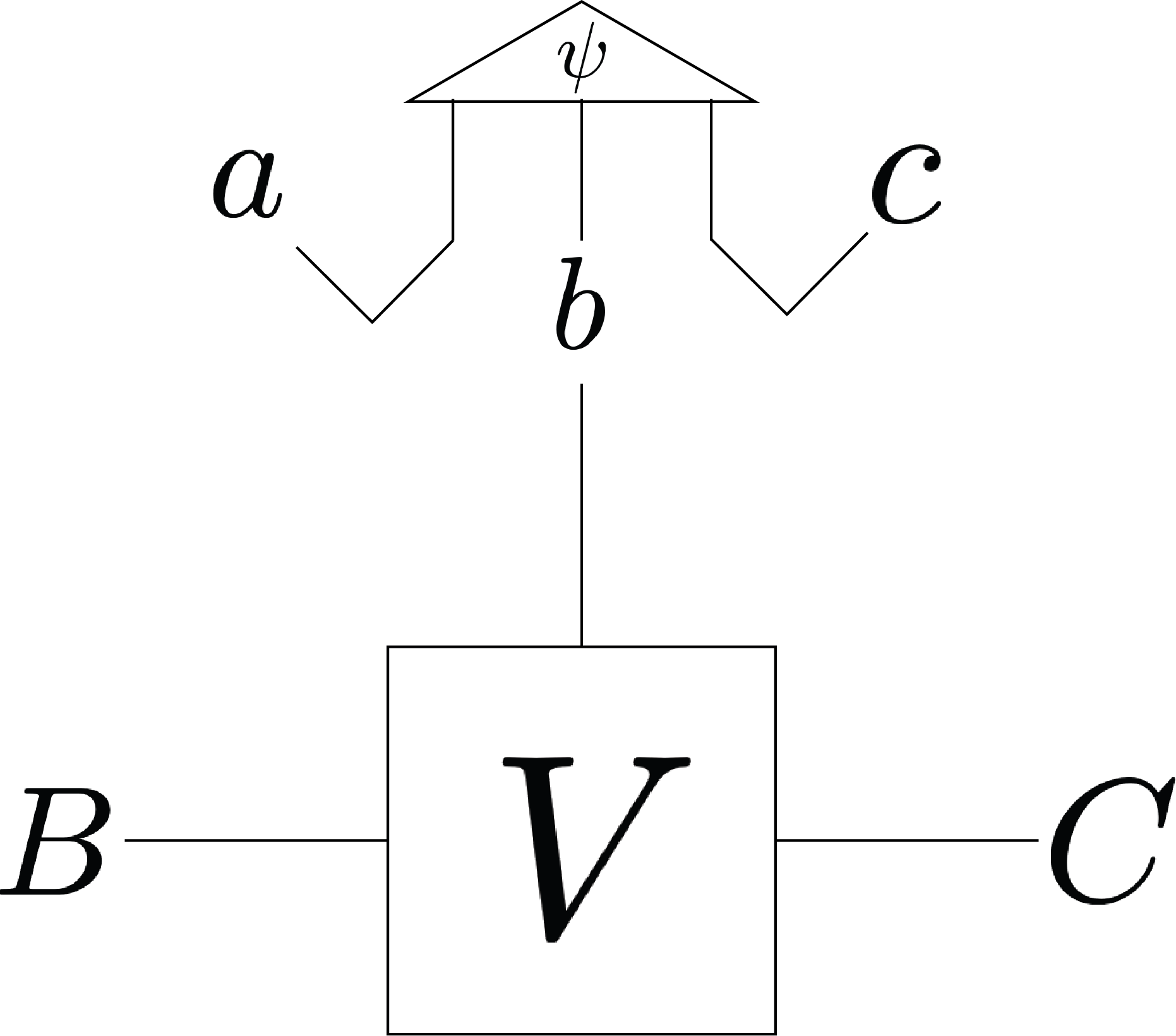}
    \caption{An illustration of $V$ as a tensor ``network" composed of a single, random tensor from $b$ to outputs $B$ and $C$. We then feed the state $\ket{\psi}\in \mathcal{H}_a \otimes \mathcal{H}_b \otimes \mathcal{H}_c$ into this random tensor on $b$.}
    \label{fig:singletensor}
    \end{figure}
A standard fact from one-shot quantum Shannon theory \cite{dupuis2010decoupling} says that we have
\begin{align} \label{eq:decouple}
\tr_{aB}[V \ket{\psi}\!\bra{\psi}V^\dagger] \approx \frac{1}{d_C} \mathds{1}_C \otimes \psi_c
\end{align}
with high probability whenever
\begin{align} \label{eq:twosurfacecondition}
    H_\mathrm{max}^\varepsilon(ab|a) + \log d_C - \log d_B \ll 0~.
\end{align}
Conversely, \eqref{eq:decouple} never holds when
\begin{align} 
    H_\mathrm{max}^\varepsilon(a b|a) + \log d_C - \log d_B \gg 0~.
\end{align}
A consequence is that one can do ``state-specific reconstruction'' \cite{Akers:2021fut} of operators in $\mathcal{H}_b$ from $\mathcal{H}_a \otimes \mathcal{H}_B$ for the state $\ket{\psi}$ if and only if \eqref{eq:twosurfacecondition} holds. By state-specific reconstruction, we mean that for any unitary $U_b$ there exists a unitary $U_{aB}$ on $\mathcal{H}_a \otimes \mathcal{H}_B$ such that 
\begin{align} \label{eq:simpleSSrecon}
    U_{a B} V \ket{\psi} \approx V U_b \ket{\psi}~.
\end{align}
That such a $U_{aB}$ exists follows from \eqref{eq:decouple} because $\ket{\psi}$ and $U_b \ket{\psi}$ have the same reduced density matrix on $\mathcal{H}_C \otimes \mathcal{H}_c$, and all purifications are related by a unitary on the purifying system.  
From a quantum information perspective, the existence of $U_{a B}$ can be thought of as a Heisenberg-picture version of quantum state merging; giving access to $\mathcal{H}_B$ to an observer that controls $\mathcal{H}_a$ allows them to manipulate all information in $\mathcal{H}_b$. 

The same inequalities applied to the complement, using the duality between min- and max- entropies, say that when
\begin{align}
H_\mathrm{min}^\varepsilon(a b|a) = -H_\mathrm{max}^\varepsilon(b c|c) \gg \log d_B - \log d_C
\end{align}
then
\begin{align}
\tr_{Cc}[V \ket{\psi}\!\bra{\psi}V^\dagger] \approx \frac{1}{d_B} \mathds{1}_B \otimes \psi_a
\end{align}
and $\mathcal{H}_B$ alone carries no useful information about $b$.
In the intermediate regime with
\begin{align}
H_\mathrm{min}^\varepsilon(a b|a) \ll \log d_B - \log d_C \ll H_\mathrm{max}^\varepsilon(a b|a)~,
\end{align}
the Hilbert space $\mathcal{H}_B$ carries some but not all information in $\mathcal{H}_b$.

It was shown in \cite{Akers:2020pmf} using Euclidean replica trick computations that a similar result holds in gravity, with $\log d_B$ and $\log d_C$ replaced by the areas of extremal surfaces. Specifically, when only two extremal surfaces, bounding wedges $b_1$ and $b_2 \supset b_1$ respectively, are relevant in replica trick computations, one finds that state-specific reconstruction of $b_2 \backslash b_1$ is possible if and only if
\begin{align} \label{eq:maxcond}
H^\varepsilon_\mathrm{max,gen}(b_2|b_1) \ll 0~,
\end{align}
while no information is accessible from $b_2 \backslash b_1$ if and only if
\begin{align} \label{eq:mincond}
H^\varepsilon_\mathrm{min,gen}(b_2|b_1) \gg 0~.
\end{align}
In contrast, a naive application of the QES prescription would lead to (von Neumann) generalized entropies appearing in both \eqref{eq:maxcond} and \eqref{eq:mincond}.

    \begin{figure}
    \centering
    \includegraphics[width=0.7\textwidth]{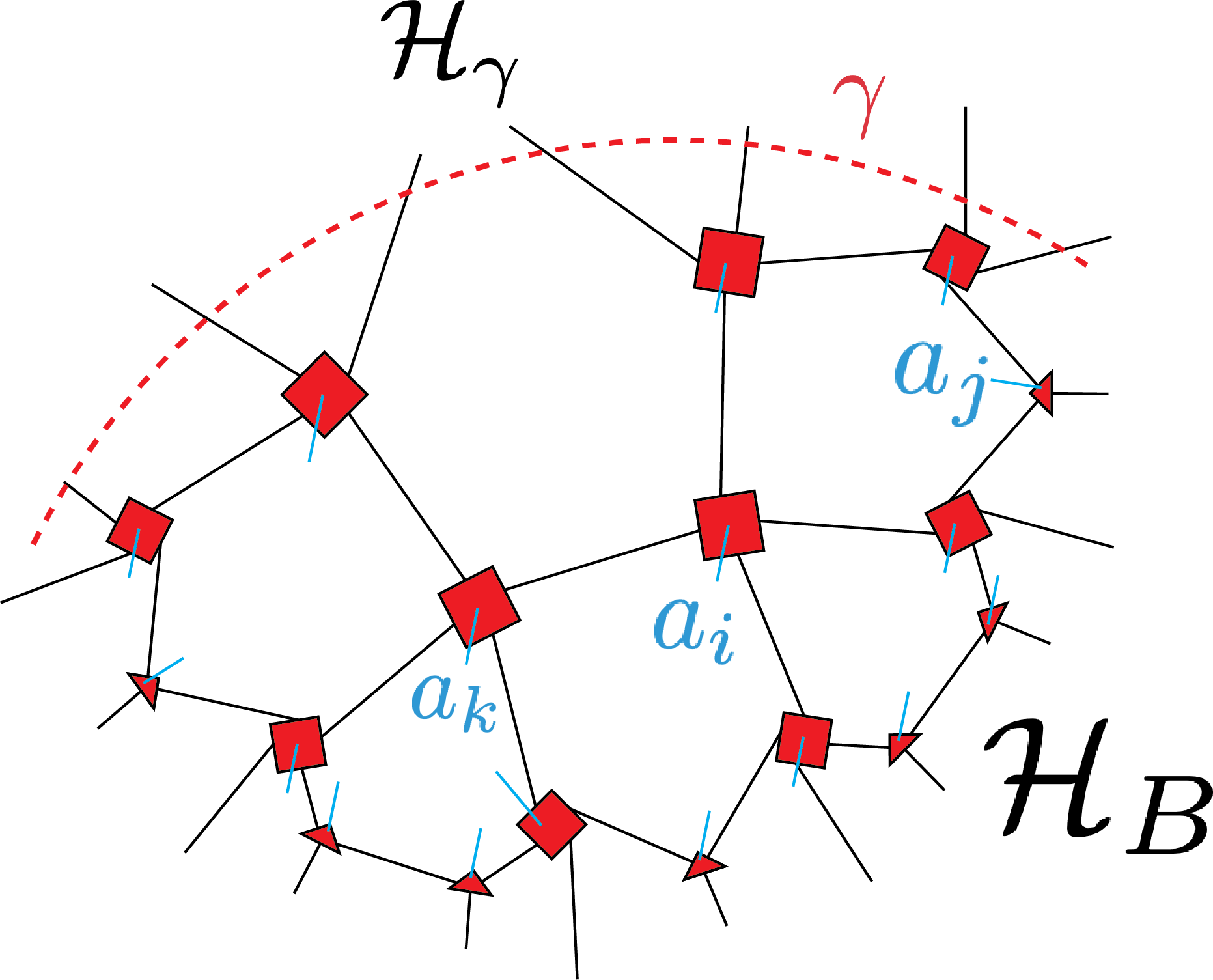}
    \caption{An illustration of a random tensor network described in the text. Each square or triangle represents a single random tensor with a dangling bulk leg (in blue), denoted by $a_i$, with local Hilbert space $\mathcal{H}_{a_i}$. The network maps the tensor product of $\mathcal{H}_{a_i}$ over all $i$ into the boundary Hilbert space $\mathcal{H}_{B} \otimes \mathcal{H}_{\gamma}$. In the analogy to AdS/CFT, we can think of $\mathcal{H}_B$ as being associated to some CFT subregion and $\mathcal{H}_{\gamma}$ as associated to degrees of freedom localized to the entangling surface of the bulk legs $\mathcal{H}_a = \otimes_i \mathcal{H}_{a_i}$.}
    \label{fig:halfnetwork}
    \end{figure}
    
In general, there is no reason that only two extremal surfaces can contribute in replica trick computations. So one would like a more general prescription. Suppose we have a random tensor network $V$ with bulk legs $a_1$\dots $a_n$ and boundary legs divided into $\mathcal{H}_B$ and $\mathcal{H}_\gamma$  as shown in Figure \ref{fig:halfnetwork}. Let $\ket{\psi} \in  \bigotimes_i \mathcal{H}_{a_i} \otimes \mathcal{H}_r$ be an arbitrary state. It was shown in \cite{dutil2010one} (in somewhat different language) that with high probability
\begin{align} \label{eq:TNdecoupling}
    \tr_{B}[V \ket{\psi}\!\bra{\psi}V^\dagger] \approx \frac{1}{d_\gamma} \mathds{1}_\gamma \otimes \psi_r
\end{align}
whenever
\begin{align} \label{eq:TNcondition}
    H_\mathrm{max}(a_1 \dots a_n |\tilde{\mathbf{a}} ) - \log d_{\gamma_{\tilde{\mathbf{a}}}} + \log d_\gamma \ll 0~,
\end{align}
for \emph{all} subsets $\tilde{\mathbf{a}} \subset \{a_1 \dots a_n\}$. Here $d_\gamma$ is the dimension of $\mathcal{H}_\gamma$ and $d_{\gamma_{\tilde{\mathbf{a}}}}$ is the dimension of the cut $\gamma_{\tilde{\mathbf{a}}}$ bounding $\tilde{\mathbf{a}}$ and $B$, as shown in Figure \ref{fig:halfnetworkregions}. The authors of \cite{dutil2010one} conjectured that this continues to be true if the max-entropies in \eqref{eq:TNcondition} are replaced by smooth max-entropies, so that
\begin{align} \label{eq:TNconditionsmoothed}
    \forall \,\tilde{\mathbf{a}} \subseteq \{a_1 ... a_n\}~,~~~~H^{\varepsilon}_\mathrm{max}(a_1 \dots a_n |\tilde{\mathbf{a}}) - \log d_{\gamma_{\tilde{\mathbf{a}}}} + \log d_\gamma \ll 0~.
\end{align}
    \begin{figure}
    \centering
    \includegraphics[width=0.7\textwidth]{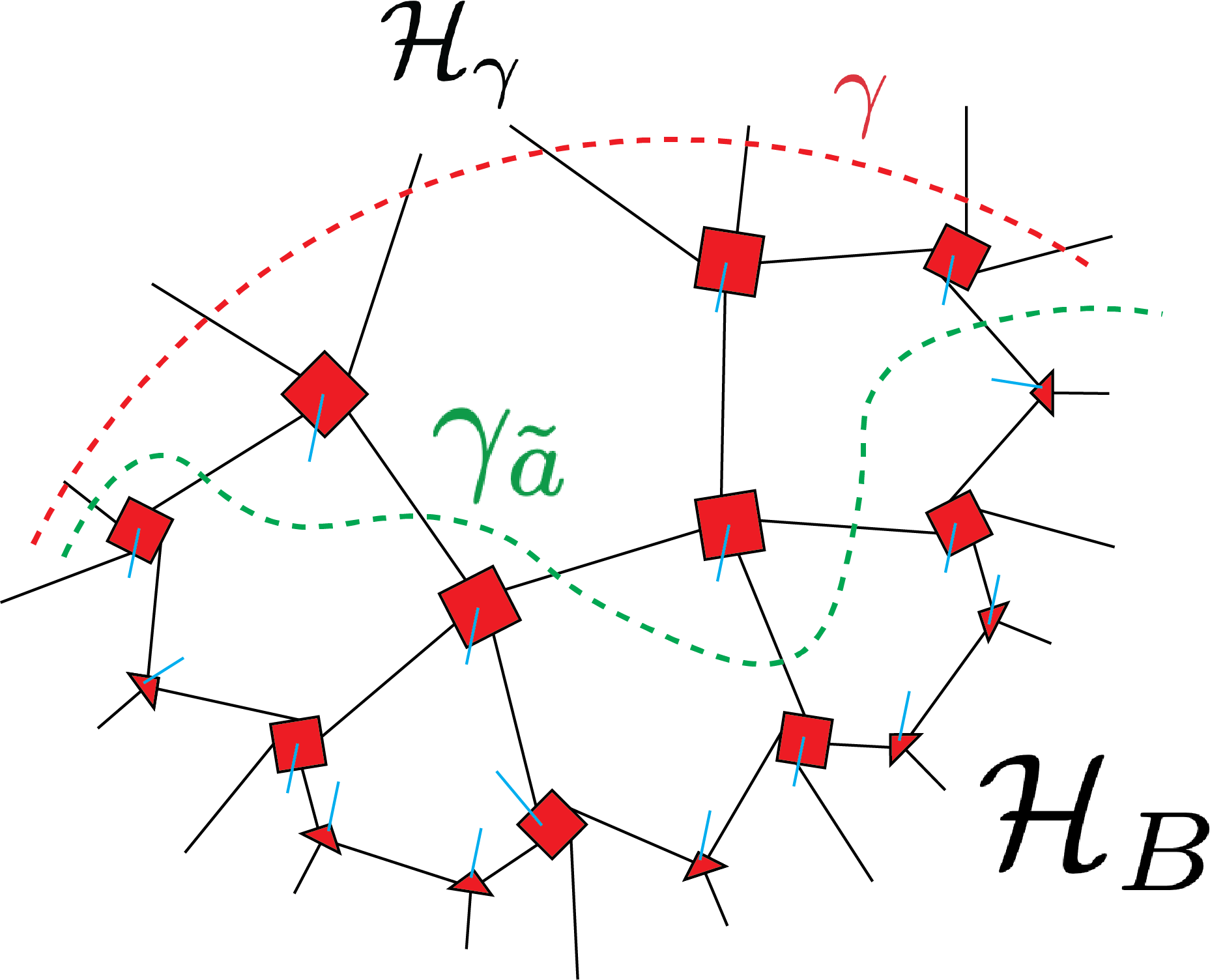}
    \caption{An illustration of a random tensor network as described in the text. This time we denote a candidate surface $\gamma_{\tilde{\mathbf{a}}}$ which bounds all the bulk sites $\tilde{\mathbf{a}}$ between $\gamma_{\tilde{\mathbf{a}}}$ and $B$. The dimension $\dim \gamma_{\tilde{\mathbf{a}}}$ is then the product of dimensions of the black legs cut by the dashed green line.}
    \label{fig:halfnetworkregions}
    \end{figure}
This conjecture was recently proved in \cite{saus2023decoupling}. Conversely, the results of \cite{dupuis2010decoupling} show that \eqref{eq:TNdecoupling} is never true if 
\begin{align} \label{eq:smoothTNcondition}
    \exists\, \tilde{\mathbf{a}} \subseteq \{a_1 ... a_n\}~,~~~~H_\mathrm{max}^\varepsilon(a_1 \dots a_n |\tilde{\mathbf{a}} ) - \log d_{\gamma_{\tilde{\mathbf{a}}}} + \log d_\gamma \gg 0~.
\end{align}
So \eqref{eq:TNconditionsmoothed} is optimal. It follows from \eqref{eq:TNdecoupling} that any unitary $U_a$ on $\bigotimes_i \mathcal{H}_{a_i}$ that preserves \eqref{eq:TNconditionsmoothed} can be state-specifically reconstructed on $\mathcal{H}_B$.

    \begin{figure}
    \centering
    \includegraphics[width=0.6\textwidth]{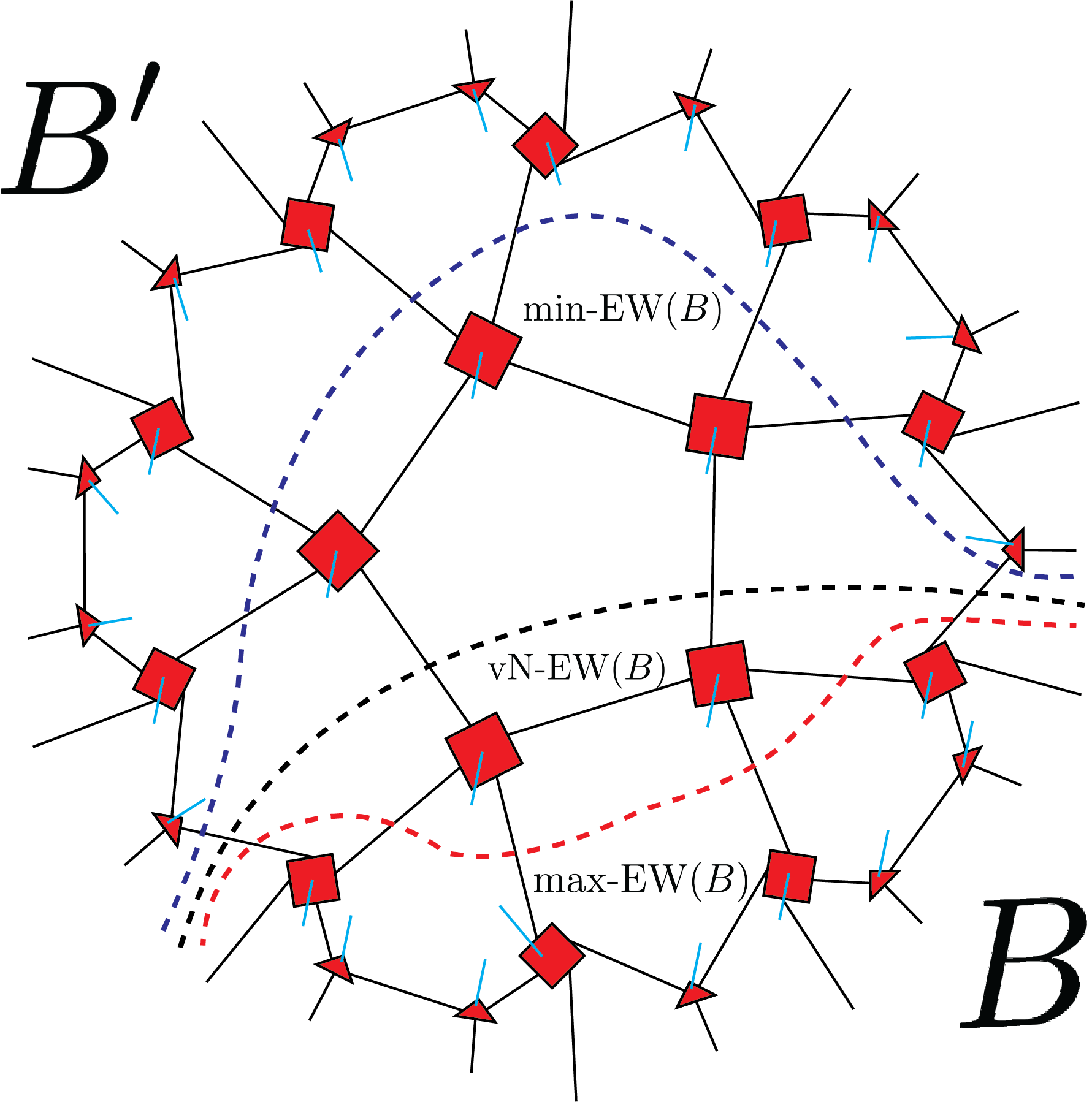}
    \caption{A tensor network with the regions $\text{max-EW}[B]$ and $\text{min-EW}[B]$ labeled. As discussed in the main text, the max-EW is conjectured to be the largest bulk region that can be state-specifically reconstructed from $B$. The min-EW is the bulk region whose state possibly affects the state of $B$. The vN-EW, which we discuss in the next subsection, is bounded by the minimal generalized entropy surface. The vN-EW lies between the min- and max-EWs.}
    \label{fig:tensornetwork}
    \end{figure}
For most tensor networks, \eqref{eq:TNconditionsmoothed} will not be satisfied if $ a_1...a_n$ is the entire set of bulk sites.
However, you can use the chain rule to show that there always exists a unique largest subset $\mathbf{a}_\mathrm{max} \subseteq \{a_1 ... a_n\}$ of bulk legs\footnote{By ``largest'' we mean a subset that contains all other subsets satisfying the same property.} such that \eqref{eq:TNconditionsmoothed} holds.
This is the ``max-EW'' of the tensor network; it is the largest region $\mathbf{a}_\mathrm{max}$ such that state-specific reconstruction of everything in $\mathbf{a}_\mathrm{max}$ is possible \cite{Akers:2020pmf}. (See Appendix \ref{app:SSR} or  \cite{Akers:2021fut} for a precise definition of what state-specific reconstruction means in this context.)  
Similarly there is a smallest region $\mathbf{a}_\mathrm{min}$ such that the part of the tensor network \emph{outside} $\mathbf{a}_\mathrm{min}$ satisfies \eqref{eq:TNconditionsmoothed} for the complement $B'$ and so no information from outside $\mathbf{a}_\mathrm{min}$ can ever reach $B$. This is the ``min-EW'' of the tensor network; it is the bulk complement of the  max-EW for the complementary boundary region $B'$. The max-EW and min-EW are illustrated in Figure \ref{fig:tensornetwork}.

In \cite{Akers:2020pmf}, analogous results were conjectured to hold for time-reflection symmetric states in gravity.\footnote{The paper \cite{dutil2010one} was not actually cited in \cite{Akers:2020pmf} because of an embarrasssing failure of one of the authors' knowledge of his own PhD advisor's prior work on the subject.} The max-EW was defined as the largest wedge $b_1$ with edge in the time-reflection symmetric time slice such that
\begin{align} \label{eq:staticmaxcond}
H_\mathrm{max,gen}^\varepsilon(b_1|b_2) \ll 0
\end{align}
for any smaller wedge $b_2 \subset b_1$ with edge in that slice. It was conjectured to be the largest wedge for which state-specific reconstruction is possible. Similarly, the min-EW was defined as the smallest time-reflection symmetric wedge $b_1$ such that any larger time-reflection symmetric wedge $b_2 \supset b_1$ has
\begin{align}
H_\mathrm{min,gen}^\varepsilon(b_2| b_1) \gg 0~.
\end{align}
By duality, the min-EW of $B$ is the complement of the max-EW of $B'$. It follows from the conjectured properties of the max-EW that no information outside the min-EW is present in $B$.

It is worth noting that the discussion in \cite{Akers:2020pmf} treated the algebra associated to a bulk region $b$ as tensor product factor, ignoring the existence of central operators such as $A(\eth b)$. In fact, until now no precise definition of state-specific reconstruction for algebras with centers has  appeared in the literature. We rectify this deficiency in Appendix \ref{app:SSR}.

\subsection{Definitions}\label{sec:QESreform}
The primary goal of the present paper is to extend the definitions of the max- and min-EW from \cite{Akers:2020pmf} to general time-dependent spacetimes while preserving the conjectured operational interpretations described above.

    \begin{figure}
    \centering
    \includegraphics[width=0.6\textwidth]{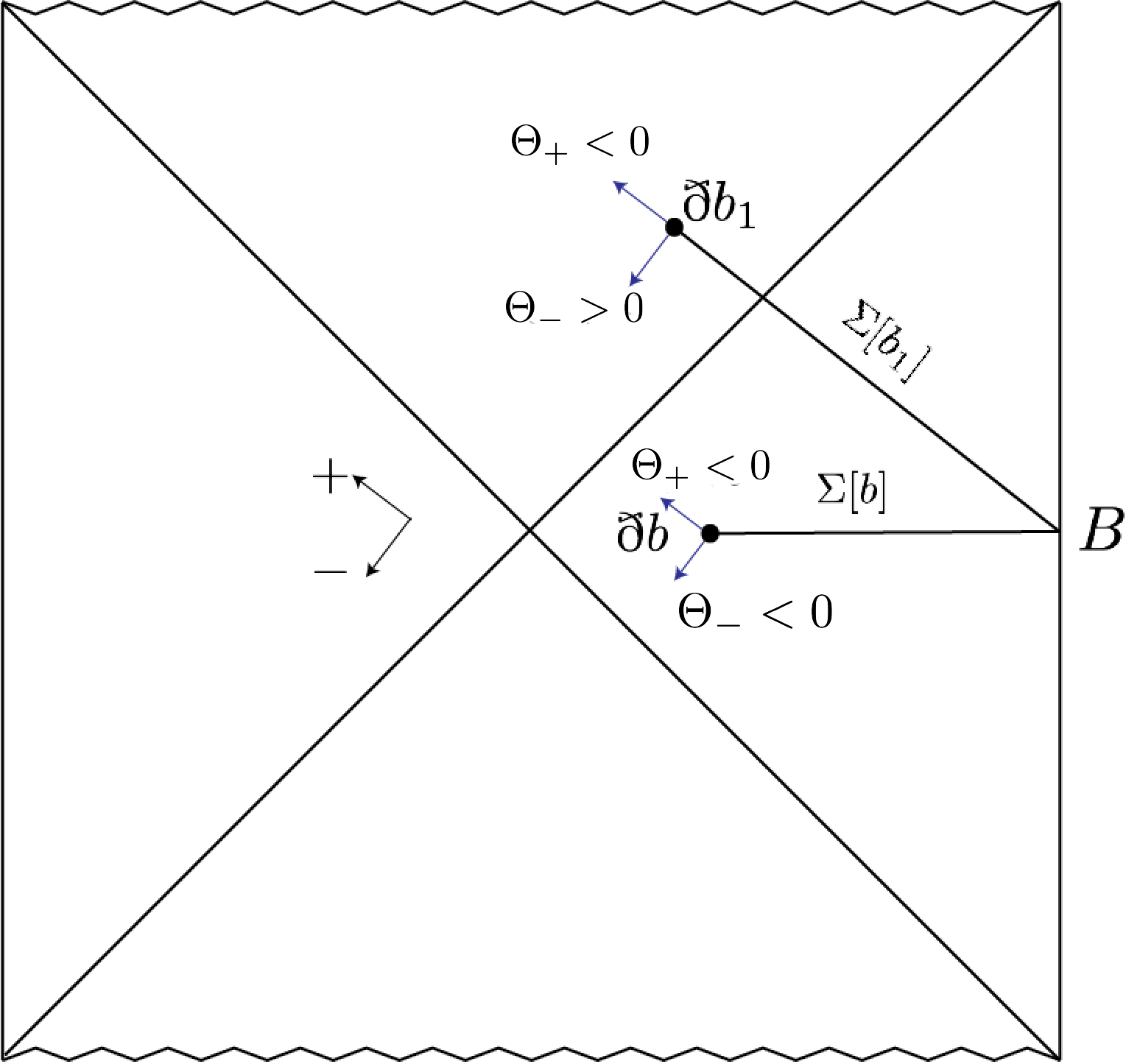}
    \caption{An illustration why an anti-normality condition is needed. Consider a BTZ black hole and let $\eth b_1$ be a trapped surface in the black hole interior. It is easy to find a Cauchy slice $\Sigma[b_1]$ for $b_1$ such that all sub-wedges $b_2 \subset b_1$ with $\eth b_2 \subset \Sigma[b_1]$ have $A(\eth b_2) > A(\eth b_1)$. However, one can send signals to $b_1$ from the left boundary, and hence it cannot be reconstructible from $B$. On the other hand, imposing an anti-normality condition ensures that $\eth b$ lies in the right black hole exterior. }
    \label{fig:antinormalcondition}
    \end{figure}

Before giving a formal definition of our proposal, it is helpful to discuss the intuition behind it. (We focus on the max-EW case since the min-EW is directly related by duality.)
The most naive generalization of \eqref{eq:staticmaxcond} to arbitrary spacetimes would be to simply remove the requirement that $b_1$ and $b_2$ be time-reflection symmetric. In other words, we would require 
\begin{align}\label{eq:condblah}
H^\varepsilon_\mathrm{max,gen}(b_1|b_2) \ll 0
\end{align}
for \emph{any} wedge $b_2 \subseteq b_1$. 
But this is too strong! 
In the strict classical limit, we have
\begin{align} 
    4G H^\varepsilon_\mathrm{max,gen}(b_1|b_2) \to A(\eth b_1) - A(\eth b_2)~.
\end{align}
If the area $A(\eth b_1) > 0$, then this will always be positive for some $b_2$ because we can choose the edge of $b_2$ to be piecewise lightlike.

A slightly more sophisticated guess would be to require \eqref{eq:condblah} only for all wedges $b_2$ whose edge lies within one particular Cauchy slice $\Sigma[b_1]$ for $b_1$. This condition is at least achievable since one can choose $\Sigma[b_1]$ to exclude wedges with a piecewise-lightlike edge. However, it turns out to have the opposite problem of being too easily satisfied. Let us again consider the strict classical limit. As shown in Figure \ref{fig:antinormalcondition}, one can easily find a wedge $b_1$ and Cauchy slice $\Sigma[b_1]$ such that $A(\eth b_2) > A(\eth b_1)$ for all wedges $b_2$ with edge $\eth b_2 \in \Sigma[b_1]$ even though $b_1$ is not reconstructible by its conformal boundary.

The fact that the proposal above is too weak suggests we need an additional condition on the wedge $b_1$. 
An answer that seems to work is to require $b_1$ to be max-antinormal, defined below to mean that both outgoing max-expansions are everywhere negative. 
This rules out, for example, the problematic wedge in Figure \ref{fig:antinormalcondition}.

The previous discussion will straightforwardly lead to our proposed definition of the max-EW. However, since one-shot entropies may not be very familiar to the reader, it will be illuminating to first reformulate the standard QES prescription in terms of conditional von Neumann entropies in a similar manner, before turning to a formal definition of the max-EW.

\begin{defn}[vN-normal \& vN-antinormal]
    A wedge $b$ is called \emph{vN-normal} (respectively \emph{vN-antinormal}) if $\Theta_{\pm}[b;p] \ge 0$ (respectively $\Theta_{\pm}[b;p] \le 0$)  for all $p \in \eth b$. 
\end{defn}

\begin{defn}[vN-accessible]
    Given a wedge $B \subset M_\partial$, a wedge $b_1 \subset M$ is said to be \emph{vN-accessible for $B$} if $b_1 \cap M_\partial = B$, it is vN-antinormal, and it has a Cauchy slice $\Sigma[b_1]$ such that for all wedges $b_2 \subset b_1$ with edge $\eth b_2 \in \Sigma[b_1]$ and $B \subset b_2$, 
    \begin{align}
    S_\mathrm{gen}(b_1|b_2) = S_\mathrm{gen}(b_1) - S_\mathrm{gen}(b_2) < 0~.
    \end{align}
\end{defn}

\begin{defn}[vN-entanglement wedge]\label{def:EW_new}
    Given a wedge $B \subseteq M_\partial$ and a state $\ket{\psi}$, let $F(B)$ be the set of wedges in $M$ that are vN-accessible for $B$. The von Neumann-entanglement wedge is the wedge union over all wedges in $F(B)$:
    \begin{equation}
        \text{vN-EW}[B] = \doublecup_{b \in F(B)} b~.
    \end{equation}
\end{defn}

\begin{rem}
We will eventually show in Theorem \ref{thm:outminimal} that the vN-EW is itself vN-accessible, and therefore is the unique largest vN-accessible wedge. We will also show in Theorem \ref{thm:entwedgecomp} that the vN-EW is bounded by the minimal generalized entropy quantum extremal surface, in accordance with the usual QES prescription.
\end{rem}

The definition of max-EW is almost identical to the vN-EW, except with conditional generalized entropies replaced by $\varepsilon$-smooth conditional max-generalized entropies.
 
\begin{conv}
In all the definitions below we have $0 < \varepsilon \ll 1$ and $- \log \varepsilon \ll K \ll O(1/G)$ unless otherwise stated.
\end{conv}

 Intuitively, the parameter $\varepsilon$ will capture the accuracy with which reconstruction is possible. Note that $\varepsilon$ may be perturbatively small in $G$, but cannot be exponentially small without rendering the bounds on $K$ inconsistent. This is related to the fact that entanglement wedge reconstruction always has nonperturbative corrections from subleading saddle point contributions \cite{Hayden:2018khn}.
 
The parameter $K$ will describe how close the max-EW is allowed to be to a phase transition that would make it smaller. It has long been understood (see e.g. \cite{Hayden:2018khn, Akers:2019wxj, Akers:2020pmf}) that the entanglement wedge is not sharply defined unless the difference between the generalized entropy of the QES region and that of any nonminimal QES region is much larger than $O(1)$. The parameter $K$ characterizes how sharply defined it is.

\begin{defn}[max-normal \& max-antinormal]
    A wedge $b$ is called $\varepsilon$ \emph{max-normal}  if
    \begin{equation}
        \Theta^\varepsilon_{\pm,\mathrm{max}}[b;p] \geq 0~,
    \end{equation}
    and $\varepsilon$ \emph{max-antinormal} if 
    \begin{equation}
    \Theta^\varepsilon_{\pm,\mathrm{max}}[b;p] \leq 0~,
    \end{equation}
    for all $p \in \eth b$.
\end{defn}

\begin{defn}[max-accessible]\label{defn:max-accessible}
    Given a wedge $B \subseteq M_\partial$ and a state $\ket{\psi}$, a wedge $b_1$ is said to be $(\varepsilon, K)$ \emph{max-accessible for $B$} if (1) $b_1 \cap M_\partial = B$, (2) it is $\varepsilon$ max-antinormal, and (3) there exists a Cauchy slice $\Sigma[b_1]$ such that for all macroscopically distinct wedges $b_2 \subset b_1$ with edge $\eth b_2 \subset \Sigma[b_1]$ and $B \subset b_2$, 
    \begin{align} \label{eq:maxaccess}
        H^\varepsilon_\mathrm{max,gen}(b_1|b_2) \leq - K~.
    \end{align}
\end{defn}

The phrase ``macroscopically distinct'' here needs some clarification. Clearly, if $b_2 = b_1$, then $H^\varepsilon_\mathrm{max,gen}(b_1|b_2) = 0$ and \eqref{eq:maxaccess} is not satisfied for $K > 0$. But if $H^\varepsilon_\mathrm{max,gen}(b_1|b_2)$ is a continuous function of $b_2$ then presumably you can also always violate \eqref{eq:maxaccess} by making $b_2$ be sufficiently close to $b_1$. However, since $K \ll O(1/G)$ doing so will generally require $b_2$ to be perturbatively close to $b_1$ in the limit $G \to 0$. In order to avoid issues with Planckian perturbations, by macroscopically distinct, we mean that the difference between $b_2$ and $b_1$ is at least comparable in size to the smallest scale allowed in the bulk effective field theory.

\begin{defn}[max-entanglement wedge]\label{def:cov_max-EW}
    Given a boundary region $B$ and a state $\ket{\psi}$, let $G_{(\varepsilon,K)}(B)$ be the set of wedges in $M$ that are $(\varepsilon, K)$ max-accessible for $B$. The $(\varepsilon, K) $ max-entanglement wedge of a boundary region $B$ is the wedge union over all wedges in $G_{(\varepsilon,K)}(B)$:
    \begin{equation}
        \text{max-EW}_{(\varepsilon,K)}[B] = \doublecup_{b \in G_{(\varepsilon,K)}(B)} b~.
    \end{equation}
\end{defn}

\begin{rem}
As we show in Theorem \ref{thm:outminimal}, the max-EW is itself $(\varepsilon', K)$ max-accessible with $\varepsilon' = O(\varepsilon)$. 
In this sense it is therefore the unique largest max-accessible wedge.
\end{rem}
\begin{rem}
The $(\varepsilon, K)$ max-EW monotonically increases in size when increasing $\varepsilon$ at fixed $K$ and monotonically decreases in size when increasing $K$ at fixed $\varepsilon$. 
\end{rem}

\begin{conj}\label{conj:max-EW_and_min-EW}
    Consider a wedge $B \subseteq M_\partial$ and a state $\ket{\psi}$.
    The $(\varepsilon, K)$ max-EW of $B$ with $K \gg -\log \varepsilon$ for $\ket{\psi}$ can be state-specifically reconstructed from $B$ with error at most $\varepsilon^{O(1)}$. Conversely, for $K \ll \log \varepsilon$ no region $b$ outside the $(\varepsilon, K)$ max-EW of $B$ can be state-specifically reconstructed from $B$ with error smaller than $\varepsilon^{O(1)}$.
\end{conj}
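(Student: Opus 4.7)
The plan is to reduce the conjecture to a gravitational decoupling theorem and then deduce state-specific reconstructibility (or its failure) from decoupling via the standard Heisenberg-picture state-merging argument beneath \eqref{eq:simpleSSrecon}. The key technical input is a gravitational analog of the tensor network decoupling theorem \eqref{eq:TNdecoupling}--\eqref{eq:TNconditionsmoothed} of \cite{dutil2010one, saus2023decoupling}, in which the discrete family of candidate cuts $\gamma_{\tilde{\mathbf a}}$ is replaced by the continuous family of candidate sub-wedges $b_2$ whose edges lie on a Cauchy slice of the relevant bulk region.

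For the forward direction, set $b_1 := \text{max-EW}_{(\varepsilon,K)}[B]$. By Theorem \ref{thm:outminimal} (proved later in the section), $b_1$ is itself $(\varepsilon',K)$ max-accessible with $\varepsilon' = O(\varepsilon)$, so there exists a Cauchy slice $\Sigma[b_1]$ with $H^\varepsilon_\mathrm{max,gen}(b_1|b_2) \le -K$ for every macroscopically distinct sub-wedge $b_2 \subset b_1$ with $\eth b_2 \subset \Sigma[b_1]$ and $B \subseteq b_2$. Given any unitary $U_{b_1}$, one bounds the trace-norm distance between the induced CFT reduced states on $B'$ for $\ket{\psi}$ and $U_{b_1}\ket{\psi}$ using a Euclidean replica-trick sum over competing gravitational saddles, each labeled by such a candidate sub-wedge $b_2$. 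Each contribution is suppressed by $e^{-cK}$ via the max-accessibility bound, yielding an $\varepsilon^{O(1)}$ error once $K \gg -\log \varepsilon$. To justify restricting the sum to sub-wedges whose edges lie on $\Sigma[b_1]$, one uses the max-antinormality of $b_1$ together with the max-QFC (Conjecture \ref{conj:os_qfc}): any candidate sub-wedge whose edge strays off $\Sigma[b_1]$ can be deformed back to $\Sigma[b_1]$ along the outwards null hypersurfaces at $\eth b_1$ without increasing its contribution. Decoupling then promotes to state-specific reconstruction by the standard argument following \eqref{eq:simpleSSrecon}.

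For the converse, suppose $b_3$ is a wedge not contained in $b_1$. Maximality of $b_1$ among $(\varepsilon,K)$ max-accessible wedges forces $b_1 \doublecup b_3$ to fail max-accessibility: on any Cauchy slice of $b_1 \doublecup b_3$ there is a candidate sub-wedge $b_2$ for which $H^\varepsilon_\mathrm{max,gen}(b_1 \doublecup b_3 | b_2)$ exceeds $-K$. Applying Proposition \ref{prop:duality_gen} turns this into a strictly positive lower bound on the complementary smooth min-entropy on the $B'$ side. The gravitational analog of the obstruction \eqref{eq:smoothTNcondition} then prevents decoupling of the degrees of freedom in $b_3 \setminus b_1$ from $B'$, and the standard converse to state merging promotes this to an $\varepsilon^{O(1)}$ lower bound on the error of any attempted reconstruction from $B$.

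The hardest step, and the main obstacle, is proving the gravitational decoupling theorem with the correct smooth one-shot entropies. The tensor network analog \eqref{eq:TNconditionsmoothed} was proven only recently in \cite{saus2023decoupling} via a delicate smoothing argument, and porting it to gravity requires (i) organizing the Euclidean replica sum coherently over the full family of competing extremal-surface-like saddles, rather than a discrete set of network cuts; (ii) controlling the smoothing parameter $\varepsilon$ uniformly across a continuous family of candidate sub-wedges; and (iii) handling area operators and nontrivial algebraic centers via the generalized trace of Definition \ref{eq:gen_tr}. Once that decoupling theorem is in hand, the remaining steps -- in particular the reduction to edges on $\Sigma[b_1]$ via max-antinormality and the max-QFC, and the conversion of decoupling into state-specific reconstruction -- are essentially covariant upgrades of the time-reflection-symmetric arguments of \cite{Akers:2020pmf} and should proceed without substantial modification.
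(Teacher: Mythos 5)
There is a fundamental mismatch here: the statement you are trying to prove is stated in the paper as a \emph{conjecture}, and the paper offers no proof of it. What the paper does instead is corroborate the conjecture indirectly, by proving consistency properties the max- and min-EW would have to satisfy if the conjecture were true (nesting, containment of the causal wedge, $\textup{max-EW}\subseteq\textup{vN-EW}\subseteq\textup{min-EW}$, complementarity of the vN-EW, reduction to the QES prescription when the min- and max-EW coincide, and the time-reflection-symmetric limit). Your proposal is therefore not being measured against a proof in the paper, and more importantly it is not itself a proof: its load-bearing ingredient --- a gravitational one-shot decoupling theorem in which the discrete cuts $\gamma_{\tilde{\mathbf a}}$ of \eqref{eq:TNconditionsmoothed} are replaced by a continuous family of sub-wedges, established via a coherently organized Euclidean replica sum over all competing saddles --- does not exist, and you acknowledge as much when you call it ``the main obstacle.'' The known inputs are strictly weaker: the single-tensor result of \cite{dupuis2010decoupling}, the tensor-network result of \cite{dutil2010one,saus2023decoupling}, and the gravitational replica computation of \cite{Akers:2020pmf} valid only when two extremal surfaces dominate. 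Bridging from those to the general covariant setting is essentially the content of the conjecture itself, so your argument is circular at the level of rigor: it reduces the conjecture to an unproven statement of comparable strength.

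Two further points are worth flagging even as heuristics. First, your converse direction conflates ``$H^\varepsilon_\mathrm{max,gen}(b_1\doublecup b_3|b_2)$ exceeds $-K$'' with the failure of decoupling; the known obstruction \eqref{eq:smoothTNcondition} requires the relevant quantity to be $\gg 0$, not merely above a threshold, and there is an intermediate regime where neither decoupling nor its negation is controlled --- this is exactly why the conjecture carries the two separate regimes $K\gg-\log\varepsilon$ and $K\ll\log\varepsilon$ rather than a single sharp threshold. Second, your reduction of off-slice candidate wedges to $\Sigma[b_1]$ via max-antinormality and the max-QFC is the right intuition (it mirrors the role these conditions play in the paper's Lemma \ref{lem:bootstrap} and Theorem \ref{thm:outminimal}), but in the paper those arguments establish properties of the max-EW, not the reconstruction claim. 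If you want to contribute something provable, the honest targets are the consistency theorems of Section \ref{sec:presc_props}, or a sharpening of the replica argument of \cite{Akers:2020pmf} beyond the two-surface case.
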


\begin{rem}
    We define state-specific reconstruction formally for algebras in appendix \ref{app:SSR}.
\end{rem}

The min-EW is the complement of the max-EW of the complement.\footnote{We continue to assume the global state is pure for simplicity, such that $S_\mathrm{gen}(a) = S_\mathrm{gen}(a')$. Again, this can always be achieved by purifying the system with a reference $R$ and including $R \subset a'$ when $R \not \subset a$.}

\begin{defn}[min-entanglement wedge]\label{def:cov_min-EW}
    Given a state $\ket{\psi}$, the $(\varepsilon, K)$ min-entanglement wedge of a boundary subregion $B$ is the spacelike complement of $\text{max-EW}_{(\varepsilon,K)}[B']$,
    \begin{equation}
        \text{min-EW}_{(\varepsilon,K)}[B] = \left(\text{max-EW}_{(\varepsilon,K)}[B']\right)'~.
    \end{equation}
\end{defn}
\begin{rem}
By duality (Theorem \ref{thm:duality}) the min-EW could also be defined directly as the intersection of all min-normal wedges $b$ where there exists a Cauchy slice $\Sigma[b']$ for wedge $b'$ such that $H^\varepsilon_\mathrm{min,gen}(a|b) > K$ for all macroscopically distinct $a \supset b$ with $\eth a \in \Sigma[b']$ and $a \cap M_\partial = B$.
\end{rem}
\begin{rem}
    An immediate consequence of Conjecture \ref{conj:max-EW_and_min-EW} is that no information from outside the min-EW can affect the state of $B$ by more than an $\varepsilon^{O(1)}$-amount.
\end{rem}

\subsection{Properties}\label{sec:presc_props}

We now prove properties about the min-EW, max-EW, and vN-EW.
These properties are consistency conditions which corroborate Conjecture \ref{conj:max-EW_and_min-EW}.
We will assume throughout that the max-QFC and (von Neumann) QFC both hold.\footnote{We could alternatively assume the max-QFC and min-QFC since the latter implies the von Neumann QFC by Proposition \ref{prop:min_QFC_to_QFC}, or we could assume the unrestricted one-shot QFC from Remark \ref{rem:unrestricted_osQFC}, which implies both the max- and min-QFCs.}

Let us motivate these consistency conditions.
The first is that in certain cases, the max-EW and min-EW should coincide, and in such cases should equal the QES region.
Indeed for special ``compressible'' states, the QES region is believed to satisfy the conditions in Conjecture \ref{conj:max-EW_and_min-EW} for both the max-EW and min-EW \cite{Akers:2020pmf, Dong:2016eik, Harlow:2016vwg, Akers:2021fut}.

The second consistency condition is that the max-EW should be contained inside the min-EW.
This follows from a well-known principle in quantum information theory called the information-disturbance trade-off, which says that a system $B$ fully encodes some quantum information if and only if the complementary subsystem $B'$ knows nothing about it (see e.g. \cite{hayden2017alphabits}).\footnote{The famous quantum no-cloning and no-erasure theorems can be thought of as examples of this principle.} 
If Conjecture \ref{conj:max-EW_and_min-EW} is right, then the max-EW of $B$ and $B'$ cannot overlap.

The third consistency condition is that the max-EW contains subregions of the bulk that we know $B$ can reconstruct.
For example, the $\text{max-EW}[B]$ should include the causal wedge of $B$, which we know is reconstructible via the HKLL protocol \cite{Hamilton:2005ju, Hamilton:2006az}.
Finally, the max-EW should also nest, which means it includes the max-EW of smaller regions: if $B \supseteq A$, then $\text{max-EW}[B] \supseteq \text{max-EW}[A]$.

Throughout this section we take $M_\partial$ to be the conformal boundary of $M$, and we will assume the following generic condition on $M$:

\begin{defn}
    The \emph{generic condition} is an assumption that all inequalities involving generalized conditional entropies apply strictly at some scale $\kappa$.
    For example, the max-QFC states that $H_{\mathrm{max,gen}}(a(V_2)|a(V_1)) \leq 0$ for $V_2 \geq V_1 \geq 0$ slices of some outward null congruence emanating from a wedge $a$ with non-positive initial max-expansion.  
    The generic condition assumes the stronger condition that instead 
    \begin{align}
        H_{\mathrm{max,gen}}(a(V_2)|a(V_1)) \ll -\kappa.
    \end{align}
    It is often assumed that the scale of $\kappa$ is leading order 
    ($\kappa = O(\ell^{d-2}/G)$ with $\ell$ a characteristic scale in the state).
    However, in our case it will be acceptable for $\kappa$ to be much smaller than this, so long as $\kappa \gg K$.
\end{defn}

\begin{defn}[Causal wedge \cite{Headrick:2014cta}]
    Given a wedge $B \subseteq M_\partial$, the \emph{causal wedge} of $B$ is $C[B] := J^+[B] \cap J^-[B]$.
\end{defn}

\begin{lem}\label{lem:causal_wedge_spacelike_to_complement}
    Given a wedge $B \subseteq M_\partial$ with complement $B'$ in $M_\partial$, assuming the QFC then its causal wedge $C[B]$ is spacelike to $B'$.
\end{lem}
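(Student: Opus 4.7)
The approach is to argue by contradiction, reducing the claim to a focusing obstruction on a causal horizon that follows from the QFC. In broad strokes, this is the Gao--Wald theorem on boundary causality, with the QFC playing the role of the null energy condition.

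Suppose for contradiction that $C[B]$ is not spacelike to $B'$. By time-reflection symmetry, I may assume there exists $p \in C[B] \cap J^+[B']$. Since $p \in C[B] \subseteq J^-[B]$, there is a future-directed causal curve from $p$ to some $q \in B$, and by hypothesis there is a future-directed causal curve from some $q' \in B'$ to $p$. Concatenating these yields a future-directed causal curve $\gamma$ from $q' \in B'$ through the bulk to $q \in B$.

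The first substantive step is to upgrade $\gamma$ to an unbroken null geodesic lying on the achronal boundary $H := \partial J^+[B']$. By a standard corner-smoothing argument, any corner or strictly timelike segment in $\gamma$ can be deformed to produce a strictly timelike bulk curve between nearby boundary points; taking the boundary limit would then give a bulk timelike connection between points in $B'$ and $B$, contradicting the fact that $B$ and $B'$ are spacelike complements on $M_\partial$. Hence $\gamma$ may be taken to be an unbroken null generator of $H$ connecting $q' \in B'$ to $q \in B$.

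The final step is to derive a contradiction from the existence of this generator using the QFC. In the classical $G \to 0$ limit, the QFC implies the classical focusing theorem: via Raychaudhuri's equation, the classical expansion of null congruences on $H$ is non-increasing. Applied to the generators of $H$, this forces a conjugate point to develop along $\gamma$ before it can return to the conformal boundary at $q$, since a generator that both leaves from and returns to the conformal boundary at boundary-spacelike endpoints must traverse sufficient affine parameter for focusing to produce a conjugate point. But past a conjugate point, $\gamma$ would leave the achronal boundary $H$, contradicting $q \in H$.

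The main obstacle is the clean implementation of the focusing step. Since $C[B]$ and $B'$ are defined purely by the background Lorentzian geometry, it in principle suffices to invoke the classical $G \to 0$ limit of the QFC, where the standard Gao--Wald argument applies directly. For a fully semiclassical argument, one would instead apply the (restricted) QFC to a family of wedges bounded by cuts of $H$ at varying affine parameter along $\gamma$, using the divergent near-boundary behavior of the generalized entropy to replace the area-based focusing estimate; this is morally the same contradiction, phrased in terms of $S_\mathrm{gen}$.
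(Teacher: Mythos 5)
The paper's own proof is a one-line appeal to Wall's theorem that the GSL (which follows from the restricted QFC, as shown in Section 4) implies $C[B]\cap J^{\pm}[B']=\varnothing$, with the Gao--Wald theorem mentioned only as an alternative route requiring the achronal ANEC. You have chosen to reconstruct that alternative route from scratch, which is legitimate in principle, but your reconstruction has two genuine gaps. The first is a circularity in the achronality step: you assert that producing a strictly timelike bulk curve from $B'$ to $B$ "contradicts the fact that $B$ and $B'$ are spacelike complements on $M_\partial$." It does not. Spacelike complementarity is a statement about the intrinsic causal structure of the conformal boundary, and a bulk causal shortcut between boundary-spacelike points is precisely the phenomenon the lemma must rule out --- it genuinely occurs when the ANEC fails, e.g.\ in traversable wormhole geometries. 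The standard fix is to locate a boundary point that lies on the achronal boundary $\partial J^{+}[B']$ while being boundary-spacelike to $B'$ (for instance, the earliest point of a timelike boundary curve through $q$ that still lies in $J^{+}[B']$), and only then extract an unbroken null generator free of conjugate points; without this step you have no generator of $H$ to focus.

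The second gap is your claim that, because $C[B]$ and $B'$ are determined by the background geometry, "it in principle suffices to invoke the classical $G\to 0$ limit of the QFC." In the setting of this paper the background is a semiclassical geometry sourced by quantum fields that can violate the classical NEC, so classical Raychaudhuri focusing simply does not apply to the actual metric; this is exactly why the paper routes the argument through the quantum GSL (or, in the Gao--Wald version, through the achronal ANEC rather than the pointwise NEC). Your closing sentence gestures at the correct repair --- applying the restricted QFC to cuts of the causal horizon, i.e.\ essentially re-deriving the GSL --- but that is the argument that actually needs to be made, not an optional refinement. Finally, the focusing step itself is under-specified: a non-increasing expansion does not by itself produce a conjugate point; one needs the generic condition to make $\Theta$ strictly negative somewhere and the fact that generators with both endpoints on the conformal boundary have infinite affine length, so that Raychaudhuri then forces $\theta\to-\infty$. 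These are standard ingredients of Gao--Wald, but they are the content of the theorem, not details to be waved through.
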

\begin{proof}
    The QFC implies the GSL which implies $C[B] \cap J^\pm[B'] = \varnothing$ \cite{Wall:2010jtc}.
    Note this also follows from the Gao-Wald theorem, which requires only the weaker condition that the achronal average null energy condition holds \cite{Gao:2000ga}.
\end{proof}

\begin{lem}\label{lem:CW_outer_and_anti}   
    Assuming the QFC, the causal wedge $C[B]$ of a boundary wedge $B$ is vN-accessible. Assuming the max-QFC and the generic condition, it is $(\varepsilon, K)$ max-accessible for any $\varepsilon > 0$ and $K \ll \kappa$.
    Moreover, in both cases, given any Cauchy slice $\Sigma[B]$ for $B$, we can always choose the Cauchy slice $\Sigma[C[B]]$ to have $\Sigma[B]$ as its conformal boundary.
\end{lem}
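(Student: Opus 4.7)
The plan is to verify separately the four assertions contained in the statement: (i) the boundary intersection $C[B] \cap M_\partial = B$, (ii) (max-)antinormality of $C[B]$, (iii) existence of a Cauchy slice $\Sigma[C[B]]$ on which the entropy condition of Definitions \ref{def:EW_new} and \ref{defn:max-accessible} holds, and (iv) that we can arrange $\Sigma[C[B]]$ to have any prescribed $\Sigma[B]$ as its conformal boundary. Throughout, I treat the vN-accessible and max-accessible cases in parallel, invoking the QFC for the former and the max-QFC plus generic condition for the latter.

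First, $C[B] \cap M_\partial = B$ is immediate from definitions: $J^\pm[B] \cap M_\partial$ is the future/past boundary domain of dependence $D^\pm[B]$, and since $B$ is a wedge in $M_\partial$, $D^+[B] \cap D^-[B] = B$. For antinormality, the key geometric observation is that $\eth C[B] \subset \partial J^+[B] \cap \partial J^-[B]$, so the outgoing future (past) null hypersurface from $\eth C[B]$ locally coincides with $\partial J^+[B]$ ($\partial J^-[B]$) continuing past $\eth C[B]$. The expansions $\Theta_\pm[C[B];p]$ therefore measure the rate of change of $S_\mathrm{gen}$ (or $H^\varepsilon_\mathrm{max,gen}$) as we slide $\eth C[B]$ forward (backward) along a future (past) causal horizon. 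The one-shot GSL established earlier in the paper gives precisely the non-positivity of these quantities along such horizons, yielding $\Theta_\pm[C[B];p] \leq 0$ in the vN case and $\Theta^\varepsilon_{\pm,\mathrm{max}}[C[B];p] \leq 0$ in the max case, with the generic condition $\kappa \gg K$ guaranteeing the required gap.

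Third, I would construct $\Sigma[C[B]]$ by foliating the globally hyperbolic region $C[B]$ by spacelike slices extending from $\Sigma[B] \subset M_\partial$ into the bulk and terminating on $\eth C[B]$. Existence of such a foliation with one slice having $\Sigma[B]$ as conformal boundary follows from standard results on globally hyperbolic spacetimes with boundary. This settles (iv) simultaneously with the construction used for (iii).

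Fourth, for the entropy condition, take any $b_2 \subset C[B]$ with $B \subset b_2$ and $\eth b_2 \subset \Sigma[C[B]]$. The plan is to connect $\eth b_2$ to $\eth C[B]$ by a two-stage null deformation: from $\eth b_2$ shoot the outgoing future (respectively past) null congruence to obtain a cross-section $\sigma_+$ (respectively $\sigma_-$) on $\partial J^+[B]$ (respectively $\partial J^-[B]$), and then slide $\sigma_\pm$ along these horizons until reaching $\eth C[B]$. The max-QFC (Conjecture \ref{conj:os_qfc}) controls the first stage, using that the max-expansions remain non-positive by propagating antinormality outward from $\eth C[B]$, while the max-GSL controls the horizon translation in the second stage. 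Combining via the chain rule (Theorem 2.2.11) yields $H^\varepsilon_\mathrm{max,gen}(C[B]|b_2) \leq -K$ (and its von Neumann counterpart $S_\mathrm{gen}(C[B]|b_2) < 0$). The main obstacle is the bookkeeping in this last step: one must verify that the intermediate expansions entering the QFC really are non-positive along the full null deformation, which uses the generic condition to tolerate the logarithmic smoothing losses from the chain rule, and that macroscopic distinctness of $b_2$ from $C[B]$ is preserved under the intermediate wedges so that the gap $K$ genuinely applies.
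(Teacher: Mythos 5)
Your steps (i) and (ii) are fine in outline, though (i) is not ``immediate from definitions'': a causal curve from $B$ can in principle dip into the bulk and return to boundary points outside $D^{\pm}[B]$, so $J^+[B]\cap J^-[B]\cap M_\partial = B$ requires the GSL (equivalently Gao--Wald), which is exactly what Lemma \ref{lem:causal_wedge_spacelike_to_complement} supplies and what the paper invokes at this point. The genuine gap is in steps (iii)--(iv). You take $\Sigma[C[B]]$ to be a leaf of a generic \emph{spacelike} foliation and then try to verify the entropy condition for an arbitrary $b_2$ with $\eth b_2\subset\Sigma[C[B]]$ by shooting null congruences from $\eth b_2$ out to the horizons. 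But the restricted (max-)QFC only controls a null deformation whose \emph{initial} expansion is non-positive, and for an arbitrary cross-section of a generic spacelike slice there is no reason that $\Theta^{\varepsilon}_{\pm,\mathrm{max}}[b_2;y]\le 0$. Antinormality propagates \emph{forward} along a congruence from a surface where it is already known to hold ($\eth C[B]$, or a cut near the asymptotic boundary); it does not propagate backward to an arbitrary interior surface, so your appeal to ``propagating antinormality outward from $\eth C[B]$'' runs the implication in the wrong direction. (There is also a geometric slip: the outgoing \emph{future} congruence from $\eth b_2$ heads toward the future horizon, which lies on $\partial J^-[B]$, not on $\partial J^+[B]$.)

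The paper avoids this problem by choosing a \emph{piecewise-null} Cauchy slice: $\Sigma[C[B]]=\beta^+\cup\beta^-$ with $\beta^+=\partial J^+(\Sigma[B])\cap C[B]$ (the future light sheet of $\Sigma[B]$) and $\beta^-=\mathscr{I}^+(C[B])\cap\Sigma[B]'$ (a portion of the future horizon of $C[B]$). Any $\alpha\subset\beta$ is then automatically a cross-section of these two null hypersurfaces, and the deformation $\alpha\to\alpha\cup\beta^+\to\beta$ takes place entirely along them, where non-positivity of the relevant expansions is guaranteed (near the asymptotic boundary for $\beta^+$, and by the (max-)GSL for $\beta^-$); the two conditional entropies are then combined with the chain rule at cost $O(\log\varepsilon)\ll K$, exactly the bookkeeping you anticipated. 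To repair your argument you must replace the generic spacelike slice by this null construction (or otherwise establish the entropy condition for a spacelike slice, which the QFC alone does not provide).
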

\begin{proof}
    The boundary of the causal wedge $C[B]$ is the union of portions of past and future causal horizons, denoted $\mathscr{I}^-(C[B])$ and $\mathscr{I}^+(C[B])$ respectively. By the GSL and the max-GSL, $C[B]$ is therefore vN-antinormal and max-antinormal.
    (Note that $C[B] \cap M_\partial = B$ by lemma \ref{lem:causal_wedge_spacelike_to_complement}.)
    
    To finish the proof, we now want to construct a Cauchy slice $\Sigma[C[B]]$ with conformal boundary $\Sigma[B]$ satisfying the appriopriate conditions. Define $\beta^+ := (\partial J^+(\Sigma[B]) \cap C[B])$ to be the co-dimension one region which is given by the portion of the future light sheet from $\Sigma[B]$ that lies inside $C[B]$.  Define $\beta^- := (\mathscr{I}^+(C[B]) \cap \Sigma[B]')$ to be the portion of the future horizon of $C[B]$ that is space-like separated from $\Sigma[B]$. We define the Cauchy slice $\Sigma[C[B]]$ as their union
    \begin{align}
        \Sigma[C[B]] = \beta \equiv  \beta^+ \cup \beta^-.
    \end{align}
    Let us first consider the vN-accessible case. 
    We want to show that $S_\mathrm{gen}(\beta) \leq S_\mathrm{gen}(\alpha)$ for any $\alpha \subset \beta$.
    By the QFC, we have $S_\mathrm{gen}(\alpha) \geq S_\mathrm{gen}(\alpha \cup \beta^+)$ and $S_\mathrm{gen}(\alpha \cup \beta^+) \geq S_\mathrm{gen}(\beta)$, which completes the proof.\footnote{Note that $\eth a$ cannot intersect the same generator of $\beta^+$ or $\beta^-$ more than once. If it did, there would exist a lightlike geodesic between two points on $\eth a$. If any such geodesic is not contained in $a$, then it will be contained in $a''$. On the other hand if all such geodesics are contained in $a$ then they cannot be contained in $a''$. Both contradict the requirement that $a$ be a wedge.}
    
    For the max-accessible case, by the max-QFC and generic condition, we have $H^{\varepsilon/4}_\mathrm{max,gen}(\alpha \cup \beta^+| \alpha) \ll - \kappa$ and $H^{\varepsilon/4}_\mathrm{max,gen}(\beta| \alpha \cup \beta^+) \ll - \kappa$. But, by the chain rule,
    \begin{align}
        H_{\mathrm{max,gen}}^{\varepsilon}(\beta | \alpha) \leq  H^{\varepsilon/4}_\mathrm{max,gen}(\alpha \cup \beta^+| \alpha) + H^{\varepsilon/4}_\mathrm{max,gen}(\beta| \alpha \cup \beta^+) + O(\log \varepsilon) \ll -K,
    \end{align}
    which is what we needed to show.
\end{proof}

\begin{cor}\label{cor:causal_wedge_in_EW}
    The causal wedge is contained in the max-entanglement wedge and vN-entanglement wedge, 
    \begin{equation}
        C[B] \subseteq \textup{max-EW}[B], \textup{vN-EW}[B]~.
    \end{equation}
\end{cor}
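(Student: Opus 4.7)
The statement is an immediate corollary of Lemma \ref{lem:CW_outer_and_anti} combined with the definitions of the vN-EW and max-EW (Definitions \ref{def:EW_new} and \ref{def:cov_max-EW}), so the ``proof'' is really just a matter of unpacking definitions. The plan is as follows.

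First, I would observe that the causal wedge $C[B] = J^+[B] \cap J^-[B]$ is indeed a wedge in the sense of the paper (i.e.\ $C[B]'' = C[B]$); this is standard and follows from the defining property of $C[B]$ as the bulk region two-sided causally connected to $B$, together with the fact that intersections of wedges are wedges as noted earlier in the paper. I would also note explicitly that $C[B] \cap M_\partial = B$, which follows from Lemma \ref{lem:causal_wedge_spacelike_to_complement} (spacelike separation of $C[B]$ from $B'$).

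Second, I would invoke Lemma \ref{lem:CW_outer_and_anti} to conclude that $C[B]$ is vN-accessible for $B$ (assuming the QFC) and $(\varepsilon, K)$ max-accessible for $B$ (assuming the max-QFC and generic condition) for any $\varepsilon > 0$ and $K \ll \kappa$. That is, $C[B] \in F(B)$ and $C[B] \in G_{(\varepsilon,K)}(B)$.

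Third, I would use the elementary fact that for any wedge $a$ in a family $\{b_i\}$, the wedge union contains $a$. To see this, note that $a' \cap b' \subseteq a'$, so taking spacelike complements (which reverses inclusion on wedges) gives $a \doublecup b = (a' \cap b')' \supseteq (a')' = a$; this extends to arbitrary (wedge) unions. Applying this to $a = C[B]$ gives
\begin{equation}
    C[B] \subseteq \doublecup_{b \in F(B)} b = \text{vN-EW}[B],
\end{equation}
and similarly $C[B] \subseteq \text{max-EW}_{(\varepsilon,K)}[B]$.

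There is no real obstacle here; the content is entirely in Lemma \ref{lem:CW_outer_and_anti}, which in turn rests on the GSL / max-GSL and the chain rule. The only subtlety worth flagging in the writeup is the (trivial) monotonicity of wedge union under inclusion, which might not be stated explicitly earlier but is a direct consequence of the definition $a \doublecup b := (a' \cap b')'$.
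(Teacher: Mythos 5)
Your proposal is correct and matches the paper's (implicit) argument: the paper treats this as an immediate consequence of Lemma \ref{lem:CW_outer_and_anti} together with the definitions of the vN-EW and max-EW as wedge unions over accessible wedges, which is exactly what you spell out. Your explicit check that the wedge union contains each constituent wedge is a harmless elaboration of a step the paper leaves unstated.
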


\begin{lem}\label{lem:vN_accessible_exclusion}
Let $b_1$ and $b_2$ be vN-accessible wedges with complementary conformal boundaries $B$ and $B'$, and let $\Sigma[B]$ (resp. $\Sigma[B']$) be  the conformal boundary of the Cauchy slice $\Sigma[b_1]$ (resp. $\Sigma[b_2]$). Assuming the QFC, if $b_1$ (resp. $b_2$) is spacelike separated from $\Sigma[B']$ (resp. $\Sigma[B]$), then $b_1$ will also be spacelike separated from the entirety of $b_2$. 
\end{lem}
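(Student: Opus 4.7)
My approach is a proof by contradiction: assume $b_1$ and $b_2$ are not spacelike separated, and deduce a contradiction from the strict vN-accessibility inequalities for both wedges. The overall structure follows classical maximin-style proofs of entanglement-wedge exclusion.

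The first step, which I expect to be the main obstacle, is geometric. Using the hypothesis that $b_1$ is spacelike to $\Sigma[B']$, I would extend the Cauchy slice $\Sigma[b_1]$ spacelike across $\eth b_1$ into $b_1'$ until it reaches $\Sigma[B']$. Symmetrically, using $b_2$ spacelike to $\Sigma[B]$, I would extend $\Sigma[b_2]$ back to $\Sigma[B]$. Patching these extensions yields a single global spacelike Cauchy slice $\Sigma$ of $M$ whose conformal boundary is $\Sigma[B] \cup \Sigma[B']$ and which contains (perhaps after a small deformation) both $\Sigma[b_1]$ and $\Sigma[b_2]$. The delicate point is that when $b_1$ and $b_2$ fail to be spacelike, the slices $\Sigma[b_1]$ and $\Sigma[b_2]$ need not themselves be mutually spacelike, so the two extensions can conflict. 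I would invoke the QFC together with continuity of $S_\mathrm{gen}$ under small deformations to perturb the $\Sigma[b_i]$ as needed while preserving the strict accessibility inequalities.

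With such a $\Sigma$ in hand, write $\sigma_i := \Sigma \cap b_i$. Failure of spacelike separation (after a small Cauchy evolution of $\Sigma$ if needed to convert purely causal overlap into spatial overlap) forces $\sigma_1 \cap \sigma_2$ to be nontrivial. I would then consider the ``subtracted'' subwedges $\tilde b_1$, $\tilde b_2$ whose Cauchy restrictions to $\Sigma$ are $\sigma_1 \setminus \sigma_2$ and $\sigma_2 \setminus \sigma_1$ respectively. By construction $\tilde b_i \subset b_i$ with $\tilde b_i \cap M_\partial$ equal to $B$ or $B'$, and $\eth \tilde b_i \subset \Sigma[b_i]$. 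Applying the vN-accessibility of $b_1$ to the subwedge $\tilde b_1$, and of $b_2$ to $\tilde b_2$, then yields
\begin{align}
S_\mathrm{gen}(b_1) < S_\mathrm{gen}(\tilde b_1), \qquad S_\mathrm{gen}(b_2) < S_\mathrm{gen}(\tilde b_2).
\end{align}

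To close the argument, I would combine these two strict inequalities with strong subadditivity of $S_\mathrm{gen}$ applied to the three disjoint regions $\sigma_1 \setminus \sigma_2$, $\sigma_1 \cap \sigma_2$, $\sigma_2 \setminus \sigma_1$ on $\Sigma$, together with purity of the global state (which lets me swap $S_\mathrm{gen}$ of a wedge for that of its complement), aiming to deduce a reversed inequality of the form $S_\mathrm{gen}(\tilde b_1) + S_\mathrm{gen}(\tilde b_2) \le S_\mathrm{gen}(b_1) + S_\mathrm{gen}(b_2)$, which contradicts the previous display. The main technical hurdle is the first geometric step; once a common Cauchy slice is in place, the remaining manipulations are essentially bookkeeping of regions on a single spacelike slice.
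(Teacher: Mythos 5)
Your endgame (strong subadditivity plus the strict accessibility inequalities forcing an equality that holds only for spacelike-separated wedges) is the right closing move and matches the paper's, but the geometric step you flag as the main obstacle is a genuine gap, and your proposed fix does not work. If $b_1$ and $b_2$ fail to be spacelike separated, then in general $\eth b_1$ and $\eth b_2$ are timelike related on some portion, and no spacelike Cauchy slice $\Sigma$ of $M$ can contain both $\Sigma[b_1]$ and $\Sigma[b_2]$: any Cauchy slice for $b_i$ must have $\eth b_i$ as its boundary, and $\eth b_i$ is fixed data of the wedge, not something a ``small deformation'' of the slice can move. The causal overlap you are trying to remove can be macroscopic, so continuity of $S_\mathrm{gen}$ under small perturbations cannot rescue the construction, and the QFC says nothing about perturbing slices within a fixed wedge. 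Without a common slice, your regions $\sigma_1\setminus\sigma_2$ etc.\ and the claim $\eth\tilde b_i\subset\Sigma[b_i]$ have no foundation.

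The paper avoids the common-slice construction entirely. It decomposes $\eth b_1$ into the part spacelike to $b_2$ and the parts to the future/past of $\Sigma[b_2]$, and shoots \emph{outward null congruences} from the latter until they reach $\Sigma[b_2]$ (the hypothesis that $b_1$ is spacelike to $\Sigma[B']$ guarantees they get there before hitting the asymptotic boundary); the QFC then gives $S_\mathrm{gen}(\tilde b_1)\le S_\mathrm{gen}(b_1)$ for the enlarged wedge $\tilde b_1$. Doing the same inward from $\eth b_2'$ toward $\Sigma[b_1]$ yields $\widetilde{b_2'}$ with $S_\mathrm{gen}(\widetilde{b_2'})\le S_\mathrm{gen}(b_2)$. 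The payoff is that $\tilde b_1\cap\widetilde{b_2'}$ has its edge in $\Sigma[b_1]$ and $(\tilde b_1\cup\widetilde{b_2'})'$ has its edge in $\Sigma[b_2]$, so vN-accessibility applies to each; combined with strong subadditivity this forces $\tilde b_1\cap\widetilde{b_2'}=b_1$ and $(\tilde b_1\cup\widetilde{b_2'})'=b_2$, which holds only if $b_1$ and $b_2$ are spacelike separated. So the missing idea in your proposal is precisely this: replace the (impossible) spacelike extension of the slices by null deformations of the \emph{wedges}, with the QFC controlling the entropy cost of those deformations.
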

\begin{proof}
    The edge $\eth b_1$ can be decomposed as a disjoint union $\eth b_1 = \eth b_{1,0} \sqcup \eth b_{1,+} \sqcup \eth b_{1,-}$ where $\eth b_{1,0}$ is spacelike separated from $b_2$, $\eth b_{1,+}$ lies in the future of $\Sigma[b_2]$, and $\eth b_{1,-}$ lies in the past of $\Sigma[b_2]$. We define the deformed wedge $\tilde b_1 \supseteq b_1$ by shooting outwards, past lightrays from $\eth b_{1,+}$ and outwards, future lightrays from $\eth b_{1,-}$ until they hit $\Sigma[b_2]$. 
    (These lightrays intersect $\Sigma[b_2]$ before reaching the asymptotic boundary because $b_1$ is assumed spacelike from $\Sigma[B']$.)
    By the QFC, $S_\mathrm{gen}(\tilde b_1) \leq S_\mathrm{gen}(b_1)$. 
    
    Let $b'_2$ be the spacelike complement of $b_2$. We can similarly decompose $\eth b'_2 = \eth b'_{2,0} \sqcup \eth b'_{2,+} \sqcup \eth b'_{2,-}$ where $\eth b'_{2,0}$ is spacelike separated from $b_1$, $\eth b'_{2,+}$ is in the future of $\Sigma[b_1]$ and $\eth b'_{2,-}$ is in the past of $\Sigma[b_1]$. We define $\widetilde{b'_2}$ by shooting inwards, past lightrays from $\eth b'_{2,+}$ and inwards, future lightrays from $\eth b'_{2,-}$ until they hit $\Sigma[b_1]$. 
    (These lightrays intersect $\Sigma[b_1]$ before reaching the asymptotic boundary because $b_2$ is assumed spacelike from $\Sigma[B]$.)
    By the QFC, $S_\mathrm{gen}(\widetilde{b'_2}) \leq S_\mathrm{gen}(b'_2) = S_\mathrm{gen}(b_2)$. (Recall again our convention that the global state is always purified using reference systems as necessary.)

    Finally by strong sub-additivity we have
    \begin{align}
        S_\mathrm{gen}(\tilde b_1 \cap \widetilde{b'_2}) + S_\mathrm{gen}(\tilde b_1 \cup \widetilde{b'_2}) \leq S_\mathrm{gen}(\tilde b_1) + S_\mathrm{gen}(\widetilde{b'_2})~.
    \end{align}
    Combining inequalities, we have 
    \begin{align}
        S_\mathrm{gen}(\tilde b_1 \cap \widetilde{b'_2}) + S_\mathrm{gen}((\tilde b_1 \cup \widetilde{b'_2})') \leq S_\mathrm{gen}(b_1) + S_\mathrm{gen}(b_2)~.
    \end{align}
    But $\tilde b_1 \cap \widetilde{b'_2} \subseteq b_1$ and $(\tilde b_1 \cup \widetilde{b'_2})' \subseteq b_2$ with equalities if and only if $b_1$ is spacelike separated from $b_2$. Therefore because we assumed that $b_1$ and $b_2$ are vN-accessible, we get the reverse inequality
    \begin{align}
    S_\mathrm{gen}(\tilde b_1 \cap \widetilde{b'_2}) + S_\mathrm{gen}((\tilde b_1 \cup \widetilde{b'_2})') \geq S_\mathrm{gen}(b_1) + S_\mathrm{gen}(b_2)~.
    \end{align}
    This completes the proof.
\end{proof}

\begin{cor}[Complementary causal wedge exclusion]\label{cor:comp_causal_wedge_excl}
    Given a boundary wedge $B$ with complement $B'$, and assuming the QFC, the causal wedge of $B'$ lies in the complement of the vN-entanglement wedge of $B$:
    \begin{align}
        C[B'] \subseteq \textup{vN-EW}[B]'.
    \end{align} 
\end{cor}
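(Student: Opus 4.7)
The plan is to show that $C[B']$ is spacelike-separated from every wedge $b_1 \in F(B)$, since then
\begin{equation*}
C[B'] \subseteq \bigcap_{b_1 \in F(B)} b_1' = \Bigl( \doublecup_{b_1 \in F(B)} b_1 \Bigr)' = \text{vN-EW}[B]'~,
\end{equation*}
using that the spacelike complement of a wedge union is the intersection of spacelike complements. To produce the pointwise-spacelike conclusion we will feed $b_1$ and $C[B']$ into Lemma \ref{lem:vN_accessible_exclusion}, after checking that its hypotheses can be simultaneously arranged.

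First, fix any $b_1 \in F(B)$ with a Cauchy slice $\Sigma[b_1]$ whose conformal boundary is some $\Sigma[B]$. By Lemma \ref{lem:CW_outer_and_anti} applied to $B'$, the causal wedge $C[B']$ is vN-accessible for $B'$, and moreover its Cauchy slice $\Sigma[C[B']]$ can be chosen to have \emph{any} prescribed conformal boundary $\Sigma[B']$ in $B'$. This flexibility is what will let us match Cauchy slices between $b_1$ and $C[B']$.

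Next, I would construct an appropriate $\Sigma[B']$ as follows. Since $b_1 \cap M_\partial = B$, the spacelike complement $b_1'$ satisfies $b_1' \cap M_\partial = B'$. Extend the Cauchy slice $\Sigma[b_1]$ through the edge $\eth b_1$ outward into $b_1'$ along any spacelike extension reaching the conformal boundary; its asymptotic endpoint defines a boundary Cauchy slice $\Sigma[B']$ of $B'$ that shares its edge with $\Sigma[B]$ and lies entirely in $b_1' \cup \eth b_1$. By construction $b_1$ is then spacelike from $\Sigma[B']$. Simultaneously, by Lemma \ref{lem:causal_wedge_spacelike_to_complement}, $C[B']$ is spacelike from $B$ and in particular from $\Sigma[B]$. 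Choosing $\Sigma[C[B']]$ via Lemma \ref{lem:CW_outer_and_anti} to have conformal boundary the $\Sigma[B']$ we just built, both hypotheses of Lemma \ref{lem:vN_accessible_exclusion} are satisfied, and we conclude that $b_1$ is spacelike from the entirety of $C[B']$. Since $b_1 \in F(B)$ was arbitrary, the corollary follows.

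The main obstacle I anticipate is the construction of the extended Cauchy slice in the second-to-last step: one has to be careful that the spacelike extension of $\Sigma[b_1]$ into $b_1'$ reaches $M_\partial$ without returning into $b_1$ or picking up pathological intersections with other horizons, and that the resulting $\Sigma[B']$ is a genuine Cauchy slice of $B'$ sharing its edge with $\Sigma[B]$. In an AdS-globally hyperbolic spacetime this is a standard (though slightly finicky) geometric construction using that $b_1'$ is an open wedge meeting $M_\partial$ exactly in $B'$, but it is the one non-routine ingredient of the argument.
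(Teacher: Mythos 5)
Your proposal is correct and follows essentially the same route as the paper: reduce to showing each vN-accessible wedge $b_1$ for $B$ is spacelike from $C[B']$, invoke Lemma \ref{lem:CW_outer_and_anti} to make $C[B']$ vN-accessible with a Cauchy slice whose conformal boundary matches an extension of $\Sigma[b_1]$, use Lemma \ref{lem:causal_wedge_spacelike_to_complement} for the spacelike-from-$\Sigma[B]$ hypothesis, and conclude via Lemma \ref{lem:vN_accessible_exclusion}. The paper phrases your "extension of $\Sigma[b_1]$ to the boundary" simply as embedding $\Sigma[b]$ in a full Cauchy slice $\Sigma$ of $M$ and taking $\Sigma[B']$ to be the intersection of its conformal boundary with $B'$, which sidesteps the finicky construction you flag.
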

\begin{proof}
    It suffices to show that an arbitrary vN-accessible wedge $b$ is spacelike separated from $C[B']$. Let $\Sigma$ be a Cauchy slice for $M$ such that $\Sigma[b] \subseteq \Sigma$, and let $\Sigma[B']$ be the intersection of its conformal boundary with $B'$. From lemma \ref{lem:CW_outer_and_anti}, we know that $C[B']$ is vN-accessible, and that we can choose $\Sigma[C[B']]$ to have conformal boundary $\Sigma[B']$.
    Moreover, from lemma \ref{lem:causal_wedge_spacelike_to_complement} it follows that $B$ is spacelike to $C[B']$.
    We can therefore apply lemma \ref{lem:vN_accessible_exclusion}.
\end{proof}

\begin{thm}[vN-Entanglement wedge complementarity]\label{thm:entwedgecomp}
Assuming the QFC, the complement of the vN-entanglement wedge of $B$ is equal to the vN-entanglement wedge of the complement,
\begin{align}
    \textup{vN-EW}[B]' = \textup{vN-EW}[B']~.
\end{align}
Moreover the vN-EW is vN-accessible and its edge $\eth \textup{vN-EW}[B]$ is the minimal generalized entropy quantum extremal surface.
\end{thm}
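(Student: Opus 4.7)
The plan is to establish the three claims---vN-accessibility of $\text{vN-EW}[B]$, complementarity with $\text{vN-EW}[B']$, and identification of its edge with the minimal generalized entropy QES---by first proving a closure property for the family $F(B)$, then using a local deformation argument, and finally invoking Lemma \ref{lem:vN_accessible_exclusion}.

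The central technical step I would carry out is the pairwise closure statement: if $b_1, b_2 \in F(B)$, then $b_1 \doublecup b_2 \in F(B)$. To obtain a Cauchy slice $\Sigma[b_1 \doublecup b_2]$, I would start from the two given Cauchy slices $\Sigma[b_1]$ and $\Sigma[b_2]$ and flow them along outward null congruences using the QFC to produce an achronal slice adapted to the enlarged wedge. Strong subadditivity of generalized conditional entropy (Proposition \ref{prop:SSA_gen}) would then yield
\begin{equation}
    S_\mathrm{gen}(b_1 \doublecup b_2 \mid b_3) < 0
\end{equation}
for every sub-wedge $b_3 \subset b_1 \doublecup b_2$ whose edge lies on the constructed slice and contains $B$, while vN-antinormality of $\eth(b_1 \doublecup b_2)$ would follow from the vN-antinormality of $\eth b_1$ and $\eth b_2$ together with the QFC applied to the null hypersurfaces used in the flow. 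Iterating this closure property and invoking the continuity assumption of Remark \ref{rem:continuity}, I would conclude that $\text{vN-EW}[B]$---which is the wedge union over the directed family $F(B)$---is itself vN-accessible.

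Next, I would show that $\eth \text{vN-EW}[B]$ is a QES by a local deformation argument: if the von Neumann expansion were strictly negative somewhere on the edge, then the QFC would allow me to locally deform the edge outward along that null direction to obtain a strictly larger vN-accessible wedge, contradicting maximality. Complementarity follows in one direction directly from Lemma \ref{lem:vN_accessible_exclusion}, which yields $\text{vN-EW}[B'] \subseteq \text{vN-EW}[B]'$; the reverse inclusion comes from running the same construction for $B'$ and using duality of generalized entropies (Proposition \ref{prop:duality_gen}), which matches the two edges---both of them QESes bounded between the causal wedges of $B$ and $B'$---forcing them to coincide. Finally, to show the edge is the \emph{minimal} generalized entropy QES, I would take any competing QES $\tilde\gamma$ bounding a wedge $\tilde b$ with $B \subset \tilde b$, choose a Cauchy slice simultaneously containing $\eth\text{vN-EW}[B]$ and $\tilde\gamma$, and apply the vN-accessibility inequality to the appropriate nested pair to conclude that $S_\mathrm{gen}(\text{vN-EW}[B]) \le S_\mathrm{gen}(\tilde b)$.

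The main obstacle is the construction of the shared Cauchy slice in the pairwise closure step. A naive union of $\Sigma[b_1]$ and $\Sigma[b_2]$ need not be achronal or be a Cauchy slice for $b_1 \doublecup b_2$, and even after achronality is arranged by null flow, one must verify that the new slice witnesses the accessibility inequality for \emph{every} sub-wedge---including ones that did not appear as sub-wedges of $b_1$ or $b_2$ individually. This requires carefully interleaving QFC deformations with SSA to propagate the strict inequality through all intermediate sub-wedges, and is the place where the full strength of the restricted QFC and Proposition \ref{prop:SSA_gen} must be brought to bear.
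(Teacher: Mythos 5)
Your route diverges from the paper's: the paper does not prove vN-accessibility of the vN-EW by a closure/union argument at all (that is the content of Lemma \ref{lem:bootstrap} and Theorem \ref{thm:outminimal}, proved separately); instead it exhibits the quantum maximin wedge $b$ of \cite{Wall:2012uf,Akers:2019lzs} and observes that \emph{both} $b$ and its spacelike complement $b'$ are vN-accessible for $B$ and $B'$ respectively, with $\Sigma[b]$ and $\Sigma[b']$ the two halves of the maximin Cauchy slice. Combined with the spacelike-separation statement from Lemma \ref{lem:vN_accessible_exclusion}, this single complementary accessible pair immediately pins down $\textup{vN-EW}[B]=b$ and $\textup{vN-EW}[B']=b'$, giving accessibility, complementarity, and (after shooting lightrays from any competing extremal wedge onto the maximin slice and applying the QFC) minimality, all in one stroke.

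The genuine gap in your version is the complementarity step. Lemma \ref{lem:vN_accessible_exclusion} gives you $\textup{vN-EW}[B']\subseteq \textup{vN-EW}[B]'$, and your local deformation argument shows each edge is quantum extremal, but two quantum extremal surfaces sandwiched between $C[B]$ and $C[B']$ need not coincide: a spacetime with several nested QESs could a priori leave a ``gap'' wedge belonging to neither $\textup{vN-EW}[B]$ nor $\textup{vN-EW}[B']$, and nothing in ``both edges are QESs bounded between the causal wedges'' excludes this. You need to produce an actual complementary pair of accessible wedges, which is exactly what maximin supplies. Your minimality step has a related problem: a Cauchy slice simultaneously containing $\eth\,\textup{vN-EW}[B]$ and an arbitrary competing QES $\tilde\gamma$ need not exist (the two surfaces can be timelike related), and even when it does, the accessibility inequality was only established relative to the particular slice built in your closure step, not relative to this new one. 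The paper's fix is to shoot null congruences from the competing extremal wedge down onto the fixed maximin slice $\Sigma$, use the QFC to show $S_\mathrm{gen}$ does not increase, and then invoke minimality on $\Sigma$. Your closure construction for $F(B)$ is sound in outline (it mirrors Lemma \ref{lem:bootstrap}), but by itself it only delivers accessibility of the union, not the complementarity or minimality claims.
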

\begin{proof}
    By corollary \ref{cor:comp_causal_wedge_excl}, we see that \emph{any} wedges $b_1$ vN-accessible to $B$ and $b_2$ vN-accessible to $B'$ satisfy all the conditions of lemma \ref{lem:vN_accessible_exclusion} and so must be everywhere space-like separated. It follows that $\textup{vN-EW}[B]$ and $\textup{vN-EW}[B']$ must be spacelike separated. 
    
    To show that they are in fact complementary, it only remains to find a single complementary pair of wedges $b$ and $b'$ that are both vN-accessible. (This also shows that the vN-EW is vN-accessible.) To do so, we consider the quantum maximin wedge $b$ \cite{Wall:2012uf,Akers:2019lzs}. This is defined by first choosing a Cauchy slice $\Sigma$ for $M$ that contains $\eth B$ and finding the minimal-$S_\mathrm{gen}$ wedge $b$ with $\eth b \in \Sigma$. One then maximizes that minimal-$S_\mathrm{gen}$ wedge over all possible Cauchy slices $\Sigma$.
    Both $b$ and $b'$ are therefore vN-accessible, with $\Sigma[b] = \Sigma \cap b$ and $\Sigma[b'] = \Sigma \cap b'$.
    It can be shown that $b$ (and hence also $b'$) is extremal.
    
    To show that $b$ has minimal-$S_\mathrm{gen}$ among all extremal wedges, and hence is the same as the region found by the QES prescription, one simply shoots lightrays from any other extremal wedge $b_3$ to obtain a wedge $\tilde b_3$ with edge $\eth \tilde b_3 \subseteq \Sigma$. By the QFC, $S_\mathrm{gen}(\tilde b_3) \leq S_\mathrm{gen}(b_3)$. But by definition $S_\mathrm{gen}(\tilde b_3) \geq S_\mathrm{gen}(b_1)$. See \cite{Wall:2012uf,Akers:2019lzs} for details.
\end{proof}

\begin{thm}[max-EW $\subseteq$ vN-EW $\subseteq$ min-EW]\label{thm:max-EW_in_EW}
    Let $B \subset M_\partial$ be a wedge. For sufficiently small $\varepsilon$, we have 
    \begin{align}
        \textup{max-EW}[B] \subseteq \textup{vN-EW}[B] \subseteq \textup{min-EW}[B]~.
    \end{align}
    This is shown in Figure \ref{fig:OSHmaxinmin}.
\end{thm}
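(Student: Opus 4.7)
The plan is to decompose the theorem into its two inclusions and prove each via a short deduction from results already established in the paper. For the first inclusion $\textup{max-EW}[B] \subseteq \textup{vN-EW}[B]$, I will show that every $(\varepsilon,K)$ max-accessible wedge is vN-accessible; since $\textup{vN-EW}[B]$ is the wedge union of all such wedges, this immediately gives the inclusion of wedge unions. For the second inclusion $\textup{vN-EW}[B] \subseteq \textup{min-EW}[B]$, I will apply the first inclusion to the complementary boundary region $B'$ and combine it with vN-entanglement wedge complementarity (Theorem~\ref{thm:entwedgecomp}).

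For the first inclusion, let $b_1$ be $(\varepsilon,K)$ max-accessible for $B$. The asymptotic condition $b_1 \cap M_\partial = B$ is shared by both definitions, so I just need to verify the antinormality and entropy conditions. Lemma~\ref{lem:exp_order} gives $\Theta_\pm[b_1;y] \leq \Theta^\varepsilon_{\pm,\max}[b_1;y] \leq 0$ for all $y \in \eth b_1$, so $b_1$ is vN-antinormal. I then reuse the same Cauchy slice $\Sigma[b_1]$ witnessing max-accessibility. For any macroscopically distinct wedge $b_2 \subset b_1$ with $\eth b_2 \subset \Sigma[b_1]$ and $B \subset b_2$, applying Theorem~\ref{thm:order} to the generalized conditional entropies gives
\begin{equation}
S_\mathrm{gen}(b_1|b_2) \leq H^\varepsilon_\mathrm{max,gen}(b_1|b_2) \leq -K < 0,
\end{equation}
for sufficiently small $\varepsilon$. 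For wedges $b_2$ that are only Planckian close to $b_1$, this is not directly controlled by max-accessibility, but the strict vN-antinormality of $b_1$ implied by the generic condition ($\Theta_\pm[b_1;y] \ll -\kappa$) together with the QFC (via decomposing a small spacelike deformation into outward null deformations from $\eth b_2$) should yield $S_\mathrm{gen}(b_1) - S_\mathrm{gen}(b_2) < 0$ by continuity.

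For the second inclusion, apply the first inclusion to $B'$ to obtain $\textup{max-EW}[B'] \subseteq \textup{vN-EW}[B']$. By Theorem~\ref{thm:entwedgecomp} (vN-entanglement wedge complementarity), $\textup{vN-EW}[B'] = \textup{vN-EW}[B]'$, so $\textup{max-EW}[B'] \subseteq \textup{vN-EW}[B]'$. Taking spacelike complements reverses the inclusion, and since $a'' = a$ for any wedge $a$,
\begin{equation}
\textup{vN-EW}[B] \subseteq (\textup{max-EW}[B'])' = \textup{min-EW}[B],
\end{equation}
where the final equality is Definition~\ref{def:cov_min-EW}. I expect the main obstacle to be the Planckian edge case in the first inclusion: the max-accessibility condition only controls macroscopically distinct sub-wedges, whereas vN-accessibility as stated demands the strict entropy inequality for all $b_2 \subsetneq b_1$ on $\Sigma[b_1]$. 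Bridging this gap cleanly requires invoking the generic condition and a careful QFC argument relating small spacelike deformations of $\eth b_1$ to pairs of null deformations whose strictly negative expansions force $S_\mathrm{gen}$ to decrease.
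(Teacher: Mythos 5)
Your proposal is correct and follows essentially the same route as the paper: the first inclusion is obtained by showing every max-accessible wedge is vN-accessible via Lemma~\ref{lem:exp_order} (for antinormality) and $H^\varepsilon_\mathrm{max,gen} \geq S_\mathrm{gen}$ from Theorem~\ref{thm:order} (for the entropy condition), and the second inclusion by applying the first to $B'$ together with Theorem~\ref{thm:entwedgecomp}. The mismatch you flag between the ``macroscopically distinct'' restriction in Definition~\ref{defn:max-accessible} and the all-sub-wedges requirement of vN-accessibility is a genuine subtlety, but the paper's own proof silently elides it, so your treatment is if anything more careful than the published argument.
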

\begin{proof}
    It will suffice to prove that the max-EW is always contained in the vN-EW. Applying this and Theorem \ref{thm:entwedgecomp} to the complementary region $B' \subseteq M_\partial$ immediately implies that the vN-EW is contained in the min-EW. 
    
    For sufficiently small $\varepsilon$, every max-accessible wedge is also vN-accessible because $H^\varepsilon_\mathrm{max,gen}(b|b') \geq S_\mathrm{gen}(b|b')$ and $\Theta^\varepsilon_\mathrm{max} \geq \Theta $ by Lemma \ref{lem:exp_order}.
    Therefore the wedge union defining the vN-EW is at least as large as that defining the max-EW.
\end{proof}

\begin{rem}
When the max-EW and min-EW are equal (up to perturbatively small corrections), we say that an entanglement wedge $\mathrm{EW}(B)=\textup{max-EW}(B)=\textup{min-EW}(B)$ exists. When it exists, the entanglement wedge is also equal to the vN-EW, by Theorem \ref{thm:max-EW_in_EW}, and hence is bounded by the minimal QES by Theorem \ref{thm:entwedgecomp}.
\end{rem}

\begin{cor}[max- and min-EW conformal boundaries]\label{corr:maxconfbdy}
    The conformal boundary of the max-EW and min-EW for any boundary wedge $B$ is itself equal to $B$,
    \begin{equation}
        \textup{max-EW}[B] \cap M_\partial = \textup{min-EW}[B] \cap M_\partial = B~.
    \end{equation}
\end{cor}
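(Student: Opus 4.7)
My plan is to sandwich both entanglement wedges between the causal wedge of $B$ and the spacelike complement of the causal wedge of $B'$, and then show that these two bookends both have conformal boundary equal to $B$.

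First, I will establish the chain of inclusions
\begin{equation*}
C[B] \;\subseteq\; \text{max-EW}[B] \;\subseteq\; \text{min-EW}[B] \;\subseteq\; C[B']'~.
\end{equation*}
The leftmost inclusion is exactly Corollary \ref{cor:causal_wedge_in_EW}. The middle inclusion follows from Theorem \ref{thm:max-EW_in_EW}. For the rightmost inclusion, I would apply Corollary \ref{cor:causal_wedge_in_EW} to the complementary boundary region to get $C[B'] \subseteq \text{max-EW}[B']$; taking spacelike complements reverses inclusions and, using Definition \ref{def:cov_min-EW}, gives $\text{min-EW}[B] = (\text{max-EW}[B'])' \subseteq C[B']'$.

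Next, I would identify the conformal boundaries of the two bookends. By the standard definition of the causal wedge, $C[B] \cap M_\partial = B$. For the right-hand bookend, I would argue both inclusions separately. For $B \subseteq C[B']' \cap M_\partial$: Lemma \ref{lem:causal_wedge_spacelike_to_complement} applied to $B'$ shows that $C[B']$ is spacelike separated from $(B')'=B$, so every point of $B$ lies in $C[B']'$. For the reverse direction $C[B']' \cap M_\partial \subseteq B$: any point $p \in M_\partial \cap C[B']'$ is spacelike from every point of $C[B'] \supseteq B'$, so $p$ is spacelike from $B'$ on the boundary, forcing $p \in (B')' = B$ (with edge points excluded since $C[B']'$ is open).

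Combining the inclusions and intersecting with $M_\partial$ gives
\begin{equation*}
B \;=\; C[B] \cap M_\partial \;\subseteq\; \text{max-EW}[B] \cap M_\partial \;\subseteq\; \text{min-EW}[B] \cap M_\partial \;\subseteq\; C[B']' \cap M_\partial \;=\; B~,
\end{equation*}
which forces all four sets to coincide. The main obstacle I anticipate is the careful handling of the conformal-boundary limit in the rightmost step: specifically, verifying that $C[B']' \cap M_\partial$ does not pick up any points of the edge $\partial B$. This should be handled by openness of spacelike complements together with the fact that boundary wedges exclude their own edges, but it is the only point in the argument where bulk-boundary causal structure enters nontrivially rather than being inherited from the previously established machinery.
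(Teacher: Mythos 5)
Your proof is correct, and it follows the same sandwiching logic as the paper's, but with one substitution worth noting. For the upper bound $\textup{max-EW}[B]\cap M_\partial \subseteq B$ the paper routes through the vN-EW: it combines Theorem \ref{thm:max-EW_in_EW} with the complementarity statement $\textup{vN-EW}[B]' = \textup{vN-EW}[B']$ of Theorem \ref{thm:entwedgecomp}, so the outer bookend is $(\textup{vN-EW}[B'])'$. You instead take the outer bookend to be $C[B']'$, obtained by applying Corollary \ref{cor:causal_wedge_in_EW} to $B'$ and taking spacelike complements via Definition \ref{def:cov_min-EW}; this lets you avoid citing Theorem \ref{thm:entwedgecomp} directly and handles the max- and min-EW statements in a single chain of inclusions rather than by repeating the argument for $B'$. (The dependence on Theorem \ref{thm:entwedgecomp} is not truly eliminated, since it underlies the proof of Theorem \ref{thm:max-EW_in_EW}, which you still need for $\textup{max-EW}[B]\subseteq\textup{min-EW}[B]$.) One small caveat: the identity $C[B]\cap M_\partial = B$ is not purely definitional --- it requires that bulk causal curves not connect $B$ to boundary points outside $\overline{B}$, i.e.\ Lemma \ref{lem:causal_wedge_spacelike_to_complement} (the GSL or Gao--Wald); the paper records exactly this in the proof of Lemma \ref{lem:CW_outer_and_anti}, so you should cite it there rather than call it standard. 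Your treatment of the edge points via openness of the spacelike complement is correct and is the same level of care the paper implicitly assumes.
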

\begin{proof}
By definition, the conformal boundary of max-EW$[B]$ includes $B$ whenever $G_{(\varepsilon,K)}(B)$ is nonempty, which is always true because of Corollary \ref{cor:causal_wedge_in_EW}.
The converse statement, that $\text{max-EW}[B] \cap M_\partial \subseteq B$, follows from combining Theorem \ref{thm:max-EW_in_EW} and Theorem \ref{thm:entwedgecomp}.
The min-EW proof follows from applying the same arguments to $B'$.
\end{proof}

    \begin{figure}
    \centering
    \includegraphics[width=0.6\textwidth]{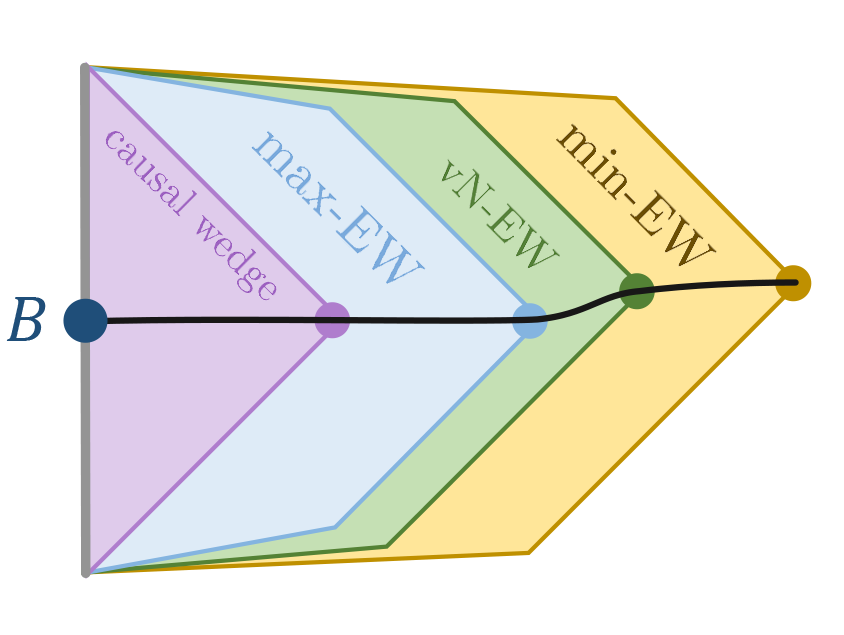}
    \caption{The containment of each wedge discussed in this section, assuming the validity of the QFC and max-QFC.}
    \label{fig:OSHmaxinmin}
    \end{figure}

\begin{lem}[Unions of accessible wedges are accessible]\label{lem:bootstrap}
    Let $b_1$ and $b_2$ be $(\varepsilon, K)$ max-accessible (resp. vN-accessible) wedges with conformal boundary $B$. Then their wedge union $b = b_1 \doublecup b_2$ is $(5\varepsilon, K)$ max-accessible (resp. vN-accessible). Moreover the conformal boundary of $\Sigma[b]$ can be chosen to agree with conformal boundary of $\Sigma[b_1]$.
\end{lem}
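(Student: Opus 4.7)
The plan is to verify each of the three conditions of $(5\varepsilon, K)$ max-accessibility (Definition \ref{defn:max-accessible}) for the wedge union $b := b_1 \doublecup b_2$. The condition $b \cap M_\partial = B$ follows immediately from the hypothesis $b_1 \cap M_\partial = b_2 \cap M_\partial = B$ together with the definition of wedge union.

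For the $\varepsilon$ max-antinormality of $b$, I would argue pointwise on $\eth b$. At a generic point $y \in \eth b$, a neighborhood of $y$ in $b$ coincides locally with $b_1$ or $b_2$, so strong subadditivity of expansions (Lemma \ref{lem:SSA_of_exp}) yields $\Theta^\varepsilon_{\pm,\mathrm{max}}[b;y] \le \Theta^\varepsilon_{\pm,\mathrm{max}}[b_i;y] \le 0$. At points where $\eth b_1$ and $\eth b_2$ fail to be spacelike-separated, $\eth b$ contains null portions of $\partial J^\pm(\eth b_i)$; on these the max-QFC (Conjecture \ref{conj:os_qfc}) propagates the non-positive max-expansion along the generators. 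I would then choose $\Sigma[b]$ to agree with $\Sigma[b_1]$ inside $b_1$ and extend it into $b \setminus b_1$ along a null sheet leaving $\partial \Sigma[b_1]$, making the conformal boundary of $\Sigma[b]$ equal to that of $\Sigma[b_1]$.

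For the entropy bound, given any macroscopically distinct $b_3 \subset b$ with $B \subset b_3$ and $\eth b_3 \subset \Sigma[b]$, I would apply the algebraic chain rule along the inclusion chain $b_3 \subset b_3 \doublecup b_1 \subset b_3 \doublecup b_1 \doublecup b_2 = b$ with smoothing parameter $\varepsilon' = 2\varepsilon$:
\begin{equation*}
H^{5\varepsilon}_{\mathrm{max,gen}}(b | b_3) \le H^{2\varepsilon}_{\mathrm{max,gen}}(b | b_3 \doublecup b_1) + H^{2\varepsilon}_{\mathrm{max,gen}}(b_3 \doublecup b_1 | b_3) + O(\log \varepsilon)~.
\end{equation*}
By strong subadditivity of generalized entropies (Proposition \ref{prop:SSA_gen}), the second term is bounded by $H^{2\varepsilon}_{\mathrm{max,gen}}(b_1 | b_1 \cap b_3)$; monotonicity of smooth max-entropy in the smoothing parameter, combined with the $(\varepsilon, K)$ max-accessibility of $b_1$, gives $H^{2\varepsilon}_{\mathrm{max,gen}}(b_1 | b_1 \cap b_3) \le -K$, where the crucial geometric input is that the choice of $\Sigma[b]$ forces $\eth(b_1 \cap b_3) \subset \Sigma[b_1]$. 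The first term is handled analogously using the $(\varepsilon, K)$ max-accessibility of $b_2$, after an SSA reduction to a sub-wedge of $b_2$.

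The main obstacle is ensuring that the sub-wedges appearing in the SSA reductions genuinely have their edges on the prescribed Cauchy slices of $b_1$ and $b_2$. For $b_1$ this follows from the explicit construction of $\Sigma[b]$, but for $b_2$ one may need to deform the relevant edge along null congruences, using the max-QFC to control the entropic cost of the deformation. The vN-accessible version of the lemma follows from the same argument, with the chain rule becoming an equality, no smoothing required, and the strict inequality $< 0$ replacing $\le -K$.
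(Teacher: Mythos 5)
Your high-level toolkit (SSA, the chain rule, the max-QFC for null pieces, building $\Sigma[b]$ out of $\Sigma[b_1]$) is the right one, and the antinormality argument is essentially the paper's. But the entropy bound has a genuine gap, and it sits exactly where you flag ``the main obstacle'': your two-step chain $b_3 \subset b_3 \doublecup b_1 \subset b$ reduces the first term, via SSA, to $H^{2\varepsilon}_\mathrm{max,gen}(b_2\,|\,b_2 \cap (b_3 \doublecup b_1))$, but the edge of $b_2 \cap (b_3 \doublecup b_1)$ has no reason to lie on $\Sigma[b_2]$, so the max-accessibility of $b_2$ simply does not apply to it. Saying one ``may need to deform the relevant edge along null congruences, using the max-QFC'' is not a proof --- that deformation is the entire content of the hard part of the lemma. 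Relatedly, your Cauchy slice (``$\Sigma[b_1]$ extended along a null sheet leaving $\partial\Sigma[b_1]$'') is not actually a Cauchy slice for $b$: the null congruences fired from $\eth b_1$ do not reach the portion of $\eth b$ inherited from $\eth b_2$, so you are missing a spacelike piece. The paper resolves both problems at once by defining $\Sigma[b] = \beta_1 \sqcup \beta_2 \sqcup \beta_3 \sqcup \beta_4$ with $\beta_1 = \Sigma[b_1]$, $\beta_{2,3}$ the outgoing null congruences from $\eth b_{1,F}, \eth b_{1,P}$ terminated \emph{when they hit $\Sigma[b_2]$}, and $\beta_4 = \Sigma[b_2] \cap b_1'$. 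The corresponding four-step chain $\alpha \subset \beta_1\sqcup\alpha_2\sqcup\alpha_3\sqcup\alpha_4 \subset \cdots \subset \beta$ is engineered so that each step is either an SSA reduction to a sub-wedge of $b_1$ with edge in $\Sigma[b_1]$, a pure null deformation controlled by the max-QFC and the generic condition, or an SSA reduction to a sub-wedge of $b_2$ with edge in $\Sigma[b_2]$ (this last step works precisely because $\beta_1\sqcup\beta_2\sqcup\beta_3$ terminates on $\Sigma[b_2]$).

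A secondary but still real issue: you never analyze which inclusions in your chain are strict. If, say, $b_1 \subseteq b_3$, your second term vanishes and the chain rule's $O(\log\varepsilon)$ additive error degrades the bound to $-K + O(\log\varepsilon)$ rather than $-K$. The paper handles this by noting that when only one inclusion is strict the chain rule is unnecessary (the single nontrivial inequality already equals $H^\varepsilon_\mathrm{max,gen}(\beta|\alpha)$), and when more than one is strict the $O(\log\varepsilon)$ is absorbed using $K \gg -\log\varepsilon$ against the surplus $-4K$ (or $-\kappa$). You need an analogous case split to land cleanly on $\leq -K$.
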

\begin{proof}
It is helpful to classify the edge $\eth b_1$ based on its relationship to $b_2$, and vice versa.
    Let
    \begin{enumerate}
        \item $\eth b_{1,I}$ be the part of $\eth b_1$ inside $b_2$,
        \item $\eth b_{1,O}$ be the part inside $b_2'$,
        \item $\eth b_{1,F}$ be the part in the future of $\eth b_2$,
        \item $\eth b_{1,P}$ be the part in the past of $\eth b_2$~,
    \end{enumerate} 
    and analogously for $b_1 \leftrightarrow b_2$. The edge $\eth b$ can also be decomposed into four pieces as follows:
    \begin{equation}\label{eqn:bndb3}
        \eth b = \eth b_{1,O} \sqcup \eth b_{2,O} \sqcup F[b_1,b_2] \sqcup F[b_2,b_1]~,
    \end{equation}
    where we have defined $F[b_1,b_2] := \eth b \cap \partial J^-[\eth b_{1,F}] \cap \partial J^+[\eth b_{2,P}]$ and $F[b_2,b_1] := \eth b \cap \partial J^-[\eth b_{2,F}] \cap \partial J^+[\eth b_{1,P}]$. 
    Note that, thanks to corollary \ref{corr:maxconfbdy}, the conformal boundary of $b$ is itself $B$. 
    Therefore by corollary \ref{cor:comp_causal_wedge_excl} and Theorem \ref{thm:max-EW_in_EW}, the entire future outwards null congruence from $\eth b_{1,P}$ hits the edge $\eth b$ before it reaches the asymptotic boundary, and likewise for the past congruence from $\eth b_{1,F}$.

    We first show that $b$ is max-antinormal (resp. vN-antinormal). Consider some $p \in \eth b_{1,O}$.
    By assumption, $\Theta_\mathrm{max}^\varepsilon[b_1;p] \le 0$ (resp. $\Theta[b_1;p] \le 0$), in both the future and past directions. Since $b_1 \subseteq b$, we also have  $\Theta_\mathrm{max}^\varepsilon[b;p] \le 0$ (resp. $\Theta[b;p] \le 0$) by strong subadditivity, lemma \ref{lem:SSA_of_exp}. An analogous argument applies for $p \in \eth b_{2,O}$.
    
    Now consider the other two pieces of $\eth b$. By symmetry, it is sufficient to consider only $F[b_1,b_2]$.
    For $p \in F[b_1,b_2]$, let $q \in \partial b_{1,F}$ be lightlike separated from $p$.
    Then the max-QFC implies
    \begin{equation}
         \Theta^\varepsilon_{-,\mathrm{max}}[\widetilde{b_1},p] \le \Theta^\varepsilon_{-,\mathrm{max}}[b_1,q] \leq 0~,
    \end{equation}
    where $\widetilde{b_1}$ is formed from $b_1$ by shooting an outwards, past-directed null congruence from a neighbourhood of $p$ to a neighbourhood of $q$ on $\eth b$.
    Finally strong subadditivity implies $\Theta_{-,\mathrm{max}}^\varepsilon[b;p] \le \Theta_{-,\mathrm{max}}^\varepsilon[\widetilde{b_1};p]$. Analogous arguments bound $\Theta_{+,\mathrm{max}}^\varepsilon[b;p]$ using the max-antinormality of $b_2$ and bound $\Theta_{\pm}[b;p]$ in the vN-accessible case.

    It remains to construct a Cauchy slice $\Sigma[b]$ and prove that it satisfies the desired properties. We define
    \begin{align}
    \Sigma[b] = \Sigma[b_1] \sqcup (\partial J^+[b_1] \cap J^-[\Sigma[b_2]])\sqcup (\partial J^-[b_1] \cap J^+[\Sigma[b_2]]) \sqcup (\Sigma[b_2] \cap b'_1)~.
    \end{align} 
    This is notationally somewhat messy so let us explain each portion and introduce some simpler notation. The first piece of the Cauchy slice $\beta = \Sigma[b]$ consists of the full Cauchy slice $\beta_1 = \Sigma[b_1]$ for $b_1$. We then attach future ($\beta_2 = \partial J^+[b_1] \cap J^-[\Sigma[b_2]]$) or past ($\beta_3 = \partial J^-[b_1] \cap J^+[\Sigma[b_2]]$) outwards null congruences from the parts of the edge $\eth b_1$ that lie in the interior of $b$ (i.e. $\eth b_{1,P},\  \eth b_{1,F}$). These null congruences are included until either they hit the edge of $b$, or they reach the Cauchy slice $\Sigma[b_2]$. Finally we need to attach $\beta_4 = \Sigma[b_2] \cap b'_1$, namely the part of the Cauchy slice for $b_2$ that lies outside $b_1$. 
    (Note that the conformal boundary of $\Sigma[b]$ is the same as that of $\Sigma[b_1]$ by construction.)
    The full construction is illustrated in Figure \ref{fig:bootstrap_2}.
    \begin{figure}
    \centering
    \includegraphics[width=0.95\textwidth]{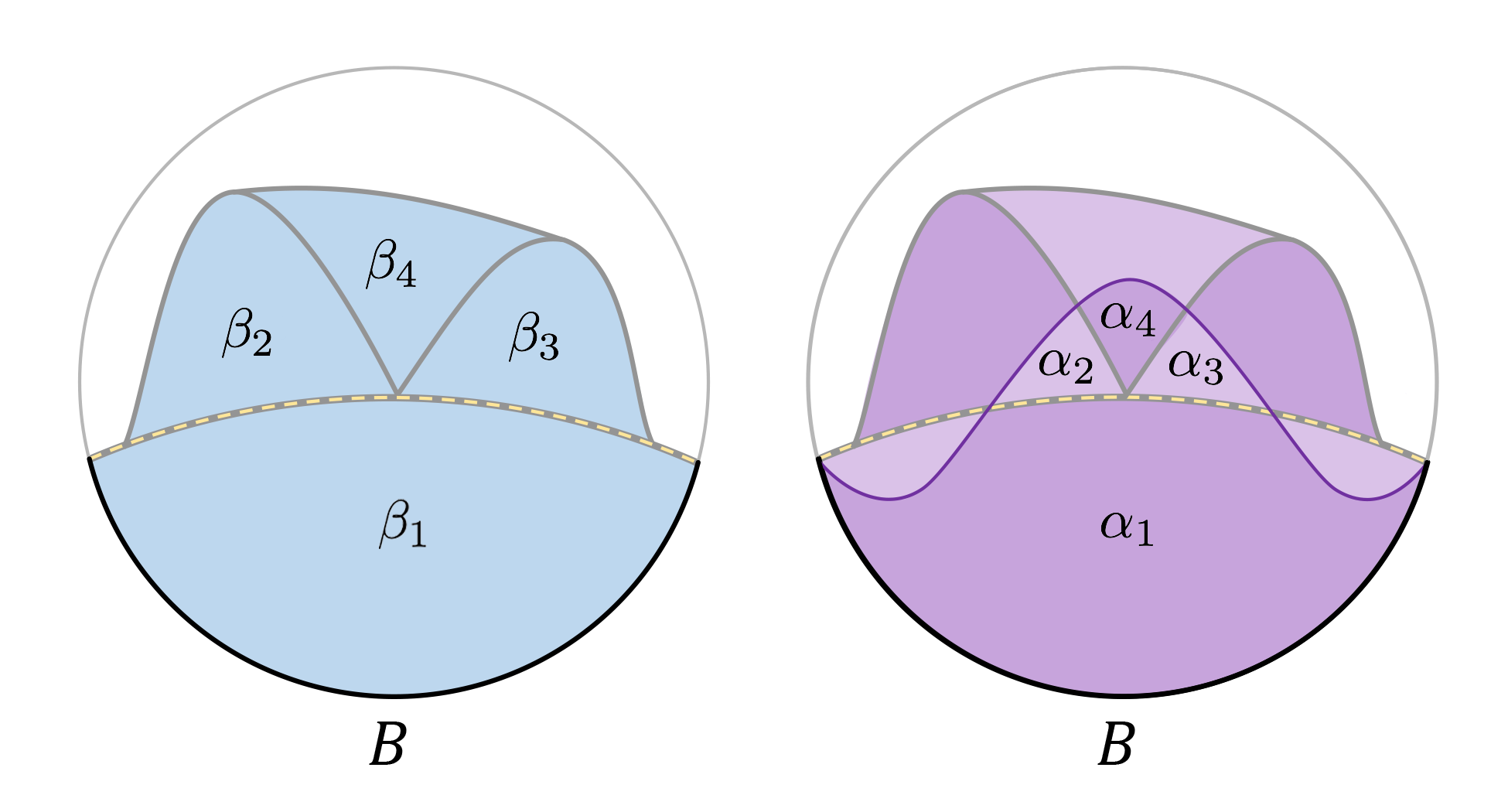}
    \caption{Left: A bulk Cauchy slice with an example $\beta = \bigsqcup_i \beta_i = \Sigma[b]$ divided into its four constituents. Right: an arbitrary region $\alpha = \bigsqcup_i \alpha_i \subseteq \beta$ is drawn, showing each constituent $\alpha_ i \subseteq \beta_i$. Different shadings are meant to clarify the boundaries between each region.}
    \label{fig:bootstrap_2}
    \end{figure}

    We first prove that $\beta$ is a suitable Cauchy slice for $b$ in the von Neumann case, because it is somewhat simpler and so will serve as a warm up for the max-entropy problem. 
    (For notational simplicity, below we will sometimes refer to $S_\mathrm{gen}$ of a Cauchy slice of a wedge when we mean the $S_\mathrm{gen}$ of the wedge.) 
    If $a \subseteq b$ has edge $\eth a \in \Sigma[b]$ then $a$ has a Cauchy slice $\alpha = a \cap \Sigma[b]$. Let $\alpha_i = \alpha \cap \beta_i$. Since $b_1$ is vN-accessible, we have $S_\mathrm{gen}(\beta_1) \leq S_\mathrm{gen}(\alpha_1)$. 
    Strong subadditivity therefore implies $S_\mathrm{gen}(\beta_1 \sqcup \alpha_2 \sqcup \alpha_3 \sqcup \alpha_4) \leq S_\mathrm{gen}(\alpha)$. The antinormality of $b_1$ ensures via the QFC and strong subadditivity that $S_\mathrm{gen}(\beta_1 \sqcup \beta_2 \sqcup \beta_3 \sqcup \alpha_4) \leq S_\mathrm{gen}(\beta_1 \sqcup \alpha_2 \sqcup \alpha_3 \sqcup \alpha_4)$. Finally the vN-accessibility of $b_2$ and strong subadditivity means that $S_\mathrm{gen}(\beta) \leq S_\mathrm{gen}(\beta_1 \sqcup \beta_2 \sqcup \beta_2 \sqcup \alpha_4)$. In summary, we have $S_\mathrm{gen}(b) = S_\mathrm{gen}(\beta) \leq S_\mathrm{gen}(\alpha) = S_\mathrm{gen}(b')$, which is what we needed to show. 
    
    Now let us consider the max-entropy case. By strong subadditivity and the max-accessibility of $b_1$, we have \begin{align}
        H_1 := H^\varepsilon_\mathrm{max,gen}(\beta_1 \sqcup \alpha_2 \sqcup \alpha_3 \sqcup \alpha_4|\alpha) \leq - K
    \end{align} 
    whenever the inclusion $\alpha \subseteq \beta_1$ is strict. 
    
    By the max-QFC, the generic condition, and strong subadditivity, we have
    \begin{align}
        H_2 := H^\varepsilon_\mathrm{max,gen}(\beta_1 \sqcup \beta_2 \sqcup \alpha_3 \sqcup \alpha_4|\beta_1 \sqcup \alpha_2 \sqcup \alpha_3 \sqcup \alpha_4) \lesssim - \kappa \ll - K
    \end{align}
    whenever the inclusion $\alpha_2 \subseteq \beta_2$ is strict. Similarly,
    \begin{align}
        H_3 := H^\varepsilon_\mathrm{max,gen}(\beta_1 \sqcup \beta_2 \sqcup \beta_3 \sqcup \alpha_4|\beta_1 \sqcup \beta_2 \sqcup \alpha_3 \sqcup \alpha_4) \lesssim - \kappa \ll - K
    \end{align}
    whenever $\alpha_3 \subseteq \beta_3$ is strict.
    Finally, strong subadditivity and the max-accessibility of $b_2$ ensure
    \begin{align}
        H_4 := H^\varepsilon_\mathrm{max,gen}(\beta|\beta_1 \sqcup \beta_2 \sqcup \beta_3 \sqcup \alpha_4) \leq - K
    \end{align}
    whenever $\alpha_4 \subseteq \beta_4$ is strict.
    
    Since $\alpha \subseteq \beta$ is required to be a strict inclusion, at least one of the four inclusions $\alpha_i \subseteq \beta_i$ must be strict. If only one inclusion is strict, then the corresponding inequality immediately gives $ H^\varepsilon_\mathrm{max,gen}(\beta|\alpha) < -K$. If more than one inclusion is strict then we can use the chain rule to write
    \begin{align}
        H^{5\varepsilon}_\mathrm{max,gen}(\beta|\alpha) \leq  H_1 + H_2 + H_3 + H_4 + O(\log \varepsilon)
        \leq - 4K + O(\log \varepsilon)
        \leq -K.
    \end{align}
    In the last step we used the assumption $K \gg - \log \varepsilon$. This completes the proof.
\end{proof}

\begin{thm}[The max-EW is max-accessible]\label{thm:outminimal}
    For any boundary region $B$, the $(\varepsilon, K)$ max-entanglement wedge $b$ is $(\varepsilon', K)$ max-accessible with $\varepsilon' = O(\varepsilon)$. Moreover, the conformal boundary of the Cauchy slice $\Sigma[b]$ can be chosen to be any desired Cauchy slice for $B$.
\end{thm}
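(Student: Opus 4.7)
The plan is to realize $\text{max-EW}[B]$ as an iterated wedge union of finitely many max-accessible wedges and apply Lemma \ref{lem:bootstrap} a bounded number of times. The key observation is that each application of Lemma \ref{lem:bootstrap} multiplies the smoothing parameter only by a constant factor of $5$, so a bounded number of iterations still yields $\varepsilon' = O(\varepsilon)$.

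First I would fix any desired Cauchy slice $\Sigma[B]$ for $B$ and seed the iteration with the causal wedge $c_1 := C[B]$. By Lemma \ref{lem:CW_outer_and_anti}, $C[B]$ is $(\varepsilon, K)$ max-accessible, and its Cauchy slice $\Sigma[c_1]$ can be chosen to have conformal boundary exactly $\Sigma[B]$.

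Second, I would enumerate the ``macroscopically distinct'' $(\varepsilon, K)$ max-accessible wedges for $B$ as $b_1, b_2, \ldots, b_N$. The macroscopic distinctness requirement built into Definition \ref{defn:max-accessible} forbids sub-EFT-scale perturbations, so under the generic condition and the semiclassical convention $K \ll O(1/G)$, the number $N$ of such wedges is bounded independently of $G$ and of $\varepsilon$. Inductively define
\begin{equation}
c_{i+1} := c_i \doublecup b_{i+1}.
\end{equation}
By Lemma \ref{lem:bootstrap}, $c_{i+1}$ is $(5\varepsilon_i, K)$ max-accessible, with Cauchy slice whose conformal boundary agrees with that of $\Sigma[c_i]$, hence inductively with $\Sigma[B]$. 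After at most $N$ steps we obtain a wedge $c_N$ that is $(\varepsilon', K)$ max-accessible with $\varepsilon' = 5^N \varepsilon = O(\varepsilon)$, and whose Cauchy slice has conformal boundary $\Sigma[B]$.

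Third I would verify $c_N = \text{max-EW}[B]$. By construction $c_N \supseteq b_i$ for each $i$, so $c_N \supseteq \doublecup_i b_i = \text{max-EW}[B]$. Conversely, $c_N$ is built only from wedge unions of max-accessible $b_i$'s, so $c_N \subseteq \text{max-EW}[B]$. The main obstacle is justifying the finiteness of $N$: without some macroscopic-discreteness input, the natural set $G_{(\varepsilon,K)}(B)$ could be a continuous family of max-accessible wedges, and iterating Lemma \ref{lem:bootstrap} transfinitely would blow up the smoothing parameter. The cleanest fallback is to run a Zorn-type argument on $G_{(\varepsilon', K)}(B)$ for $\varepsilon' = 5\varepsilon$, using the continuity assumption in Remark \ref{rem:continuity} to show chains have max-accessible upper bounds, and then applying Lemma \ref{lem:bootstrap} once more to derive a contradiction from any $(\varepsilon, K)$ max-accessible wedge not contained in the maximal element; but in the generic semiclassical setting the simpler finite enumeration above should already suffice.
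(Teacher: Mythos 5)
Your core strategy---seed with the causal wedge $C[B]$, whose Cauchy slice can be given any desired conformal boundary by Lemma \ref{lem:CW_outer_and_anti}, and then absorb the remaining max-accessible wedges one at a time via Lemma \ref{lem:bootstrap}---is exactly the paper's proof, which consists of essentially those two sentences. So the approach matches. The problem is with the scaffolding you add to make the iteration rigorous, which contains a genuine gap.

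Your finite enumeration of $G_{(\varepsilon,K)}(B)$ is not justified. The phrase ``macroscopically distinct'' in Definition \ref{defn:max-accessible} constrains the \emph{comparison} wedges $b_2 \subset b_1$ appearing inside the accessibility condition for a single wedge $b_1$; it does not discretize the family of max-accessible wedges itself, which can perfectly well be a continuum (e.g.\ a continuous family of deformations of one accessible wedge, all of which remain accessible). Even granting a finite list, iterating the pairwise bootstrap gives $\varepsilon' = 5^N\varepsilon$, which is $O(\varepsilon)$ only if $N$ is bounded by an absolute constant; nothing you cite bounds $N$, and once $N \gtrsim \log(1/\varepsilon)$ the smoothing parameter becomes $O(1)$ and the conclusion is vacuous. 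Your Zorn-type fallback does not escape this either: the step ``every chain has a max-accessible upper bound'' is the infinite-union problem in disguise, and Remark \ref{rem:continuity} (continuity of the expansions under local deformations) is nowhere near strong enough to supply it---you would need to show that the conditional max-entropy conditions survive a limit of wedges without degrading $\varepsilon$, which is precisely what is at issue. To be fair, the paper's own proof (``follows immediately from Lemma \ref{lem:bootstrap}'') silently glosses over the same infinite-union issue, so you have correctly located the weak point of the argument; but your proposed repairs do not close it, and as written you should not assert that $N$ is bounded or that $5^N\varepsilon = O(\varepsilon)$.
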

\begin{proof}
    The proof follows immediately from Lemma \ref{lem:bootstrap}. To select a particular conformal boundary for the Cauchy slice $\Sigma[b]$, we again apply Lemma \ref{lem:bootstrap} with $b_1$ equal to the causal wedge $C[B]$ and $C[B]$ chosen to have the desired conformal boundary, as allowed by Lemma \ref{lem:CW_outer_and_anti}.
\end{proof}

\begin{thm}[Vanishing expansions]\label{thm:vanishing_expansions}
    If the $(\varepsilon, K)$ max-entanglement wedge $b$ of a boundary region $B$ is itself $(\varepsilon/3,K)$ max-accessible, then we must have
    $$\Theta_{+,\mathrm{max}}^\varepsilon[b;p],~ \Theta_{-,\mathrm{max}}^\varepsilon[b;p] = 0$$ 
    for all $p \in \eth b$.
\end{thm}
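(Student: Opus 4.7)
I would argue by contradiction. Suppose that at some $p_0 \in \eth b$ one of the max-expansions is strictly negative. By symmetry between future and past, assume without loss of generality $\Theta^\varepsilon_{+,\mathrm{max}}[b;p_0] = -c < 0$. The goal is to construct a wedge $\tilde b \supsetneq b$ that is still $(\varepsilon, K)$ max-accessible for $B$. Since Theorem \ref{thm:outminimal} says the $(\varepsilon,K)$ max-EW is the unique maximal $(\varepsilon, K)$ max-accessible wedge containing $B$, producing such a $\tilde b$ would contradict the assumption that $b$ is the max-EW.

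\textbf{Construction of $\tilde b$.} By continuity of the max-expansion (Remark \ref{rem:continuity}), there is a neighborhood $\mathcal{U} \subset \eth b$ of $p_0$ on which $\Theta^\varepsilon_{+,\mathrm{max}}[b;\cdot] \leq -c/2$. I deform $\eth b$ outward along its future outward null congruence inside $\mathcal{U}$ by a small but macroscopic affine amount $\delta > 0$, producing a new wedge $\tilde b \supsetneq b$ whose edge agrees with $\eth b$ outside $\mathcal{U}$ and has a new piece $\tilde{\mathcal{U}}$ on the future null hypersurface from $\mathcal{U}$. As candidate Cauchy slice I take $\Sigma[\tilde b] = \Sigma[b] \sqcup N_{\mathcal{U} \to \tilde{\mathcal{U}}}$, where $\Sigma[b]$ is the slice witnessing the $(\varepsilon/3, K)$ max-accessibility of $b$ and $N_{\mathcal{U} \to \tilde{\mathcal{U}}}$ is the null slab swept out between $\mathcal{U}$ and $\tilde{\mathcal{U}}$.

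\textbf{Verifying $(\varepsilon, K)$ max-accessibility of $\tilde b$.} For $\varepsilon$ max-antinormality: on the unchanged portion of $\eth \tilde b$, strong subadditivity of expansions (Lemma \ref{lem:SSA_of_exp}) inherits antinormality from $b$; on $\tilde{\mathcal{U}}$, the future max-expansion is $\leq 0$ by the ``$\Theta^\varepsilon_\mathrm{max}$ remains non-positive'' consequence of the max-QFC; and the past max-expansion on $\tilde{\mathcal{U}}$ follows from continuity of expansions under wedge deformation combined with the strict margin guaranteed by the generic condition. For the Cauchy-slice condition, given any macroscopically distinct $\tilde a \subsetneq \tilde b$ with $\eth \tilde a \subset \Sigma[\tilde b]$ and $B \subseteq \tilde a$, I ``project'' the portion of $\eth \tilde a$ lying on $N_{\mathcal{U} \to \tilde{\mathcal{U}}}$ down to $\mathcal{U}$ along the null generators to obtain an intermediate $a_* \subseteq b$ with $\eth a_* \subset \Sigma[b]$, and apply the chain rule at $\varepsilon' = \varepsilon/3$:
\[
H^\varepsilon_\mathrm{max,gen}(\tilde b | \tilde a) \leq H^{\varepsilon/3}_\mathrm{max,gen}(\tilde b | a_*) + H^{\varepsilon/3}_\mathrm{max,gen}(a_* | \tilde a) + O(\log \varepsilon^{-1})~.
\]
The first term is bounded by $\leq -K$ by splitting it (again via the chain rule) into the $(\varepsilon/3, K)$ max-accessibility of $b$ applied to $a_* \subseteq b$, plus the max-QFC applied to the null slab $\mathcal{U} \to \tilde{\mathcal{U}}$. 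Unwinding the definition of $\Theta^\varepsilon_{+,\mathrm{max}}[b;p_0] = -c < 0$ gives a strictly negative contribution of order $-c \cdot \mathrm{Area}(\mathcal{U}) \cdot \delta / G$ from the null slab, providing more than enough margin. The second term is $\leq 0$ by applying max-QFC to the null shrinkage from $\tilde a$ to $a_*$. Since $K \gg -\log \varepsilon$, the overall bound is $\leq -K$, and $\tilde b$ is $(\varepsilon, K)$ max-accessible, producing the desired contradiction.

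\textbf{Main obstacle.} The most delicate step is controlling the past max-expansion on the new edge piece $\tilde{\mathcal{U}}$: neither the max-QFC (which controls the future direction along which we deformed) nor strong subadditivity of expansions (which requires an open ball in which the two wedges agree) directly applies at these points. The resolution relies essentially on combining the continuity assumption of Remark \ref{rem:continuity} with the strict margin $-\kappa$ that the generic condition supplies in $\Theta^\varepsilon_{-,\mathrm{max}}[b;\cdot]$; absent this margin one would need a more elaborate auxiliary construction involving local wedges matched to $\eth \tilde b$ near $\tilde{\mathcal{U}}$. A secondary delicacy is the bookkeeping of smoothing parameters: the $(\varepsilon/3, K)$ hypothesis is precisely what provides the room needed to absorb the $O(\log \varepsilon^{-1})$ penalties from (iterated uses of) the chain rule while still concluding $(\varepsilon, K)$ max-accessibility of $\tilde b$.
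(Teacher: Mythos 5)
Your overall strategy — contradiction by building a strictly larger $(\varepsilon,K)$ max-accessible wedge — is the same as the paper's, but your construction deforms in the wrong null direction, and the step you yourself flag as the ``main obstacle'' is a genuine gap with no valid resolution as stated. Having assumed $\Theta^\varepsilon_{+,\mathrm{max}}[b;p_0]<0$, you deform $\eth b$ along the \emph{future} congruence, and then need $\Theta^\varepsilon_{-,\mathrm{max}}[\tilde b;\cdot]\le 0$ on the new edge piece. You propose to get this from continuity plus ``the strict margin guaranteed by the generic condition,'' but the generic condition concerns conditional entropies along null congruences (the conclusion of the max-QFC), not the sign of the edge expansions of $b$; max-antinormality of $b$ only gives $\Theta^\varepsilon_{-,\mathrm{max}}[b;\cdot]\le 0$, and the theorem you are proving asserts this is \emph{exactly} zero, so no strict margin can be assumed without circularity. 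Continuity preserves strict inequalities under small deformations but not non-strict ones, so the past expansion could become positive. The paper's fix is to deform along the \emph{past} congruence instead: then $\Theta^\varepsilon_{-,\mathrm{max}}[\tilde b;\tilde p]\le 0$ follows exactly from the max-QFC (the ``remains non-positive'' proposition), while $\Theta^\varepsilon_{+,\mathrm{max}}[\tilde b;\tilde p]<0$ survives by continuity precisely because it starts strictly negative. In short, use continuity only for the strictly negative expansion and the max-QFC for the merely non-positive one; you have it the other way around.

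A secondary problem is your chain-rule step. Your intermediate wedge $a_*$, obtained by projecting $\eth\tilde a$ back along the null generators, satisfies $a_*\subseteq\tilde a\subseteq\tilde b$, so the chain rule reads $H^{\varepsilon}_\mathrm{max,gen}(\tilde b|a_*)\le H^{\varepsilon'}_\mathrm{max,gen}(\tilde b|\tilde a)+H^{\varepsilon'}_\mathrm{max,gen}(\tilde a|a_*)+O(\log\varepsilon^{-1})$ — it bounds the wrong quantity, and the inequality you wrote is not an instance of it. The correct intermediate region, as in the paper, is $\beta_1\sqcup\alpha_2$ (all of $\Sigma[b]$ together with the part of $\eth\tilde a$ lying on the null slab), which sits \emph{between} $\tilde a$ and $\tilde b$; then the max-accessibility of $b$ plus strong subadditivity controls $H^{\varepsilon/3}_\mathrm{max,gen}(\beta_1\sqcup\alpha_2|\alpha)$ and the max-QFC with the generic condition controls $H^{\varepsilon/3}_\mathrm{max,gen}(\beta|\beta_1\sqcup\alpha_2)$.
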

\begin{proof}
    To derive a contradiction, we can assume without loss of generality that there exists $p \in \eth b$ such that $\Theta_{+,\mathrm{max}}^\varepsilon[b;p] < 0$. 
    We will construct an $(\varepsilon, K)$ max-accessible region not contained in the max-EW.
    By Remark \ref{rem:continuity}, we must also have $\Theta_{+,\mathrm{max}}^\varepsilon[\tilde b;\tilde p] < 0$ for any sufficiently small deformation $\tilde b$ of $b$ mapping $p$ to $\tilde p$. Suppose we define $\tilde b$ by deforming outwards along a null congruence in the past direction. Then we also have $\Theta_{-,\mathrm{max}}^\varepsilon[\tilde b;\tilde p] \leq 0$ by the max-QFC, and hence $\tilde b$ is $\varepsilon/3$ max-antinormal (and hence also $\varepsilon$ max-antinormal).

    Now, take $\beta = \Sigma[\tilde b]$ to be the union of $\beta_1 = \Sigma[b]$ with the null congruence $\beta_2$ from $b$ to $\tilde b$. 
    Let $a \subseteq \tilde b$ with $\eth a \in \Sigma[\tilde{b}]$ have Cauchy slice $\alpha = a \cap \beta$ and let $\alpha_i = \alpha \cap \beta_i$. By the max-QFC and the generic condition, we have 
    \begin{align}
    H^{\varepsilon/3}_\mathrm{max,gen} (\beta | \beta_1 \sqcup \alpha_2)\lesssim -\kappa \ll -K,
    \end{align}
    if $\alpha_2 \subseteq \beta_2$ is strict. Meanwhile by the $(\varepsilon/3,K)$ max-accessibility of $b$ and strong subadditivity we have
    \begin{align}
    H^{\varepsilon/3}_\mathrm{max,gen} (\beta_1 \sqcup \alpha_2 | \alpha) \leq -K,
    \end{align}
    if $\alpha_1 \subseteq \beta_1$ is strict. The desired inequality $H^{\varepsilon}_\mathrm{max,gen} (\beta | \alpha) \leq -K$ follows via the chain rule.
\end{proof}
\begin{rem}
The assumption in Theorem \ref{thm:vanishing_expansions} is slightly stronger than that derived in Theorem \ref{thm:outminimal}, which only showed that the max-EW is $(\varepsilon', K)$ max-accessible for some $\varepsilon' = O(\varepsilon)$. In most situations of physical interest, one expects the max-EW to be (approximately) constant over a wide range of values for $\varepsilon$. In such a situation, the assumption of Theorem \ref{thm:vanishing_expansions} is always (approximately) satisfied.
\end{rem}

\begin{thm}[Nesting]\label{thm:nesting}
    For any two boundary wedges $B_2 \subseteq B_1$, the $(\varepsilon, K)$ max-EW, vN-EW, and $(\varepsilon, K)$ min-EW of $B_2$ are entirely contained respectively in the $(5 \varepsilon, K)$ max-EW, vN-EW, and $(5 \varepsilon, K)$ min-EW of $B_1$. 
\end{thm}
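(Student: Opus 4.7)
The plan is to prove the three inclusions in sequence, with the max-EW case as the core technical step; the vN-EW case is its analogue without smoothing, and the min-EW case follows by duality.

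First I would prove a mild generalization of Lemma \ref{lem:bootstrap} that allows the two wedges to have different conformal boundaries: if $c_1$ is $(\varepsilon,K)$ max-accessible for $C_1$ and $c_2$ is $(\varepsilon,K)$ max-accessible for $C_2$, then $c_1 \doublecup c_2$ is $(5\varepsilon,K)$ max-accessible for $C_1 \cup C_2$. The proof of Lemma \ref{lem:bootstrap} carries over essentially verbatim: the classification of $\eth c_i$ into interior, exterior, future, and past parts relative to $c_j$, the decomposition of $\eth(c_1 \doublecup c_2)$ into four pieces, the verification of max-antinormality at each piece using the max-QFC and strong subadditivity of expansions (Lemma \ref{lem:SSA_of_exp}), and the patchwork Cauchy slice $\beta = \beta_1 \sqcup \beta_2 \sqcup \beta_3 \sqcup \beta_4$ together with the four-term chain rule bound all generalize immediately. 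The only place where the equal conformal boundary mattered in Lemma \ref{lem:bootstrap} was in the use of Corollaries \ref{corr:maxconfbdy} and \ref{cor:comp_causal_wedge_excl} to guarantee that outward null congruences from $\eth c_{i,P}$ and $\eth c_{i,F}$ strike $\eth(c_1 \doublecup c_2)$ before escaping to $M_\partial$; in our setting the conformal boundary of $c_1 \doublecup c_2$ is $C_1 \cup C_2$, and exactly the same conclusion holds.

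Next I would apply this generalized bootstrap with $c_1 = C[B_1]$ and $c_2$ an arbitrary element of $G_{(\varepsilon,K)}(B_2)$. By Lemma \ref{lem:CW_outer_and_anti}, $C[B_1]$ is $(\varepsilon,K)$ max-accessible for $B_1$; since $B_2 \subseteq B_1$ the wedge $C[B_1] \doublecup c_2$ has conformal boundary $B_1 \cup B_2 = B_1$ and is $(5\varepsilon,K)$ max-accessible for $B_1$. In particular,
\begin{equation*}
c_2 \;\subseteq\; C[B_1] \doublecup c_2 \;\subseteq\; \text{max-EW}_{(5\varepsilon,K)}[B_1].
\end{equation*}
Taking the wedge union over all $c_2 \in G_{(\varepsilon,K)}(B_2)$ yields $\text{max-EW}_{(\varepsilon,K)}[B_2] \subseteq \text{max-EW}_{(5\varepsilon,K)}[B_1]$.

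The vN-EW inclusion follows by the same strategy with vN-accessibility in place of max-accessibility and the QFC in place of the max-QFC. There is no smoothing parameter in this case, so no factor of $5$ appears and one obtains $\text{vN-EW}[B_2] \subseteq \text{vN-EW}[B_1]$ directly. The min-EW inclusion follows by duality from Definition \ref{def:cov_min-EW}: since $B_2 \subseteq B_1$ implies $B_1' \subseteq B_2'$, the max-EW inclusion applied to these complementary wedges gives $\text{max-EW}_{(\varepsilon,K)}[B_1'] \subseteq \text{max-EW}_{(5\varepsilon,K)}[B_2']$, and taking spacelike complements of both sides produces the corresponding containment of min-EWs (with an appropriate relabeling of the smoothing parameter).

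The main obstacle will be pushing through every step of Lemma \ref{lem:bootstrap} in the generalized setting. The subtlest point is the geometric claim that outward null congruences from the portions of $\eth c_1$ and $\eth c_2$ lying inside the other wedge terminate on $\eth(c_1 \doublecup c_2)$ before reaching the asymptotic boundary, which in the original proof relied on the common conformal boundary through Corollary \ref{cor:comp_causal_wedge_excl}; we need to verify that the same conclusion follows from the max-accessibility of $c_1$ and $c_2$ individually, with the conformal boundary of the union now being $C_1 \cup C_2$.
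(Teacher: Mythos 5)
Your proof is correct and follows essentially the same route as the paper's: both apply the wedge-union construction of Lemma \ref{lem:bootstrap} to a max-accessible (resp.\ vN-accessible) wedge for $B_1$ together with an arbitrary accessible wedge for $B_2$, note that the lemma survives the mismatch of conformal boundaries, and dispatch the min-EW case by duality. The only cosmetic difference is that you instantiate the $B_1$ wedge concretely as $C[B_1]$ via Lemma \ref{lem:CW_outer_and_anti}, where the paper takes an arbitrary $(\varepsilon,K)$ max-accessible wedge for $B_1$; your explicit flagging of the null-congruence termination issue is a point the paper handles only implicitly.
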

\begin{proof}
    Since we have already proven the equivalence of the vN-EW and the region found by the QES prescription, the von Neumann case is a standard result, but we include it here for completeness.
    The proof of the min-EW case follows by applying the max-EW result to the complementary regions $B'_1 \subseteq B'_2$.

    Let $b_2$ be an $(\varepsilon, K)$ max-accessible (resp. vN-accessible) wedge with conformal boundary $B_2$. Let $b_1$ be an $(\varepsilon, K)$ max-accessible wedge with conformal boundary $B_1$. We can then take the union of these two wedges in exactly the same way as described in Lemma \ref{lem:bootstrap}. Call this union $b$. The only difference in the current setting will be that $b$ will contain some portion of the conformal boundary which is not in the domain of dependence of the conformal boundary of $b_2$. This does not affect any of the relevant inequalities (e.g. the chain rule, strong sub-additivity) assuming reflecting boundary conditions at the asymptotic boundary. By Lemma \ref{lem:bootstrap}, we end up with a $(5\varepsilon, K)$ max-accessible (resp. vN-accessible) wedge, $b$, whose conformal boundary is $B_1$ and which contains $b_2$. This produces the desired statement.
\end{proof}

\begin{thm}[Time-reflection symmetric wedges]
    Let $M$ be time-reflection symmetric with invariant Cauchy slice $\Sigma$ and let $B$ be a boundary region with $\eth B \in \Sigma$. Let $b$ be the $(\varepsilon, K)$ max-entanglement wedge for $B$. Then $\eth b \in \Sigma$.
\end{thm}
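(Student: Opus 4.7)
The plan is to proceed in two steps: first establish that the max-EW $b$ is preserved by the time-reflection isometry $T$, and then deduce from this that $\eth b$ must lie on $\Sigma$.

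\emph{Step 1: $T$-invariance of $b$.} Since $\eth B \in \Sigma$ and $\Sigma$ is pointwise fixed by $T$, the boundary wedge $B$ is $T$-invariant. The state $\ket{\psi}$ is $T$-invariant by the assumption of time-reflection symmetry of $M$. Each ingredient of Definition \ref{defn:max-accessible} (generalized trace, generalized density matrix, smooth generalized one-shot entropy, wedge structure, $\eth$, and the max-expansion) transforms covariantly under isometries of $M$. Consequently a wedge $a$ is $(\varepsilon, K)$ max-accessible for $B$ iff $T(a)$ is, so the collection $G_{(\varepsilon, K)}(B)$ is closed under $T$. Since the wedge-union operation $\doublecup$ is covariant, this forces $T(b) = b$, and in particular $T(\eth b) = \eth b$.

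\emph{Step 2: $\eth b \subseteq \Sigma$.} By Theorem \ref{thm:outminimal}, $b$ is itself $(\varepsilon', K)$ max-accessible with a Cauchy slice $\Sigma[b]$ whose conformal boundary may be chosen to be any Cauchy slice for $B$. I will choose $\Sigma[B] = \Sigma \cap B$. The aim is to show that $\Sigma \cap b$ is itself a Cauchy slice for $b$; then $\eth b$ is the interior boundary of this slice within $\Sigma$, hence $\eth b \subseteq \Sigma$. To get there, consider any $p \in \eth b$ and work in a local chart adapted to $T$, $(t, y^1, \dots, y^{d-1})$ with $\Sigma = \{t = 0\}$ and $T(t, y) = (-t, y)$. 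Near $p$ the spacelike codimension-two surface $\eth b$ is the graph $t = f(y)$ of a smooth function over a transverse codimension-two slice in $\Sigma$; $T$-invariance of $\eth b$ as a set then forces $f(y) = -f(y)$, i.e.\ $f \equiv 0$, so $\eth b \subseteq \Sigma$ locally.

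\emph{Main obstacle.} The delicate point in step 2 is ruling out non-generic configurations where $\eth b$ is $T$-invariant only as a set, with $p$ and $T(p) \ne p$ both lying on $\eth b$ but neither on $\Sigma$. To handle this I would combine two facts: (i) Theorem \ref{thm:vanishing_expansions} says $\Theta^\varepsilon_{\pm,\mathrm{max}}[b;p] = 0$ at every $p \in \eth b$; and (ii) by $T$-invariance the future and past one-shot expansions at $T(p)$ are the past and future expansions at $p$. If $\eth b$ had a component off $\Sigma$, one could then apply the max-QFC on the outward null congruence interpolating between the putative off-$\Sigma$ edge and its $T$-reflection and use strong subadditivity (Lemma \ref{lem:SSA_of_exp}) to construct a strictly smaller max-accessible wedge with edge on $\Sigma$, contradicting maximality of $b$ via Lemma \ref{lem:bootstrap} together with the fact that $b$ is the unique maximal max-accessible wedge. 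Making this last step fully rigorous (and independent of genericity assumptions on $\eth b$) is where the real work lies.
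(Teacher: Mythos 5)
Your Step 1 is exactly the paper's proof: the time-reflection isometry $T$ maps $(\varepsilon,K)$ max-accessible wedges to $(\varepsilon,K)$ max-accessible wedges, so the wedge union defining $\textup{max-EW}[B]$ is $T$-invariant, and hence $\eth b$ is $T$-invariant as a set. The paper stops there and asserts $\eth b \in \Sigma$; you correctly notice that set-invariance alone does not obviously rule out a pair $\{p, T(p)\}$ with $p \notin \Sigma$, but the machinery you propose to close this gap is both unnecessary and not fully available.

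The ``main obstacle'' dissolves with a one-line causal argument that you are missing: the edge of a wedge is achronal (if $p,q \in \eth b = \partial b \cap \partial b'$ were chronologically related, a neighbourhood of $p$ meeting $b$ would be chronologically related to a neighbourhood of $q$ meeting $b'$, contradicting $b' \subseteq$ the spacelike complement of $b$). On the other hand, any $p \notin \Sigma$ lies in $I^+(\Sigma)$ or $I^-(\Sigma)$, and concatenating a timelike curve from some $r \in \Sigma$ to $p$ with its $T$-image gives a timelike curve from $T(p)$ to $p$; so $p$ and $T(p)$ are chronologically related whenever $p \notin \Sigma$. Combined with $T(\eth b) = \eth b$ and achronality, this forces $\eth b \subseteq \Sigma$ with no genericity assumption and no appeal to the max-QFC. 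By contrast, your proposed route through Theorem \ref{thm:vanishing_expansions} requires $b$ to be $(\varepsilon/3,K)$ max-accessible, which is strictly stronger than what Theorem \ref{thm:outminimal} delivers (only $(\varepsilon',K)$ with $\varepsilon' = O(\varepsilon)$, and in fact $\varepsilon' = 5\varepsilon$ from Lemma \ref{lem:bootstrap}), so it would import an extra hypothesis not present in the theorem statement — and you yourself flag that the construction of the ``strictly smaller max-accessible wedge'' is not carried out. Replace Step 2 and the ``main obstacle'' paragraph with the achronality argument and the proof is complete and matches the paper's.
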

\begin{proof}
    By time-reflection symmetry of $M$, for every $(\varepsilon, K)$ max-accessible wedge, $b_1$, there exists a time-reflected version, $\hat{b}_1$, which is also $(\varepsilon, K)$ max-accessible. The wedge union over all max-accessible wedges will then manifestly produce a time-reflection symmetric wedge. By the definition of $b$, we see that $b$ itself must be time-reflection symmetric and so $\eth b \in \Sigma$.
\end{proof}
Note that this statement is significantly weaker than what one might have hoped for. A reasonable sounding statement is that when $M$ has a moment of time-reflection symmetry the max-EW for a region $B$ with $\eth B \in \Sigma$ should be max-accessible with $\Sigma = \Sigma[b_1]$ in Definition \ref{defn:max-accessible}. While this statement is true for the vN-EW, it appears likely that the corresponding statement fails for the max- and min-EW in general. We suspect that this may be related to upcoming work \cite{EPS}, which suggests that a tensor network representation of a bulk state cannot necessarily be associated to the time-symmetric slice, even when such a slice exists.

\section{The continuum limit and Type II von Neumann algebras}\label{sec:continuum}

Until now, we have focused our attention on regulated bulk theories featuring finite-dimensional algebras $\mathcal{M}_b$, while conjecturing that generalized one-shot entropies should be UV-finite and regulator-independent. However, it has recently been shown that in certain settings one can make interesting progress in understanding generalized entropy without regulation by studying the algebraic structure of quantum gravity in the weak coupling limit \cite{Witten:2021unn, Chandrasekaran:2022eqq, Jensen:2023aa, Chandrasekaran:2022aa}. We now briefly discuss how generalized one-shot entropies can be understood in such a framework; we refer readers to the aforementioned papers for more details. 

Following \cite{Witten:2021unn, Chandrasekaran:2022eqq, Chandrasekaran:2022aa}, we will consider the $G\to 0$ limit of small perturbations around a black hole background, and take the bulk wedge $b$ of interest to be the right black hole exterior. In this limit, the quantum gravity Hilbert space can be understood without introducing any regulator as the Hilbert space of continuum quantum field theory (QFT) on the black hole background, together with an additional degree of freedom describing the timeshift between the two boundaries. QFT operators in the right exterior are described by a Type III von Neumann algebra $\mathcal{A}_{r,0}$, which means that density matrices from a regulated field theory have no continuum limit. 

Meanwhile, the operator $\hat{A}(\eth b)/4G$ generates boosts at the horizon, which change the timeshift while keeping fields in each exterior fixed relative to their respective boundaries. Such an operator renders the quantum fields singular at the horizon and hence also has no continuum limit.
Indeed, Raychaudhuri's equation together with Einstein's equations show that
\begin{align}
    \frac{\hat A(\eth b) - A_0}{4G}+ \hat h_r = \hat H_R - E_0~,
\end{align}
where $\hat H_R$ is the  right ADM mass, $A_0$ and $E_0$ are respectively the reference horizon area and mass of the black hole background, and $\hat h_r$ is a one-sided boost operator on the quantum fields in the right exterior. In fixed-background QFT, the operator $\hat h_r$ is UV-divergent. In gravity, however, this divergence is absorbed into a renormalization of $G$ in $[\hat A(\eth b) - A_0]/4G$. On the right hand side, the ADM mass $H_R$ is UV-finite, but diverges for a fixed radius black hole as $G \to 0$. This divergence is cancelled by subtracting $E_0$. The result is that the renormalized ADM mass $\hat{h}_R = \hat{H}_R - E_0$ is a finite operator in the $G \to 0$ continuum quantum gravity theory that is not present in a quantum field theory on the black hole background.

The addition of this extra quantum gravity operator $\hat{h}_R$ to the QFT algebra $\mathcal{A}_{r,0}$ leads to the full quantum gravity algebra $\mathcal{A}_r$ for the black hole right exterior. This algebra turns out to be a Type II von Neumann factor, implying that the center of $\mathcal{A}_r$ consists only of multiples of the identity; all central operators such as $[\hat A(\eth b) - A_0]/4G$ in the regulated theory are UV-divergent and hence do not exist in the continuum theory.\footnote{One can make the area operator UV finite by smearing it over some small region of spacetime. However doing so makes it no longer central.} 
It also means that one can define a trace $\tr_\mathrm{II}$ -- and hence also density matrices -- for $\mathcal{A}_r$, which are unique up to an overall factor related to the choice of reference energy $E_0$. One can show \cite{Chandrasekaran:2022eqq} that the density matrix of this Type II algebra is proportional to the continuum limit of 
\begin{align}
\rho_{b,\mathrm{gen}} = e^{(A_0-\hat A(\eth b))/4G} \rho_{b, \mathrm{can}}~,    
\end{align} 
while the trace is
\begin{align} \label{eq:gentracecontinuum}
    \tr_\mathrm{II}\left[\cdot\right] = \lim_{G \to 0} e^{-A_0/4G} \tr_\mathrm{gen}\left[\cdot\right] = \lim_{G\to 0} \lim_{\delta \to 0} \tr_{\mathrm{can}}\left[e^{(\hat A(\eth b) - A_0)/4G}\left(\cdot\right)\right]~.
\end{align}
In other words, the only choice of trace (and density matrices) in the regulated theory, where the algebra has a center, with a sensible semiclassical, continuum limit $G, \delta \to 0$ (up to a state-independent factor $e^{-A_0/4G}$) is the generalized trace (and generalized density matrices) that we defined in Section \ref{section:genminmaxentropies}.

\subsubsection*{The one-shot GSL for Type II$_{\infty}$ algebras}
In Section \ref{section:OSqmexpansionfocusing}, we argued for the existence of a one-shot GSL. One setting in which the ordinary GSL can be rigorously defined as an inequality between entropies was described in Section 4 of \cite{Chandrasekaran:2022eqq}.
We now introduce a similarly rigorous continuum definition of a one-shot GSL, along with a direct proof that does not rely on the one-shot QFC.

In the construction of \cite{Chandrasekaran:2022eqq}, one first introduces a new timescale $T$ that diverges in the $G \to 0$ limit.\footnote{More precisely, we require $\beta \ll T \ll t_\mathrm{scr}$ where $t_\mathrm{scr}$ is the scrambling time of the black hole.} We consider black holes that have arbitrary boundary excitations at times $t = O(1)$, and additional arbitrary boundary excitations at times $t  = T + O(1)$, but with the black hole allowed to equilibrate during the intervening period. 

There is then a Type II$_\infty$ von Neumann algebra $\mathcal{A}_R$ generated by the (renormalized) right boundary Hamiltonian along with both early- and late-time right boundary (noncentral) single-trace operators. The entropy of this algebra is equal to the generalized entropy of the black hole bifurcation surface. The algebra $\mathcal{A}_R$ contains a Type II$_\infty$ von Neumann subalgebra $\widetilde{\mathcal{A}}_R$ generated by only the boundary Hamiltonian and late-time single-trace operators. The entropy of this subalgebra is equal to the generalized entropy of the black hole horizon during the equilibration period between the two sets of excitations.

If we choose the constant factor $e^{A_0/4G}$ from \eqref{eq:gentracecontinuum} to be the same for both $\widetilde{\mathcal{A}}_R$ and $\mathcal{A}_R$, the inclusion $\widetilde{\mathcal{A}}_R \subseteq \mathcal{A}_R$ is trace-preserving, meaning that the trace (on $\widetilde{\mathcal{A}}_R$) of an operator in $\widetilde{\mathcal{A}}_R$ is equal to its trace as an element of the larger algebra $\mathcal{A}_R$. It is a standard fact about von Neumann algebras \cite{Longo:2022aa} that entropy is monotonically decreasing under trace-preserving inclusions. This fact is sufficient to derive a ``discretized'' version of the generalized second law: namely that the entropy of any state on $\mathcal{A}_R$ (i.e. the generalized entropy of the bifurcation surface) is less than or equal to the entropy on $\widetilde{\mathcal{A}}_R$ (the generalized entropy of the temporarily equilibrated black hole horizon).

This derivation extends to a one-shot GSL as follows. Let $\tilde b$ be the outer wedge of a cut of the temporarily equilibrated horizon, and let $b$ be the entire black hole exterior. Finally let $\rho_{\tilde b}$ and $\rho_b$ be the density matrices of a state $\ket{\Psi}$ on $\widetilde{\mathcal{A}}_R$ and $\mathcal{A}_R$ respectively.

Because of the relationship \eqref{eq:gentracecontinuum} between the unique trace on the Type II algebras and the generalized trace, the conditional generalized min-entropy limits to the conditional min-entropy on the Type II algebra as
\begin{align} \label{eq:typeiimin}
   \lim_{G\to 0} H^\varepsilon_\mathrm{min,gen}(b| \tilde b) = H^\varepsilon_\mathrm{min, II}(b| \tilde b) = - \inf_{\rho_b^{\varepsilon} \in \mathcal{B}^{\varepsilon}(\rho_b)} \inf_{\sigma_{\tilde b}} \inf\{ \lambda : \rho_b^{\varepsilon} \leq e^\lambda \sigma_{\tilde b} \}~,
\end{align}
where $\sigma_{\tilde b}$ is a normalized density matrix on $\widetilde{\mathcal{A}}_R$. Note that \eqref{eq:typeiimin} is independent of the choice of normalization for the traces on $\widetilde{\mathcal{A}}_R$ and $\mathcal{A}_R$ so long as their relative normalization is chosen so that the inclusion is trace-preserving.  

It is easy to check $H^\varepsilon_\mathrm{min,gen}(b| \tilde b) \leq 0$: suppose there existed a normalized density matrix $\sigma_{\tilde b}$ such that $e^\lambda \sigma_{\tilde b} - \rho_b^{\varepsilon} \geq 0$. 
Because the inclusion $\widetilde{\mathcal{A}}_R \subseteq \mathcal{A}_R$ is trace-preserving, we have
\begin{align}
    \tr_{\mathcal{A}_R} (e^\lambda \sigma_{\tilde b} - \rho_b^{\varepsilon}) = e^\lambda \tr_{\widetilde{\mathcal{A}}_R}(\sigma_{\tilde b}) - 1 = e^\lambda - 1 \leq 0~.
\end{align}
Thus, $\lambda \leq 0$ for all candidate $\lambda$ in the allowed set. The optimal $\lambda$ will saturate this inequality, $\lambda = 0$, if and only if $\rho_b^\varepsilon \in \widetilde{\mathcal{A}}_R$.

Similarly, the limit of the conditional generalized max-entropy $H^\varepsilon_\mathrm{max,gen}(b| \tilde b)$ is simply
\begin{align}
    \lim_{G \to 0} H^\varepsilon_\mathrm{max,gen}(b| \tilde b) = 2\inf_{\rho_b^{\varepsilon} \in \mathcal{B}^{\varepsilon}(\rho_b)} \sup_{\sigma_{\tilde b}} \log \tr |\left(\rho_b^{\varepsilon}\right)^{1/2} \sigma_{\tilde b}^{1/2}|~,
\end{align}
where $|X| := \sqrt{X^\dagger X}$.
Von Neumann algebras always admit polar decompositions, so there exists a partial isometry $v \in \mathcal{A}_R$ such that $v \left(\rho_b^{\varepsilon}\right)^{1/2} \sigma_{\tilde b}^{1/2} = |\left(\rho_b^{\varepsilon}\right)^{1/2} \sigma_{\tilde b}^{1/2}|$. Hence
\begin{align}
    \lim_{G \to 0} H^\varepsilon_\mathrm{max,gen}(b| \tilde b) &= 2\inf_{\rho_b^{\varepsilon} \in \mathcal{B}^{\varepsilon}(\rho_b)} \sup_{\sigma_{\tilde b}} \log \tr[ v \left(\rho_b^{\varepsilon}\right)^{1/2} \sigma_{\tilde b}^{1/2} ]
    \\& \leq 2 \inf_{\rho_b^{\varepsilon} \in \mathcal{B}^{\varepsilon}(\rho_b)} \sup_{\sigma_{\tilde b}} \log \left(\tr[v^\dagger v \rho_b^{\varepsilon}] \tr[ \sigma_{\tilde b}]\right) \leq 0~.
\end{align}
In the second step we used the Cauchy-Schwarz inequality.

\section{Discussion}
\label{section:discussion}

One of the biggest lessons we have learned in the last decade of quantum gravity research is that you can get an awfully long way by taking theorems in classical general relativity and turning them into correct statements about semiclassical gravity simply by replacing areas with generalized entropies \cite{Wall:2010jtc,  Engelhardt:2014gca, Bousso:2015mna}.
On the other hand, the lesson of one-shot quantum Shannon theory is that von Neumann entropies should almost never feature in operational statements -- such as entanglement wedge reconstruction -- that involve only a single copy of a state. If they appear to do so, it is probably because you're only considering special classes of nice states where those von Neumann entropies are equal to the one-shot entropies that actually matter. 
Our goal in this paper was to synthesize both of these lessons into a consistent framework of holographic one-shot information theory.

We defined two regions, the max-EW and min-EW, associated to any boundary subregion $B$, that we conjectured to have operational interpretations valid for any semiclassical state. The max-EW is the largest region that can be state-specifically reconstructed with access just to $B$. The min-EW is the smallest region whose complement cannot influence the state on $B$. We also provided multiple pieces of evidence corroborating these conjectures, demonstrating self-consistency and reduction to known correct statements in certain cases. To do so, we conjectured new quantum focusing conjectures for max- and min-entropies and extended the frameworks of both one-shot quantum Shannon theory and state-specific reconstruction to finite-dimensional von Neumann algebras. 

\subsection*{Entanglement wedge reconstruction as quantum state merging}
A guiding principle of this work and the work of \cite{Akers:2020pmf} is that bulk reconstruction can be viewed through the operational lens of (one-shot) quantum state-merging. In \cite{Akers:2020pmf} this was argued in special cases. Here we have improved that argument, explaining how in any spacetime the QES prescription can be reformulated in terms of (traditional) quantum state merging through a Cauchy slice. In turn, this reformulation helped us propose one-shot versions of the QES prescription by replacing state merging with one-shot state merging, leading to our max-EW and min-EW.

While tensor network models \cite{Swingle:2009bg, Almheiri:2014lwa, Pastawski:2015qua, Hayden:2016cfa, Bao:2019fpq} oversimplify quantum gravity in many ways (as we shall discuss below), the success of (multiparty)  state merging in describing bulk reconstruction suggests something is deeply correct about them.
The holographic map seems to push information “outwards” toward the boundary by acting in a spatially local way on some time slice, similar to how tensors act locally in a tensor network. 

On a different note, one main advantage of phrasing entanglement wedge reconstruction operationally is to detach entanglement wedges from the restrictive context of AdS/CFT. In particular, the framework we have put forth leads to a nice picture for the flow of quantum information in general, dynamical spacetimes. It is thus natural to expect that our prescription can help to understand entanglement wedge reconstruction for general regions in spacetime, as was explored in \cite{Bousso:2022aa, Bousso:2023aa}.

\subsection*{The emergence of time}

A major open problem in holography is to give an information-theoretic interpretation of the emergence of dynamical (and generally covariant) bulk time;
that is, how bulk time fits into the story of bulk reconstruction.
Tensor networks have helped us understand the emergence of an extra bulk spatial dimension, but so far have not provided a satisfactory understanding of general covariance.

As a generally covariant information-theoretic property of holographic spacetimes, the QES prescription seemingly should provide hints towards the right answer to this question, in the same way that tensor network models  were inspired  by the earlier Ryu-Takayanagi formula \cite{Ryu:2006bv} which describes the classical limit of the QES prescription for time-reflection symmetric states.

However, so far no clear hint has appeared. In particular, the number of equivalent ways that the QES prescription can be formulated make it hard to know what the correct insight is supposed to be. Is the key point the local invariance of $S_\mathrm{gen}$ under small perturbations of the quantum extremal surface? Or is the natural operational explanation in terms of the ``maximin'' prescription, the global maximization of minimum-$S_\mathrm{gen}$ surfaces over all Cauchy slices \cite{Wall:2012uf, Akers:2019lzs}? Or perhaps even the maximization of $S_\mathrm{gen}$ within a timelike hypersurface \cite{Headrick:2022nbe}?

Because one-shot entropies only satisfy the chain-rule as an (approximate) inequality and not as an equality, there are far fewer equivalent definitions of the max- and min-EWs. In fact, we are not aware of any nontrivial ways of reformulating our covariant definitions of those wedges, or of any alternative proposals that could satisfy the required properties. We therefore expect that our proposal (Conjecture \ref{conj:max-EW_and_min-EW}) will significantly narrow the search for an information-theoretic meaning for dynamical bulk  time.

The first lesson of our proposal is that the state-merging process described in \cite{Akers:2020pmf} can happen through any Cauchy slice of $\textup{max-EW}(B)$; only one slice needs to satisfy the required properties for information to successfully flow to the boundary. This seems relatively intuitive even if we don't have a specific microscopic explanation for it. But we also learned that the edge of $\textup{max-EW}(B)$ needs to satisfy a anti-normality property to act as an origin for information flow. So both a global condition on a Cauchy slice of $\textup{max-EW}(B)$ and a local condition on the edge of $\textup{max-EW}(B)$ seem important. We don't have good intuition for why the latter condition is necessary from an information-theoretic point of view, but its existence seems key to understanding the emergence of time.

\subsection*{One-shot energy conditions}
 A great deal of progress has been made by taking information-theoretic constraints from quantum gravity and taking a $G \to 0$ limit to recover purely field-theoretic statements. 
 A prime example of this is discussed in \cite{Bousso:2015mna}, where the authors took the $G \to 0$ limit of the quantum focussing conjecture (QFC) and obtained the so-called quantum null energy condition (QNEC). This condition was later derived using purely field-theoretic techniques \cite{Ceyhan:2018aa, Balakrishnan:2017aa}, thus corroborating aspects of the quantum focusing conjecture itself. 

In principle, the same game could be played here with the one-shot QFCs proposed in Conjectures \ref{conj:os_qfc} and \ref{conj:osmin_qfc}. One could imagine taking the $G \to 0$ limits of the one-shot QFCs in the hopes of recovering interesting field theoretic inequalities. It is not obvious, however, exactly how to phrase these limits in terms of continuum field theoretic quantities, and naive attempts to do so suffer from various technical issues. We therefore leave the task of defining one-shot versions of the QNEC to future work.

The proof of the QNEC due to Ceyhan \& Faulkner \cite{Ceyhan:2018aa} was inspired by the so-called \emph{Ant Conjecture} of Wall \cite{Wall:2017aa}. We expect that a one-shot version of Wall's conjecture will concern the nature of fluctuations in null energy, whereas the ant conjecture as presented in \cite{Wall:2017aa} is about the mean null energy flowing past a point. Understanding this better may prove helpful in determining the correct statement of one-shot versions of the QNEC. Again, we defer a detailed analysis of these issues to future work.

\section*{Acknowledgements}
We thank Elba Alonso-Monsalve, Raphael Bousso, Netta Engelhardt, Thomas Faulkner, Daniel Harlow, Patrick Hayden, Arvin Shahbazi-Moghaddam, Renato Renner, Ronak Soni, Jon Sorce, Michael Walter, Jinzhao Wang, Edward Witten and Freek Witteveen for discussions. 
CA is supported by the Simons Foundation as an ``It from Qubit'' fellow, the Air Force Office of Scientific Research under the award number FA9550-19-1-0360, the US Department of Energy under grant DE-SC0012567, the John Templeton Foundation and the Gordon and Betty Moore Foundation via the Black Hole Initiative, and the National Science Foundation under grant no. PHY-2011905. 
AL was supported by the Massachusetts Institute of Technology, the Packard Foundation and the National Science Foundation through grant no. PHY-1911298. GP was supported by the University of California, Berkeley; by the Department of Energy through DE-SC0019380 and DE-FOA-0002563; by AFOSR award FA9550-22-1-0098; and by an IBM Einstein Fellowship at the Institute for Advanced Study.

\appendix

\section{Properties of the min- and max-entropies}\label{app:proofs}
In this appendix, we collect the proofs of properties of the conditional min- and max-entropies used in the main text. 
While these proofs mostly follow those in \cite{tomamichel2013framework}, we generalize them where necessary to (finite) non-factor algebras, based on the definitions given in Section \ref{sec:algebras}.

\subsection{Duality between min- and max-entropies}\label{app:duality}

Here we prove the first part of Theorem \ref{thm:duality}. Our discussion closely follows that in \cite{Muller-Lennert:2013aa}, generalized to the algebraic setting.
The theorem states that given a pure state $\ket{\psi}$ on a finite Hilbert space $\mathcal{H}$ and given the nested subalgebras $\mathcal{M}_{B} \subset \mathcal{M}_A \subset \mathcal{L}(\mathcal{H})$, with complementary traces on $\mathcal{M}_A$, $\mathcal{M}_{A'}$ and separately on $\mathcal{M}_B$, $\mathcal{M}_{B'}$, then
\begin{align}\label{eq:duality_app}
    H_{\mathrm{min}}(A|B)_{\psi} = - H_{\mathrm{max}}(B'|A')_{\psi}.
\end{align}
To prove it, we first rewrite the min- and max-entropies in terms of the so-called sandwiched Renyi divergences, defined as follows.
\begin{defn}\label{defn:sandwichedrenyi}
    Let $\rho_{A}, \sigma_A$ be density matrices on algebra $\mathcal{M}_A$ with trace $\tr_A$.
    The \emph{sandwiched quantum Renyi divergences} are 
    \begin{equation}\label{eqn:sandwichedrenyi}
    S_{\alpha}(\rho_A || \sigma_A) := 
     \begin{cases}
        \frac{1}{\alpha -1} \log \tr_A\left[\sigma_A^{\frac{1-\alpha}{2\alpha}} \rho_A \sigma_A^{\frac{1-\alpha}{2\alpha}}\right]^{\alpha} & \text{if } \mathrm{supp}\, \rho_A \subseteq \mathrm{supp}\, \sigma_A \\
        \infty & \text{else }.
    \end{cases}
    \end{equation}
\end{defn}
Using these sandwiched Renyi divergences we can define Renyi conditional entropies for every $\alpha$.
\begin{defn}\label{defn:HalphaSalpha}
Let $\mathcal{M}_A \supseteq \mathcal{M}_B$ be algebras on $\mathcal{H}$ with traces $\tr_A$ and $\tr_B$, and let $\ket{\psi} \in \mathcal{H}$ be a pure state.
Let $\rho_A$ be a density matrix on $\mathcal{M}_A$ for $\ket{\psi}$.
The \emph{conditional $\alpha$-entropy} is
\begin{equation}\label{eqn:HalphaSalpha}
H_{\alpha}(A|B)_\psi := \sup_{\sigma_B}-S_{\alpha}(\rho_A || \sigma_B)~,
\end{equation}
where the supremum is over density matrices $\sigma_B$ on $\mathcal{M}_B$ that are sub-normalized with respect to $\tr_B$, and
we regard $\sigma_B$ as an operator in $\mathcal{M}_A$ via the natural inclusion $\mathcal{M}_B \subseteq \mathcal{M}_A$. 
The trace used in the definition of the sandwiched Renyi entropy is $\tr_A$.
\end{defn}

\begin{defn}
    Let $\mathcal{M}$ be an algebra on $\mathcal{H}$ with trace $\tr$, let $m \in \mathcal{M}$ be positive semi-definite, and let $p \in (0,\infty)$.
    The \emph{Schatten $p$-norm} of $m$ is 
    \begin{equation}
        \lVert m \rVert_p := \left( \tr[ m^p ] \right)^{1/p}~.
    \end{equation}
\end{defn}
Note that when $p < 1$, $\lVert m \rVert_p$ is not technically a norm.
The Schatten $p$-norms satisfy a useful relationship:
\begin{lem}[Lemma 12 of \cite{Muller-Lennert:2013aa}]\label{lem:pq_bound}
    Let $\mathcal{M}$ be an algebra on $\mathcal{H}$ with trace $\tr$.
    Let $p,q \in \mathds{R}\setminus \{0,1\}$ satisfy $\frac{1}{p} + \frac{1}{q} = 1$.
    Then for any positive semi-definite $m \in \mathcal{M}$, 
    \begin{equation}
        \lVert m \rVert_p = \sup_{\substack{z \ge 0 \\ \tr z \le 1}} \tr[m z^{\frac{1}{q}}]~\text{if $p > 1$}~,~~~~\text{and}~~~~ \lVert m \rVert_p = \inf_{\substack{z \ge 0 \\ \tr z \le 1 \\ \mathrm{supp} \, z \supseteq \mathrm{supp}\, m}} \tr[m z^{\frac{1}{q}}]~\text{if $p < 1$}~.
    \end{equation}
    
\end{lem}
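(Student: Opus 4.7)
The plan is to prove this as a variational form of H\"older duality, adapted to the algebraic trace on $\mathcal{M}$, with the optimizer being $z^{*} \propto m^{p}$ in both regimes. The key ingredients are H\"older's inequality $\tr[ab] \leq \lVert a \rVert_p \lVert b \rVert_q$ for conjugate exponents $p, q \geq 1$ and its reverse form for $p < 1$. To establish these in the algebraic setting, I would first invoke the structure theorem (Theorem \ref{thm:classification}) and Remark \ref{rem:tracesectors} to write $\mathcal{M} = \bigoplus_{\alpha} \mathcal{L}(\mathcal{H}_{A_{\alpha}}) \otimes \mathds{1}_{A'_{\alpha}}$ and $\tr[\,\cdot\,] = \sum_{\alpha} C_{\alpha} \dim(A'_{\alpha})\, \mathrm{Tr}_{A_{\alpha}}[\,\cdot\,]$ on the diagonal blocks. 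The Schatten norm factorizes as $\lVert m \rVert_p^{p} = \sum_{\alpha} C_{\alpha} \dim(A'_{\alpha})\, \mathrm{Tr}_{A_{\alpha}}[m_{\alpha}^{p}]$, so H\"older and its reverse reduce to the classical finite-dimensional inequalities applied block by block, followed by a discrete H\"older step in the sum over $\alpha$; the factor coefficients $C_{\alpha} \dim(A'_{\alpha})$ combine correctly thanks to $1/p + 1/q = 1$.

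For $p > 1$ (and hence $q > 1$), H\"older gives $\tr[m z^{1/q}] \leq \lVert m \rVert_p \lVert z^{1/q} \rVert_q = \lVert m \rVert_p (\tr z)^{1/q} \leq \lVert m \rVert_p$, establishing the supremum bound. The choice $z^{*} = m^{p}/\lVert m \rVert_p^{p}$ is positive semi-definite, lies in $\mathcal{M}$, and satisfies $\tr z^{*} = 1$; using the conjugate-exponent identity $1 + p/q = p$, one computes $\tr[m (z^{*})^{1/q}] = \lVert m \rVert_p^{-p/q}\, \tr[m^{p}] = \lVert m \rVert_p^{p(1 - 1/q)} = \lVert m \rVert_p$, so the supremum is attained.

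For $p < 1$ with $p \neq 0$, the reverse H\"older inequality gives $\tr[m z^{1/q}] \geq \lVert m \rVert_p (\tr z)^{1/q} \geq \lVert m \rVert_p$, where the second inequality uses $(\tr z)^{1/q} \geq 1$ whenever $0 < \tr z \leq 1$ and $q < 0$ (the case $0 < p < 1$), or the analogue with $0 < q < 1$ (the case $p < 0$). The hypothesis $\mathrm{supp}\, z \supseteq \mathrm{supp}\, m$ is required both to make $z^{1/q}$ well-defined where $m$ is nonzero and to keep the functional finite; otherwise one can drive it to $+\infty$ (for $q < 0$) or cheat the bound (for $q > 0$). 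The same $z^{*} = m^{p}/\lVert m \rVert_p^{p}$ realizes the infimum by the identical calculation.

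The hard part will be cleanly handling the $p < 1$ regime: carefully stating and invoking the reverse H\"older inequality in the non-factor algebraic setting, verifying that the stated support condition is exactly what is required (both for well-definedness of $z^{1/q}$ and for the optimum to be attained), and treating the $0 < p < 1$ and $p < 0$ sub-cases uniformly. Apart from this, the argument is essentially the standard $L^{p}$-$L^{q}$ duality, sectorized via the structure theorem.
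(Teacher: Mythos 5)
Your $p>1$ argument is essentially the paper's: both reduce the supremum to H\"older's inequality for the algebraic trace (the paper cites a version of H\"older valid for any faithful, normal, semifinite trace rather than sectorizing explicitly, but that is a cosmetic difference), and your explicit optimizer $z^{*}=m^{p}/\lVert m\rVert_p^{p}$ with the exponent bookkeeping $1+p/q=p$ is correct and attains the bound. (Minor bookkeeping: by Remark \ref{rem:tracesectors} the sectorized trace is $\tr[\,\cdot\,]=\sum_\alpha C_\alpha \mathrm{Tr}_{A_\alpha}[\,\cdot\,]$ without the extra $\dim(A'_\alpha)$ factor you wrote, but this does not affect the argument since any positive block weights work.)

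The $p<1$ half has a genuine gap. The ``reverse H\"older inequality'' $\tr[m\,z^{1/q}]\ge \lVert m\rVert_p\,\lVert z^{1/q}\rVert_q$ for \emph{non-commuting} positive operators is not something you can simply invoke: together with the attainment at $z^{*}$, it is logically equivalent to the variational formula you are trying to prove, so citing it without proof leaves the essential step unestablished. You flag this as ``the hard part'' but give no indication of how to supply it, and the sectorization via the structure theorem does not help here --- within each block $\mathcal{L}(\mathcal{H}_{A_\alpha})$ the operators $m_\alpha$ and $z_\alpha$ still need not commute, which is exactly where the difficulty lives. The paper avoids this circularity by arguing directly: one first observes that the infimum may be restricted to $z$ commuting with $m$ (a rearrangement fact for $\mathrm{Tr}[AB]$ with positive $A,B$) and to $\tr z=1$ (rescaling lowers the objective since $q<0$), and then solves the resulting constrained scalar optimization by Lagrange multipliers, finding $m z_*^{1/q-1}=q\mu_*\mathds{1}$ and hence the optimal value $q\mu_*=\tr(m^p)^{1/p}$. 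If you want to keep your route, you must either prove the non-commutative reverse H\"older inequality (e.g.\ by the same reduction to commuting operators plus the scalar reverse H\"older) or adopt the paper's direct optimization. Note also that for the application in the paper only $0<p<1$ (so $q<0$) is needed; your discussion of the $p<0$ sub-case is harmless but unnecessary.
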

\begin{proof}[Proof sketch]
    For $p> 1$, this statement follows directly from the duality statement on $p$-norms:
    \begin{align}
        ||m||_p = \sup_{||x||_q \leq 1} |\tr[m x]|~.
    \end{align}
    This duality statement follows in turn directly from Holder's inequality on the Schatten $p$-norms. Holder's inequality holds if the trace used to define the $p$-norm on the algebra is faithful, normal, and semi-finite, which ours is by assumption. For a proof of Holder's inequality that only uses these assumptions see \cite{Ruskai_1972}.\footnote{We thank Jon Sorce for pointing us to this reference.}

    For $p<1$, $|| \cdot ||_p$ is not a norm and so we cannot use Holder's inequality. Instead, we prove the statement following \cite{Muller-Lennert:2013aa}. One can solve the optimization problem 
    \begin{align}
        \inf_{\substack{z \ge 0 \\ \tr z \le 1 \\ \mathrm{supp} \, z \supseteq \mathrm{supp}\, m}} \tr[m z^{\frac{1}{q}}]
    \end{align}
    via Lagrange multipliers. Note that without loss of generality we can take $z$ to commute with $m$ by basic theorems in matrix analysis. Furthermore, we can take $z$ to be trace one, $\tr z =1$. Otherwise, we could re-scale $z$ by its trace and get a lower value for $\tr[mz^{1/q}]$ since $q<0$.
    Therefore, we can write a Lagrangian like 
    \begin{align}
        L = \tr[m z^{1/q}] - \mu (\tr[z] -1)~,
    \end{align}
    with $\mu$ the Lagrange multiplier. Solving the equations for each component of $z$, we find the optimum $(z_*,\mu_*)$ satisfy
    \begin{align}\label{eq:optimizing_z_mu}
        m z_*^{1/q-1} = q\mu_* \mathds{1}~.
    \end{align}
    Remembering $\tr z_* = 1$, the trace of this equation tells us $q \mu_* = \tr(m^p)^{1/p}$.
    Moreover, \eqref{eq:optimizing_z_mu} gives an optimum value of $L_* = q \mu_*$, which we recall equals $\tr[m z_*^{\frac{1}{q}}]$.
    Therefore $\tr(m^p)^{1/p} = q \mu_* = \tr(m z_*^{1/q})$, completing the argument.
\end{proof}

Following \cite{Muller-Lennert:2013aa}, we now prove the following statement, which is stronger than \eqref{eq:duality_app}.
\begin{thm}[Adapted from Theorem 10 of \cite{Muller-Lennert:2013aa}]\label{thm:alphaduality}
Let $\mathcal{M}_B \subseteq \mathcal{M}_A \subseteq \mathcal{L}(\mathcal{H})$ be algebras on a Hilbert space $\mathcal{H}$ and denote their complements by $\mathcal{M}_{A'} := \mathcal{M}_A'$ and $\mathcal{M}_{B'} := \mathcal{M}_B'$. Let $\ket{\psi}$ be a pure state in $\mathcal{H}$ and let $\alpha, \beta \in (\frac{1}{2},1)\cup(1,\infty)$ be related by $\frac{1}{\alpha} + \frac{1}{\beta} = 2$. Then for traces on $A$ and $B$ which are complementary, as in definition \ref{def:comptrace}, to those on $A'$ and $B'$ respectively, it holds
\begin{align}
H_{\alpha}(A|B)_\psi = -H_{\beta}(B'|A')_\psi~.
\end{align}

\end{thm}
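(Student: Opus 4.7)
The plan is to adapt the proof of Theorem 10 of \cite{Muller-Lennert:2013aa} -- originally written for a tensor product Hilbert space -- to the algebraic setting of Section \ref{sec:algebras} with complementary traces in place of ordinary partial traces. The strategy breaks into three stages: (i) rewrite each side as a variational problem involving a Schatten $p$-norm, (ii) dualize that norm using Lemma \ref{lem:pq_bound} to turn the optimization into a linear expression in $\rho_A$ (respectively $\rho_{B'}$), and then (iii) transport the resulting expression between the nested pair $\mathcal{M}_B \subseteq \mathcal{M}_A$ and the dually-nested pair $\mathcal{M}_{A'} \subseteq \mathcal{M}_{B'}$ via the pure state $\ket\psi$.

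For stage (i), using cyclicity of $\tr_A$ together with the key numerical identity $(1-\alpha)/\alpha = -(1-\beta)/\beta$ implied by $1/\alpha + 1/\beta = 2$, Definition \ref{defn:HalphaSalpha} becomes
\begin{equation*}
H_\alpha(A|B)_\psi \;=\; \frac{\alpha}{1-\alpha}\,\log\inf_{\sigma_B}\,\big\|\rho_A^{1/2}\,\sigma_B^{(1-\alpha)/\alpha}\,\rho_A^{1/2}\big\|_\alpha,
\end{equation*}
with an entirely parallel expression for $H_\beta(B'|A')_\psi$ in terms of $\rho_{B'}$ and $\sigma_{A'}$, where each infimum runs over subnormalized density matrices with respect to the appropriate complementary trace. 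For stage (ii), I would then apply Lemma \ref{lem:pq_bound} to the inner Schatten norm; for $\alpha > 1$ (the case $\alpha < 1$ is handled symmetrically with $\sup \leftrightarrow \inf$), this yields
\begin{equation*}
\big\|\rho_A^{1/2}\,\sigma_B^{(1-\alpha)/\alpha}\,\rho_A^{1/2}\big\|_\alpha \;=\; \sup_{\substack{z_A\ge 0\\ \tr_A z_A \le 1}}\,\tr_A\!\left[\rho_A^{1/2}\,\sigma_B^{(1-\alpha)/\alpha}\,\rho_A^{1/2}\,z_A^{(\alpha-1)/\alpha}\right],
\end{equation*}
recasting $H_\alpha(A|B)_\psi$ as a nested optimization over pairs $(\sigma_B, z_A)$ whose integrand depends linearly on $\rho_A$.

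Stage (iii) is the heart of the argument and the anticipated main obstacle. I would establish a bridge lemma: for any positive $m \in \mathcal{M}_A$ there exists a positive $\widetilde m \in \mathcal{M}_{A'}$ obeying $\tr_A[\rho_A\,m]=\tr_{A'}[\rho_{A'}\,\widetilde m]$, and more generally $\|\rho_A^{1/2}m\rho_A^{1/2}\|_r=\|\rho_{A'}^{1/2}\widetilde m\rho_{A'}^{1/2}\|_r$ for all $r>0$; the correspondence $m\mapsto\widetilde m$ must restrict to a bijection between subnormalized density matrices on $\mathcal{M}_B$ and on $\mathcal{M}_{B'}$. In the tensor-product case the bridge is simply the transpose arising from a Schmidt decomposition. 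In our algebraic setting I would build it sector-by-sector using Theorem \ref{thm:classification}: within each central sector $\alpha$ the pure vector $p_\alpha\ket\psi/\sqrt{q_\alpha}\in\mathcal{H}_{A_\alpha}\otimes\mathcal{H}_{A'_\alpha}$ admits a Schmidt decomposition, and the complementary-trace condition $C^A_\alpha=C^{A'}_\alpha$ (Definition \ref{def:comptrace}) is precisely what is needed to match the normalizations of $\tr_A$ and $\tr_{A'}$ on the two sides so that the norm identity holds without extra factors. The delicate point is that the bridge must send $\mathcal{M}_B \subseteq \mathcal{M}_A$ into $\mathcal{M}_{B'} \supseteq \mathcal{M}_{A'}$ in a way compatible with the complementary traces on both pairs simultaneously; I expect this to force a careful bookkeeping of sectors and of the center of $\mathcal{M}_B$ versus that of $\mathcal{M}_A$.

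Applying the bridge to both $\sigma_B$ and $z_A$ in the variational formula above converts it term-by-term into the corresponding variational expression for $-H_\beta(B'|A')_\psi$, which completes the proof of Theorem \ref{thm:alphaduality}. The unsmoothed half of Theorem \ref{thm:duality} then follows by taking $\alpha\to\infty$ (hence $\beta\to 1/2$) and using the standard identifications $H_\infty=H_\mathrm{min}$ and $H_{1/2}=H_\mathrm{max}$; the smoothed version follows by observing that the bridge map preserves the generalized fidelity and trace, and therefore the purified distance of Definition \ref{def:purifieddistance}, so the $\varepsilon$-balls match up under the duality.
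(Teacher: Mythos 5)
Your proposal follows essentially the same route as the paper's proof in Appendix \ref{app:duality}: rewrite $H_\alpha$ via the sandwiched divergence as a Schatten-norm variational problem, dualize the norm with Lemma \ref{lem:pq_bound} to get a nested $\inf_{\sigma_B}\sup_{\tau_A}$ (or $\sup\inf$) of an expression linear in $\rho_A$, and then transport operators to the commutant sector-by-sector via the Schmidt decomposition of $p_\alpha\ket{\psi}$, using the complementary-trace condition $C^A_\alpha = C^{A'}_\alpha$ to show that $\tr_A[\tau_A]=\tr_{A'}[\tau_{A'}]$ so that the constraint sets match. Your ``bridge lemma'' is exactly the paper's transpose map $\tau_A\mapsto\tau_{A'}$ defined by $\tau_A\ket{\mathds{1}}=\tau_{A'}\ket{\mathds{1}}$; the paper even remarks that one could transpose $\sigma_B$ as well (your plan), but saves that step by noting the expression $\braket{\psi|\sigma_B^{(1-\alpha)/\alpha}\tau_{A'}^{(\alpha-1)/\alpha}|\psi}$ is already symmetric under the relabeling $(A,B,\alpha)\to(B',A',\beta)$.

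One genuine omission: after the bridge you obtain $\inf_{B}\sup_{A'}$ on one side and $\sup_{A'}\inf_{B'}$ on the other, so the two expressions match only after interchanging the order of optimization. The paper handles this with von Neumann's minimax theorem, which applies because the integrand is concave in one argument and convex in the other over the convex sets of subnormalized density matrices; your write-up never addresses this swap, and ``applying the bridge term-by-term'' does not by itself reorder the $\inf$ and $\sup$. You should add that step explicitly. (Your closing remarks on smoothing also understate the work needed there --- the optimal $\rho^\varepsilon$ need not arise as a reduction of a pure state on $\mathcal{H}$, so the paper's Theorem \ref{thm:smoothedduality} requires Uhlmann's theorem and trace-preserving isometric embeddings, not just preservation of the purified distance --- but that is outside the statement at hand.)
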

\begin{proof}
    Assuming $\mathrm{supp} \,\sigma_A \supseteq \mathrm{supp}\, \rho_A$, it holds that
    \begin{align}\label{eqn:Salphanorm}
    S_{\alpha}(\rho_A ||\sigma_A) = \frac{\alpha}{\alpha-1} \log  \left\lVert \sigma_A^{\frac{1-\alpha}{2\alpha}} \rho_A \sigma_A^{\frac{1-\alpha}{2\alpha}}\right \rVert_{\alpha} = \frac{\alpha}{\alpha-1}\log \left \lVert \rho_A^{1/2}\sigma_A^{\frac{1-\alpha}{\alpha}}\rho_A^{1/2} \right \rVert_{\alpha}~,
    \end{align}
    where the second equality uses the cyclicity of the trace. 
    Applying lemma \ref{lem:pq_bound}, we have 
    \begin{equation}\label{eqn:Salpha}
    S_{\alpha}(\rho_A || \sigma_A) = 
     \begin{cases}
        \frac{\alpha}{\alpha -1}\log \sup_{\tau_A} \tr_A \left[ \rho_A^{1/2} \sigma_A^{\frac{1-\alpha}{\alpha}} \rho_A^{1/2} \tau_A^{\frac{\alpha - 1}{\alpha}}\right]& \text{if } \alpha > 1 \\
        \frac{\alpha}{\alpha -1}\log \inf_{\tau_A} \tr_A \left[ \rho_A^{1/2} \sigma_A^{\frac{1-\alpha}{\alpha}} \rho_A^{1/2} \tau_A^{\frac{\alpha - 1}{\alpha}}\right] & \text{if } \alpha <1~, 
    \end{cases}
    \end{equation}
    where we define $\tr_A \left[ \rho_A^{1/2} \sigma_A^{\frac{1-\alpha}{\alpha}} \rho_A^{1/2} \tau_A^{\frac{\alpha - 1}{\alpha}}\right] = +\infty$ if $\alpha <1$ and $\mathrm{supp}\, \rho_A \nsubseteq \mathrm{supp}\, \tau_A$.
    It follows that
        \begin{equation} \label{eqn:conditionalalphaent}
    H_{\alpha}(A|B) = 
     \begin{cases}
        \frac{\alpha}{1-\alpha} \log \inf_{\sigma_B} \sup_{\tau_A} \tr_A \left[ \rho_A^{1/2} \sigma_B^{\frac{1-\alpha}{\alpha}} \rho_A^{1/2} \tau_A^{\frac{\alpha - 1}{\alpha}}\right]& \text{if } \alpha > 1 \\
        \frac{\alpha}{1-\alpha} \log \sup_{\sigma_B}\inf_{\tau_A} \tr_A \left[ \rho_A^{1/2} \sigma_B^{\frac{1-\alpha}{\alpha}} \rho_A^{1/2} \tau_A^{\frac{\alpha - 1}{\alpha}}\right] & \text{if } \alpha <1~.
    \end{cases}
    \end{equation}
    Theorem \ref{thm:classification} allows us to decompose $\mathcal{H} = \oplus_\alpha (\mathcal{H}_{A_\alpha} \otimes \mathcal{H}_{A'_\alpha})$, and write the purification of $\rho_A$ in terms of its Schmidt decomposition as
\begin{align}\label{eqn:schmidtdecomp}
\ket{\psi} = \sum_{\alpha} \sum_i r_i^{\alpha}\ket{\alpha ; i}_{A_\alpha} \ket{\alpha; i}_{A'_\alpha},
\end{align}
From this we get a simple representation of $\rho_A$ of the form \eqref{eq:general_dm} that is diagonal within each $\alpha$-block. 
Using this representation, it is straightforward to find that 
\begin{equation}
    \tr_A \left[ \rho_A^{1/2} \sigma_B^{\frac{1-\alpha}{\alpha}} \rho_A^{1/2} \tau_A^{\frac{\alpha - 1}{\alpha}}\right] = \bra{\psi}\sigma_B^{\frac{1-\alpha}{\alpha}} \tau_{A'}^{\frac{\alpha -1}{\alpha}}\ket{\psi}
\end{equation}
    where $\tau_{A'}$ is defined as the transpose of $\tau_A$ with respect to the Schmidt basis in equation \eqref{eqn:schmidtdecomp} and so obeys $\rho_A^{1/2} \tau_A^n \rho_A^{-1/2}\ket{\psi} = \tau_{A'}^n \ket{\psi}$, as one can easily check. More explicitly, we can define the un-normalized pure state derived from $\ket{\psi}$
    \begin{align}
    \ket{\mathds{1}} = \sum_{\alpha} \sum_{i} \ket{\alpha;i}_{A_\alpha} \ket{\alpha;i}_{A'_\alpha}~.
    \end{align}
    Then the operator $\tau_{A'}$ obeys the equation 
    \begin{align}
        \tau_A \ket{\mathds{1}} = \tau_{A'} \ket{\mathds{1}}~.
    \end{align}

    By remarks \ref{rem:trace} and \ref{rem:tracesectors}, we can write $\tr_A$ and $\tr_{A'}$ in terms of the canonical trace and central operators
    \begin{align}
    C^A = \sum_{\alpha} \frac{p_{\alpha}}{\mathrm{dim}\, A_{\alpha}} \tr_A[p_{\alpha}]~,\ \ C^{A'} = \sum_{\alpha} \frac{p_{\alpha}}{\mathrm{dim}\, A'_{\alpha}} \tr_{A'}[p_{\alpha}]~.
    \end{align}
    In the optimization over $\tau_A$ in \eqref{eqn:conditionalalphaent}, it suffices to optimize only over $\tau_A$ which have the same support as $\ket{\psi}$, and similarly for $\sigma_B$. Therefore
    \begin{align}
        \tr_A[\tau_A] = \braket{\mathds{1}|C^A\tau_A|\mathds{1}} =  \braket{\mathds{1}|C^{A'} (C^{A'})^{-1} C^A\tau_{A'}|\mathds{1}} = \tr_{A'} \left[(C^{A'})^{-1} C^A \tau_{A'}\right] = \tr_{A'}[\tau_{A'}] ~,
    \end{align}
    where in the last equality we used $C^A = C^{A'}$ because the traces are complementary.

This then allows us to write 
\begin{equation}\label{eq:Halpha_half_form}
    H_{\alpha}(A|B)_\psi = 
     \begin{cases}
        \frac{\alpha}{1-\alpha} \log \inf_{\sigma_B} \sup_{\tau_{A'}} \bra{\psi} \sigma_B^{\frac{1-\alpha}{\alpha}} \tau_{A'}^{\frac{\alpha - 1}{\alpha}}\ket{\psi}& \text{if } \alpha > 1 \\
        \frac{\alpha}{1-\alpha} \log \sup_{\sigma_B }\inf_{\tau_{A'} } \bra{\psi} \sigma_B^{\frac{1-\alpha}{\alpha}} \tau_{A'}^{\frac{\alpha - 1}{\alpha}}\ket{\psi} & \text{if } \alpha <1~,
    \end{cases}
    \end{equation}
    where the $\sup$ and $\inf$ are over $\tau_{A'}$ such that $\tr_{A'}[\tau_{A'}] \le 1$.
    For $\alpha < 1$, this is concave in $\sigma_B$ and convex in $\tau_{A'}$.
    When $\alpha>1$, the reverse is true: it is convex (concave) in $\sigma_B$ ($\tau_{A'}$). For such a function which is concave-convex in its two arguments, von Neumann's minimax theorem allows us to swap the $\inf$ and the $\sup$. 

    Now, we could proceed by using similar manipulations to replace $\sigma_B$ with $\sigma_{B'}$.
    However, we are already done.
    Take \eqref{eq:Halpha_half_form} and plug in $A \to B'$, $B \to A'$, and use that for $\alpha, \beta$ with $\frac{1}{\alpha} + \frac{1}{\beta} = 2$ it holds that $\frac{\alpha}{\alpha -1} = -\frac{\beta}{\beta -1}$.
    Up to a sign this gives an identical expression on the right hand side, proving
    \begin{align}
        H_{\alpha}(A|B)_\psi = - H_{\beta}(B'|A')_\psi~.
    \end{align}
\end{proof}

Finally, we relate these conditional $\alpha$-entropies to the min- and max-entropies used in the main text. 
\begin{prop}[theorem 5 of \cite{Muller-Lennert:2013aa}]\label{prop:HinfHhalf}
From definition \ref{defn:conditional_entropies}, it follows that
\begin{align}
H_{\infty}(A|B) &= H_{\mathrm{min}}(A|B)~, \label{eq:H_infty} \\
H_{1/2}(A|B) &= H_{\mathrm{max}}(A|B)~. \label{eq:H_half}
\end{align}
\end{prop}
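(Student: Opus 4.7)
The plan is to derive both identities by direct calculation, substituting the appropriate value of $\alpha$ into the formula \eqref{eqn:Salphanorm} for the sandwiched Rényi divergence and then verifying that the resulting optimization matches Definition \ref{defn:conditional_entropies}. No deep machinery is needed beyond the basic norm identity \eqref{eqn:Salphanorm}, which is already in hand at this point in the appendix.

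For $H_{1/2}(A|B) = H_\mathrm{max}(A|B)$, the substitution is immediate. Setting $\alpha = 1/2$ in Definition \ref{defn:sandwichedrenyi} gives $(1-\alpha)/(2\alpha) = 1/2$ and an outer exponent of $1/2$, so
\begin{equation*}
    S_{1/2}(\rho_A \| \sigma_B) \;=\; -2 \log \tr_A \sqrt{\sigma_B^{1/2}\rho_A \sigma_B^{1/2}}.
\end{equation*}
Taking the supremum of $-S_{1/2}$ over subnormalized $\sigma_B \in \mathcal{M}_B$ and rewriting $-2\log x = \log x^2$ reproduces \eqref{eqn:Hmaxcon} exactly. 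There are no convergence issues since $\sigma_B^{1/2}$ is always well-defined for positive $\sigma_B$.

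For $H_\infty(A|B) = H_\mathrm{min}(A|B)$, I would take the $\alpha \to \infty$ limit of \eqref{eqn:Salphanorm}. In finite dimensions, the exponent $(1-\alpha)/(2\alpha)$ converges continuously to $-1/2$, the prefactor $\alpha/(\alpha-1)$ tends to $1$, and the Schatten $\alpha$-norm converges to the operator norm $\|\cdot\|_\infty$. Restricting the supremum in $H_\infty$ to $\sigma_B$ whose support contains that of $\rho_A$ is harmless (otherwise $S_\infty = +\infty$ and the term does not contribute), so we may interpret $\sigma_B^{-1/2}$ as the pseudoinverse on this support and obtain $S_\infty(\rho_A\|\sigma_B) = \log \|\sigma_B^{-1/2}\rho_A\sigma_B^{-1/2}\|_\infty$.

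The final step is to identify this operator norm with $\inf\{\lambda : \rho_A \le e^\lambda \sigma_B\}$. This is the standard variational characterization: $\|X\|_\infty$ is the smallest $\mu \ge 0$ with $X \le \mu \, \mathds{1}$ on its support, and conjugating by $\sigma_B^{1/2}$ (which preserves operator inequalities) turns $\sigma_B^{-1/2}\rho_A\sigma_B^{-1/2} \le \mu \, \mathds{1}$ into $\rho_A \le \mu \sigma_B$ as operators in $\mathcal{M}_A$. Taking the supremum of $-S_\infty$ over $\sigma_B$ then reproduces \eqref{eqn:Hmincon}. The main thing to check carefully is the legitimacy of restricting to $\sigma_B$ with support containing that of $\rho_A$ and the pseudoinverse manipulation, but in finite dimensions both are routine, so I do not anticipate a substantive obstacle.
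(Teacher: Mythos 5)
Your proposal is correct and follows essentially the same route as the paper: the $\alpha=1/2$ case is the same direct substitution into \eqref{eqn:sandwichedrenyi}, and for $\alpha=\infty$ you arrive at the operator norm $\lVert\sigma_B^{-1/2}\rho_A\sigma_B^{-1/2}\rVert_\infty$ and its variational characterization as $\inf\{\lambda:\rho_A\le e^\lambda\sigma_B\}$, exactly as the paper does (it merely phrases the operator norm as a supremum over states $\tau_A$ via Lemma \ref{lem:pq_bound} rather than as the $\alpha\to\infty$ limit of the Schatten norm). The support/pseudoinverse caveats you flag are handled at the same level of rigor as in the paper's argument.
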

\begin{proof}
    Equation \eqref{eq:H_half} for $H_{\mathrm{max}}(A|B)$ follows directly from equations \eqref{eqn:sandwichedrenyi}, \eqref{eqn:HalphaSalpha} and \eqref{eqn:Hmaxcon}. To prove \eqref{eq:H_infty}, we first note that by equation \eqref{eqn:conditionalalphaent} and the manipulations in \eqref{eqn:Salphanorm}, we have that
    \begin{equation}\label{eqn:Hinftyextremization}
        H_{\infty}(A|B) = - \log \inf_{\sigma_B} \sup_{\tau_A} \tr_A \left( \rho_A^{1/2} \sigma_B^{-1} \rho_A^{1/2} \tau_A\right) = -\log \inf_{\sigma_B} \sup_{\tau_A} \tr_A \left( \sigma_B^{-1/2} \rho_A \sigma_B^{-1/2} \tau_A\right)~.
    \end{equation}
    The supremum over $\tau_A$ is achieved by the $\tau_A$ projecting onto the largest eigenvalue of $\sigma_B^{-1/2} \rho_A \sigma_B^{-1/2}$.
    This log of the maximum eigenvalue can alternatively be written as
    \begin{align}
        -\inf_{\sigma_B} \inf \lbrace\lambda: \sigma_B^{-1/2} \rho_A \sigma_B^{-1/2} \leq e^{\lambda}\rbrace = H_{\infty}(A|B)~.
    \end{align}

\end{proof}

\begin{rem}
    With equations \eqref{eq:H_infty} and \eqref{eq:H_half}, we can take the $\alpha \to \infty$ limit of Theorem \ref{thm:alphaduality} to get 
\begin{align}
    H_{\mathrm{min}}(A|B) = - H_{\mathrm{max}}(B'|A')~.
\end{align}
\end{rem}

\subsection{Smoothed duality}
We now extend the results of the previous subsection to the \emph{smoothed} one-shot entropies, proving
\begin{align}
    H^\varepsilon_{\mathrm{min}}(A|B)_{\psi} = - H^\varepsilon_{\mathrm{max}}(B'|A')_{\psi}~.
\end{align}

\begin{defn}\label{eqn:smoothedconditionalent}
The $\varepsilon$-ball around $\rho$ is
    \begin{align}
    \mathcal{B}^{\varepsilon}(\rho_A) := \lbrace \rho^\varepsilon_A: P(\ket{\rho},\ket{\rho^\varepsilon})_A < \varepsilon \rbrace~,
\end{align}
with $P(\ket{\rho},\ket{\rho^\varepsilon})_A:=P(\rho_A, \rho^\varepsilon_A)$ and $P(\rho_A, \rho^\varepsilon_A)$ from Definition \ref{def:purifieddistance}.
\end{defn}

\begin{rem}\label{rem:smoothedminmaxdef}
    As before, the smoothed min- and max-entropies are related to limits of the smoothed $\alpha$-entropies as 
    \begin{align}
        H_{\mathrm{min}}^{\varepsilon}(A|B)_\psi &= \max_{{\rho^\varepsilon} \in \mathcal{B}^{\varepsilon}(\rho)}  H_{\infty}(A|B)_{\rho^\varepsilon} \\
        H_{\mathrm{max}}^{\varepsilon}(A|B)_{\psi} &= \min_{\rho^\varepsilon \in \mathcal{B}^{\varepsilon}(\rho)} H_{1/2}(A|B)_{\rho^\varepsilon}~.
    \end{align}
\end{rem}

\begin{rem}
One can show that $P(\ket{\Psi}, \ket{\Omega})_A$ defined above gives a good metric on the space of states and in particular obeys the triangle inequality. Furthermore, this metric is monotonic under inclusion so that if we have two nested algebras $\mathcal{M}_A \supset \mathcal{M}_B$, then
\begin{align}
P(\ket{\Psi}, \ket{\Omega})_A \geq P(\ket{\Psi}, \ket{\Omega})_B~.
\end{align}
We will need these properties below.
\end{rem}

In what follows, we will use isometries to map algebras into larger algebras. Of course, when these algebras are non-factors, there is an ambiguity in each choice of trace. It will be important to define a special class of isometries which preserve the trace.

\begin{defn}[Isometry between algebras]\label{defn:isometry}
    Let the algebra $\mathcal{M}$ act both on $\mathcal{H}_1$ and on $\mathcal{H}_2$ and let the commutant algebras of $\mathcal{M}$ on those Hilbert space be $\mathcal{M}_1'$ and $\mathcal{M}_2'$ respectively.
    We say that an isometry $W: \mathcal{H}_1 \to \mathcal{H}_2$ maps $\mathcal{M}_1'$ into $\mathcal{M}_2'$ if
    \begin{align} \label{eq:isometrycommutant}
    [W, \mathcal{M}] = 0.
    \end{align}
\end{defn}
\begin{lem}\label{lem:isometry}
An isometry $W: \mathcal{H}_1 \to \mathcal{H}_2$ mapping $\mathcal{M}_1'$ into $\mathcal{M}_2'$ satisfies the following properties:
    \begin{equation*}
    \begin{split}
        &1.~~ W^\dagger W = \mathds{1}~,\\
        &2.~~ \Pi := W W^\dagger \in \mathcal{M}_2'~, \\
        &3.~~ \Pi \mathcal{M}_2' \Pi = W \mathcal{M}_1' W^\dagger~.
    \end{split}
    \end{equation*}
\end{lem}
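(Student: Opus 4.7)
The plan is to derive all three properties from the single commutation relation $[W,\mathcal{M}]=0$ together with the fact that $\mathcal{M}$ is a $*$-algebra. Property 1 is immediate: an isometry by definition preserves inner products on $\mathcal{H}_1$, which is exactly the statement $W^\dagger W = \mathds{1}$.

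For Property 2, the key preliminary step is to note that taking the adjoint of $Wm = mW$ for $m \in \mathcal{M}$ (with $m$ acting on $\mathcal{H}_1$ on one side and $\mathcal{H}_2$ on the other) yields $m^\dagger W^\dagger = W^\dagger m^\dagger$. Since $\mathcal{M}$ is closed under adjoint, the relation $nW^\dagger = W^\dagger n$ holds for all $n \in \mathcal{M}$. Combining $Wm = mW$ and $mW^\dagger = W^\dagger m$ gives immediately that $m\Pi = mWW^\dagger = WW^\dagger m = \Pi m$, so $\Pi \in \mathcal{M}'_2$.

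For Property 3, I would show the two inclusions separately, using the intertwining relations from Property 2 and the identity $\Pi W = W$ (which follows from $W^\dagger W = \mathds{1}$). For $\supseteq$: given $m' \in \mathcal{M}'_1$, the element $W m' W^\dagger$ commutes with every $m \in \mathcal{M}$ acting on $\mathcal{H}_2$ by a short computation sliding $m$ through $W$, then through $m'$ (using $m' \in \mathcal{M}'_1$), then back through $W^\dagger$; and it is fixed by compression with $\Pi$ on both sides thanks to $\Pi W = W$ and $W^\dagger \Pi = W^\dagger$. For $\subseteq$: given $m' \in \mathcal{M}'_2$, write $\Pi m' \Pi = W(W^\dagger m' W) W^\dagger$ and verify that $W^\dagger m' W$ commutes with every $n \in \mathcal{M}$ acting on $\mathcal{H}_1$ by the same sliding argument in reverse.

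There is no real obstacle in this proof; everything is routine manipulation with intertwiners. The only subtlety worth flagging explicitly in the writeup is the interpretation of $[W,\mathcal{M}]=0$, since $W$ maps between two Hilbert spaces and the same abstract element of $\mathcal{M}$ acts as a distinct operator on each. Once this is made precise, the $*$-closure of $\mathcal{M}$ immediately promotes the intertwining relation to one involving $W^\dagger$, and the three properties follow without further input.
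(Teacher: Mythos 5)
Your proposal is correct and follows essentially the same route as the paper's proof: Property 1 from the definition, Property 2 by conjugating the intertwining relation and using $\mathcal{M}^\dagger = \mathcal{M}$, and Property 3 by the two inclusions $W\mathcal{M}_1'W^\dagger \subseteq \Pi\mathcal{M}_2'\Pi$ and $\Pi\mathcal{M}_2'\Pi = W(W^\dagger\mathcal{M}_2'W)W^\dagger \subseteq W\mathcal{M}_1'W^\dagger$ via the same sliding arguments. No gaps.
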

\begin{proof}
    Property 1 is the definition of an isometry. Property 2 follows directly from \eqref{eq:isometrycommutant} and its conjugate since $\mathcal{M}^\dagger = \mathcal{M}$. To see Property 3, note that $[W \mathcal{M}_1' W^\dagger, \mathcal{M}] = 0$ and $\Pi W \mathcal{M}_1' W^\dagger \Pi = W \mathcal{M}_1' W^\dagger$. Hence $W \mathcal{M}_1' W^\dagger \subseteq \Pi \mathcal{M}_2' \Pi$. Similarly, $[W^\dagger \mathcal{M}_2' W, \mathcal{M}] = 0$ and hence $\Pi \mathcal{M}_2' \Pi \subseteq W \mathcal{M}_1' W^\dagger$.

\end{proof}

\begin{defn}[Trace-preserving isometry]
We say that an isometry $W: \mathcal{H} \to \widetilde{\mathcal{H}}$ mapping $\mathcal{M}_A$ into $\mathcal{M}_{\widetilde{A}}$ is trace-preserving with respect to $\mathcal{M}_A$ if for all $m \in \mathcal{M}_A$,
\begin{align}
    &\tr_{\widetilde{A}} \left[W m W^{\dagger}\right] = \tr_{A} \left[ m \right]~.
\end{align}
\end{defn}
\begin{rem}
A sufficient condition for an isometry to be trace-preserving is if for every minimal central projector $p_{\alpha}^A \in Z(\mathcal{M}_A)$ there exists a minimal central projector $p_{\alpha}^{\widetilde{A}}$ such that
\begin{align}
    V^{\dagger}p_{\alpha}^{\widetilde{A}} V &= p_{\alpha}^A~,\\
    \tr_{A}(p_{\alpha}) &= \tr_{\widetilde{A}}\left( \Pi p_{\alpha}^{\widetilde{A}} \Pi\right)~,
\end{align}
where $\Pi = VV^{\dagger}$. Note these conditions imply
\begin{align}
   V (\mathcal{H}_{A_\alpha} \otimes \mathcal{H}_{A'_\alpha}) = Vp_{\alpha}^A \mathcal{H} = \Pi p_{\alpha}^{\widetilde{A}} \Pi \widetilde{\mathcal{H}} \subseteq p_{\alpha}^{\widetilde{A}}\widetilde{\mathcal{H}}=: \widetilde{\mathcal{H}}_{\widetilde{A}_\alpha}\otimes \mathcal{H}_{A'_\alpha}~,
\end{align}
with the action of $V$ commuting with operators on $\mathcal{H}_{A'_\alpha}$. In the last equality we used the fact that $\mathcal{M}_A' \cong \mathcal{M}_{\tilde A}'$ to identify $\mathcal{H}_{A'_\alpha}$ with $\mathcal{H}_{\tilde A'_\alpha}$. In other words, within each $\alpha$-sector, $V$ embeds $\mathcal{H}_{A_\alpha}$ isometrically into $\widetilde{\mathcal{H}}_{\widetilde{A}_\alpha}$. In what follows, we will often need to construct embeddings with these properties.
\end{rem}

\begin{rem} \label{rem:V0}
    An important example of a trace-preserving isometry is the map $V_0: \mathcal{H} \to \mathcal{H} \otimes \mathcal{H}_R$ defined by
    \begin{align}
        V_0 \ket{\psi} = \ket{\psi} \ket{0}
    \end{align}
    for some fixed state $\ket{0} \in \mathcal{H}_R$. This maps any algebra $\mathcal{M}_A$ acting on $\mathcal{H}$ into $\mathcal{M}_{\widetilde A} := \mathcal{M}_A \otimes \mathcal{L}(\mathcal{H}_R)$ and is trace-preserving for $\tr_{\widetilde A} = \tr_A \otimes \Tr_R$.
\end{rem}

\begin{lem}[Adapted from Proposition 5.3 of \cite{tomamichel2013framework}]\label{lem:5.3}
Let $V: \mathcal{H} \to \widetilde{\mathcal{H}}$ be a trace-preserving isometry mapping the algebra $\mathcal{M}_A$ into the algebra $\mathcal{M}_{\widetilde{A}}$.
 Let $\mathcal{M}_B \subseteq \mathcal{M}_A$ and $\mathcal{M}_{\widetilde B} \subseteq \mathcal{M}_{\widetilde A}$ be subalgebras such that $V^\dagger \mathcal{M}_{\widetilde B} V \subseteq \mathcal{M}_B$ with $\tr_B[V^\dagger O_{\widetilde B} V] = \tr_{\widetilde B}[O_{\widetilde B}]$ for all $O_{\widetilde B} \in \mathcal{M}_{\widetilde B}$. Finally let $\mathcal{T}: \mathcal{M}_B \to \mathcal{M}_{\widetilde B}$ be a trace-preserving completely positive superoperator such that $V O_B = \mathcal{T}(O_B) V$ for all $O_B \in \mathcal{M}_B$.
The smoothed conditional  min- and max-entropies are invariant under $V$:
\begin{align}
    H^{\varepsilon}_{\mathrm{min}}(\widetilde{A} | \widetilde{B})_{\widetilde{\rho}} &= H^{\varepsilon}_{\mathrm{min}}(A|B)_{\rho}~, \\
    H^{\varepsilon}_{\mathrm{max}}(\widetilde{A} | \widetilde{B})_{\widetilde{\rho}} &= H^{\varepsilon}_{\mathrm{max}}(A|B)_{\rho}~,
\end{align}
where $\rho_A \in \mathcal{M}_A$ and $\widetilde{\rho}_{\widetilde{A}} := V \rho_A V^{\dagger}\in \mathcal{M}_{\widetilde{A}}$ are density matrices.
\end{lem}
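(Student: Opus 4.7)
My plan is to first establish the unsmoothed statement $H_{\min}(\widetilde A|\widetilde B)_{\widetilde\rho} = H_{\min}(A|B)_\rho$ (and likewise for $H_{\max}$), and then bootstrap to the smoothed case using invariance of purified distance under $V$. For the min-entropy, I would use the SDP characterization $H_{\min}(A|B) = -\inf_{\sigma_B}\inf\{\lambda:\rho_A \le e^\lambda \sigma_B\}$ and set up a bijection between feasible $\sigma_B$ on the small side and feasible $\widetilde\sigma_{\widetilde B}$ on the big side. In one direction, given $\sigma_B \in \mathcal{M}_B$ I take $\widetilde\sigma_{\widetilde B} := \mathcal{T}(\sigma_B)$, which lies in $\mathcal{M}_{\widetilde B}$ and is subnormalized because $\mathcal{T}$ is trace-preserving. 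The intertwining $VO_B = \mathcal{T}(O_B)V$ implies $V\sigma_B V^\dagger = \Pi\mathcal{T}(\sigma_B)\Pi = \mathcal{T}(\sigma_B)\Pi = \Pi\mathcal{T}(\sigma_B)$, where $\Pi = VV^\dagger$. Since $\widetilde\rho_{\widetilde A}$ is supported on $\Pi\widetilde{\mathcal{H}}$, the operator inequality $\widetilde\rho_{\widetilde A}\le e^\lambda \mathcal{T}(\sigma_B)$ is equivalent to its restriction to $\Pi\widetilde{\mathcal{H}}$, which is just $V\rho_A V^\dagger \le e^\lambda V\sigma_B V^\dagger$, i.e. $\rho_A \le e^\lambda \sigma_B$. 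In the other direction, given $\widetilde\sigma_{\widetilde B}\in\mathcal{M}_{\widetilde B}$ I take $\sigma_B := V^\dagger \widetilde\sigma_{\widetilde B} V$, which lies in $\mathcal{M}_B$ by hypothesis and is subnormalized by the trace-compatibility assumption $\tr_B[V^\dagger O_{\widetilde B} V] = \tr_{\widetilde B}[O_{\widetilde B}]$; conjugating $\widetilde\rho_{\widetilde A}\le e^\lambda \widetilde\sigma_{\widetilde B}$ by $V^\dagger,V$ gives $\rho_A\le e^\lambda\sigma_B$.

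For the max-entropy I would use the fidelity form $H_{\max}(A|B) = \sup_{\sigma_B} 2\log F_*(\rho_A,\sigma_B)$ (equivalently the expression in Definition \ref{defn:conditional_entropies}) and show that the fidelity functional is preserved under $\sigma_B \leftrightarrow \mathcal{T}(\sigma_B)$. Concretely, using the same identity $V\sigma_B V^\dagger = \Pi\mathcal{T}(\sigma_B)\Pi$ together with $\widetilde\rho_{\widetilde A} = \Pi\widetilde\rho_{\widetilde A}\Pi$, the operator $\mathcal{T}(\sigma_B)^{1/2}\widetilde\rho_{\widetilde A}\mathcal{T}(\sigma_B)^{1/2}$ splits into a block on $\Pi\widetilde{\mathcal{H}}$ isometric (via $V$) to $\sigma_B^{1/2}\rho_A\sigma_B^{1/2}$ and a block on $(1-\Pi)\widetilde{\mathcal{H}}$ which vanishes because $\widetilde\rho_{\widetilde A}$ is supported on $\Pi\widetilde{\mathcal{H}}$. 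Trace-preservation of $V$ then gives $\tr_{\widetilde A}\sqrt{\mathcal{T}(\sigma_B)^{1/2}\widetilde\rho_{\widetilde A}\mathcal{T}(\sigma_B)^{1/2}} = \tr_A\sqrt{\sigma_B^{1/2}\rho_A\sigma_B^{1/2}}$, proving equality of the unsmoothed $H_{\max}$.

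To handle smoothing, I would show the $\varepsilon$-balls match under the isometry. If $\rho_A^\varepsilon\in\mathcal{B}^\varepsilon(\rho_A)$, then $V\rho_A^\varepsilon V^\dagger\in\mathcal{M}_{\widetilde A}$ is subnormalized and purified distance is preserved (the fidelity and trace terms are both invariant under $V\cdot V^\dagger$), so $V\rho_A^\varepsilon V^\dagger\in\mathcal{B}^\varepsilon(\widetilde\rho_{\widetilde A})$. Combined with step~1, this yields $H_{\min}^\varepsilon(\widetilde A|\widetilde B) \ge H_{\min}^\varepsilon(A|B)$ and $H_{\max}^\varepsilon(\widetilde A|\widetilde B) \le H_{\max}^\varepsilon(A|B)$. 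The reverse inequalities are the main obstacle because a generic $\widetilde\rho^\varepsilon\in\mathcal{B}^\varepsilon(\widetilde\rho_{\widetilde A})$ need not be of the form $V\rho^\varepsilon V^\dagger$. My plan to close this gap is to replace $\widetilde\rho^\varepsilon$ by the pinching $\Pi\widetilde\rho^\varepsilon\Pi + (1-\Pi)\widetilde\rho^\varepsilon(1-\Pi)$ and then discard the $(1-\Pi)$ block. This pinching lies in $\mathcal{M}_{\widetilde A}$ since $\Pi\in\mathcal{M}_{\widetilde A}'\cap\mathcal{M}_{\widetilde A}$ whenever $\Pi$ commutes with $\mathcal{M}_{\widetilde A}$ (which follows from Lemma \ref{lem:isometry} together with the fact that $V$ intertwines $\mathcal{M}_A$ into $\mathcal{M}_{\widetilde A}$), does not increase purified distance to $\widetilde\rho_{\widetilde A}$ (which is $\Pi$-supported) by monotonicity under the pinching channel, and cannot decrease $H_{\min}$ or increase $H_{\max}$ since it amounts to discarding off-diagonal terms orthogonal to the support of $\widetilde\rho_{\widetilde A}$. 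The resulting state equals $V\rho^\varepsilon V^\dagger$ for $\rho^\varepsilon := V^\dagger\widetilde\rho^\varepsilon V\in\mathcal{P}_\le(\mathcal{M}_A)$, closing the argument. The subtle technical point I expect to have to verify carefully is that $\Pi$ genuinely lies in the center of $\mathcal{M}_{\widetilde A}$ for the pinching to land back in the algebra; this uses the defining property that $V$ maps $\mathcal{M}_A'$ into $\mathcal{M}_{\widetilde A}'$.
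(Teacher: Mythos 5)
Your proposal follows essentially the same route as the paper's proof: the unsmoothed equality is obtained by mapping feasible $\sigma_B \mapsto \mathcal{T}(\sigma_B)$ in one direction and $\sigma_{\widetilde B}\mapsto V^\dagger\sigma_{\widetilde B}V$ in the other (using $V\sigma_B V^\dagger = \mathcal{T}(\sigma_B)\Pi$ and positivity of the discarded block), and the smoothed case is handled by showing the optimizer over $\mathcal{B}^\varepsilon(V\rho_A V^\dagger)$ may be taken of the form $\Pi\,\widetilde\omega\,\Pi = V(V^\dagger\widetilde\omega V)V^\dagger$, with monotonicity of the purified distance under the projection. Your explicit block computation for the unsmoothed $H_{\max}$ is a nice addition where the paper only says ``analogous.''

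One correction to the point you flag as the subtle step: $\Pi = VV^\dagger$ is \emph{not} central in $\mathcal{M}_{\widetilde A}$ in general. For the basic example $V_0\ket{\psi}=\ket{\psi}\ket{0}_R$ with $\mathcal{M}_{\widetilde A}=\mathcal{M}_A\otimes\mathcal{L}(\mathcal{H}_R)$, one has $\Pi = \mathds{1}\otimes\ket{0}\!\bra{0}_R$, which fails to commute with operators on $R$. Fortunately centrality is not needed: Lemma \ref{lem:isometry} already gives $\Pi\in\mathcal{M}_{\widetilde A}$, which is enough for $\Pi\widetilde\rho^\varepsilon\Pi$ to lie in $\mathcal{M}_{\widetilde A}$, and property 3 of that lemma ($\Pi\mathcal{M}_{\widetilde A}\Pi = V\mathcal{M}_A V^\dagger$) is what guarantees the projected state has the form $V\rho^\varepsilon V^\dagger$ with $\rho^\varepsilon\in\mathcal{M}_A$. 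With that repair your argument matches the paper's; the remaining soft spot — justifying that the projection does not increase $H_{\max}$ of the smoothing optimizer — is asserted equally briefly in the paper itself, so you are in good company there.
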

\begin{proof}
    We first prove this for $\varepsilon = 0$. By definition, for $\lambda = H_{\mathrm{min}}(A|B)_{\rho}$ there exists a $\sigma_B$ such that
    \begin{align}
        \rho_A \leq e^{-\lambda}\sigma_B~,
    \end{align}
    and hence 
    \begin{align}
        V\rho_AV^{\dagger} \leq e^{-\lambda} V \sigma_B V^{\dagger} = e^{-\lambda} \mathcal{T}(\sigma_B) \Pi = e^{-\lambda} \mathcal{T}(\sigma_B)^{1/2} \Pi \mathcal{T}(\sigma_B)^{1/2} \leq  e^{-\lambda} \mathcal{T}(\sigma_B)~.
    \end{align}
    By assumption, $V \rho_A V^\dagger$ and $\mathcal{T}(\sigma_B)$ are normalized density matrices on $\mathcal{M}_{\widetilde{A}}$ and $\mathcal{M}_{\widetilde{B}}$ respectively. Hence
    \begin{align}
    H_{\mathrm{min}}(A|B)_{\rho} \leq H_{\mathrm{min}}(\widetilde{A}|\widetilde{B})_{\widetilde{\rho}}~.
    \end{align}
    Conversely, let $\tilde{\lambda} =  H_{\mathrm{min}}(\widetilde{A}|\widetilde{B})_{\widetilde{\rho}}$.
    There exists a sub-normalized $\sigma_{\widetilde{B}}$ such that $\widetilde{\rho}_{\widetilde{A}} \leq e^{-\widetilde{\lambda}}\sigma_{\widetilde{B}}$.
    Conjugating by $V^\dagger$, 
    \begin{align}
    V^\dagger \widetilde{\rho}_{\widetilde{A}} V = \rho_{A} \leq e^{-\widetilde{\lambda}} V^\dagger \sigma_{\widetilde{B}} V~. 
    \end{align}
    By assumption, $\sigma_B = V^\dagger \sigma_{\widetilde{B}} V$ is a sub-normalized density matrix on $\mathcal{M}_B$.
    Hence $H_{\mathrm{min}}(A|B)_{\rho} \geq H_{\mathrm{min}}(\tilde{A}|\tilde{B})_{\tilde{\rho}}$. 
    The proof for the max-entropy works analogously.

    To prove the statement for $\varepsilon>0$, we will need the fact that to optimize the min- or max-entropies in the target algebra $\M_{\widetilde{A}}$, it is enough to consider density matrices in $\Pi \M_{\widetilde{A}} \Pi = V \M_A V^{\dagger}$, within $\mathcal{B}^{\varepsilon}(V \rho_A V^{\dagger})$.
    To see this, first note that 
    \begin{align}
        \max_{\Pi \widetilde{\omega} \Pi \in \mathcal{B}^{\varepsilon}(V\rho V^{\dagger})} H_{\mathrm{min}}(\widetilde{A}|\widetilde{B})_{\Pi \widetilde{\omega} \Pi} \leq H_{\mathrm{min}}^{\varepsilon}(\widetilde{A}|\widetilde{B})_{V\rho V^{\dagger}}~,
    \end{align}
    because the restriction to $\Pi \M_{\widetilde{A}} \Pi$ can only decrease the maximum. 
    Conversely, note that for the density matrix $\widetilde{\rho}_* \in \M_{\widetilde{A}}$ such that $H_{\mathrm{min}}^{\varepsilon}(\widetilde{A}|\widetilde{B})_{V\rho V^{\dagger}} = H_{\mathrm{min}}(\widetilde{A}| \widetilde{B})_{\widetilde{\rho}_*} $, it \emph{decreases} the min-entropy if $\widetilde{\rho}_*$ has support in a subspace orthogonal to $\Pi$.
    Indeed, $H_{\mathrm{min}}(\widetilde{A}|\widetilde{B})_{\widetilde{\rho}_*} \leq H_{\mathrm{min}}(\widetilde{A}|\widetilde{B})_{\Pi \widetilde{\rho}_* \Pi}$, which follows from $\Pi \widetilde{\rho}_* \Pi \le \widetilde{\rho}_*$. 
    Furthermore, by monotonicity of the purified distance under projections we know that $\Pi \widetilde{\rho}_* \Pi \in \mathcal{B}^{\varepsilon}(\Pi V \rho_A V^{\dagger} \Pi)$, and moreover we know that $\Pi V = V$.
    Therefore
    \begin{align}
        \max_{\Pi \widetilde{\omega} \Pi \in \mathcal{B}^{\varepsilon}(V\rho V^{\dagger})} H_{\mathrm{min}}(\widetilde{A}|\widetilde{B})_{\Pi \widetilde{\omega} \Pi} \geq H_{\mathrm{min}}^{\varepsilon}(\widetilde{A}|\widetilde{B})_{V\rho V^{\dagger}}~.
    \end{align}
    In particular, $H_{\mathrm{min}}^{\varepsilon}(\widetilde{A}|\widetilde{B})_{V\rho V^{\dagger}} = H_{\mathrm{min}}(\widetilde{A}|\widetilde{B})_{\Pi \widetilde{\rho}_* \Pi}$.
    The analogous statement also holds for the max-entropy.

    Now let $\rho_* \in \M_A$ be such that $H_{\mathrm{min}}^{\varepsilon}(A|B)_{\rho} = H_{\mathrm{min}}(A|B)_{\rho_*}$ and $\Pi \widetilde{\rho}_* \Pi$ be defined as above. Then since the isometry is trace preserving, we have that both $V \rho_* V^{\dagger} \in \mathcal{B}^{\varepsilon}(V\rho_A V^{\dagger})$ and $V^{\dagger} \widetilde{\rho}_* V \in \mathcal{B}^{\varepsilon}(\rho_A)$. Therefore,
    \begin{align}
        &H_{\mathrm{min}}^{\varepsilon}(A|B)_{\rho} = H_{\mathrm{min}}(A|B)_{\rho_*} = H_{\mathrm{min}}(\widetilde{A}|\widetilde{B})_{V\rho_*V^{\dagger}} \leq H^{\varepsilon}_{\mathrm{min}}(\widetilde{A}|\widetilde{B})_{V\rho V^{\dagger}} \\
        & = H_{\mathrm{min}}(\widetilde{A}|\widetilde{B})_{\Pi \widetilde{\rho}_* \Pi} = H_{\mathrm{min}}(A|B)_{V^{\dagger}\widetilde{\rho}_* V} \leq H_{\mathrm{min}}^{\varepsilon}(A|B)_{\rho}~.
    \end{align}
    The proof for the max-entropy works analogously.
\end{proof}

\begin{rem} \label{rem:V0cases}
    Lemma \ref{lem:5.3} is very general. We will be primarily interested in two special cases, both related to the isometry $V_0$ from Remark \ref{rem:V0}. In both cases, we have $\mathcal{M}_{\widetilde A} := \mathcal{M}_A \otimes \mathcal{L}(\mathcal{H}_R)$. In the first case we have $\mathcal{M}_{\widetilde B} := \mathcal{M}_B$ and $\mathcal{T}$ is the identity channel, $\mathcal{T}(\rho_B) = \rho_B \otimes \mathds{1}_R$.
    In the second case we have $\mathcal{M}_{\widetilde B} := \mathcal{M}_B \otimes \mathcal{L}(\mathcal{H}_R)$ and $\mathcal{T}(\rho_B) = \rho_B \otimes \ket{0}\!\!\bra{0}$.
\end{rem}

\begin{lem}[Uhlmann's theorem]
    Let $\rho$ and $\sigma$ be positive operators.
    For any purification $\ket{\phi}$ of $\rho$,
    \begin{equation}
        F(\rho,\sigma) = \max_{\ket{\psi}} \left| \braket{\phi|\psi} \right|~,
    \end{equation}
    where the maximum is taken over all purifications $\ket{\psi}$ of $\sigma$.
\end{lem}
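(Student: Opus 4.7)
The plan is to leverage the standard fact that any two purifications of the same density matrix, in a sufficiently large purifying system, differ only by an isometry acting on the purifying factor, and then reduce the resulting optimization to a trace-norm calculation via the polar decomposition. First I would fix a reference purification $\ket{\psi_0} \in \mathcal{H} \otimes \mathcal{H}_R$ of $\sigma$, taking $\mathcal{H}_R$ large enough to also accommodate the given $\ket{\phi}$ (if necessary, embed $\ket{\phi}$ into $\mathcal{H} \otimes \mathcal{H}_R$ by appending a fixed ancilla state via the isometry $V_0$ from Remark \ref{rem:V0}). Then I would invoke the structural fact that every other purification $\ket{\psi}$ of $\sigma$ in $\mathcal{H} \otimes \mathcal{H}_R$ has the form $\ket{\psi} = (\mathds{1} \otimes W)\ket{\psi_0}$ for some isometry $W$ on $\mathcal{H}_R$, so that the maximization over purifications reduces to a maximization over $W$.

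Next I would pick convenient representations of both purifications in ``canonical form''. Introduce the (unnormalized) vector $\ket{\Omega} = \sum_i \ket{i} \ket{i}$ in $\mathcal{H} \otimes \mathcal{H}_R$ (with respect to some fixed orthonormal basis), and write
\begin{equation}
    \ket{\phi} = (X \otimes \mathds{1})\ket{\Omega}, \qquad \ket{\psi_0} = (\sqrt{\sigma} \otimes \mathds{1})\ket{\Omega},
\end{equation}
where $X$ is some operator with $X X^\dagger = \rho$ (existence of such an $X$ follows from the Schmidt decomposition of $\ket{\phi}$). Using the ``transpose trick'' $(\mathds{1} \otimes W)\ket{\Omega} = (W^T \otimes \mathds{1})\ket{\Omega}$, a short computation gives
\begin{equation}
    \braket{\phi | (\mathds{1} \otimes W) | \psi_0} = \Tr\bigl[ X^\dagger \sqrt{\sigma}\, W^T \bigr].
\end{equation}

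Finally I would maximize $\bigl| \Tr[X^\dagger \sqrt{\sigma}\, W^T ] \bigr|$ over isometries $W$. Applying the polar decomposition $X^\dagger \sqrt{\sigma} = U\,|X^\dagger \sqrt{\sigma}|$, the bound
\begin{equation}
    \bigl|\Tr[X^\dagger \sqrt{\sigma}\, W^T]\bigr| \le \bigl\lVert X^\dagger \sqrt{\sigma} \bigr\rVert_1 = \bigl\lVert \sqrt{\rho}\sqrt{\sigma} \bigr\rVert_1 = F(\rho,\sigma)
\end{equation}
follows from a standard duality between the trace norm and the operator norm, and is saturated by choosing $W^T = U^\dagger$ on the support of $|X^\dagger \sqrt{\sigma}|$ and extending to an isometry on the rest of $\mathcal{H}_R$. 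The identity $\lVert X^\dagger \sqrt{\sigma}\rVert_1 = \lVert \sqrt{\rho}\sqrt{\sigma}\rVert_1$ uses $X X^\dagger = \rho$ together with the fact that $\lVert M \rVert_1 = \lVert |M| \rVert_1$ depends on $M$ only through $M M^\dagger$.

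The main obstacle, which is really just a bookkeeping issue rather than a conceptual one, is keeping the dimensions of the various purifying systems consistent, and confirming that the ``isometry between purifications'' result holds when the purifying systems have different (possibly infinite) dimensions. This is handled by embedding all purifications into a common, sufficiently large purifying system via the trace-preserving isometry $V_0$ of Remark \ref{rem:V0}; the maximization is then unambiguously over isometries $W$ on this common $\mathcal{H}_R$, and the bound above is tight.
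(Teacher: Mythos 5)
Your argument is the standard finite-dimensional proof of Uhlmann's theorem and, as far as it goes, it is correct: representing the purifications as $(X\otimes\mathds{1})\ket{\Omega}$ and $(\sqrt{\sigma}\otimes\mathds{1})\ket{\Omega}$, using the transpose trick to reduce the overlap to $\Tr[X^\dagger\sqrt{\sigma}\,W^T]$, and maximizing over isometries via trace-norm duality and the polar decomposition are all sound, as is the tightness argument. One small slip: the identity $\lVert X^\dagger\sqrt{\sigma}\rVert_1=\lVert\sqrt{\rho}\sqrt{\sigma}\rVert_1$ lands through $M^\dagger M=\sqrt{\sigma}XX^\dagger\sqrt{\sigma}=\sqrt{\sigma}\rho\sqrt{\sigma}$, not through $MM^\dagger=X^\dagger\sigma X$ as you wrote; since $\lVert M\rVert_1$ is determined by the singular values this is harmless, but the appeal to $XX^\dagger=\rho$ only connects via $M^\dagger M$. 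Note that the paper does not prove this lemma at all --- it simply cites \cite{Lashkari:2018aa} for a proof ``that applies to general algebras'' --- so you are supplying an argument where the paper supplies a citation.

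That citation is also where the one substantive gap in your proposal lies. In this appendix the fidelity $F(\rho,\sigma)=\lVert\sqrt{\rho}\sqrt{\sigma}\rVert_1$ is defined (Definition \ref{def:purifieddistance}) with respect to an arbitrary trace $\tr$ on a finite-dimensional von Neumann algebra $\mathcal{M}$ that may have a non-trivial center, and a ``purification of $\rho$'' is a vector whose density matrix on $\mathcal{M}$ with respect to that trace is $\rho$. Your construction presupposes $\mathcal{M}=\mathcal{L}(\mathcal{H})$ acting on $\mathcal{H}\otimes\mathcal{H}_R$ with the canonical trace, which is exactly the case the paper's reference is invoked to go beyond, and the lemma is used in the algebraic setting (e.g.\ in Theorem \ref{thm:smoothedduality}). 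The gap is closable with machinery already in the paper: decompose $\mathcal{H}=\oplus_\alpha(\mathcal{H}_{A_\alpha}\otimes\mathcal{H}_{A'_\alpha})$ via Theorem \ref{thm:classification}, write $\rho=\oplus_\alpha C_\alpha^{-1}q_\alpha\rho_{A_\alpha}$ and $\sigma=\oplus_\alpha C_\alpha^{-1}r_\alpha\sigma_{A_\alpha}$, and observe that the central coefficients cancel,
\begin{equation}
    \tr\sqrt{\sqrt{\sigma}\,\rho\,\sqrt{\sigma}}=\sum_\alpha\sqrt{q_\alpha r_\alpha}\;\Tr_{A_\alpha}\sqrt{\sqrt{\sigma_{A_\alpha}}\,\rho_{A_\alpha}\sqrt{\sigma_{A_\alpha}}}~,
\end{equation}
while $\braket{\phi|\psi}=\sum_\alpha\braket{\phi_\alpha|\psi_\alpha}$ splits over sectors with freely adjustable relative phases (central phases leave $\sigma$ invariant), so your factor argument applies sector by sector and the two sides match. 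You should either add this reduction explicitly or state that your proof covers only the factor case.
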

For a proof of Uhlmann's theorem that applies to general algebras, see \cite{Lashkari:2018aa}.

\begin{thm}[Adapted from theorem 5.4 of \cite{tomamichel2013framework}]\label{thm:smoothedduality}
    Let $\ket{\psi} \in \mathcal{H}$ be a pure state.  Assuming that the traces on the algebras $\M_A, \M_B \subset \mathcal{L}(\mathcal{H})$ are complementary to those on $\M_A'$, $\M_B'$ respectively, then the smoothed conditional min- and max-entropies obey the duality statement
    \begin{align}
        H_{\mathrm{min}}^{\varepsilon}(A|B)_{\psi} = -H_{\mathrm{max}}^{\varepsilon}(B'|A')_\psi~.
    \end{align}
\end{thm}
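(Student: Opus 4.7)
The plan is to reduce the smoothed duality to the unsmoothed duality (Theorem \ref{thm:duality}, proved as the $\alpha \to \infty$ case of Theorem \ref{thm:alphaduality} together with Proposition \ref{prop:HinfHhalf}) by using Uhlmann's theorem to lift a smoothing on the $A$-side to a perturbation of the global pure state, and then to transport it back down to a smoothing on the $B'$-side.

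First, I would enlarge the Hilbert space if necessary to ensure that every subnormalized density matrix $\rho^\varepsilon_A \in \mathcal{P}_{\le}(\mathcal{M}_A)$ admits a pure purification in the ambient space. This is arranged by Lemma \ref{lem:5.3} applied to the trace-preserving isometry $V_0: \mathcal{H} \to \mathcal{H} \otimes \mathcal{H}_R$ of Remark \ref{rem:V0} (in either incarnation of Remark \ref{rem:V0cases}), which leaves all smoothed conditional min- and max-entropies invariant on both sides. Subnormalized states are handled via the standard flag extension that realizes them as normalized states on a slightly larger system, so that the generalized fidelity $F_*$ reduces to the ordinary fidelity on this enlarged system for purification purposes.

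Second, I would establish the inequality $H^\varepsilon_{\mathrm{min}}(A|B)_\psi \leq -H^\varepsilon_{\mathrm{max}}(B'|A')_\psi$. Fix $\rho^\varepsilon_A \in \mathcal{B}^\varepsilon(\rho_A)$. Uhlmann's theorem produces a pure purification $\ket{\psi^\varepsilon}$ of $\rho^\varepsilon_A$ with
\begin{equation*}
F_*(\psi^\varepsilon, \psi) = F_*(\rho^\varepsilon_A, \rho_A),
\end{equation*}
so that $P(\psi^\varepsilon, \psi) = P(\rho^\varepsilon_A, \rho_A) \leq \varepsilon$. Letting $\rho^\varepsilon_{B'}$ denote the density matrix of $\ket{\psi^\varepsilon}$ on $\mathcal{M}_{B'}$, monotonicity of the purified distance under restriction to a subalgebra gives $P(\rho^\varepsilon_{B'}, \rho_{B'}) \leq P(\psi^\varepsilon, \psi) \leq \varepsilon$. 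Unsmoothed duality applied to the pure state $\ket{\psi^\varepsilon}$ then yields
\begin{equation*}
H_{\mathrm{min}}(A|B)_{\rho^\varepsilon_A} = H_{\mathrm{min}}(A|B)_{\psi^\varepsilon} = -H_{\mathrm{max}}(B'|A')_{\psi^\varepsilon} = -H_{\mathrm{max}}(B'|A')_{\rho^\varepsilon_{B'}} \leq -H^\varepsilon_{\mathrm{max}}(B'|A')_\psi,
\end{equation*}
and taking the supremum over $\rho^\varepsilon_A$ establishes the inequality.

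The reverse inequality follows by symmetry: running the same argument with $(A, B)$ and $(B', A')$ swapped shows that every $\sigma^\varepsilon_{B'} \in \mathcal{B}^\varepsilon(\rho_{B'})$ is matched by some $\rho^\varepsilon_A \in \mathcal{B}^\varepsilon(\rho_A)$ obtained as the $A$-restriction of a common Uhlmann purification, giving equality overall. The main technical point I expect to require care is verifying that Uhlmann's theorem applies cleanly in the algebraic setting with nontrivial centers, since here the purifying isometry must be compatible with both pairs of complementary traces on $(\mathcal{M}_A, \mathcal{M}_{A'})$ and $(\mathcal{M}_B, \mathcal{M}_{B'})$; complementarity plays the same role it does in Theorem \ref{thm:alphaduality}, guaranteeing that a single Hilbert-space purification correctly computes reduced density matrices on both sides. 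A secondary bookkeeping task is the flag extension handling subnormalized density matrices so that Uhlmann's theorem applies on $\mathcal{P}_{\le}(\mathcal{M}_A)$ and the generalized fidelity behaves as the standard fidelity on the extended system.
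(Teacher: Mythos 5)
Your proposal is correct and follows essentially the same route as the paper's proof: embed via the trace-preserving isometry of Lemma \ref{lem:5.3} so that purifications of the smoothed state exist, use Uhlmann's theorem to lift the $\varepsilon$-ball on $\mathcal{M}_A$ to a perturbation of the global pure state, use monotonicity of the purified distance under restriction to place the induced state on $\mathcal{M}_{B'}$ inside $\mathcal{B}^\varepsilon(\rho_{B'})$, apply the unsmoothed duality of Theorem \ref{thm:alphaduality}, and obtain the reverse inequality by the symmetric argument. The technical caveats you flag (complementarity of traces on the extended algebras and the handling of subnormalized states) are exactly the points the paper addresses via Remarks \ref{rem:V0} and \ref{rem:V0cases}.
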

\begin{proof}
    We would like to write
    \begin{equation}
        H^\varepsilon_\mathrm{min}(A|B)_\psi = H_\mathrm{min}(A|B)_{\rho^\varepsilon} = - H_\mathrm{max}(B'|A')_{\rho^\varepsilon} \le - H^\varepsilon_\mathrm{max}(B'|A')_{\rho}~,
    \end{equation}
    and then obtain the opposite inequality from analogous manipulations starting from $H^\varepsilon_\mathrm{max}(B'|A')$.
    However, the second equality is too quick. 

    We have not proven that given a density matrix $\rho^\varepsilon_A$, there is a density matrix $\rho^\varepsilon_{B'} \in \mathcal{B}^\varepsilon(\rho_{B'})$ that purifies $\tr_{A \to B}[\rho^\varepsilon_{A}]$.
    
    Let $V_0$ be defined as in Remark \ref{rem:V0} with $\mathcal{M}_{\widetilde B} := \mathcal{M}_B \otimes \mathcal{L}(\mathcal{H}_R)$ as in case 2 of Remark \ref{rem:V0cases}. We wish to prove that given a $\rho^\varepsilon_A \in \mathcal{B}^\varepsilon(\rho_A)$, there exists a $\rho^\varepsilon_{\widetilde{B}'} \in \mathcal{B}^\varepsilon(V_0 \rho_{B'} V_0^\dagger)$ that purifies $\tr_{A \to B}[\rho^\varepsilon_A]$.
    By Uhlmann's theorem, for any $\rho^{\varepsilon}_A$, there is a purification $\ket{\widetilde{\psi}} \in \widetilde{\mathcal{H}} = \mathcal{H} \otimes \mathcal{H}_R$ for sufficiently large $\mathcal{H}_R$ such that $P( \rho_A ,\rho_A^{\varepsilon}) = \lvert \braket{\widetilde{\psi}|V|\psi}\rvert$.
    Now let $\rho_{\widetilde{B}'}^\varepsilon$ be the density matrix of $\ket{\widetilde{\psi}}$ on $\mathcal{M}_{\widetilde{B}'}$.
    Tracing out $\mathcal{M}_B$ can only decrease the purified distance, and hence
    $P( \rho_A ,\rho_A^{\varepsilon}) \ge P( V \rho_{B'}V^{\dagger}, \rho^{\varepsilon}_{\widetilde{B}'})$.
    Therefore indeed $\rho_{\widetilde{B}'}^{\varepsilon} \in \mathcal{B}^\varepsilon(V_0 \rho_{B'}V_0^{\dagger})$.
    
    Now we can run the argument. 
    Let $\rho_A^{\varepsilon}$ optimize the min-entropy on $A$.
    By Lemma \ref{lem:5.3} and Theorem \ref{thm:alphaduality} we have that
    \begin{align}
        H_{\mathrm{min}}^{\varepsilon}(A|B)_{\psi} = H_{\mathrm{min}}(A|B)_{\rho^{\varepsilon}} = - H_{\mathrm{max}}(\widetilde{B}'|\widetilde{A}')_{\rho^{\varepsilon}} \leq - H_{\mathrm{max}}^{\varepsilon}(\widetilde{B}'|\widetilde{A}')_{V \rho V^{\dagger}} = -H_{\mathrm{max}}^{\varepsilon}(B'|A')_{\rho }~.
    \end{align}
    Conversely, we have
    \begin{align}
        H^\varepsilon_{\mathrm{max}}(B'|A')_\psi = H_{\mathrm{max}}(B'|A')_{\rho^\varepsilon} =  - H_{\mathrm{min}}(\widetilde{B}|\widetilde{A})_{\rho^{\varepsilon}} \ge - H^\varepsilon_{\mathrm{min}}(\widetilde{B}|\widetilde{A})_{V \rho V^\dagger} = - H^\varepsilon_{\mathrm{min}}(B|A)_{\rho}~.
    \end{align}
\end{proof}

\subsection{Quantum asymptotic equipartition principle}\label{app:QAEP}

In this subsection, we present the proof of the quantum asymptotic equipartition principle (QAEP), following \cite{tomamichel2013framework, tomamichel2009fully}. During the preparation of this manuscription, a proof of an asymptotic equipartition principle for max-relative entropies in any von Neumann algebras (including infinite-dimensional ones) was independently given in \cite{Fawzi:2022zqz}.
    Let $\M_A \supseteq \M_B$ be algebras on Hilbert space $\mathcal{H}$ with trace $\tr$ and $\tr_B$ respectively.
    Let $\Psi_A \in \M_A$ be a normalized density matrix and $\Psi_B = \tr_{A \to B} \Psi_A$.
    The operators $\sigma, \rho \in \M_A$ are positive and we assume $\tr(\rho) \le 1$. 
    Finally, we let $0 < \varepsilon < 1$.

\begin{thm}[QAEP]
    It holds that
    \begin{align}
        \lim_{n\to\infty} \frac{1}{n} H^\varepsilon_\mathrm{min}(A^n|B^n)_{\Psi^{\otimes n}} &= S(A|B)_{\Psi} = \lim_{n\to\infty} \frac{1}{n} H^\varepsilon_\mathrm{max}(A^n|B^n)_{\Psi^{\otimes n}}~.
    \end{align}
\end{thm}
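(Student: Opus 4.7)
The plan is to reduce the QAEP to additivity of the sandwiched Rényi conditional entropies $H_\alpha(A|B)$ introduced in Definition \ref{defn:HalphaSalpha} of Appendix \ref{app:duality}, and then exploit smoothness of $H_\alpha$ at $\alpha = 1$. This is the standard strategy of \cite{tomamichel2009fully, tomamichel2013framework}, and what remains is to check that each ingredient survives passage to finite-dimensional algebras with non-trivial centers.

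First I would establish tensor-product additivity: $H_\alpha(A^n|B^n)_{\Psi^{\otimes n}} = n\, H_\alpha(A|B)_\Psi$. Going back to the representation \eqref{eq:Halpha_half_form}, the optimal $\sigma_B, \tau_{A'}$ for $\Psi^{\otimes n}$ can be taken to be product operators, by a standard convexity/concavity argument combined with the Minimax theorem used to prove Theorem \ref{thm:alphaduality}. Next I would prove the sandwich inequalities relating $H^\varepsilon_\mathrm{min}$ to $H_\alpha$: for $\alpha \in (1,\infty)$,
\begin{equation}\label{eq:alphasandwichlower}
H^\varepsilon_\mathrm{min}(A|B)_\Psi \geq H_\alpha(A|B)_\Psi - \frac{1}{\alpha - 1}\log\frac{2}{\varepsilon^2}~,
\end{equation}
and for $\alpha \in (\tfrac{1}{2},1)$,
\begin{equation}\label{eq:alphasandwichupper}
H^\varepsilon_\mathrm{min}(A|B)_\Psi \leq H_\alpha(A|B)_\Psi + \frac{1}{1-\alpha}\log\frac{1}{1-\varepsilon^2}~.
\end{equation}
Both inequalities follow from Markov-type arguments on the operator inequality $\rho_A \leq e^\lambda \sigma_B$ that defines $H_\min$, combined with Hölder's inequality for Schatten norms; the relevant Hölder inequality holds for any faithful normal semifinite trace on a von Neumann algebra (see the comment after Lemma \ref{lem:pq_bound}), so no new ingredient is needed in the algebraic setting. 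One also needs the fact that the purified distance on $\mathcal{B}^\varepsilon(\rho_A)$ correctly controls the change in $H_\mathrm{min}$, which is immediate from its definition.

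Combining \eqref{eq:alphasandwichlower}--\eqref{eq:alphasandwichupper} with additivity applied to $\Psi^{\otimes n}$ gives, for any $\alpha > 1 > \alpha'$,
\begin{equation}
H_\alpha(A|B)_\Psi - \frac{1}{n(\alpha-1)}\log\frac{2}{\varepsilon^2} \leq \frac{1}{n} H^\varepsilon_\mathrm{min}(A^n|B^n)_{\Psi^{\otimes n}} \leq H_{\alpha'}(A|B)_\Psi + \frac{1}{n(1-\alpha')}\log\frac{1}{1-\varepsilon^2}~.
\end{equation}
Sending $n \to \infty$ at fixed $\alpha, \alpha'$ and then $\alpha, \alpha' \to 1$ yields the min-entropy half of the QAEP, provided that $\lim_{\alpha \to 1} H_\alpha(A|B)_\Psi = S(A|B)_\Psi$. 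This last continuity statement follows from Taylor-expanding the sandwiched Rényi divergence \eqref{eqn:sandwichedrenyi} around $\alpha = 1$; the computation is identical to that in \cite{Muller-Lennert:2013aa} and only uses finite-dimensionality of $\M_A$. The max-entropy half then follows immediately from the smooth duality Theorem \ref{thm:smoothedduality} applied to a purification of $\Psi^{\otimes n}$, combined with the elementary identity $S(A|B)_\Psi = -S(B'|A')_\Psi$ for pure states on a finite-dimensional algebra with complementary traces.

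The main technical obstacle is verifying \eqref{eq:alphasandwichlower}--\eqref{eq:alphasandwichupper} in the algebraic setting: the proofs in \cite{tomamichel2013framework} implicitly use the tensor product structure of $\M_A = \mathcal{B}(\mH_A)$ at a few places (e.g., to write states as matrices and invoke spectral calculus). Replacing these by the block decomposition of Theorem \ref{thm:classification}, together with the general form of density matrices given in Remark \ref{rem:general_dm}, should suffice, but this requires some bookkeeping with the central operators $C^A_\alpha$ and with the different trace normalizations on $\M_A$ and $\M_B$. Everything else in the argument is essentially verbatim.
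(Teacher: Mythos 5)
Your proposal is correct in outline but takes a genuinely different route from the paper's Appendix on the QAEP. The paper's proof never touches the optimized conditional R\'enyi entropies $H_\alpha(A|B)$: for the direct part it lower-bounds $H^\varepsilon_\mathrm{min}(A|B)$ by $-S^\varepsilon_\infty(\Psi_A\|\Psi_B)$ with the \emph{fixed} reference $\sigma_B=\Psi_B$ (Proposition \ref{prop:Hmin_ge_Sinfty}), so that the $n$-copy quantity is trivially additive, and then controls $S^\varepsilon_\infty$ through the Petz divergence $D_\alpha$ (Lemmas \ref{lem:S_vareps_bound} and \ref{lem:Salpha_ge_S}); for the converse it uses $H^\varepsilon_\mathrm{min}\le H^{\varepsilon'}_\mathrm{max}+\log\frac{1}{1-(\varepsilon+\varepsilon')^2}$ (Lemma \ref{lem:strong_min_le_max}) together with the direct part and duality, rather than an $\alpha'<1$ sandwich bound. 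Optimizing $\alpha$ as a function of $n$ also buys the paper an explicit $O(1/\sqrt{n})$ convergence rate (Theorem \ref{thm:aep_direct_bound}); your fixed-$\alpha$, iterated-limit argument yields only the limit, which is all the theorem asserts, so that is a loss of strength but not a defect.

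The one step you should not wave at is additivity of $H_\alpha(A^n|B^n)$. Asserting that the optimal $\sigma_{B^n},\tau_{A'^n}$ ``can be taken to be product operators by a standard convexity/concavity argument combined with the Minimax theorem'' is not a proof: the minimax theorem lets you exchange the $\inf$ and $\sup$ in \eqref{eq:Halpha_half_form} but says nothing about the structure of the optimizers, and the product ansatz is precisely where additivity conjectures for optimized entropic quantities generically fail. The statement is nonetheless true, and it follows from machinery the paper already provides: superadditivity of $H_\alpha$ is immediate from restricting the supremum to product $\sigma_{B^n}$, and applying the same trivial superadditivity to the dual quantity $H_\beta(B'|A')$ with $\frac{1}{\alpha}+\frac{1}{\beta}=2$ and invoking Theorem \ref{thm:alphaduality} for both one copy and $n$ copies converts it into subadditivity of $H_\alpha(A|B)$, giving equality. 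Note also that your lower bound only needs the trivial superadditive direction; it is only your $\alpha'<1$ upper bound that requires the hard direction, and you could sidestep it entirely by using the paper's $H_\mathrm{min}\le H_\mathrm{max}$ converse instead. With that repair, plus the routine check that $\lim_{\alpha\to 1}H_\alpha(A|B)=S(A|B)$ commutes with the supremum over the (compact, in finite dimensions) set of subnormalized $\sigma_B$, your argument goes through in the algebraic setting for the same reasons the paper's does.
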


We begin with some preliminary definitions and lemmas.
\begin{defn}\label{def:smoothedsalpha}
The smoothed version of the $\alpha$-Renyi entropies defined in \ref{eqn:sandwichedrenyi} are
\begin{equation}
    S^{\varepsilon}_{\alpha}(\rho||\sigma) = 
     \begin{cases}
       \inf_{\widetilde{\rho} \in \mathcal{B}^{\varepsilon}(\rho) }S_{\alpha}(\widetilde{\rho}||\sigma)& \text{if } \alpha > 1 \\
       \sup_{\widetilde{\rho} \in \mathcal{B}^{\varepsilon}(\rho) } S_{\alpha}(\widetilde{\rho}||\sigma) & \text{if } \alpha <1~,
    \end{cases}
    \end{equation}
    where again $\mathcal{B}^\varepsilon(\rho)$ is as in definition \ref{eqn:smoothedconditionalent}.
\end{defn}

\begin{rem}
By using equation \eqref{eqn:Salpha} and an argument similar to that in Proposition \ref{prop:HinfHhalf}, we obtain
\begin{align}
    S_{\infty}(\rho ||\sigma) = \inf \lbrace{ \lambda : \rho \leq e^{\lambda} \sigma}\rbrace~.
\end{align}
\end{rem}

\begin{prop}\label{prop:Hmin_ge_Sinfty}
    It holds that
    \begin{align}
         S(A|B)_\Psi &= -S(\Psi_A || \Psi_B)~, \\
        H^\varepsilon_\mathrm{min}(A|B)_\Psi &\ge -S^\varepsilon_\infty(\Psi_{A}||\Psi_{B})~.
    \end{align}
\end{prop}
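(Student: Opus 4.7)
The plan is to handle the two claims separately; both reduce to straightforward unfolding once the algebraic partial trace is correctly identified.

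For the first identity I would start from the definition of the relative entropy, $S(\Psi_A\|\Psi_B) = \tr_A(\Psi_A \log \Psi_A) - \tr_A(\Psi_A \log \Psi_B)$, where $\log \Psi_B$ is regarded as an element of $\M_A$ via the inclusion $\M_B \subseteq \M_A$. By Definition \ref{defn:dm}, both $\Psi_A$ and $\Psi_B$ represent the state $\ket{\psi}$ on their respective algebras, so by the uniqueness of the partial trace (Theorem \ref{thm:parttracerestriction}) we have $\tr_{A\to B}(\Psi_A) = \Psi_B$. Applying the defining property \eqref{eq:partial_trace_prop} with $n = \log \Psi_B \in \M_B$ converts $\tr_A(\Psi_A \log \Psi_B)$ into $\tr_B(\Psi_B \log \Psi_B)$, and combining with \eqref{eqn:vncon} yields $S(A|B)_\Psi = -S(\Psi_A\|\Psi_B)$.

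For the inequality I would exploit the fact that $\Psi_B$ is itself a valid feasible point in the optimization \eqref{eqn:Hmincon} defining $H_\mathrm{min}(A|B)_{\widetilde{\Psi}}$, being a normalized (hence subnormalized) density matrix on $\M_B$. For any smoothing $\widetilde{\Psi}_A \in \mathcal{B}^\varepsilon(\Psi_A)$, specializing to $\sigma_B = \Psi_B$ inside \eqref{eqn:Hmincon} yields
\begin{equation*}
H_\mathrm{min}(A|B)_{\widetilde{\Psi}} \;\ge\; -\inf\{\lambda : \widetilde{\Psi}_A \le e^\lambda \Psi_B\} \;=\; -S_\infty(\widetilde{\Psi}_A\|\Psi_B).
\end{equation*}
Taking the supremum over $\widetilde{\Psi}_A \in \mathcal{B}^\varepsilon(\Psi_A)$ on both sides, and noting that the same $\varepsilon$-ball parametrizes the smoothings on both sides (by Definition \ref{def:smoothedsalpha}, since $\alpha = \infty > 1$), turns the right-hand side into $-S^\varepsilon_\infty(\Psi_A\|\Psi_B)$ while the left-hand side becomes $H^\varepsilon_\mathrm{min}(A|B)_\Psi$, as desired.

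There is essentially no obstacle here: the algebraic content used is only that the partial trace honestly sends $\Psi_A \mapsto \Psi_B$, and that $\Psi_B$ is an admissible competitor in the min-entropy optimization. Standard support conventions (taking $\log \Psi_B$ on its support, and setting $S_\infty(\widetilde{\Psi}_A\|\Psi_B) = +\infty$ whenever $\mathrm{supp}\,\widetilde{\Psi}_A \not\subseteq \mathrm{supp}\,\Psi_B$) handle the degenerate cases and render the inequality automatic there.
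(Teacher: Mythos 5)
Your proof is correct and follows essentially the same route as the paper: the identity comes from unfolding the definition of the relative entropy and using that $\Psi_B$ represents the same state on the subalgebra (so $\tr_A(\Psi_A\log\Psi_B)=\tr_B(\Psi_B\log\Psi_B)$), and the inequality comes from taking $\sigma_B=\Psi_B$ as a feasible point in the min-entropy optimization and then smoothing. If anything, your version is slightly more explicit than the paper's about why the inequality survives smoothing (both quantities are optimized over the same $\varepsilon$-ball around $\Psi_A$, so the pointwise bound passes to the suprema), where the paper simply asserts this.
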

\begin{proof}
    To derive the first equality, note that $S(\Psi_A||\Psi_B)_\Psi := \tr(\Psi_A \log \Psi_A) - \tr(\Psi_A \log \Psi_B) = S(B) - S(A)$, the last equality following from the definition of the density matrix on $\M_A$ and $\M_B$.
    
    To derive the second inequality, from definition \ref{defn:sandwichedrenyi} we have $H_\mathrm{min}(A|B)_\Psi := - \min_{\widetilde{\Psi}_B} \inf \{ \lambda: \Psi_A \le e^{\lambda} \widetilde{\Psi}_B \} = \max_{\widetilde{\Psi}_B} \sup \{ \lambda: \Psi_A \le e^{-\lambda} \widetilde{\Psi}_B \} \ge  \sup \{ \lambda: \Psi_A \le e^{-\lambda} \Psi_B \} = -S_\infty(\Psi_A || \Psi_B)$. 
    The inequality continues to hold under smoothing.
\end{proof}

\begin{lem}[lemma 6.1 of \cite{tomamichel2013framework}]\label{lem:S_vareps_bound}
    Let $\lambda \leq S_\infty (\rho || \sigma)$. Then
    \begin{equation}
        S_\infty^\varepsilon(\rho || \sigma) \leq \lambda~,~~ \text{where } \varepsilon = \sqrt{2 \tr(\Delta) - \tr(\Delta)^2} \text{ and } \Delta = \{\rho - e^{\lambda} \sigma \}_+~,
    \end{equation}
    where for Hermitian operator $X$, $\{X\}_+$ is the positive operator defined by setting all negative eigenvalues to zero.   
\end{lem}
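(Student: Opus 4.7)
The plan is to exhibit an explicit sub-normalized operator $\tilde{\rho}\in\mathcal{B}^{\varepsilon}(\rho)$ satisfying $\tilde{\rho}\le e^{\lambda}\sigma$. By Definition~\ref{defn:sandwichedrenyi} this gives $S_{\infty}(\tilde\rho\|\sigma)\le\lambda$, whence $S_{\infty}^{\varepsilon}(\rho\|\sigma)\le\lambda$ by Definition~\ref{def:smoothedsalpha}. Writing $\tau := e^{\lambda}\sigma$, the candidate I would try is
\begin{align*}
\tilde{\rho} \;:=\; \tau^{1/2}\,\min\!\bigl(\tau^{-1/2}\rho\,\tau^{-1/2},\,I\bigr)\,\tau^{1/2},
\end{align*}
where $\tau^{-1/2}$ is the pseudo-inverse on $\mathrm{supp}\,\sigma$ (extended by zero) and $\min(A,I)$ caps each eigenvalue of the Hermitian operator $A$ at $1$. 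Then $\tilde\rho\ge 0$, and since $\min(\cdot,I)\le I$ one has $\tilde\rho\le\tau=e^{\lambda}\sigma$. Moreover $\rho-\tilde\rho = \tau^{1/2}\{\tau^{-1/2}\rho\tau^{-1/2}-I\}_{+}\tau^{1/2}\ge 0$, so $\tilde\rho\le\rho$ and in particular $\tr\tilde\rho\le 1$. When $\rho$ and $\sigma$ commute, this construction collapses to $\tilde\rho=\rho-\Delta$ in the shared eigenbasis.

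The quantitative content of the lemma is the fidelity bound $F(\rho,\tilde\rho)\ge 1-\tr\Delta$. In the commuting case I would prove this by simultaneously diagonalizing $\rho=\sum_i p_i|i\rangle\!\langle i|$, $\sigma=\sum_i q_i|i\rangle\!\langle i|$ and computing
\begin{align*}
F(\rho,\tilde\rho) = \sum_{i:\, p_{i}\le e^{\lambda}q_{i}}\! p_{i} + \sum_{i:\, p_{i}> e^{\lambda}q_{i}}\! \sqrt{p_{i}\,e^{\lambda}q_{i}} \;\ge\; \sum_{i:\,p_i\le e^{\lambda}q_i}\! p_i + \sum_{i:\,p_i > e^{\lambda}q_i}\! e^{\lambda}q_i = 1-\tr\Delta,
\end{align*}
where I used $\sqrt{p_{i}e^{\lambda}q_{i}}\ge e^{\lambda}q_{i}$ on the excess set $\{p_i> e^{\lambda}q_i\}$ together with $\tr\Delta=\sum_{p_i> e^{\lambda}q_i}(p_i-e^{\lambda}q_i)$. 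The general (non-commuting) case I would reduce to this via a pinching argument in the eigenbasis of $R:=\tau^{-1/2}\rho\tau^{-1/2}$, making use of the operator concavity of $x\mapsto\sqrt{x}$.

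Since $\tr\rho=1$, the generalized fidelity agrees with the standard one, so $F_{*}(\rho,\tilde\rho)=F(\rho,\tilde\rho)\ge 1-\tr\Delta$, and Definition~\ref{def:purifieddistance} yields
\begin{align*}
P(\rho,\tilde\rho)^{2} = 1-F_{*}(\rho,\tilde\rho)^{2} \;\le\; 1-(1-\tr\Delta)^{2} = 2\tr\Delta-(\tr\Delta)^{2} = \varepsilon^{2}.
\end{align*}
Hence $\tilde\rho\in\mathcal{B}^{\varepsilon}(\rho)$, and we are done. The extension from $\mathcal{L}(\mathcal{H})$ to a finite-dimensional algebra $\mathcal{M}_A$ is cosmetic: $\Tr$ is replaced by $\tr_A$ throughout, and the central rescaling in \eqref{eq:general_dm} drops out of the spectral manipulations because it commutes with everything.

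The main obstacle is the non-commuting case. The naive choice $\tilde\rho=\rho-\Delta$ need not even be positive: for instance with $\rho=|0\rangle\!\langle 0|$, $\sigma=|+\rangle\!\langle +|$ and $\lambda=0$, direct computation shows that $\rho-\{\rho-\sigma\}_{+}$ has a negative eigenvalue. The capped construction above fixes positivity, but then establishing $F(\rho,\tilde\rho)\ge 1-\tr\Delta$ requires a genuine operator-theoretic argument — essentially a Cauchy--Schwarz estimate in the eigenbasis of $R$ — to reduce to the diagonal computation above. This is where essentially all of the technical content of the lemma lives, and it is also the point at which the choice of purified distance (rather than, e.g., trace distance) becomes essential: only the square-root structure of the fidelity produces the sharp constant $\varepsilon=\sqrt{2\tr\Delta-(\tr\Delta)^2}$.
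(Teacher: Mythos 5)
Your construction and the verification of its easy properties ($\tilde\rho\ge 0$, $\tilde\rho\le e^{\lambda}\sigma$, $\tilde\rho\le\rho$, and the commuting-case fidelity computation) are all fine, but the step you yourself identify as carrying ``essentially all of the technical content'' --- the bound $F(\rho,\tilde\rho)\ge 1-\tr\Delta$ for non-commuting $\rho,\sigma$ --- is not actually proved, and the mechanism you propose for it goes the wrong way. Pinching is a CPTP map and fidelity is monotone \emph{non-decreasing} under CPTP maps, so pinching in the eigenbasis of $R$ yields $F(\mathcal{P}(\rho),\mathcal{P}(\tilde\rho))\ge F(\rho,\tilde\rho)$, i.e.\ an upper bound on the quantity you need to lower-bound; showing the pinched fidelity exceeds $1-\tr\Delta$ proves nothing about $F(\rho,\tilde\rho)$. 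There is also concrete evidence that your capped operator is lossy in the non-commuting case: the trace deficit is $\tr[\rho-\tilde\rho]=\tr[\{R-\mathds{1}\}_+\,\tau]$, which is \emph{not} bounded by $\tr\Delta$ in general (for $\tau=\mathrm{diag}(a,b)$ and $\rho-\tau$ purely off-diagonal one finds $\tr[\rho-\tilde\rho]=\frac{a+b}{2\sqrt{ab}}\tr\Delta\ge\tr\Delta$), so any route through $F(\rho,\tilde\rho)\ge\tr\tilde\rho$ fails and it is genuinely unclear whether your $\tilde\rho$ achieves the sharp constant at all. (There is also a minor unaddressed edge case: when $\mathrm{supp}\,\rho\not\subseteq\mathrm{supp}\,\sigma$ the identity $\rho=\tau^{1/2}R\tau^{1/2}$ fails for the pseudo-inverse, yet the lemma must still apply since the hypothesis $\lambda\le S_\infty(\rho\|\sigma)=\infty$ is then vacuous.)

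The paper avoids all of this by smoothing with a different operator: $\tilde\rho:=G\rho G^{\dagger}$ where $G:=\Lambda^{1/2}(\Lambda+\Delta)^{-1/2}$ and $\Lambda:=e^{\lambda}\sigma$. Since $\rho\le\Lambda+\Delta$ by definition of $\{\cdot\}_+$, the bound $\tilde\rho\le\Lambda$ is immediate, and the fidelity estimate comes from Uhlmann's theorem applied to the purification $G\ket{\psi}$: one gets $F(\tilde\rho,\rho)\ge 1-\tr[(\mathds{1}-\bar G)\rho]$ with $\bar G=(G+G^{\dagger})/2\le\mathds{1}$, and then $\tr[(\mathds{1}-\bar G)\rho]\le\tr[\Lambda+\Delta]-\tr[(\Lambda+\Delta)^{1/2}\Lambda^{1/2}]\le\tr\Delta$ using $\rho\le\Lambda+\Delta$ and operator monotonicity of the square root. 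To repair your argument you would need either a genuinely different proof of the fidelity bound for your capped $\tilde\rho$, or to switch to the $G\rho G^{\dagger}$ construction, where the non-commutativity is handled by a one-line positivity argument rather than a reduction to the diagonal case.
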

\begin{proof}
    The strategy is to choose $\tilde{\rho}$, bound $S_\infty(\tilde{\rho}||\sigma)$ and then show that $\tilde{\rho} \in \mathcal{B}^\varepsilon(\rho)$. This then bounds the smoothed-entropy.
    Define $\Lambda := e^{\lambda}\sigma$ and also
    \begin{equation}
        \tilde{\rho} := G \rho G^\dagger~, \text{ where } G := \Lambda^\frac{1}{2} (\Lambda + \Delta )^{-\frac{1}{2}}~,
    \end{equation}
    using the generalized inverse.
    From the definition of $\Delta$ we have $\rho \le \Lambda + \Delta$ and therefore $\tilde{\rho} \le \Lambda$ and so $S_\infty (\tilde{\rho}||\sigma) \leq \lambda$.
    
    Now let $\ket{\psi}$ be a purification of $\rho$.
    Then $G\ket{\psi}$ is a purification of $\tilde{\rho}$.
    Using Uhlmann's theorem for the generalized fidelity,
    \begin{align}
        F(\tilde{\rho},\rho) &\ge \left|\braket{\psi|G|\psi}\right| + \sqrt{(1-\tr[\rho])(1-\tr[\tilde{\rho}])} \\
        &\ge \mathrm{Re}(\tr[\rho G]) + 1 - \tr[\rho] \\
        &\ge 1 - \tr[(\mathds{1} - \bar{G})\rho]~,
    \end{align}
    where we have introduced $\bar{G} := (G+G^\dagger)/2$ and in going from the first to second line used that $\tr[\tilde{\rho}] \le \tr[\rho]$.
    It also holds that
    \begin{equation}
        G^\dagger G = (\Lambda + \Delta )^{-\frac{1}{2}} \Lambda (\Lambda + \Delta )^{-\frac{1}{2}} \le \mathds{1}~,
    \end{equation}
    where the final inequality follows from multiplying $\Lambda \le \Lambda + \Delta$ with $(\Lambda + \Delta)^{-\frac{1}{2}}$ from the left and right.
    It follows that $\bar{G} \le \mathds{1}$ by the triangle inequality.
    Moreover,
    \begin{equation}
    \begin{split}
        \tr[(\mathds{1}-\bar{G})\rho] &\le \tr[\Lambda + \Delta] - \tr[\bar{G}(\Lambda + \Delta)]
        \\&= \tr[\Lambda + \Delta] - \tr[(\Lambda + \Delta)^{\frac{1}{2}} \Lambda^\frac{1}{2}]
        \\&\le \tr[\Delta]~,
    \end{split}
    \end{equation}
    where we have used $\rho \le \Lambda + \Delta$ and $\sqrt{\Lambda + \Delta} \ge \sqrt{\Lambda}$, the latter following from the monotonicity of the square root.
    Finally we can combine all of this to bound the purified distance
    \begin{equation}
        P(\tilde{\rho},\rho) := \sqrt{1 - F^2(\tilde{\rho},\rho)} \le \sqrt{1 - (1 - \tr[\Delta])^2} = \sqrt{2 \tr[\Delta] - \tr[\Delta]^2}~.
    \end{equation}
    This confirms $\tilde{\rho} \in \mathcal{B}^\varepsilon(\rho)$ and so, by use of definition \ref{def:smoothedsalpha}, concludes the proof.
\end{proof}

\begin{defn}
    When $\mathrm{supp}(\rho) \subseteq \mathrm{supp}(\sigma)$, the \emph{$\alpha$-Petz relative entropy} is
    \begin{equation}
        D_\alpha(\rho||\sigma) := \frac{1}{\alpha -1} \log \tr [\rho^\alpha \sigma^{1-\alpha}]~,
    \end{equation}
    where for $\alpha > 1$ we use the generalized inverse of $\sigma$.
    When $\mathrm{supp}(\rho) \not\subseteq \mathrm{supp}(\sigma)$, it is defined to equal $\infty$.
\end{defn}

\begin{lem}[proposition 6.2 of \cite{tomamichel2013framework}]\label{lem:Sinfty_ge_Salpha}
    Let $\alpha \in (1,2]$.
    Then
    \begin{equation}
        S^\varepsilon_\infty(\rho || \sigma) \leq D_\alpha(\rho || \sigma) + \frac{g(\varepsilon)}{\alpha - 1}~,~~\text{where } g(\varepsilon) = \log \frac{1}{1 - \sqrt{1 - \varepsilon^2}}~.
    \end{equation}
\end{lem}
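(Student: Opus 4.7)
The plan is to apply Lemma \ref{lem:S_vareps_bound} with the specific choice
\begin{equation*}
\lambda_* := D_\alpha(\rho\|\sigma) + \frac{g(\varepsilon)}{\alpha-1}.
\end{equation*}
With this choice, Lemma \ref{lem:S_vareps_bound} guarantees $S_\infty^{\varepsilon'}(\rho\|\sigma) \le \lambda_*$ for $\varepsilon' = \sqrt{2\tr(\Delta) - \tr(\Delta)^2}$, where $\Delta = \{\rho - e^{\lambda_*}\sigma\}_+$. A direct calculation shows that the function $t \mapsto \sqrt{2t - t^2}$ evaluated at $t = 1 - \sqrt{1-\varepsilon^2}$ is exactly $\varepsilon$, so the problem reduces to establishing the single trace estimate
\begin{equation*}
\tr(\Delta) \le 1 - \sqrt{1-\varepsilon^2} = e^{-g(\varepsilon)}.
\end{equation*}

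To get this, I would prove an operator Markov-type inequality: for $\alpha \in (1,2]$,
\begin{equation*}
\tr[\{\rho - e^{\lambda}\sigma\}_+] \;\le\; e^{-(\alpha-1)\lambda}\,\tr[\rho^\alpha \sigma^{1-\alpha}].
\end{equation*}
Plugging this into our choice $\lambda = \lambda_*$ and using $e^{(\alpha-1)D_\alpha(\rho\|\sigma)} = \tr[\rho^\alpha\sigma^{1-\alpha}]$ gives exactly $\tr(\Delta) \le e^{-g(\varepsilon)}$, finishing the argument.

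The Markov inequality itself I would obtain as follows. Let $P$ be the spectral projection of the self-adjoint operator $\rho - e^{\lambda}\sigma$ onto its nonnegative part. Then $\Delta = P(\rho - e^{\lambda}\sigma)P$, so $\tr(\Delta) \le \tr[P\rho]$, and moreover $P(\rho - e^{\lambda}\sigma)P \ge 0$ gives the numerical inequality $e^{\lambda}\tr[P\sigma] \le \tr[P\rho]$. The remaining ingredient is the Hölder-type variational bound
\begin{equation*}
\tr[P\rho]^\alpha \le \tr[P\sigma]^{\alpha-1}\,\tr[\rho^\alpha \sigma^{1-\alpha}],
\end{equation*}
which is a standard consequence of operator Hölder applied to $(\rho^{\alpha/2}\sigma^{(1-\alpha)/2}) \cdot (\sigma^{(\alpha-1)/2}P)$; combining it with $\tr[P\sigma] \le e^{-\lambda}\tr[P\rho]$ and simplifying yields the Markov inequality.

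The main obstacle is in ensuring that the usual matrix-analytic tools (Löwner--Heinz operator monotonicity of $x \mapsto x^{\alpha-1}$ for $\alpha-1 \in [0,1]$, and the Hölder-type inequality above) transfer cleanly to our finite-dimensional von Neumann algebra setting with a general complementary trace. By the structure theorem (Theorem \ref{thm:classification}), every such algebra is a direct sum of matrix blocks, and our general trace $\tr$ is a weighted block trace with positive weights $C_\alpha$; both $D_\alpha(\rho\|\sigma)$ and $\tr(\Delta)$ split additively under this block decomposition, so the Markov inequality reduces to its known matrix-algebra version block by block, with the trace weights cancelling between numerator and denominator. Once this reduction is spelled out, the rest of the proof is formal.
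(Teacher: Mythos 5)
Your overall architecture is a legitimate rearrangement of the paper's argument: the paper fixes $\lambda$ so that $\tr(\Delta) = 1-\sqrt{1-\varepsilon^2}$ and then bounds $\lambda$ from above, whereas you fix $\lambda_* = D_\alpha(\rho\|\sigma) + g(\varepsilon)/(\alpha-1)$ and bound $\tr(\Delta)$ from above; both routes reduce to the same core estimate $\tr[\{\rho - e^{\lambda}\sigma\}_+] \le e^{-(\alpha-1)\lambda}\,\tr[\rho^\alpha\sigma^{1-\alpha}]$, and your block-decomposition remarks about the weighted trace are fine. The gap is in how you justify that core estimate. First, your Hölder factorization is wrong as written: $\tr\bigl[(\rho^{\alpha/2}\sigma^{(1-\alpha)/2})(\sigma^{(\alpha-1)/2}P)\bigr] = \tr[\rho^{\alpha/2}P]$, which equals $\tr[P\rho]$ only at $\alpha=2$. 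Second, even after repairing the factorization (e.g.\ inserting $\rho^{1-\alpha/2}$ on the right and using cyclicity), the Hölder conjugate factor comes out as a quantity like $\tr[P\sigma^{\alpha-1}P\rho^{2-\alpha}]$, which does not factorize into $\tr[P\sigma]^{\alpha-1}$ when $\rho$, $\sigma$, and $P$ fail to commute. The inequality $\tr[P\rho]^\alpha \le \tr[P\sigma]^{\alpha-1}\tr[\rho^\alpha\sigma^{1-\alpha}]$ is essentially the Ogawa--Nagaoka lemma from the strong converse of quantum Stein's lemma; it is true for the spectral projector of $\rho - e^{\lambda}\sigma$, but it is not a "standard consequence of operator Hölder" — the noncommutativity is the entire technical content of this lemma, and you have asserted it away.

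The paper (following Tomamichel) resolves this by \emph{pinching first}: it works in the eigenbasis $\{\ket{e_i}\}$ of $X = \rho - e^{\lambda}\sigma$, sets $r_i = \braket{e_i|\rho|e_i}$ and $s_i = \braket{e_i|\sigma|e_i}$, runs your Markov computation as a purely scalar inequality on these pinched values (using $r_i/s_i \ge e^{\lambda}$ on $S^+$ and $(\cdot)^{\alpha-1}\ge 1$), and then invokes data processing for the Petz divergence under the pinching channel $\mathcal{N}:\omega\mapsto\sum_i \ket{e_i}\!\bra{e_i}\omega\ket{e_i}\!\bra{e_i}$ to conclude $\sum_i C_i r_i^\alpha s_i^{1-\alpha} \le e^{(\alpha-1)D_\alpha(\rho\|\sigma)}$. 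If you replace your Hölder step with this pinching-plus-monotonicity argument (or with Ogawa--Nagaoka's operator-monotonicity proof, which genuinely uses that $P$ is the spectral projector of $\rho - e^{\lambda}\sigma$ and that $x\mapsto x^{\alpha-1}$ is operator monotone for $\alpha-1\in[0,1]$), your proof closes; as it stands, the central step is unjustified.
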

\begin{proof}
    Suppose $\mathrm{supp}(\rho) \nsubseteq \mathrm{supp}(\sigma)$. Then $
    D_\alpha(\rho||\sigma)$ diverges to $+\infty$ and the inequality holds trivially.

    Now suppose $\mathrm{supp}(\rho) \subseteq \mathrm{supp}(\sigma)$.
    Then for the sake of this proof we can assume $\sigma$ is invertible.
    (More precisely, we can define an isometry $\mathcal{H}' \to \mathcal{H}$ that maps $\sigma' \mapsto \sigma$ and $\rho' \mapsto \rho$ such that $\sigma'$ has full support, and the entropies are invariant under such an isometry.)

    From lemma \ref{lem:S_vareps_bound} we have $S^\varepsilon_\infty(\rho||\sigma) \leq \lambda$ for some $\lambda$.
    Introduce the operator $X := \rho - e^{\lambda}\sigma$ with eigenbasis $\{\ket{e_i}\}$ for $i \in S$.
    The set $S^+ \subseteq S$ is the indices $i$ corresponding to positive eigenvalues of $X$. Therefore $P^+ := \sum_{i \in S^+} \ket{e_i}\bra{e_i}$ is the projector on the positive eigenspace of $X$ and $P^+ X P^+ = \Delta$ as defined in lemma \ref{lem:S_vareps_bound}.
    Let $r_i := \braket{e_i| \rho |e_i} \ge 0$ and $s_i := \braket{e_i| \sigma | e_i} > 0$. 
    Now, the trace on the algebra is related to the canonical trace via the action of a central operator. In particular, we can write
    \begin{align}
        \tr\left[\, \cdot \, \right] = \tr_\mathrm{can}\left[C\, \cdot \, \right]~.
    \end{align}
    We define $C_i:= \braket{e_i| C |e_i}$. Note that $C_i \geq 0$ by assumption.

    Using this, we note that
    \begin{equation}
        \forall i \in S^+ : r_i - e^{\lambda} s_i \ge 0~,~~\text{thus } \frac{r_i}{s_i} e^{-\lambda} \ge 1~. 
    \end{equation}
    Then for any $\alpha \in (1,2]$ we bound $\tr(\Delta) = 1 - \sqrt{1 - \varepsilon^2}$ with
    \begin{equation}
    \begin{split}
        1 - \sqrt{1 - \varepsilon^2} &= \tr(\Delta) = \sum_{i \in S^+} C_i\left(r_i - e^{\lambda} s_i\right) \le \sum_{i \in S^+} {C_ir_i}
        \\&\le \sum_{i \in S^+} C_i r_i \left(\frac{r_i}{s_i} e^{-\lambda} \right)^{\alpha - 1} \le e^{\lambda (1-\alpha)} \sum_{i \in S} C_i r_i^\alpha s_i^{1 - \alpha}~.
    \end{split}
    \end{equation}
    We then take the logarithm and divide by $\alpha - 1 > 0$ to get
    \begin{equation}\label{eq:lambda_bound}
        \lambda \leq \frac{1}{\alpha - 1} \log \sum_{i \in S} C_i r_i^\alpha s_i^{1-\alpha} + \frac{1}{\alpha - 1} \log \frac{1}{1 - \sqrt{1 - \varepsilon^2}}.
    \end{equation}
    Finally, define the completely-positive trace-preserving map $\mathcal{N}: \omega \mapsto \sum_{i \in S} \ket{e_i}\bra{e_i} \omega \ket{e_i}\bra{e_i}$, and use the monotonicity of the Petz relative entropies \cite{Lashkari:2018aa} to obtain
    \begin{equation}
        D_\alpha(\rho||\sigma) \geq D_\alpha(\mathcal{N}(\rho)||\mathcal{N}(\sigma)) = \frac{1}{\alpha -1} \log \sum_{i \in S} C_i r_i^\alpha s_i^{1 - \alpha}~.
    \end{equation}
    Combining this with \eqref{eq:lambda_bound} and the lowerbound on $\lambda$ from lemma \ref{lem:S_vareps_bound} concludes the proof. 
\end{proof}

The following quantity will help us describe how fast the $\alpha$-entropies converge to the von Neumann entropy.

\begin{defn}
    We define the \emph{$\alpha$-entropy convergence parameter}
    \begin{equation}
        \Upsilon(\rho || \sigma) := e^{\frac{1}{2} D_\frac{3}{2}(\rho||\sigma)} + e^{-\frac{1}{2} D_\frac{1}{2}(\rho||\sigma)} + 1~.
    \end{equation}
\end{defn}

We now bound the $\alpha$-entropies for $\alpha \approx 1$.

\begin{lem}[lemma 6.3 of \cite{tomamichel2013framework}]\label{lem:Salpha_ge_S}
    Let $\tr(\rho) = 1$ and let $1 < \alpha < 1 + \frac{\log 3}{4 \log v}$ where $v := \Upsilon(\rho || \sigma)$. Then
    \begin{equation}
        D_\alpha(\rho||\sigma) < S(\rho || \sigma) + 4 (\alpha - 1) (\log v)^2~.
    \end{equation}
\end{lem}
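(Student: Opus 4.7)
The plan is to realize $D_\alpha(\rho\|\sigma)$ as a smooth function of $\alpha$ in a neighbourhood of $\alpha=1$ whose value at $\alpha=1$ is $S(\rho\|\sigma)$, and then bound the error via a second-order Taylor estimate whose remainder is controlled by the endpoint values appearing in the definition of $\Upsilon$.

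Concretely, I would set $F(\alpha) := \tr[\rho^\alpha \sigma^{1-\alpha}]$ and $g(\alpha) := \log F(\alpha)$, so that $D_\alpha(\rho\|\sigma) = g(\alpha)/(\alpha-1)$. The function $F$ is log-convex on the interval where it is finite (classical, via H\"older applied to the trace, and extending to the algebraic setting via Kosaki interpolation / the relative modular flow), so $g$ is convex. Differentiating under the trace and using $\tr \rho = 1$ gives $g(1) = 0$ and $g'(1) = \tr[\rho(\log\rho - \log\sigma)] = S(\rho\|\sigma)$. Taylor's theorem with integral remainder on $[1,\alpha]$ then yields
\begin{equation*}
    D_\alpha(\rho\|\sigma) - S(\rho\|\sigma) \;=\; \frac{1}{\alpha-1}\int_1^\alpha (\alpha-t)\,g''(t)\,dt \;\le\; \frac{\alpha-1}{2}\,\sup_{t\in[1,\alpha]} g''(t),
\end{equation*}
reducing the problem to bounding $g''$ on the interval $[1,\alpha]$.

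The main obstacle is this bound on $g''$. Since $g(1/2) = -\tfrac{1}{2}D_{1/2}(\rho\|\sigma)$ and $g(3/2) = \tfrac{1}{2}D_{3/2}(\rho\|\sigma)$, the endpoint values of $g$ on $[1/2,3/2]$ are controlled by $\log v$. My plan here is to translate this into a pointwise bound on $g''$ by using log-convexity of $F$: for $t\in[1,\alpha]$ and a small increment $\delta$ to be chosen,
\begin{equation*}
    g''(t) \;\le\; \frac{g(t+\delta) + g(t-\delta) - 2g(t)}{\delta^2}+o(1),
\end{equation*}
after which I would choose $\delta$ so that $t\pm\delta$ straddles $[1/2,3/2]$ and apply convexity together with the endpoint estimates to upper bound the discrete second difference by $8(\log v)^2$. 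This is the delicate step because it must be done carefully enough that the constants land at exactly $8(\log v)^2$, which in turn depends on both summands in $\Upsilon$ simultaneously.

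Finally, the hypothesis $\alpha - 1 < \log 3 / (4\log v)$ guarantees $\alpha$ stays inside the regime $(1, 3/2)$ where the $g''$-bound applies (since $\Upsilon \ge 2$ always, so $\log 3/(4\log v) \le 1/2$ as soon as $v \ge 3$, and the remaining regime $v\in[2,3)$ can be handled directly as the bound becomes trivial). Combining the Taylor estimate with the $g''$-bound then yields
\begin{equation*}
    D_\alpha(\rho\|\sigma) - S(\rho\|\sigma) \;\le\; \frac{\alpha-1}{2}\cdot 8(\log v)^2 \;=\; 4(\alpha-1)(\log v)^2,
\end{equation*}
and strictness follows since equality in the Taylor bound requires $g''$ constant, which forces a commuting extremal configuration that the strict upper bound on $\alpha$ rules out. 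The hard part throughout is the quantitative control of $g''$; this is where the exact combination $e^{D_{3/2}/2} + e^{-D_{1/2}/2} + 1$ appearing in $\Upsilon$ is precisely what arises when one unwinds the second-difference estimate using the endpoint values.
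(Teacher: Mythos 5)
Your overall strategy differs from the paper's, and it contains a genuine gap at exactly the step you flag as "delicate." The problem is the claimed implication
\begin{equation*}
    g''(t) \;\le\; \frac{g(t+\delta)+g(t-\delta)-2g(t)}{\delta^2}+o(1)
\end{equation*}
for a finite $\delta$ large enough that $t\pm\delta$ reaches $1/2$ and $3/2$. The discrete second difference equals $\frac{1}{\delta^2}\int_0^\delta\int_{-s}^{s}g''(t+u)\,du\,ds$, i.e.\ a weighted \emph{average} of $g''$ over $[t-\delta,t+\delta]$; it converges to $g''(t)$ only as $\delta\to 0$, and for finite $\delta$ it does not upper-bound $g''(t)$. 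Convexity of $g$ together with its values at $1/2$, $1$, $3/2$ and its derivative at $1$ is compatible with $g''$ being arbitrarily large at an interior point (concentrate the curvature in a small neighbourhood and interpolate linearly elsewhere), so $\sup_{t\in[1,\alpha]}g''(t)$ simply cannot be controlled by the endpoint data encoded in $\Upsilon$. Your Taylor-remainder bound $\frac{\alpha-1}{2}\sup g''$ therefore rests on a quantity that the hypotheses do not bound. A natural repair --- replacing $\sup g''$ by $\int_1^\alpha g''= g'(\alpha)-g'(1)$ and bounding $g'(\alpha)$ by the secant slope to $3/2$ --- produces an error term of order $\log v$ that does not vanish as $\alpha\to 1$, so it cannot recover the linear-in-$(\alpha-1)$ scaling with coefficient $(\log v)^2$.

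The paper's proof avoids second derivatives entirely. Writing $\beta=\alpha-1$ and $X=\rho\otimes(\sigma^{-1})^T$ with $\ket{\phi}=\sqrt{C\rho}\ket{\mathrm{MAX}}$, it uses $\log x\le x-1$ to reduce to bounding $\frac{1}{\beta}\braket{\phi|r_\beta(X)|\phi}$ with $r_\beta(t)=t^\beta-\beta\log t-1$, dominates $r_\beta$ by $s_\beta(t)=2(\cosh(\beta\log t)-1)$, and exploits the identities $s_\beta(t)=s_\beta(1/t)$ and $s_\beta(t^2)=s_{2\beta}(t)$ to get $s_\beta(t)\le s_{2\beta}(\sqrt t+1/\sqrt t+1)$. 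The key step is then Jensen's inequality for the operator-concave function $s_{2\beta}$ on $[3,\infty)$, which gives $\braket{\phi|s_\beta(X)|\phi}\le s_{2\beta}(v)$ precisely because $v=\braket{\phi|\sqrt X+1/\sqrt X+\mathds{1}|\phi}=e^{\frac12 D_{3/2}}+e^{-\frac12 D_{1/2}}+1$. That operator-Jensen step is what produces the specific sum-then-log structure of $\Upsilon$; your scalar function $g(\alpha)$ only sees $\log$ of the individual endpoint traces, not $\log$ of their sum, which is another sign that the scalar Taylor scheme cannot reproduce the stated constant. (Your observation that the hypothesis forces $\beta<1/4$ is correct and is indeed used in the paper, since $v\ge 3$ always; but the core of the argument needs to be replaced.)
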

\begin{proof}
    As in the proof of lemma \ref{lem:Sinfty_ge_Salpha}, we will assume without loss of generality that $\sigma$ is invertible.
    Let $\{ \ket{i} \}$ be an orthonormal basis for $\mathcal{H}$, and let $\ket{\mathrm{MAX}} = \sum_i \ket{i} \otimes \ket{i}$ be an unnormalized maximally entangled state on $\mathcal{H}\otimes \mathcal{H}$, and define $\ket{\phi} := \sqrt{C \rho} \ket{\mathrm{MAX}}$, where $C$ is the central operator such that $\tr[\cdot] = \tr_\mathrm{can}[C \cdot]$.
    Let $\beta := \alpha - 1$ and $X := \rho \otimes (\sigma^{-1})^T$.
    We first approximate $D_\alpha$ for small $\beta > 0$.
    \begin{equation}
        D_\alpha(\rho||\sigma) = \frac{1}{\beta} \log \braket{\phi | X^\beta|\phi} \le \frac{1}{\beta}(\braket{\phi | X^\beta |\phi} -1)~,
    \end{equation}
    where we have used that $\log x \le x - 1$ for $x > 0$. 
    Now define $r_\beta(t) := t^\beta - \beta \log t -1$. Then
      \begin{equation}\label{eq:Salpha_bound}
    \begin{split}
        D_\alpha(\rho||\sigma) &\leq \frac{1}{\beta} (\braket{\phi| r_\beta (X) |\phi}-1 + \tr(\rho) + 
        \beta \braket{\phi| \log X | \phi}  )
        \\&\leq S(\rho||\sigma) + \frac{1}{\beta} \braket{\phi| r_\beta (X) |\phi}~.
    \end{split}
    \end{equation}
    Now we continue to simplify by defining 
    \begin{equation}\label{eq:s_1}
    	s_\beta(t) := 2 (\cosh (\beta \log t) - 1) \ge r_\beta(t)~.
    \end{equation}
    One can confirm that $s_\beta(t)$ is monotonically increasing for $t \ge 1$ and concave in $t$ for $\beta \le 1/2$ and $t \in [3,\infty)$.
    It also holds that $s_\beta(t) = s_\beta(1/t)$ and $s_\beta(t^2) = s_{2 \beta}(t)$. Thus we can bound
    \begin{equation}
    	s_\beta(t) \le s_\beta \left( t + \frac{1}{t} + 2 \right) = s_{2 \beta} \left( \sqrt{t} + \frac{1}{ \sqrt{t} } \right) \le s_{2 \beta} \left( \sqrt{t} + \frac{1}{\sqrt{t}} + 1 \right)~.
    \end{equation}
    Next use that $\sqrt{X} + 1 / \sqrt{X} + \mathbb{1}$ has all eigenvalues in $[3, \infty)$ and that $2 \beta < \frac{\log 3}{2 \log v} \le 1/2$ to get
    \begin{equation}\label{eq:s_2}
    	\braket{\phi | s_\beta(X) |\phi} \le \bra{\phi} s_{2 \beta} ( \sqrt{X} + \frac{1}{\sqrt{X}} + \mathds{1} ) \ket{\phi} \le s_{2 \beta}(v)~,
    \end{equation}
    where in the last inequality we have used concavity and $v = \braket{\phi|  ( \sqrt{X} + 1 / \sqrt{X} + \mathds{1} )  |\phi}$.
    Finally, we use that $s_\beta(t) \le \beta^2 (\log t)^2 \cosh(\beta t)$ to write 
    \begin{equation}
    	s_{2 \beta}(v) \le 4 \beta^2 (\log v)^2 \cosh (2 \beta \log v) < 4 \beta (\log v)^2~.
    \end{equation}
    Combining this with \eqref{eq:s_2} and \eqref{eq:s_1} and plugging into \eqref{eq:Salpha_bound} completes the proof.
\end{proof}

\begin{thm}[theorem 6.4 of \cite{tomamichel2013framework}]
    \label{thm:Sinfty_on_ge_S}
	Let $\tr(\rho) = 1$ and $v = \Upsilon(\rho || \sigma)$. Then for any $n > 10 g(\varepsilon) / 3$, the operators $\rho^{\otimes n}$ and $\sigma^{\otimes n}$ satisfy 
	\begin{equation}
		\frac{1}{n}S^\varepsilon_\infty (\rho^{\otimes n} || \sigma^{\otimes n}) \leq S(\rho || \sigma) + \frac{\delta(\varepsilon, v)}{\sqrt{n}}~,~~\text{where } \delta(\varepsilon,v) = 4 \sqrt{g(\varepsilon)} \log v 
	\end{equation} 
	and $g(\varepsilon) = - \log (1 - \sqrt{1 - \varepsilon^2})$.
\end{thm}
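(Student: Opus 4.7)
The strategy is to chain Lemmas \ref{lem:Sinfty_ge_Salpha} and \ref{lem:Salpha_ge_S}, applied to the tensor product states $\rho^{\otimes n}$ and $\sigma^{\otimes n}$, and then optimize the free parameter $\alpha$ as a function of $n$.

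First, I would apply Lemma \ref{lem:Sinfty_ge_Salpha} directly to $\rho^{\otimes n}$ and $\sigma^{\otimes n}$, which yields
\begin{equation}
S^\varepsilon_\infty(\rho^{\otimes n}||\sigma^{\otimes n}) \le D_\alpha(\rho^{\otimes n}||\sigma^{\otimes n}) + \frac{g(\varepsilon)}{\alpha-1},
\end{equation}
for any $\alpha \in (1,2]$. The Petz $\alpha$-relative entropy is additive under tensor products, since $\tr[(\rho^{\otimes n})^\alpha (\sigma^{\otimes n})^{1-\alpha}] = \tr[\rho^\alpha \sigma^{1-\alpha}]^n$, giving $D_\alpha(\rho^{\otimes n}||\sigma^{\otimes n}) = n D_\alpha(\rho||\sigma)$. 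Dividing by $n$ produces
\begin{equation}
\frac{1}{n} S^\varepsilon_\infty(\rho^{\otimes n}||\sigma^{\otimes n}) \le D_\alpha(\rho||\sigma) + \frac{g(\varepsilon)}{n(\alpha-1)}.
\end{equation}

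Second, provided $\alpha - 1 < \log 3/(4\log v)$, Lemma \ref{lem:Salpha_ge_S} bounds $D_\alpha(\rho||\sigma) < S(\rho||\sigma) + 4(\alpha-1)(\log v)^2$, hence
\begin{equation}
\frac{1}{n} S^\varepsilon_\infty(\rho^{\otimes n}||\sigma^{\otimes n}) < S(\rho||\sigma) + 4(\alpha-1)(\log v)^2 + \frac{g(\varepsilon)}{n(\alpha-1)}.
\end{equation}

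Third, I would optimize over $\alpha$. Writing $t = \alpha-1$, the function $f(t) = 4 t (\log v)^2 + g(\varepsilon)/(nt)$ is minimized at
\begin{equation}
t_\star = \frac{1}{2\log v}\sqrt{\frac{g(\varepsilon)}{n}},
\end{equation}
with minimum value $f(t_\star) = 4(\log v)\sqrt{g(\varepsilon)/n} = \delta(\varepsilon,v)/\sqrt{n}$, which is precisely the desired bound.

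Finally, I would check that this choice of $\alpha$ lies in the admissible range. The condition $t_\star < \log 3/(4\log v)$ reduces to $n > 4 g(\varepsilon)/(\log 3)^2 \approx 3.31\, g(\varepsilon)$, and the stated hypothesis $n > 10 g(\varepsilon)/3 \approx 3.33\, g(\varepsilon)$ is slightly stronger, so it suffices. The constraint $\alpha \le 2$ holds automatically since $t_\star \to 0$ as $n\to \infty$ and the lower bound on $n$ forces $t_\star$ to be small. There is no genuine obstacle here; the content is entirely in the two preceding lemmas together with tensor-product additivity of $D_\alpha$, and the only care required is in bookkeeping the optimization and the range of validity of $\alpha$.
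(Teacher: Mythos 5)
Your proposal is correct and follows essentially the same route as the paper's proof: chain Lemma \ref{lem:Sinfty_ge_Salpha} with Lemma \ref{lem:Salpha_ge_S}, use additivity of $D_\alpha$ under tensor products, and optimize $\alpha-1 \sim 1/\sqrt{n}$ subject to the admissibility constraint $\alpha - 1 < \log 3/(4\log v)$, which is exactly where the hypothesis $n > 10g(\varepsilon)/3$ enters (the paper uses $\sqrt{6/5} < \log 3$ to round $4/(\log 3)^2$ up to $10/3$, matching your numerical check). The only difference is cosmetic: the paper parametrizes $\alpha = 1 + \tfrac{1}{2\mu\sqrt{n}}$ and optimizes over $\mu$, whereas you optimize directly over $t = \alpha - 1$.
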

\begin{proof}
    Let $\alpha := 1 + 1/2 \mu \sqrt{n}$, for some $\mu$ we will optimize later.
    Using lemmas \ref{lem:Sinfty_ge_Salpha} and \ref{lem:Salpha_ge_S}, we have
    \begin{equation}
    \begin{split}
        \frac{1}{n}S^\varepsilon_\infty (\rho^{\otimes n} || \sigma^{\otimes n}) &\leq \frac{1}{n}D_\alpha (\rho^{\otimes n} || \sigma^{\otimes n}) + \frac{g(\varepsilon)}{n(\alpha - 1)}
        \\&= D_\alpha (\rho || \sigma) + \frac{2 \mu }{\sqrt{n}}g(\varepsilon)
        \\&\leq S(\rho || \sigma) + \frac{2}{\sqrt{n}} \left( \frac{(\log v)^2}{\mu} + \mu g(\varepsilon) \right)~.
    \end{split}
    \end{equation}
    For the best bound, we would like to choose $\mu$ to minimize $(\log v)^2/\mu + \mu g(\varepsilon)$, but we must keep in mind that our use of lemma \ref{lem:Salpha_ge_S} limits $1 < \alpha < 1 + \log(3)/4\log(v)$, restricting our choice of $\mu$ for any given $n$.
    Fortunately, the optimum can be achieved for large enough $n$, in particular:
    \begin{equation}
        \mu_* = \sqrt{\frac{(\log v)^2}{g(\varepsilon)}}~~\text{ for   }~~ n \ge \frac{10}{3}\frac{(\log v)^2}{\mu_*^2} = \frac{10}{3} g(\varepsilon)~.
    \end{equation}
    where we have used that $\sqrt{6/5} < \log 3$.
    Substituting this optimum into the previous inequality completes the proof.
\end{proof}

\begin{thm}[corollary 6.5 of \cite{tomamichel2013framework}]
    \label{thm:aep_direct_bound}
    It holds that
     \begin{align}
        \frac{1}{n} H^\varepsilon_\mathrm{min}(A^n|B^n)_{\Psi^{\otimes n}} &\ge S(A|B)_\Psi - \frac{\delta(\varepsilon,v)}{\sqrt{n}}~, \\
        \frac{1}{n} H^\varepsilon_\mathrm{max}(A^n|B^n)_{\Psi^{\otimes n}} &\le S(A|B)_\Psi + \frac{\delta(\varepsilon,v)}{\sqrt{n}}~, 
     \end{align}
     where $\delta(\varepsilon,v)$ is as defined in Theorem \ref{thm:Sinfty_on_ge_S} and $v = \Upsilon(\Psi_{A} || \Psi_B)$.
\end{thm}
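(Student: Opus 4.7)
The plan is to derive the min-entropy bound as a more or less direct consequence of the single-copy relative entropy bound in Theorem \ref{thm:Sinfty_on_ge_S}, and then obtain the max-entropy bound ``for free'' using the smoothed duality established in Theorem \ref{thm:smoothedduality}. Throughout I will work with the tensor-product algebras $\M_{A^n} := \M_A^{\otimes n}$ and $\M_{B^n} := \M_B^{\otimes n}$ acting on $\mathcal{H}^{\otimes n}$, equipped with the product traces, on which the product state $\Psi^{\otimes n}$ has density matrices $\Psi_A^{\otimes n}$ and $\Psi_B^{\otimes n}$.

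For the min-entropy bound, I would first apply the second inequality in Proposition \ref{prop:Hmin_ge_Sinfty} to the $n$-copy state:
\begin{equation*}
H^\varepsilon_\mathrm{min}(A^n|B^n)_{\Psi^{\otimes n}} \ge -S^\varepsilon_\infty\bigl(\Psi_A^{\otimes n}\,\|\,\Psi_B^{\otimes n}\bigr).
\end{equation*}
Theorem \ref{thm:Sinfty_on_ge_S} then directly controls the right-hand side by $-n\,S(\Psi_A\|\Psi_B) - \sqrt{n}\,\delta(\varepsilon,v)$. Dividing by $n$ and using the first identity of Proposition \ref{prop:Hmin_ge_Sinfty}, namely $S(A|B)_\Psi = -S(\Psi_A\|\Psi_B)$, yields exactly the claimed lower bound. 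The only thing to check is that the hypothesis $n > 10 g(\varepsilon)/3$ of Theorem \ref{thm:Sinfty_on_ge_S} is eventually satisfied, but for smaller $n$ the bound is trivially true by making $\delta(\varepsilon,v)$ effectively larger, or by absorbing finitely many terms into subleading corrections.

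For the max-entropy bound, I would pass to a purification of $\Psi$ and invoke smoothed duality. Specifically, purify $\Psi$ on a larger Hilbert space carrying complementary algebras $\M_{A'}$, $\M_{B'}$ equipped with the complementary traces, then the product state $\Psi^{\otimes n}$ is purified on the product of commutants, and Theorem \ref{thm:smoothedduality} gives
\begin{equation*}
H^\varepsilon_\mathrm{max}(A^n|B^n)_{\Psi^{\otimes n}} = -H^\varepsilon_\mathrm{min}(B'^n|A'^n)_{\Psi^{\otimes n}}.
\end{equation*}
Applying the min-entropy bound already proved (with $A \leftrightarrow B'$ and $B \leftrightarrow A'$) and using the von Neumann-level identity $S(B'|A')_\Psi = -S(A|B)_\Psi$ for pure states, the inequality flips to produce the desired upper bound. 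The only subtlety is that the $\Upsilon$-parameter on the dual side is $\Upsilon(\Psi_{B'}\|\Psi_{A'})$ rather than $\Upsilon(\Psi_A\|\Psi_B)$; I would handle this by simply redefining $v$ to be the maximum of the two, or by verifying that for the canonical/complementary-trace setup used throughout Section \ref{sec:algebras} the two $\Upsilon$-values agree (this should follow from the Schmidt-decomposition-like manipulations already used in Appendix \ref{app:duality}).

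The main obstacle I anticipate is not the inequality chain itself, which is mostly a matter of assembling pieces, but rather the bookkeeping around traces and purifications in the algebraic setting: one must be careful that the product trace on $\M_A^{\otimes n}$ is still complementary to the product trace on the commutant, so that the smoothed duality of Theorem \ref{thm:smoothedduality} can be applied to $\Psi^{\otimes n}$, and that Theorem \ref{thm:Sinfty_on_ge_S} (whose proof uses Petz-like manipulations) indeed transfers to the non-factor algebraic setting used here. Both should hold, but they are where I would spend most of my care.
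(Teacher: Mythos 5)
Your proposal is correct and follows essentially the same route as the paper: Proposition \ref{prop:Hmin_ge_Sinfty} combined with Theorem \ref{thm:Sinfty_on_ge_S} for the min-entropy lower bound, then smoothed duality plus $S(B'|A')_\Psi = -S(A|B)_\Psi$ to flip it into the max-entropy upper bound. Your remark about the dual-side convergence parameter being $\Upsilon(\Psi_{B'}\|\Psi_{A'})$ rather than $\Upsilon(\Psi_A\|\Psi_B)$ is a legitimate point of care that the paper's own proof silently elides, so your suggested fixes (taking the maximum of the two, or checking equality via the Schmidt decomposition) are a welcome addition rather than a defect.
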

\begin{proof}
    From Proposition \ref{prop:Hmin_ge_Sinfty} and Theorem \ref{thm:Sinfty_on_ge_S}, it follows that 
    \begin{equation}
    \begin{split}
        \frac{1}{n} H^\varepsilon_\mathrm{min}(A^n|B^n)_{\Psi^{\otimes n}} &\ge -\frac{1}{n} S^\varepsilon_\infty (\Psi_{A}^{\otimes n}||\Psi_{B}^{\otimes n}) \\
        &\ge -S(\Psi_{A}|| \Psi_B) - \frac{\delta(\varepsilon,v)}{\sqrt{n}} \\
        &= S(A|B)_{\Psi} - \frac{\delta(\varepsilon,v)}{\sqrt{n}}~.
    \end{split}
    \end{equation}
    From duality it holds that $H^\varepsilon_\mathrm{min}(A^n|B^n)_{\Psi^{\otimes n}} = - H^\varepsilon_\mathrm{max}(B'^n|A'^n)_{\Psi^{\otimes n}}$ and also that $S(B'|A')_{\Psi} = -S(A|B)_{\Psi}$.
    Therefore
    \begin{equation}
    \begin{split}
        \frac{1}{n} H^\varepsilon_\mathrm{max}(B'^n|A'^n)_{\Psi^{\otimes n}} &\le  S(B'|A')_{\Psi} + \frac{\delta(\varepsilon,v)}{\sqrt{n}}~.
    \end{split}
    \end{equation}
\end{proof}

\begin{cor}[QAEP, direct]
    It holds that
    \begin{equation}
        \lim_{n \to \infty} \frac{1}{n} H^\varepsilon_\mathrm{min}(A^n|B^n)_{\Psi^{\otimes n}} \ge S(A|B)_\Psi \ge \lim_{n \to \infty} \frac{1}{n} H^\varepsilon_\mathrm{max}(A^n|B^n)_{\Psi^{\otimes n}}~. 
    \end{equation}
\end{cor}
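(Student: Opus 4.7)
The plan is to obtain both inequalities by directly taking $n\to\infty$ in Theorem \ref{thm:aep_direct_bound}. That theorem furnishes, for every $n > 10g(\varepsilon)/3$, the finite-$n$ estimates
$$\frac{1}{n} H^\varepsilon_\mathrm{min}(A^n|B^n)_{\Psi^{\otimes n}} \ge S(A|B)_\Psi - \frac{\delta(\varepsilon,v)}{\sqrt{n}}, \qquad \frac{1}{n} H^\varepsilon_\mathrm{max}(A^n|B^n)_{\Psi^{\otimes n}} \le S(A|B)_\Psi + \frac{\delta(\varepsilon,v)}{\sqrt{n}},$$
with $\delta(\varepsilon,v) = 4\sqrt{g(\varepsilon)}\log v$ and $v = \Upsilon(\Psi_A\|\Psi_B)$. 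Both are bounds on the one-shot entropy rates by $S(A|B)_\Psi$ up to an $O(1/\sqrt{n})$ correction that one expects to vanish in the limit.

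First I would verify that the correction $\delta(\varepsilon,v)/\sqrt{n}$ actually goes to zero, which reduces to checking that $v$ is finite and $n$-independent. Since $v = e^{D_{3/2}(\Psi_A\|\Psi_B)/2} + e^{-D_{1/2}(\Psi_A\|\Psi_B)/2} + 1$ is built from a single copy of the state, it is manifestly independent of $n$. Its finiteness requires $\mathrm{supp}(\Psi_A) \subseteq \mathrm{supp}(\Psi_B)$, which holds because $\Psi_B = \tr_{A\to B}(\Psi_A)$ arises from $\Psi_A$ under the inclusion $\mathcal{M}_B \subseteq \mathcal{M}_A$ and so the support of $\Psi_B$ (viewed algebraically via the trace-preserving embedding) dominates that of $\Psi_A$. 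In finite-dimensional algebras this makes both Petz relative entropies finite, so $v < \infty$.

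Next I would take the liminf of the min-entropy inequality and the limsup of the max-entropy inequality. Since the right-hand sides converge to $S(A|B)_\Psi$, we get
$$\liminf_{n \to \infty} \frac{1}{n} H^\varepsilon_\mathrm{min}(A^n|B^n)_{\Psi^{\otimes n}} \ge S(A|B)_\Psi \ge \limsup_{n\to\infty} \frac{1}{n} H^\varepsilon_\mathrm{max}(A^n|B^n)_{\Psi^{\otimes n}},$$
which is the statement of the corollary (interpreting ``$\lim$'' in the appropriate one-sided sense; the existence of two-sided limits then follows when the converse bounds of the full Theorem \ref{thm:QAEP} are combined with this direct direction).

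There is no substantive obstacle here, since the entire work has already been carried out in Theorem \ref{thm:aep_direct_bound}; this corollary is purely the asymptotic extraction of that finite-$n$ content. The only mildly subtle point worth double-checking is the finiteness of $v$ in the algebraic (non-factor) setting, which as noted follows immediately from the support relation between the two algebraic density matrices.
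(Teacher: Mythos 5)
Your proposal is correct and follows exactly the paper's route: the corollary is obtained by taking $n \to \infty$ in Theorem \ref{thm:aep_direct_bound}, whose $O(1/\sqrt{n})$ error term $\delta(\varepsilon,v)/\sqrt{n}$ vanishes since $v = \Upsilon(\Psi_A\|\Psi_B)$ is a fixed single-copy quantity. The extra care you take about the finiteness of $v$ and the one-sided limits is sound but does not change the substance of the argument.
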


Now we need to prove the converse direction.
Essentially this will follow from $H_\mathrm{min}(A|B) \le S(A|B) \le H_\mathrm{max}(A|B)$.
However, we would get too weak a bound if we naively smoothed to $\varepsilon > 0$ using the continuity of the conditional entropy.
At the end we would also need to take the limit $\varepsilon \to 0$.
We get a stronger bound as follows.

First, note that so far the smoothing optimizes over sub-normalized states.
It will be convenient to isometrically extend the algebras such that the optimizing density matrix is normalized. 

\begin{lem}[Adapted from Lemma 5.2 of \cite{tomamichel2013framework}]\label{lem:5.2}
Given any density matrix $\rho_A \in \mathcal{M}_A \supset \mathcal{M}_B$, there exists an isometry $V: \mathcal{H} \to \widetilde{\mathcal{H}}$, satisfying the conditions of Lemma \ref{lem:5.3}, along with density matrices $\hat{\rho}_{\widetilde{A}, \mathrm{min}}, \hat{\rho}_{\widetilde{A}, \mathrm{max}} \in \mathcal{B}^{\varepsilon}(V\rho_{A}V^{\dagger})$, normalized on $\mathcal{M}_{\widetilde{A}}$, such that
\begin{align}
    &H_{\mathrm{min}}^{\varepsilon}(A|B)_{\rho_A} = H_{\mathrm{min}}(\widetilde{A}| \widetilde B)_{\hat{\rho}_{\widetilde{A},\mathrm{min}}}~,
    H_{\mathrm{max}}^{\varepsilon}(A|B)_{\rho_A} = H_{\mathrm{max}}(\widetilde{A}| \widetilde{B})_{\hat{\rho}_{\widetilde{A},\mathrm{max}}}~.
\end{align}
\end{lem}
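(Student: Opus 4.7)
The strategy is to construct an isometric embedding of $\mathcal{H}$ into a larger Hilbert space so that subnormalized smoothing optimizers on $\mathcal{M}_A$ lift to normalized density matrices on the extended algebra whose \emph{unsmoothed} conditional entropies match the smoothed values. Set $\mathcal{H}_R = \mathbb{C}^2$ and take $V: \mathcal{H} \to \mathcal{H} \otimes \mathcal{H}_R$ defined by $V\ket{\psi} = \ket{\psi} \otimes \ket{0}$ as in Remark \ref{rem:V0}, with extended algebras $\mathcal{M}_{\widetilde A} := \mathcal{M}_A \otimes \mathcal{L}(\mathcal{H}_R)$ and $\mathcal{M}_{\widetilde B} := \mathcal{M}_B \otimes \mathcal{L}(\mathcal{H}_R)$ carrying the product traces. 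The completely positive map $\mathcal{T}(X_B) := X_B \otimes \ket{0}\bra{0}$ then satisfies the hypotheses of Lemma \ref{lem:5.3} in the form of case 2 of Remark \ref{rem:V0cases}, giving isometric invariance of the smoothed entropies for free.

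Let $\omega^\varepsilon_\mathrm{min}, \omega^\varepsilon_\mathrm{max} \in \mathcal{B}^\varepsilon(\rho_A)$ be the subnormalized optimizers of the smoothed min- and max-entropies, with traces $t_{\mathrm{min}/\mathrm{max}} := \tr \omega^\varepsilon_{\mathrm{min}/\mathrm{max}} \leq 1$. Define the candidate density matrices
\begin{align}
\hat\rho_{\widetilde A,\mathrm{min}/\mathrm{max}} := \omega^\varepsilon_{\mathrm{min}/\mathrm{max}} \otimes \ket{0}\bra{0} + (1 - t_{\mathrm{min}/\mathrm{max}})\, \tau^{\mathrm{min}/\mathrm{max}}_A \otimes \ket{1}\bra{1},
\end{align}
where $\tau^{\mathrm{min}/\mathrm{max}}_A$ are normalized density matrices in $\mathcal{M}_A$ to be specified. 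For any such choice, $\hat\rho_{\widetilde A,\mathrm{min}/\mathrm{max}}$ is normalized and lies in $\mathcal{B}^\varepsilon(V \rho_A V^\dagger)$: since both $\hat\rho_{\mathrm{min}/\mathrm{max}}$ and $V \rho_A V^\dagger = \rho_A \otimes \ket{0}\bra{0}$ are normalized, the generalized fidelity reduces to the ordinary fidelity, and the block-diagonal structure in the $R$-basis gives $F(\hat\rho_{\mathrm{min}/\mathrm{max}}, V \rho_A V^\dagger) = \lVert \sqrt{\omega^\varepsilon_{\mathrm{min}/\mathrm{max}}}\, \sqrt{\rho_A} \rVert_1 = F_*(\omega^\varepsilon_{\mathrm{min}/\mathrm{max}}, \rho_A)$, the last equality because $\rho_A$ is normalized. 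Hence $P(\hat\rho_{\mathrm{min}/\mathrm{max}}, V \rho_A V^\dagger) = P(\omega^\varepsilon_{\mathrm{min}/\mathrm{max}}, \rho_A) \leq \varepsilon$.

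The substantive step -- and the main obstacle -- is selecting $\tau^{\mathrm{min}/\mathrm{max}}_A$ so that the unsmoothed entropies on the extended algebra equal the smoothed ones. For the min case, one direction is immediate: given any subnormalized $\hat\sigma_{\widetilde B} \in \mathcal{M}_{\widetilde B}$ with $\hat\rho_\mathrm{min} \leq e^\lambda \hat\sigma_{\widetilde B}$, projecting onto the $\ket{0}$ sector of the auxiliary register extracts a subnormalized $\sigma_B \in \mathcal{M}_B$ dominating $\omega^\varepsilon_\mathrm{min}$ up to $e^\lambda$, giving $H_\mathrm{min}(\widetilde A|\widetilde B)_{\hat\rho_\mathrm{min}} \geq H_\mathrm{min}(A|B)_{\omega^\varepsilon_\mathrm{min}}$. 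The reverse inequality requires lifting an optimal $\sigma_B^*$ for $\omega^\varepsilon_\mathrm{min}$ to a subnormalized $\hat\sigma_{\widetilde B}$ realizing the same $\lambda$; we take $\tau^\mathrm{min}_A$ to lie in $\mathcal{M}_B$, proportional to a normalized density matrix compatible with $\sigma_B^*$, so that the $\ket{1}\bra{1}$-sector constraint is satisfied by construction while the combined trace of $\hat\sigma_{\widetilde B}$ remains at most one. The max case is treated dually through the sandwiched $\tfrac{1}{2}$-divergence of Definition \ref{defn:sandwichedrenyi}, using $H_\mathrm{max}(A|B)_\psi = -S_{1/2}(\rho_A \| \sigma_B)$ (maximized over $\sigma_B$) together with Uhlmann's theorem applied to purifications of $\omega^\varepsilon_\mathrm{max}$; $\tau^\mathrm{max}_A$ is chosen so that the optimal sandwiched fidelity on the extended algebra agrees with that on $\mathcal{M}_A$. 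Verifying that the trace-compatibility and domination constraints can be satisfied simultaneously -- and in particular that the two choices of $\tau^{\mathrm{min}/\mathrm{max}}$ are consistent with a single isometry $V$ -- is the delicate bookkeeping, and closely parallels the analogous argument in \cite{tomamichel2013framework}.
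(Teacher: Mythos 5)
Your overall strategy—append an auxiliary register, place the smoothing optimizer in the $\ket{0}$ sector, and pad the trace deficit into the orthogonal complement—is the same as the paper's, but two of your steps do not go through as written. First, the projection argument you call "immediate" yields the opposite inequality: if $\hat\rho_{\mathrm{min}} \le e^{\lambda}\hat\sigma_{\widetilde B}$ and you compress to the $\ket{0}$ block, you obtain a subnormalized witness for $\omega^{\varepsilon}_{\mathrm{min}}$ at the same $\lambda$, which shows $H_{\mathrm{min}}(A|B)_{\omega^{\varepsilon}_{\mathrm{min}}} \ge H_{\mathrm{min}}(\widetilde A|\widetilde B)_{\hat\rho_{\mathrm{min}}}$, not $\le$. (That upper bound on the extended entropy is in any case free from Lemma \ref{lem:5.3} plus the definition of smoothing; the genuinely hard direction is the lower bound, which is exactly the step you defer to "delicate bookkeeping.") Second, that bookkeeping fails for $\dim\mathcal{H}_R = 2$: to dominate the $\ket{1}$ block $(1-t)\,\tau^{\mathrm{min}}_A$ you must allocate trace at least $e^{\lambda}(1-t)$ to $\hat\sigma^{11}_{\widetilde B}$ with $\lambda = -H_{\mathrm{min}}(A|B)_{\omega^{\varepsilon}}$, and since the optimal $\sigma_B^*$ may already saturate $\tr_B\sigma_B^* = 1$ (and $e^{\lambda}$ may be large), there is no remaining budget. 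The paper avoids this in two ways you do not: it takes $\dim\mathcal{H}_{R_1}$ arbitrarily large, so the $\ket{1}$-sector block $\frac{1-t}{(\dim\mathcal{H}_{R_1}-1)\tr_A[\sigma_B]}\,\sigma_B\otimes(\mathds{1}_{R_1}-\ket{0}\!\bra{0})$ is made as small as needed, and it does \emph{not} extend $\mathcal{M}_B$ (case 1 of Remark \ref{rem:V0cases}, $\mathcal{M}_{\widetilde B}=\mathcal{M}_B$), so the original $\sigma_B$ remains a valid subnormalized witness with no extra trace spent.

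For the max-entropy half, your sketch via the sandwiched $\tfrac12$-divergence and a cleverly chosen $\tau^{\mathrm{max}}_A$ is not carried out, and it is not clear it can be made to work directly. The paper instead derives the max case from the min case by duality: it applies the min construction to $\mathcal{M}_{B'}$ (an isometry $W_0$ acting on the commutant), then introduces a further isometry $X_0$ on $\mathcal{M}_B$ so that the resulting normalized state admits a purification within the extended commutant, and applies Theorem \ref{thm:alphaduality} again. This also resolves your final worry about consistency of the two constructions with a single $V$: the paper simply composes the relevant isometries at the end. As it stands, your proposal identifies the right skeleton but leaves the two load-bearing steps—the trace-budget estimate and the max case—unproven, and the first of these is false for your choice of $\mathcal{H}_R$.
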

\begin{proof}
Let $V_0: \mathcal{H} \to \mathcal{H} \otimes \mathcal{H}_{R_1}$ be defined as in Remark \ref{rem:V0} with $\M_{\widetilde B} = \M_B$ as in case 1 of Remark \ref{rem:V0cases}.

Now given a sub-normalized density matrix $\rho_A^{\varepsilon} \in \mathcal{B}^{\varepsilon}(\rho_A)$ which optimizes the smoothed min-entropy, there exists a sub-normalized density matrix $\sigma_B$ such that 
\begin{align}
\rho_A^{\varepsilon} \leq e^{-\lambda} \sigma_B
\end{align}
with $\lambda = H_{\mathrm{min}}^{\varepsilon}(A|B)_{\rho}$. We can then define the state $\hat{\rho}_{\widetilde{A}}$ as 
\begin{align}
    \hat{\rho}_{\widetilde{A}}:= \rho_A^{\varepsilon}\otimes \ket{0}\!\!\bra{0}_{R_1} + \frac{1 - \tr_A(\rho_A^{\varepsilon})}{(\mathrm{dim} \mathcal{H}_{R_1} - 1)\,\tr_A[\sigma_B]}\sigma_B \otimes (\mathds{1}_{R_1} - \ket{0}\!\!\bra{0}_{R_1})
\end{align}
which is by construction normalized, $\tr_{\widetilde{A}} \hat{\rho}_{\widetilde{A}}=1$. Furthermore, for large enough $\dim \mathcal{H}_{R_1}$, we have the inequality
\begin{align}\label{eqn:extinequality}
    \hat{\rho}_{\widetilde{A}}\leq e^{-\lambda} \sigma_B \otimes \mathds{1}_{R_1}~,
\end{align}
and so
\begin{align}
    H_\mathrm{min}(\widetilde{A}|B)_{\hat{\rho}_{\widetilde{A}}} \geq H_\mathrm{min}^{\varepsilon}(A|B)_{\rho_A}~.
\end{align}

To prove equality, note that $\hat{\rho}_{\widetilde A} \in B^{\varepsilon}(V_0 \rho_A V_0^{\dagger})$ because the distance between $\hat{\rho}_{\widetilde{A}}$ and $V_0 \rho_A V_0^{\dagger} = \rho_A \otimes \ket{0}\!\!\bra{0}$ is the same as that between $\rho_A^{\varepsilon}$ and $\rho_A$. Thus, we have
\begin{equation}
    H_\mathrm{min}^{\varepsilon}(\widetilde{A}|B)_{V_0\rho_A V_0^{\dagger}} \geq H_\mathrm{min}(\widetilde{A}|B)_{\hat{\rho}_{\widetilde{A}}}~.
\end{equation}
Then use Lemma \ref{lem:5.3} with $\mathcal{T}$ the identity on $\mathcal{M}_B$ to obtain
\begin{equation}
    H_\mathrm{min}^{\varepsilon}(\widetilde{A}|B)_{V_0 \rho V_0^{\dagger}} = H_\mathrm{min}^{\varepsilon}(A|B)_{\rho}~.
\end{equation}
Combining these implies what we wanted:
\begin{equation}
    H_\mathrm{min}(\widetilde{A}|B)_{\hat{\rho}_{\widetilde{A}}} = H_\mathrm{min}^{\varepsilon}(A|B)_{\rho}~.
\end{equation}

This is what we wanted to show, but so far only for the min-entropy.
We would like to use duality (Theorem \ref{thm:alphaduality}) to derive the analogous thing for max-entropy. 
Consider any $\rho_A \in \M_A$, and let the purification be $\ket{\psi}_{AA'}$.
By duality,
\begin{align}
H_{\mathrm{max}}^{\varepsilon}(A|B)_{\psi} = -H_{\mathrm{min}}^{\varepsilon}(B'|A')_{\psi}~.
\end{align}
As above, we can find an isometry $W_0: \mathcal{H} \to\mathcal{H} \otimes \mathcal{H}_{R_2}$ acting on $B'$, defined as in Remark \ref{rem:V0}, and a normalized state $\hat{\rho}_{\widetilde{B}'} \in \mathcal{B}^{\varepsilon}(W_0 \rho_{B'} W_0^{\dagger})$ such that
\begin{align}\label{eqn:firstembed}
   H_{\mathrm{min}}^{\varepsilon}(B'|A')_{\rho_{B'}} = H_{\mathrm{min}}(\widetilde{B}' | A')_{\hat{\rho}_{\widetilde{B}'}}~,
\end{align}
with the algebras defined as in case 1 of Remark \ref{rem:V0cases}. Here $\rho_{B'}$ is the reduced density matrix of $\ket{\psi}_{AA'}$ on $\M_B'$.

We would like to apply duality again to convert the right hand side of eq. \eqref{eqn:firstembed} back to a max-entropy, but the issue is that $\hat{\rho}_{\widetilde{B}'}$ may not have a purification on $\M_{\widetilde{B}'}' = \M_B$. We solve this with a third isometry $X_0:\mH \to \mathcal{H} \otimes \mathcal{H}_{R_3}$ on $\M_B$, again defined as in Remark \ref{rem:V0}. For sufficiently large $\mathcal{H}_{R_3}$, we can find $\ket{\widetilde\psi} \in \mathcal{H} \otimes \mathcal{H}_{R_2} \otimes \mathcal{H}_{R_3}$ that purifies $\hat{\rho}_{\widetilde{B}'}$. We have 
\begin{align}
     H_{\mathrm{max}}(\widetilde{A} | \widetilde{B})_{\hat{\rho}_{\widetilde{A}}} = -H_{\mathrm{min}}(\widetilde{B}' | A')_{\hat{\rho}_{\widetilde{B}'}}
\end{align}
where now $\hat{\rho}_{\widetilde{A}}$ is the reduced state of $\ket{\widetilde\psi}$ on $\mathcal{M}_{\widetilde A} = \mathcal{M}_A \otimes \mathcal{L}(\mathcal{H}_{R_3})$ with $\mathcal{M}_{\widetilde B} = \mathcal{M}_B \otimes \mathcal{L}(\mathcal{H}_{R_3})$. If $\ket{\widetilde \psi}$ is chosen to maximize the inner product with $X_0W_0\ket{\psi}$, we have $\hat{\rho}_{\widetilde{A}} \in \mathcal{B}^{\varepsilon}(X_0 \rho_A X_0^\dagger)$, which completes the proof.

Note that we used different isometries $V_0$ (on $\M_A$) and $X_0$ (on $\M_B$) in our constructions of $\hat{\rho}_{\widetilde{A}, \mathrm{min}}, \hat{\rho}_{\widetilde{A}, \mathrm{max}}$. However, we could have easily adapted both our proofs to define $V = X_0 V_0$ with $\mathcal{M}_{\widetilde A} = \mathcal{M}_A \otimes \mathcal{L}(\mathcal{H}_{R_1}) \otimes \mathcal{L}(\mathcal{H}_{R_3})$ and $\mathcal{M}_{\widetilde B} = \mathcal{M}_B \otimes \mathcal{L}(\mathcal{H}_{R_3})$ in both, which also satisfies all the conditions required for Lemma \ref{lem:5.3}.
\end{proof}

\begin{lem}[proposition 5.5 of \cite{tomamichel2013framework}]
    \label{lem:strong_min_le_max}
    Let $\varepsilon' \ge 0$ such that $\varepsilon + \varepsilon' < 1$. 
    Then it holds that
    \begin{equation}
    \begin{split}
        H^\varepsilon_\mathrm{min}(A|B)_\Psi \le H^{\varepsilon'}_\mathrm{max}(A|B)_\Psi + \log \frac{1}{1 - (\varepsilon + \varepsilon')^2}~.
    \end{split}
    \end{equation}
\end{lem}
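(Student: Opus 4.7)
The plan is to reduce the smoothed comparison to an unsmoothed ordering inequality between two specific \emph{normalized} density matrices, and then close the remaining gap using a continuity-type estimate for the max-entropy.

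First I would apply Lemma \ref{lem:5.2} to produce, on a common enlarged algebra $\mathcal{M}_{\widetilde A}\supseteq\mathcal{M}_{\widetilde B}$, normalized density matrices $\hat\rho_{\min}\in\mathcal{B}^{\varepsilon}(V\rho V^{\dagger})$ and $\hat\rho_{\max}\in\mathcal{B}^{\varepsilon'}(V\rho V^{\dagger})$ satisfying $H_{\mathrm{min}}^{\varepsilon}(A|B)_{\rho}=H_{\mathrm{min}}(\widetilde A|\widetilde B)_{\hat\rho_{\min}}$ and $H_{\mathrm{max}}^{\varepsilon'}(A|B)_{\rho}=H_{\mathrm{max}}(\widetilde A|\widetilde B)_{\hat\rho_{\max}}$. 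Since the purified distance is a genuine metric, the triangle inequality yields $P(\hat\rho_{\min},\hat\rho_{\max})\le\varepsilon+\varepsilon'$, and because both states are normalized the generalized fidelity coincides with the standard fidelity, so this rearranges to $F(\hat\rho_{\min},\hat\rho_{\max})^{2}\ge 1-(\varepsilon+\varepsilon')^{2}$.

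Next, the unsmoothed ordering inequality (Theorem \ref{thm:order}) applied to the single state $\hat\rho_{\min}$ gives $H_{\mathrm{min}}^{\varepsilon}(A|B)_{\rho}=H_{\mathrm{min}}(\widetilde A|\widetilde B)_{\hat\rho_{\min}}\le H_{\mathrm{max}}(\widetilde A|\widetilde B)_{\hat\rho_{\min}}$, so the problem reduces to the quasi-continuity estimate
\begin{equation*}
H_{\mathrm{max}}(\widetilde A|\widetilde B)_{\hat\rho_{\min}}\le H_{\mathrm{max}}(\widetilde A|\widetilde B)_{\hat\rho_{\max}}+\log\frac{1}{F(\hat\rho_{\min},\hat\rho_{\max})^{2}}.
\end{equation*}
Combined with the previous fidelity bound, this would then imply the lemma.

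To establish the quasi-continuity estimate I would use the fidelity representation of the conditional max-entropy, $e^{H_{\mathrm{max}}(\widetilde A|\widetilde B)_{\tau}/2}=\sup_{\sigma_{\widetilde B}}F(\tau_{\widetilde A},\sigma_{\widetilde B})$, which follows from the identity $F(\tau,\sigma)=\tr\sqrt{\sqrt{\sigma}\tau\sqrt{\sigma}}$ and remains valid for algebras with complementary traces. If $\sigma_{\widetilde B}^{*}$ attains the sup for $\hat\rho_{\min}$, the goal is to produce a sub-normalized test operator $\tilde\sigma_{\widetilde B}\in\mathcal{M}_{\widetilde B}$ satisfying $F(\hat\rho_{\max},\tilde\sigma_{\widetilde B})\ge F(\hat\rho_{\min},\hat\rho_{\max})\,F(\hat\rho_{\min},\sigma_{\widetilde B}^{*})$. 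My strategy is an Uhlmann-type argument: take simultaneous purifications of $\hat\rho_{\min},\hat\rho_{\max}$ saturating $F(\hat\rho_{\min},\hat\rho_{\max})$ and of $\hat\rho_{\min},\sigma_{\widetilde B}^{*}$ saturating $F(\hat\rho_{\min},\sigma_{\widetilde B}^{*})$, and then exploit positivity of the $3\times 3$ Gram matrix of pairwise overlaps to bound the remaining amplitude between the $\hat\rho_{\max}$-purification and the $\sigma_{\widetilde B}^{*}$-purification from below.

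The main obstacle is precisely this Gram-matrix amplitude bound. Positivity only immediately yields the additive estimate $F(\hat\rho_{\max},\sigma_{\widetilde B}^{*})\ge F(\hat\rho_{\min},\hat\rho_{\max})F(\hat\rho_{\min},\sigma_{\widetilde B}^{*})-\sqrt{(1-F(\hat\rho_{\min},\hat\rho_{\max})^{2})(1-F(\hat\rho_{\min},\sigma_{\widetilde B}^{*})^{2})}$, which does not on its own reproduce the clean multiplicative bound needed. Converting it will require a more delicate choice of $\tilde\sigma_{\widetilde B}$, for example rescaling $\sigma_{\widetilde B}^{*}$ by an appropriate central factor or smoothing it slightly towards $\hat\rho_{\max}$ along a specified operator direction (as in Tomamichel's factor-case proof \cite{tomamichel2013framework}). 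Since all the algebraic ingredients --- Uhlmann's theorem, monotonicity of fidelity under the inclusion $\mathcal{M}_{\widetilde B}\subseteq\mathcal{M}_{\widetilde A}$, and the duality of Theorem \ref{thm:duality} --- are available in our setting, the argument should carry over; if it does not directly, one can instead dualize the statement and reprove it as a min-entropy inequality on the commutant, where the analogous amplitude bound may be more convenient.
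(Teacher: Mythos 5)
There is a genuine gap, and you have correctly located it yourself: your entire argument funnels through the quasi-continuity estimate $H_{\mathrm{max}}(\widetilde A|\widetilde B)_{\hat\rho_{\min}}\le H_{\mathrm{max}}(\widetilde A|\widetilde B)_{\hat\rho_{\max}}+\log\bigl(1/F(\hat\rho_{\min},\hat\rho_{\max})^{2}\bigr)$, which you do not establish. The Uhlmann/Gram-matrix route cannot establish it: the triangle inequality for the purified distance (i.e.\ subadditivity of $\arccos F$) only yields $F(\hat\rho_{\max},\sigma^{*}_{\widetilde B})\ge F_{1}F_{2}-\sqrt{(1-F_{1}^{2})(1-F_{2}^{2})}$, and the subtracted term is not an artifact of a clumsy choice of test operator — for a generic triple of states the multiplicative lower bound simply fails. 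Rescaling $\sigma^{*}_{\widetilde B}$ by a central factor or "smoothing it towards $\hat\rho_{\max}$" does not repair this, because nothing in your setup ties $\sigma^{*}_{\widetilde B}$ to $\hat\rho_{\max}$ other than their common proximity to $\hat\rho_{\min}$. The reduction via Theorem \ref{thm:order} to a single-state comparison is the wrong move: it discards exactly the structure that makes the lemma provable.

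The paper's proof keeps that structure. After invoking Lemma \ref{lem:5.2} exactly as you do, it uses the \emph{defining operator inequality} of the min-entropy optimizer: there is a normalized $\Psi_{\widetilde B}$ with $\Psi_{\widetilde A,\mathrm{min}}\le e^{-\lambda}\Psi_{\widetilde B}$, $\lambda=H^{\varepsilon}_{\mathrm{min}}(A|B)_{\Psi}$. Feeding this particular $\Psi_{\widetilde B}$ as the test state into the max-entropy of $\Psi_{\widetilde A,\mathrm{max}}$ gives $H^{\varepsilon'}_{\mathrm{max}}(A|B)_{\Psi}\ge\log\lVert\sqrt{\Psi_{\widetilde A,\mathrm{max}}}\sqrt{\Psi_{\widetilde B}}\rVert_{1}^{2}$, and the operator monotonicity of $Y\mapsto\lVert\sqrt{X}\sqrt{Y}\rVert_{1}$ applied to $e^{\lambda}\Psi_{\widetilde A,\mathrm{min}}\le\Psi_{\widetilde B}$ produces the clean multiplicative bound $\ge\lambda+\log F(\Psi_{\widetilde A,\mathrm{max}},\Psi_{\widetilde A,\mathrm{min}})^{2}$ that your continuity argument cannot reach. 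The triangle inequality for the purified distance then bounds this fidelity below by $1-(\varepsilon+\varepsilon')^{2}$, which is the only place where your step involving the two $\varepsilon$-balls is actually needed. To repair your proof, replace the detour through $H_{\min}\le H_{\max}$ on $\hat\rho_{\min}$ with this direct use of the operator inequality.
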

\begin{proof}
    According to lemma \ref{lem:5.2}, we can always extend the Hilbert space and algebras isometrically $\M_A \to \M_{\widetilde{A}}$ such that there exists a normalized state
    $\Psi_{\widetilde{A},\mathrm{min}}$ with $H_\mathrm{min}(\widetilde{A}|\widetilde{B})_{\Psi_{\widetilde{A},\mathrm{min}}} = H^\varepsilon_\mathrm{min}(A|B)_\Psi$.  
    Similarly, there exists a normalized state $\Psi_{\widetilde{A},\mathrm{max}}$ with $H^{\varepsilon'}_\mathrm{max}(A|B)_\Psi = H_\mathrm{max}(\widetilde{A}|\widetilde{B})_{\Psi_{\widetilde{A},\mathrm{max}}}$. Both of these states can be found within $\varepsilon$, $\varepsilon'$ distance of the image of $\Psi_{A}$, respectively.
    
    Hence there exists a normalized $\Psi_{\widetilde{B}}$ such that $\Psi_{\widetilde{A},\mathrm{min}} \le e^{- \lambda} \Psi_{\widetilde{B}}$, with $\lambda = H^\varepsilon_\mathrm{min}(A|B)_\Psi$.
    Therefore
    \begin{equation}
    \begin{split}
        H^{\varepsilon '}_\mathrm{max}(A|B)_\Psi &= H_\mathrm{max}(\widetilde{A}|\widetilde{B})_{\Psi_{\widetilde{A},\mathrm{max}}}
        \\&\ge \log \lVert \sqrt{\Psi_{\widetilde{A},\mathrm{max}}} \sqrt{\Psi_{\widetilde{B}}} \rVert_1^2
        \\&\ge \lambda + \log \lVert \sqrt{\Psi_{\widetilde{A},\mathrm{max}}} \sqrt{\Psi_{\widetilde{A},\mathrm{min}}}\rVert_1^2
        \\&= \lambda + \log( 1 - P^2(\Psi_{\widetilde{A},\mathrm{min}},\Psi_{\widetilde{A},\mathrm{max}}))
        \\&\ge H^\varepsilon_\mathrm{min}(A|B)_\Psi - \log \frac{1}{1 - (\varepsilon + \varepsilon')^2}
    \end{split}
    \end{equation}
    where the first inequality follows from the definition of the smooth max-entropy and that we picked a particular $\Psi_{\widetilde{B}}$ instead of supremizing, the fourth line from the definition of the purified distance $P$, and the final inequality from the triangle inequality for the purified distance, $P(\Psi_{\widetilde{A},\mathrm{min}},\Psi_{\widetilde{A},\mathrm{max}}) \le \varepsilon + \varepsilon'$.
\end{proof}

\begin{thm}[QAEP, converse; corollary 6.7 of \cite{tomamichel2013framework}]
    Let $0 < \varepsilon < 1$.
    Then
    \begin{equation}
        \lim_{n \to \infty} \frac{1}{n} H^\varepsilon_\mathrm{min}(A|B)_\Psi \le S(A|B)_\Psi \le \lim_{n \to \infty} \frac{1}{n} H^\varepsilon_\mathrm{max}(A|B)_\Psi~.
    \end{equation}
\end{thm}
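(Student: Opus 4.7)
The plan is to combine Lemma \ref{lem:strong_min_le_max} (the one-shot ordering relation between smooth min- and max-entropies, sharpened well beyond the naive $H_\mathrm{min} \le H_\mathrm{max}$) with the direct half of the QAEP that we already proved in Theorem \ref{thm:aep_direct_bound}. In other words, the converse comes essentially for free from the direct bound once we have the ``strong ordering'' inequality that is uniform in the smoothing parameters.

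First I would prove the upper bound on $\liminf_n \tfrac{1}{n} H^\varepsilon_\mathrm{min}(A^n|B^n)_{\Psi^{\otimes n}}$. Fix any $\varepsilon' > 0$ with $\varepsilon + \varepsilon' < 1$. Applying Lemma \ref{lem:strong_min_le_max} to the state $\Psi^{\otimes n}$ on the algebras $\mathcal{M}_A^{\otimes n} \supseteq \mathcal{M}_B^{\otimes n}$ gives
\begin{equation}
    H^\varepsilon_\mathrm{min}(A^n|B^n)_{\Psi^{\otimes n}} \leq H^{\varepsilon'}_\mathrm{max}(A^n|B^n)_{\Psi^{\otimes n}} + \log \frac{1}{1 - (\varepsilon + \varepsilon')^2}~.
\end{equation}
Dividing by $n$, the additive constant vanishes in the limit, and by the direct half of the QAEP (Theorem \ref{thm:aep_direct_bound}) the first term on the right is bounded above by $S(A|B)_\Psi + \delta(\varepsilon',v)/\sqrt{n}$. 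Taking $n \to \infty$ then yields $\limsup_n \tfrac{1}{n} H^\varepsilon_\mathrm{min}(A^n|B^n)_{\Psi^{\otimes n}} \leq S(A|B)_\Psi$, which combined with the direct lower bound of Theorem \ref{thm:aep_direct_bound} establishes that the limit exists and equals $S(A|B)_\Psi$.

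The lower bound on $\liminf_n \tfrac{1}{n} H^\varepsilon_\mathrm{max}(A^n|B^n)_{\Psi^{\otimes n}}$ is entirely symmetric: choose $\varepsilon' > 0$ with $\varepsilon + \varepsilon' < 1$, apply Lemma \ref{lem:strong_min_le_max} in the form
\begin{equation}
    H^{\varepsilon'}_\mathrm{min}(A^n|B^n)_{\Psi^{\otimes n}} \leq H^{\varepsilon}_\mathrm{max}(A^n|B^n)_{\Psi^{\otimes n}} + \log \frac{1}{1 - (\varepsilon + \varepsilon')^2}~,
\end{equation}
divide by $n$, and use the direct lower bound $\tfrac{1}{n} H^{\varepsilon'}_\mathrm{min}(A^n|B^n)_{\Psi^{\otimes n}} \geq S(A|B)_\Psi - \delta(\varepsilon', v)/\sqrt{n}$ on the left-hand side. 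Sending $n\to\infty$ gives $S(A|B)_\Psi \leq \liminf_n \tfrac{1}{n}H^\varepsilon_\mathrm{max}(A^n|B^n)_{\Psi^{\otimes n}}$, and combined with the direct upper bound the corresponding limit exists and equals $S(A|B)_\Psi$.

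There is really no main obstacle at this stage: the technical content has already been packaged into Lemma \ref{lem:strong_min_le_max} and Theorem \ref{thm:aep_direct_bound}. The only thing one should double-check is that both lemmas apply verbatim to tensor powers of the algebras $\mathcal{M}_A$ and $\mathcal{M}_B$ (with traces given by tensor products of the original traces, so that complementarity and the trace-preserving isometry constructions of Lemmas \ref{lem:5.2} and \ref{lem:5.3} go through on each factor independently); this is immediate from the finite-dimensional structure theorem \ref{thm:classification}. One should also verify that the additive $O(1)$ error does not interact badly with the smoothing parameter choice, which is guaranteed as long as $\varepsilon'$ is a fixed (possibly small) constant independent of $n$ with $\varepsilon + \varepsilon' < 1$.
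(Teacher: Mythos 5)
Your proof is correct and takes essentially the same route as the paper: combine Lemma \ref{lem:strong_min_le_max} with the direct bound of Theorem \ref{thm:aep_direct_bound} and let the additive $O(1)$ and $O(\sqrt{n})$ corrections wash out in the limit. The only (immaterial) difference is that for the max-entropy half the paper invokes duality, whereas you simply re-apply Lemma \ref{lem:strong_min_le_max} with the roles of $\varepsilon$ and $\varepsilon'$ swapped, which is equally valid.
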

\begin{proof}
    From lemma \ref{lem:strong_min_le_max} and Theorem \ref{thm:aep_direct_bound}, it follows that
    \begin{equation}
    \begin{split}
        \frac{1}{n} H^\varepsilon_\mathrm{min}(A|B)_\Psi &\le \frac{1}{n} H^{\varepsilon '}_\mathrm{max}(A|B)_\Psi + \frac{1}{n} \log \frac{1}{1 - (\varepsilon + \varepsilon')^2}
        \\&\le S(A|B)_\Psi + \frac{1}{n} \log \frac{1}{1 - (\varepsilon + \varepsilon')^2} + \frac{\delta(\varepsilon',v)}{\sqrt{n}}~,
    \end{split}
    \end{equation}
    where $v = \Upsilon(\Psi_A||\Psi_B)$.
    Using duality to obtain the analogous inequality for the max-entropy then taking the limit $n \to \infty$ completes the proof.
\end{proof}

\subsection{Bounds between the smooth min-, max-, and von Neumann entropies}\label{app:boundingminmax}

In this subsection, we relate the smooth min- and max-entropies to the von Neumann entropy. 
It suffices to relate the smooth max-entropy to a smooth von Neumann entropy.
The bound on the min-entropy follows by duality. 

First, a technical lemma on the monotonicity in $\alpha$ of the sandwiched $\alpha$-Renyi divergences. 
\begin{lem}\label{lem:monotonicityalpha}
The sandwiched quantum Renyi diverges obey the inequality
\begin{align}
    S_{\alpha}(\rho_A ||\sigma_A) \geq S_{\beta}(\rho_A || \sigma_A)
\end{align}
for $\alpha \geq \beta$ and $\sigma_A \geq 0$ such that $\mathrm{supp}\,\rho_A \subseteq \mathrm{supp}\,\sigma_A$.
\end{lem}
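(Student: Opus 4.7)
The plan is to prove this via log-convexity of $\alpha \mapsto Q_\alpha(\rho_A\|\sigma_A)$, where $Q_\alpha := \tr_A[(\sigma_A^{(1-\alpha)/2\alpha}\rho_A\sigma_A^{(1-\alpha)/2\alpha})^{\alpha}]$, so that $S_\alpha = (\alpha-1)^{-1}\log Q_\alpha$. If the support condition $\mathrm{supp}\,\rho_A \subseteq \mathrm{supp}\,\sigma_A$ fails, both sides equal $+\infty$ and there is nothing to prove, so I assume it holds. Setting $f(\alpha) := \log Q_\alpha$, note that $f(1) = \log\tr_A\rho_A \leq 0$ since $\rho_A$ is subnormalized.

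The first step is to show that log-convexity of $Q_\alpha$ implies the desired monotonicity. A direct calculation gives
\begin{equation}
\frac{d S_\alpha}{d\alpha} = \frac{f'(\alpha)(\alpha-1) - f(\alpha)}{(\alpha-1)^2}~,
\end{equation}
and convexity of $f$ yields the tangent inequality $f(1) \geq f(\alpha) + f'(\alpha)(1-\alpha)$, i.e. $f'(\alpha)(\alpha-1) \geq f(\alpha) - f(1)$. Combined with $f(1) \leq 0$, this gives $dS_\alpha/d\alpha \geq 0$ for all $\alpha \in (0,1)\cup(1,\infty)$, and extension through $\alpha=1$ follows by continuity.

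The second step is to establish log-convexity of $\alpha \mapsto Q_\alpha$ in the algebraic setting. I would apply Theorem \ref{thm:classification} to decompose $\mathcal{M}_A = \bigoplus_\gamma \mathcal{L}(\mathcal{H}_{A_\gamma}) \otimes \mathds{1}_{A'_\gamma}$ and write $\rho_A, \sigma_A$ in block-diagonal form $\rho_A = \oplus_\gamma \rho_{A_\gamma}$, $\sigma_A = \oplus_\gamma \sigma_{A_\gamma}$. Since the fractional powers of $\sigma_A$ and all products respect this decomposition, Remark \ref{rem:tracesectors} yields
\begin{equation}
Q_\alpha(\rho_A \| \sigma_A) = \sum_\gamma C^A_\gamma\, \Tr_{A_\gamma}\!\left[\left(\sigma_{A_\gamma}^{(1-\alpha)/2\alpha} \rho_{A_\gamma} \sigma_{A_\gamma}^{(1-\alpha)/2\alpha}\right)^{\alpha}\right]~,
\end{equation}
with positive weights $C^A_\gamma$. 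Each summand is the ordinary sandwiched-R\'enyi quantity on the Hilbert-space factor $\mathcal{H}_{A_\gamma}$, whose log-convexity in $\alpha$ is a standard result in the type I case (see e.g.\ Theorem 7 of \cite{Muller-Lennert:2013aa}, proven via Hölder's inequality applied to the variational formula of Lemma \ref{lem:pq_bound}). Since the log-sum-exp function $(x,y) \mapsto \log(e^x+e^y)$ is convex and non-decreasing in each argument, a finite non-negative linear combination of log-convex functions is itself log-convex; thus $\log Q_\alpha$ is convex in $\alpha$.

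The main obstacle I anticipate is bookkeeping with the various fractional powers and normalization constants $C^A_\gamma$ when performing the sector decomposition, especially in verifying that the factor-case log-convexity can be cleanly invoked within each $p_\gamma$-block without worrying about the central operators. This is essentially a direct-sum argument, so I expect it to go through cleanly, but care is needed because the $C^A_\gamma$ factors appear inside the logarithm in a non-trivial way if one tried to absorb them into the density matrices rather than leaving them as external weights on the sum of factor-case $Q_\alpha$'s.
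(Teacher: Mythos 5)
Your reduction in the first step is correct: if $f(\alpha)=\log Q_\alpha$ is convex and $f(1)=\log\tr_A\rho_A\le 0$, the tangent inequality does give $dS_\alpha/d\alpha\ge 0$, and your sector decomposition with positive weights $C^A_\gamma$ correctly reduces log-convexity to the factor case (a positive combination of log-convex functions is log-convex). The gap is in the load-bearing claim that log-convexity of the factor-case $Q_\alpha$ in $\alpha$ is ``a standard result\ldots Theorem~7 of \cite{Muller-Lennert:2013aa}, proven via H\"older's inequality applied to the variational formula.'' Theorem~7 of that reference is the \emph{monotonicity} of $D_\alpha$ in $\alpha$, not log-convexity of $Q_\alpha$, and the variational route does not deliver what you need in the regime that matters here. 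Concretely: writing $\log Q_\alpha=(\alpha-1)S_\alpha$ and using Lemma~\ref{lem:pq_bound}, for $\alpha>1$ one gets $\log Q_\alpha=\sup_\tau\,\alpha\log\tr[\rho^{1/2}\sigma^{(1-\alpha)/\alpha}\rho^{1/2}\tau^{(\alpha-1)/\alpha}]$, a supremum of functions that are convex in $\alpha$, hence convex; but for $\tfrac12\le\alpha<1$ the same formula is an \emph{infimum} of convex functions, which is not convex in general. Since this lemma is invoked (in Theorem~\ref{thm:maxminbounds}) precisely to compare $\alpha=1/2$ with $\alpha\to 1$, the unproved case is the one you need. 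Log-convexity of $Q_\alpha$ on $[1/2,\infty)$ does in fact hold, but the arguments I know go through Stein--Hirschman complex interpolation for the norms $\theta\mapsto\lVert\rho^{1/2}\sigma^{\theta-1/2}\rVert_{1/\theta}$ followed by a perspective-type change of variables --- a substantially heavier input that you would need to supply and adapt to the algebraic trace.

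The paper's proof avoids this entirely by never asserting convexity of $\log Q_\alpha$: it observes that for \emph{every} $\alpha$ (both sides of $1$) one can write $S_\alpha=\sup_\tau\frac{\alpha}{\alpha-1}\log\tr[\rho^{1/2}\sigma^{(1-\alpha)/\alpha}\rho^{1/2}\tau^{(\alpha-1)/\alpha}]$, because the negative prefactor $\frac{\alpha}{\alpha-1}$ for $\alpha<1$ converts the infimum over $\tau$ into a supremum. It then proves monotonicity in $\alpha$ of the fixed-$\tau$ functional directly (convexity of $s\mapsto\log\tr[\rho^{1/2}\sigma^{-s}\rho^{1/2}\tau^{s}]$ via Jensen, plus its value at $s=0$), and monotonicity survives the supremum. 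If you want to keep your architecture, either restrict your variational argument to a pointwise-in-$\tau$ statement as the paper does, or replace the H\"older citation with a genuine interpolation proof of log-convexity valid for $\alpha\in[\tfrac12,1)$.
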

\begin{proof}
    Using the expression for the sandwiched Renyi divergences in \eqref{eqn:Salpha}, we see that it is enough to prove monotonicity in $\alpha$ for a fixed choice of $\tau_A$. 
    In particular, we need to prove that
    \begin{align}
        \frac{\alpha}{\alpha -1} \tr_A \left( \rho_A^{1/2} \sigma_B^{\frac{1-\alpha}{\alpha}} \rho_A^{1/2} \tau_{A}^{\frac{\alpha-1}{\alpha}}\right) \geq \frac{\beta}{\beta -1} \tr_A \left( \rho_A^{1/2} \sigma_B^{\frac{1-\beta}{\beta}} \rho_A^{1/2} \tau_{A}^{\frac{\beta-1}{\beta}}\right)
    \end{align}
    for $\alpha \geq \beta$.
    This is proved simply in Lemma 19 of \cite{Muller-Lennert:2013aa} using Jensen's inequality. This fact is related to the monotonicity of the $\alpha$-norms defined in eq. \eqref{eqn:Salphanorm}.
\end{proof}
From definition \eqref{defn:HalphaSalpha}, it follows
\begin{align}
    H_{\alpha}(A|B) \geq H_{\beta}(A|B)\ \ \mathrm{for }\ \alpha \leq \beta.
\end{align}

\begin{thm}\label{thm:maxminbounds}
Given nested algebras $\mathcal{M}_A \supset \mathcal{M}_B$ and $\varepsilon, \varepsilon' \ge 0$ such that $\varepsilon + \varepsilon' < 1$, we have 
\begin{align}
    H_{\mathrm{max}}^{\varepsilon}(A|B) \geq S^{\varepsilon'}(A|B) - \log \frac{1}{\left(1-2(\varepsilon + \varepsilon')\right)^2}~,
\end{align}
with $S^\varepsilon(A|B) := \lim_{\alpha \to 1} H^\varepsilon_{\alpha}(A|B)$ the smooth conditional von Neumann entropy.
\end{thm}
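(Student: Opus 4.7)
The statement should follow the template of Lemma \ref{lem:strong_min_le_max}, exploiting the monotonicity of sandwiched Rényi entropies in $\alpha$ (Lemma \ref{lem:monotonicityalpha}) together with the triangle inequality for the purified distance. The crucial observation is that on any fixed state $H_{\max} = H_{1/2}$ dominates $H_{\alpha}$ for all $\alpha \in [1/2, 1)$, so a comparison between the smoothed quantities $H_{\max}^{\varepsilon}$ and $S^{\varepsilon'} = \lim_{\alpha \to 1^-} H_{\alpha}^{\varepsilon'}$ can only fail through a smoothing mismatch, which the purified-distance triangle inequality controls.

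First I would invoke Lemma \ref{lem:5.2} (together with the observation that the smoothing of $H_{\alpha}^{\varepsilon'}$ for $\alpha \in (1/2,1)$ is achieved by some sub-normalized optimizer, which can be upgraded to a normalized state on an isometrically enlarged algebra by the same construction used there) to embed both quantities into a common enlarged pair $\mathcal{M}_{\widetilde{A}} \supseteq \mathcal{M}_{\widetilde{B}}$ via a single trace-preserving isometry $V$. This produces normalized density matrices $\Psi_{\widetilde{A},\max}$ and $\Psi_{\widetilde{A},\alpha}$ satisfying $H^{\varepsilon}_{\max}(A|B) = H_{\max}(\widetilde{A}|\widetilde{B})_{\Psi_{\widetilde{A},\max}}$ and $H^{\varepsilon'}_{\alpha}(A|B) = H_{\alpha}(\widetilde{A}|\widetilde{B})_{\Psi_{\widetilde{A},\alpha}}$, with both in purified-distance balls of radii $\varepsilon$ and $\varepsilon'$ around $V\Psi_A V^{\dagger}$. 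The triangle inequality then gives $P(\Psi_{\widetilde{A},\max}, \Psi_{\widetilde{A},\alpha}) \le \varepsilon + \varepsilon'$.

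Next, let $\sigma_{\widetilde{B}}^{*}$ be a (near-)optimizer of the unsmoothed $H_{1/2}(\widetilde{A}|\widetilde{B})_{\Psi_{\widetilde{A},\alpha}}$. By Lemma \ref{lem:monotonicityalpha},
\begin{equation*}
H_{\alpha}^{\varepsilon'}(A|B) = H_{\alpha}(\widetilde{A}|\widetilde{B})_{\Psi_{\widetilde{A},\alpha}} \le H_{1/2}(\widetilde{A}|\widetilde{B})_{\Psi_{\widetilde{A},\alpha}} = 2\log\bigl\lVert\sqrt{\Psi_{\widetilde{A},\alpha}}\sqrt{\sigma_{\widetilde{B}}^{*}}\bigr\rVert_{1}.
\end{equation*}
Using this same $\sigma_{\widetilde{B}}^{*}$ as a sub-optimal choice in the variational formula for $H_{\max}$ yields $H^{\varepsilon}_{\max}(A|B) \ge 2\log\lVert\sqrt{\Psi_{\widetilde{A},\max}}\sqrt{\sigma_{\widetilde{B}}^{*}}\rVert_{1}$. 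The remaining estimate is to bound $\lVert\sqrt{\Psi_{\widetilde{A},\max}}\sqrt{\sigma_{\widetilde{B}}^{*}}\rVert_{1}$ below in terms of $\lVert\sqrt{\Psi_{\widetilde{A},\alpha}}\sqrt{\sigma_{\widetilde{B}}^{*}}\rVert_{1}$ and the purified distance $\varepsilon + \varepsilon'$. Using the identification $\lVert\sqrt{\rho}\sqrt{\sigma}\rVert_1 = F(\rho,\sigma)$ (adapted to the subnormalized case via the generalized fidelity $F_{*}$), together with the Fuchs--van de Graaf bound $1 - F(\rho,\sigma) \le P(\rho,\sigma)$ and a triangle inequality of the form $F(\Psi_{\widetilde{A},\max}, \sigma_{\widetilde{B}}^{*}) \ge F(\Psi_{\widetilde{A},\alpha}, \sigma_{\widetilde{B}}^{*}) - 2(\varepsilon+\varepsilon')$ (equivalent to writing the Bures angle as a metric), one extracts a factor of $1 - 2(\varepsilon+\varepsilon')$ inside the logarithm. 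Taking $\alpha \to 1^{-}$ converts $H_{\alpha}^{\varepsilon'}$ into $S^{\varepsilon'}$, completing the argument.

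\textbf{Main obstacles.} I expect the largest technical hurdle to be obtaining the precise constant $1/(1 - 2(\varepsilon+\varepsilon'))^{2}$ appearing in the statement; this depends on the exact form of the fidelity triangle inequality one uses, and care is required because the optimizers $\sigma_{\widetilde{B}}^{*}$ are only subnormalized so one must work with $F_{*}$ rather than $F$. A secondary subtlety is justifying that the $\alpha \to 1$ limit may be taken inside the inequality; this is essentially the continuity of $H_{\alpha}^{\varepsilon'}$ in $\alpha$ at $\alpha = 1$, which for non-factor algebras should follow from continuity of the Petz divergence $D_{\alpha}$ applied blockwise but may require a separate technical lemma in this setting.
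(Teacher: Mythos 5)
Your proposal is correct and follows essentially the same route as the paper's proof: Lemma \ref{lem:5.2} to normalize the max-entropy optimizer, the purified-distance triangle inequality to absorb the $\varepsilon+\varepsilon'$ smoothing mismatch into the $\log\frac{1}{(1-2(\varepsilon+\varepsilon'))^{2}}$ penalty, a suboptimal test state in the variational formula for $H_{\mathrm{max}}$, and Lemma \ref{lem:monotonicityalpha} to pass from R\'enyi-$1/2$ to von Neumann. The paper's one simplification over your version is its choice of test state: it feeds the reduced density matrix $\rho_B^*$ of the von Neumann optimizer $\rho_A^*$ directly into the $H_{\mathrm{max}}$ supremum, so that $\log F^2(\rho_A^*,\rho_B^*) = -S_{1/2}(\rho_A^*\|\rho_B^*) \geq -S_{1}(\rho_A^*\|\rho_B^*) = S(A|B)_{\rho^*}$ follows from monotonicity applied to a fixed pair of states, sidestepping both the $\alpha\to 1$ limit interchange and the need to produce a normalized optimizer for the intermediate $H^{\varepsilon'}_{\alpha}$ that you flag as technical obstacles.
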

\begin{proof}
By lemma \ref{lem:5.2}, we can embed $\M_A$ into $\M_{\widetilde{A}}$ with a trace-preserving isometry $V$ such that there exists a normalized state $\hat{\rho}_{\widetilde{A}} \in \mathcal{B}^{\varepsilon}(V\rho_AV^{\dagger})$ with $H_{\mathrm{max}}^{\varepsilon}(A|B)_{\rho} = H_{\mathrm{max}}(\widetilde{A}|\widetilde{B})_{\hat{\rho}}$. 
Similarly, let $\rho^*_A \in \mathcal{B}^{\varepsilon'}(\rho_A)$ be such that $H^{\varepsilon'}(A|B)_{\rho} = S(A|B)_{\rho^*}$. Denote by $\rho_B^*$ the density matrix of this state reduced to $B$. Then, by the definition of $H_{\mathrm{max}}$, we have the chain of inequalities
\begin{align}
    &H_{\mathrm{max}}^{\varepsilon}(A|B)_\rho = H_{\mathrm{max}}(\widetilde{A}|\widetilde{B})_{\hat{\rho}} \geq \log F^2(\hat{\rho}_{\widetilde{A}},V \rho_B^* V^{\dagger}) =\log \left(1-P^2(\hat{\rho}_{\widetilde{A}}, V \rho_B^* V^{\dagger})\right) \nonumber \\
    &\geq \log \left(1-P^2(V \rho_A^* V^{\dagger}, V \rho_B^* V^{\dagger}) - 2(\varepsilon + \varepsilon') - (\varepsilon + \varepsilon')^2 \right) \nonumber \\
    & \geq \log F^2(\rho_A^*,\rho_B^*) - \log \frac{1}{\left( 1- 2(\varepsilon + \varepsilon')\right)^2} \nonumber \\
    &=-S_{1/2}(\rho_A^*|| \rho_B^*) - \log \frac{1}{\left( 1- 2(\varepsilon + \varepsilon')\right)^2} \nonumber \\
    & \geq S^{\varepsilon'}(A|B) - \log \frac{1}{\left( 1- 2(\varepsilon + \varepsilon')\right)^2}~.
\end{align}
In the first inequality, we used the definition of $H_{\mathrm{max}}$. 
In the second inequality, we used the triangle inequality of the purified distance and the fact that the isometry $V$ preserves the purified distance. In the third inequality, we used that the purified distance obeys $0< P< 1$, as well as the fact that $1-P^2 \geq F^2$. Then in the final inequality we used lemma \ref{lem:monotonicityalpha}, together with the definition of $\rho_A^*$ and $\rho_B^*$.
\end{proof}

Note that from Theorem \ref{thm:smoothedduality}, we can bound the smoothed von Neumann entropy by 
\begin{align}
    H_{\mathrm{min}}^{\varepsilon'}(A|B) -\log \frac{1}{\left( 1- 2(\varepsilon + \varepsilon')\right)^2} \leq S^{\varepsilon}(A|B) \leq H^{\varepsilon''}_{\mathrm{max}}(A|B) + \log \frac{1}{\left( 1- 2(\varepsilon + \varepsilon'')\right)^2}~.
\end{align}
Taking the various smoothing parameters independently to zero gives us bounds between smoothed and non-smoothed conditional entropies.

\subsection{The Chain Rule}\label{app:chain_rule}
In this subsection, we will prove the chain rule that we need in the main text. Given a chain of inclusions of algebras $\mathcal{M}_A \supset \mathcal{M}_B \supset \mathcal{M}_C$ with a state $\rho_A \in \mathcal{M}_A$, the chain rule states that for $\varepsilon, \varepsilon', \varepsilon'' >0$, then 
\begin{align}
    H_{\mathrm{min}}^{\varepsilon + 2\varepsilon' + \varepsilon''}(A|C) \geq H_{\mathrm{min}}^{\varepsilon'}(A|B) + H_{\mathrm{min}}^{\varepsilon''}(B|C) - \log \frac{2}{\varepsilon^2}~. 
\end{align}

\begin{rem}\label{rem:comp_proj}
    Let $\ket{\Psi} \in \mathcal{H}$ be a pure state, and let $\mathcal{L}(\mathcal{H})\supset\mathcal{M}_A \supset \mathcal{M}_B$ be algebras. 
    Then for any projector $\Pi_B \in \M_B$, there exists a projector $\Pi_{B'} \in \mathcal{M}_B'$ such that $\Pi_B \Psi^{-1/2}_B \ket{\Psi} =  \Psi^{-1/2}_B \Pi_{B'}  \ket{\Psi}$.
    The proof follows from using the Schmidt decomposition on $\ket{\Psi}$.
\end{rem}

\begin{lem}[Lemma 21 of \cite{Tomamichel:2010aa}]\label{lem:D1}
Given a (possibly subnormalized) pure state $\ket{\Psi} \in \mathcal{H}$ along with nested algebras $\mathcal{L}(\mathcal{H})\supset\mathcal{M}_A \supset \mathcal{M}_B$, then there exists a projection $\Pi_{B'} \in \mathcal{M}_B'$ such that $P(\ket{\psi}, \Pi_{B'} \ket{\psi}) \le \varepsilon$ and 
\begin{equation}
    -S_{\infty}(\Psi'_{A}||\Psi_B) \geq H_{\mathrm{min}}(A|B)_{\Psi} - \log \frac{2}{\varepsilon^2}~,
\end{equation}
where $\ket{\Psi'} = \Pi_{B'}\ket{\Psi}$.
\end{lem}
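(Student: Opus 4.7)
The plan is to adapt the factor-case argument of \cite{Tomamichel:2010aa} to the algebraic setting, using Remark~\ref{rem:comp_proj} to convert a spectral cutoff projector in $\M_B$ into the required $\Pi_{B'} \in \M_B'$.

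Setup. Let $\lambda = -H_{\mathrm{min}}(A|B)_\Psi$ and choose a subnormalized density matrix $\sigma_B \in \M_B$ achieving the optimum, so that $\Psi_A \le e^\lambda \sigma_B$. On $\mathrm{supp}\,\Psi_B$, define the positive self-adjoint operator $T := \Psi_B^{-1/2} \sigma_B \Psi_B^{-1/2} \in \M_B$. Fix $c = 2/\varepsilon^2$ and let $\Pi_B \in \M_B$ be the spectral projector of $T$ onto eigenvalues $\le c$. By Remark~\ref{rem:comp_proj} there is a projector $\Pi_{B'} \in \M_B'$ with $\Pi_B \Psi_B^{-1/2} \ket{\Psi} = \Psi_B^{-1/2} \Pi_{B'} \ket{\Psi}$; multiplying by $\Psi_B^{1/2}$ gives $\Pi_{B'}\ket{\Psi} = K\ket{\Psi}$, where $K := \Psi_B^{1/2} \Pi_B \Psi_B^{-1/2} \in \M_B \subseteq \M_A$. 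Set $\ket{\Psi'} := \Pi_{B'}\ket{\Psi}$.

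Bounding $\Psi'_A$. Because $K \in \M_A$, the defining relation for density matrices together with cyclicity of $\tr_A$ yields $\Psi'_A = K\Psi_A K^\dagger$. Since $\Pi_B$ commutes with $T$ and $\Pi_B T \Pi_B \le c\Pi_B$, combining with $\Psi_A \le e^\lambda \sigma_B$ gives
\[
\Psi'_A \;\le\; e^\lambda\, K\sigma_B K^\dagger \;=\; e^\lambda\, \Psi_B^{1/2}\Pi_B T\Pi_B \Psi_B^{1/2} \;\le\; c\, e^\lambda\, \Psi_B^{1/2}\Pi_B\Psi_B^{1/2} \;\le\; c\, e^\lambda\, \Psi_B,
\]
which is exactly $-S_\infty(\Psi'_A \,\|\, \Psi_B) \ge H_\mathrm{min}(A|B)_\Psi - \log(2/\varepsilon^2)$.

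Bounding $P(\Psi,\Psi')$. Using $K \in \M_B$ and cyclicity of $\tr_B$ gives $\langle\Psi|\Psi'\rangle = \tr_B(\Psi_B \Pi_B)$, so it suffices to show $\tr_B(\Psi_B(\mathds{1}-\Pi_B)) \le \varepsilon^2/2$. Here I will exploit that $T$ and $\Pi_B$ commute: $(T-c\mathds{1})(\mathds{1}-\Pi_B)$ is positive semidefinite, and the trace of a product of two positive semidefinite operators is nonnegative, so $\tr_B\big[(T-c\mathds{1})(\mathds{1}-\Pi_B)\Psi_B\big] \ge 0$. Combining this with $\tr_B(T\Psi_B) = \tr_B(\sigma_B) \le 1$ and $\tr_B(T\Pi_B\Psi_B) \ge 0$ (again a product of two PSDs), I obtain $c\,\tr_B((\mathds{1}-\Pi_B)\Psi_B) \le 1$. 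Since $\ket{\Psi}$ is normalized, $F_*(\Psi,\Psi') = F(\Psi,\Psi') = \tr_B(\Psi_B\Pi_B) \ge 1 - \varepsilon^2/2$, giving $P(\Psi,\Psi') \le \varepsilon$.

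The main obstacle I anticipate is the non-commutativity of $T$ (and hence $\Pi_B$) with $\Psi_B$, which blocks the naive commutative Markov-inequality bound on $\tr_B(\Psi_B(\mathds{1}-\Pi_B))$; the fix described above reroutes the argument through $T$, exploiting that $T$ \emph{does} commute with $\Pi_B$ and using only the algebraic fact that $\tr(AB) \ge 0$ for PSD $A,B$, together with cyclicity of the algebraic trace on $\M_B$ and its compatibility with $\M_B \subseteq \M_A$. A secondary technical point is the existence of $\Pi_{B'} \in \M_B'$ corresponding to a $\Pi_B \in \M_B$ that does not commute with $\Psi_B$, which is handled by the Schmidt-decomposition argument in each central sector of $\M_B$ sketched in Remark~\ref{rem:comp_proj}.
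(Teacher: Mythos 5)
Your proof is correct, and it reaches the same destination by running the central Markov-inequality argument in the opposite direction from the paper. Both proofs hinge on the same two ingredients: a spectral projector of $\Gamma_B := \Psi_B^{-1/2}\sigma_B\Psi_B^{-1/2}$ with $\sigma_B$ the optimizer for $H_\mathrm{min}(A|B)_\Psi$, and Remark \ref{rem:comp_proj} to trade $\Pi_B \in \M_B$ for $\Pi_{B'} \in \M_B'$. The paper fixes the \emph{tail weight} first (it takes $\Pi_B^*$ to project onto the smallest eigenvalues of $\Gamma_B$ subject to $\braket{\Psi|\Pi_B^*|\Psi}\ge\braket{\Psi|\Psi}-\varepsilon^2/2$) and then must bound the resulting largest eigenvalue of $\Pi_B^*\Gamma_B\Pi_B^*$ by $2/\varepsilon^2$; doing so requires the variational ($\sup_\tau$) formula for $S_\infty$, the observation that the spectral projectors of $\Pi_B\Gamma_B\Pi_B$ lie in $\M_B$, and a sup-to-inf conversion with an auxiliary projector $\Pi_B^+$ and a hand-picked $\tau_B$. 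You instead fix the \emph{threshold} $c=2/\varepsilon^2$ first, which makes the entropy bound immediate as an operator inequality $\Psi'_A = K\Psi_AK^\dagger \le c\,e^{\lambda}\Psi_B$ (no variational formula needed), and then derive the tail bound $\tr_B[\Psi_B(\mathds{1}-\Pi_B)]\le 1/c$ from $\tr_B[T\Psi_B]=\tr_B[\sigma_B]\le 1$. Your route is somewhat leaner, and your handling of the non-commutativity of $\Pi_B$ with $\Psi_B$ (routing everything through $T$, which does commute with $\Pi_B$) is sound. The only loose end is your claim that $F_*=F$ "since $\ket\Psi$ is normalized": the lemma allows subnormalized $\ket\Psi$ (and is applied to subnormalized states in the chain-rule proof), but the fix is immediate, since $\tr_B[\Psi_B\Pi_B]\ge\tr\Psi-\varepsilon^2/2$ together with $\sqrt{(1-\tr\Psi)(1-\tr\Psi')}\ge 1-\tr\Psi$ still gives $F_*\ge 1-\varepsilon^2/2$ and hence $P\le\varepsilon$.
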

\begin{proof}
    Consider an arbitrary projector $\Pi_{B'} \in \mathcal{M}_{B}'$. 
    By remark \ref{rem:comp_proj}, there exists a dual projector $\Pi_B \in \M_B$ such that $\Pi_{B} \Psi_B^{-1/2} \ket{\Psi} = \Psi_B^{-1/2} \Pi_{B'} \ket{\Psi}$. 
    Therefore by \eqref{eqn:sandwichedrenyi} and lemma \ref{lem:pq_bound}, we have the expression
    \begin{align}
     &S_{\infty}(\Psi_A'||\Psi_B) = \log \sup_{\tau_{A}: \tr_A \tau_A =1} \tr_A \left[\tau_A \Pi_B \Psi_B^{-1/2} \Psi_A \Psi_B^{-1/2}\Pi_B \right]
     \nonumber \\
     &\leq -H_{\min}(A|B)_{\Psi} + \log \sup_{\tau_{A}: \tr_A \tau_A =1} \tr_A \left[ \tau_A\Pi_B \Psi_B^{-1/2} \sigma_B \Psi_B^{-1/2} \Pi_B \right],
    \end{align}
    where $\sigma_B$ is the state which optimizes the Renyi-divergence in the definition of $H_{\mathrm{min}}(A|B)_{\Psi}$, and we have plugged in $\Psi_A \le e^{-H_{\mathrm{min}}(A|B)_{\Psi}}\sigma_B$. 
    Since the optimization over $\tau_A$ computes the maximum eigenvalue of the operator $O_B := \Pi_B \Psi_B^{-1/2} \sigma_B \Psi_B^{-1/2}\Pi_B$ and since the spectral projectors of $O_B$ are in both $\mathcal{M}_A$ and $\mathcal{M}_B$, we have the relation
    \begin{align}
     S_{\infty}(\Psi_A'||\Psi_B) \leq -H_{\min}(A|B)_{\Psi} + \log \sup_{\tau_{B}: \tr_B \tau_B =1} \tr_B \left[ \tau_B \Pi_B \Psi_B^{-1/2} \sigma_B \Psi_B^{-1/2} \Pi_B\right].
    \end{align}
    This has been for a general projector.
    Now consider a $ \Pi_B^*$ which projects onto the smallest eigenvalues of $ \Gamma_B :=\Psi_B^{-1/2} \sigma_B \Psi_B^{-1/2}$ such that $\braket{\Psi| \Pi^*_B |\Psi} \geq \braket{\Psi|\Psi}-\varepsilon^2/2$. Note that by the definition of the purified fidelity distance this inequality guarantees that 
    \begin{align}
        P(\ket{\Psi}, \Pi^*_{B'}\ket{\Psi}) \leq \varepsilon,
    \end{align}
    with $\Pi_{B'}^*$ the conjugate of $\Pi_B^*$ on $B$ under the Schmidt decomposition of $\ket{\Psi}_{BB'}$.

    Let $\Pi_B^+$ be a projector onto the maximal eigenvalue of $O_B^* = \Pi_B^* \Gamma_B \Pi_B^*$. 
    Using that all the projectors commute with $\Gamma_B$, we can then write
    \begin{align}
        \sup_{\tau_{B}: \tr_B \tau_B =1} \tr_B \left[ \tau_B \Pi^*_B \Gamma_B \Pi^*_B\right] = \inf_{\tau_B: \tr_B \tau_B =1} \tr_B\left[\tau_B (\mathds{1}-\Pi_B^*+ \Pi_B^+) \Gamma_B\right]~.
    \end{align}
    This follows because the left-hand side equals the largest eigenvalue of $\Pi^*_B \Gamma_B \Pi^*_B$, while the right-hand side equals the same thing: its smallest eigenvalue in the union of the orthogonal subspace and the maximal eigenvector of $\Pi^*_B \Gamma_B \Pi^*_B$. 
    Then, picking the case 
    \begin{align}
        \tau_B = \frac{(\mathds{1}-\Pi_B^*+ \Pi_B^+) \Psi_B (\mathds{1}-\Pi_B^*+ \Pi_B^+) }{\tr_B \left[(\mathds{1}-\Pi_B^*+ \Pi_B^+) \Psi_B (\mathds{1}-\Pi_B^*+ \Pi_B^+) \right]}
    \end{align}
   we have that
    \begin{align}
        \sup_{\tau_{B}: \tr_B \tau_B =1} \tr_B \left( \tau_B \Pi^*_B \Gamma_B \Pi^*_B\right) \leq \frac{\tr_B\left(\Gamma_B^{1/2} \Psi_B \Gamma_B^{1/2} (\mathds{1}-\Pi_B^* + \Pi_B^+)\right)}{\braket{\Psi|(\mathds{1}-\Pi_B^* + \Pi_B^+)|\Psi}} \leq \frac{2}{\varepsilon^2}~,
    \end{align}
    where in the last line we used the bound on the overlap, $\braket{\Psi|(\Pi_B^*-\Pi_B^+)|\Psi} \leq \braket{\Psi| \Psi}-\varepsilon^2/2$, together with the fact that $\tr_B\left[\Gamma_B^{1/2} \Psi_B \Gamma_B^{1/2} (1-\Pi_B^* + \Pi_B^+)\right] \leq \tr_B\left[\Gamma_B^{1/2} \Psi_B \Gamma_B^{1/2} \right] = \tr_B \sigma_B \le 1$. 
    This proves the bound.
\end{proof}

Using the above lemma, we now prove the chain rule inequality.
\begin{thm}[Chain rule; lemma A.8 of \cite{dupuis2014one}]
    Given a (possibly subnormalized) density matrix $\rho_A \in \M_A \subset \mathcal{B}(\mathcal{H})$ and inclusions $\M_A \supset \M_B \supset \M_C$, then the conditional min entropies obey 
    \begin{equation}
        H_{\mathrm{min}}^{\varepsilon + 2\varepsilon' + \varepsilon''}(A|C)_{\rho} \geq H_{\mathrm{min}}^{\varepsilon'}(A|B)_{\rho} + H_{\mathrm{min}}^{\varepsilon''}(B|C)_{\rho} - \log \frac{2}{\varepsilon^2}~. 
    \end{equation}
\end{thm}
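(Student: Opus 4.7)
The proof plan is to reduce the smoothed statement to a chain of operator inequalities, using Lemma D1 (proved immediately above the statement) as the key tool to convert a conditional min-entropy into an operator bound of the form $\rho_A \lesssim \rho_B$, then chain this with the bound coming from the second min-entropy, while carefully tracking how the three smoothing radii accumulate via the triangle inequality for purified distance.

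First, I would fix an optimizer for each smoothed min-entropy. Using Lemma 5.2 to replace subnormalized optimizers by normalized ones on a suitably enlarged algebra (trace-preserving isometric extension), pick a normalized purification $\ket{\Psi}$ of a state $\hat\rho \in \mathcal{B}^{\varepsilon'}(\rho)$ such that $H_{\min}^{\varepsilon'}(A|B)_\rho = H_{\min}(A|B)_{\hat\rho}$. Independently, let $\hat\sigma_B \in \mathcal{B}^{\varepsilon''}(\rho_B)$ and $\tau_C \in \M_C$ satisfy $\hat\sigma_B \leq e^{-H_{\min}^{\varepsilon''}(B|C)_\rho}\tau_C$.

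Next, apply Lemma D1 to $\ket{\Psi}$ with respect to the inclusion $\M_A\supset\M_B$: there is a projector $\Pi\in\M_B'$ such that $\ket{\Psi'}:=\Pi\ket{\Psi}$ obeys $P(\ket{\Psi},\ket{\Psi'})\le\varepsilon$ and, by unpacking the $S_\infty$ definition,
\begin{equation*}
\Psi'_A \;\leq\; \tfrac{2}{\varepsilon^2}\, e^{-H_{\min}^{\varepsilon'}(A|B)_\rho}\, \hat\rho_B .
\end{equation*}
The reduction $\hat\rho_B$ is within purified distance $\varepsilon'$ of $\rho_B$ by monotonicity of $P$ under the partial trace, and $\hat\sigma_B$ is within $\varepsilon''$ of $\rho_B$, so by the triangle inequality $P(\hat\rho_B,\hat\sigma_B)\le\varepsilon'+\varepsilon''$. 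I would then use a matching-purification argument via Uhlmann's theorem (algebraic version) to further modify $\ket{\Psi'}\mapsto\ket{\Psi''}$, at a cost of at most $\varepsilon'+\varepsilon''$ in purified distance, so that the $B$-reduction of $\ket{\Psi''}$ is dominated by $\hat\sigma_B$. Chaining the two operator inequalities then gives
\begin{equation*}
\Psi''_A \;\leq\; \tfrac{2}{\varepsilon^2}\, e^{-H_{\min}^{\varepsilon'}(A|B)_\rho - H_{\min}^{\varepsilon''}(B|C)_\rho}\, \tau_C ,
\end{equation*}
so $H_{\min}(A|C)_{\Psi''} \geq H_{\min}^{\varepsilon'}(A|B)_\rho + H_{\min}^{\varepsilon''}(B|C)_\rho - \log(2/\varepsilon^2)$. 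Accounting for the cumulative purified distance between $\ket{\Psi''}$ and (the image of) $\rho$ — namely $\varepsilon'$ from the initial smoothing, $\varepsilon$ from Lemma D1, and a further $\varepsilon'+\varepsilon''$ from the Uhlmann adjustment — and invoking Lemma 5.3 for invariance of the smoothed min-entropy under the isometric extension, yields $H_{\min}^{\varepsilon+2\varepsilon'+\varepsilon''}(A|C)_\rho$ on the left-hand side, as required.

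The main obstacle is precisely the step of bridging the two independent smoothings: one optimizer gives an operator bound $\Psi'_A \lesssim \hat\rho_B$, the other gives $\hat\sigma_B \lesssim \tau_C$, and a priori there is no \emph{operator} relation between $\hat\rho_B$ and $\hat\sigma_B$, only a purified-distance bound. Converting that distance bound into a usable operator domination (while absorbing the error into the final smoothing radius) is the technical heart of the argument, and is the source of the doubled $\varepsilon'$ in the exponent $\varepsilon+2\varepsilon'+\varepsilon''$. In the factor case this is handled by the decoupling-style manipulations of \cite{dupuis2014one}; in our algebraic setting, the same manipulations go through because Lemma D1, Lemma 5.2, Lemma 5.3, and Uhlmann's theorem have all been established earlier in the appendix for finite-dimensional von Neumann algebras with (possibly) non-trivial center.
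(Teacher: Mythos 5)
Your proposal follows essentially the same route as the paper's proof: optimize the two smoothed quantities separately, use Lemma \ref{lem:D1} to turn $H^{\varepsilon'}_{\mathrm{min}}(A|B)$ into the operator bound $\Psi'_A \le \tfrac{2}{\varepsilon^2}\,e^{-H^{\varepsilon'}_{\mathrm{min}}(A|B)_\rho}\,\hat\rho_B$, bridge the mismatched $B$-marginals $\hat\rho_B$ and $\hat\sigma_B$ at purified-distance cost $\varepsilon'+\varepsilon''$, chain with $\hat\sigma_B \le e^{-H^{\varepsilon''}_{\mathrm{min}}(B|C)_\rho}\tau_C$, and tally $\varepsilon' + \varepsilon + (\varepsilon'+\varepsilon'')$. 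The bookkeeping and the role of Lemmas \ref{lem:5.2}, \ref{lem:5.3}, and \ref{lem:D1} all match.

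However, the bridging step as you describe it would fail, and it is exactly the step you flag as the technical heart. An Uhlmann-type adjustment produces a state $\ket{\Psi''}$ that is within $\varepsilon'+\varepsilon''$ of $\ket{\Psi'}$ and whose $B$-marginal equals (or is dominated by) $\hat\sigma_B$, but it gives you \emph{no control over the $A$-marginal of $\ket{\Psi''}$}: the purifications of $\hat\rho_B$ and $\hat\sigma_B$ are related by Uhlmann only through an isometry on the commutant, and the operator inequality $\Psi'_A \le c\,\hat\rho_B$ does not transfer along such a map. What is needed — and what the paper uses, via Lemma B.3 of \cite{dupuis2014one} — is a \emph{transition operator} $T_B \in \M_B$ satisfying $T_B\ket{\Psi'} = \ket{\Psi''}$ with $\ket{\Psi''}$ purifying $\hat\sigma_B$ and $P(\ket{\Psi'},\ket{\Psi''}) = P(\hat\rho_B,\hat\sigma_B)$. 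Because $T_B \in \M_B \subseteq \M_A$, conjugation preserves the ordering and intertwines correctly with both marginals: $T_B\,\Psi'_A\,T_B^\dagger \le c\,T_B\,\hat\rho_B\,T_B^\dagger = c\,\hat\sigma_B \le c'\,\tau_C$, and $T_B\,\Psi'_A\,T_B^\dagger$ is precisely the $A$-marginal of $\ket{\Psi''}$. Replacing your Uhlmann step with this transition-operator step (which must also be checked to survive the generalization to algebras with centers, as the paper implicitly assumes) closes the gap; the rest of your argument then goes through as written.
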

\begin{proof}
    In this proof, we will use asterisks to denote states that optimize the relevant quantity. 
    For example, let $\rho_A^* \in \mathcal{B}^{\varepsilon'}(\rho_A)$ such that 
    \begin{align}
        H_\mathrm{min}(A|B)_{\rho^*} = H^{\varepsilon'}_\mathrm{min}(A|B)_{\rho}~.
    \end{align}
    Furthermore, let $\tilde{\rho}_B^*$ and $\sigma_C$ be states such that $\tilde{\rho}_B^* \in \mathcal{B}^{\varepsilon''}(\rho_B)$ with 
    \begin{align}
        \tilde{\rho}_B^* \leq e^{-H^{\varepsilon''}_{\mathrm{min}}(B|C)_{\rho}} \sigma_C~.
    \end{align}
    and $H_{\mathrm{min}}^{\varepsilon''}(B|C)_{\rho} = H_{\mathrm{min}}(B|C)_{\tilde{\rho}_B^*}$.

   Given a purification $\ket{\Psi^*}$ of $\rho_A^*$ on $AA'$, by Lemma \ref{lem:D1}, we can find a projector $\Pi_{A'}$ such that $\braket{\Psi^*| \Pi_{A'}|\Psi^*} \geq 1-\varepsilon^2/2$ so that $\left(\rho_P^*\right)_{AA'} := \Pi_{A'}\ket{\Psi^*}\bra{\Psi^*}\Pi_{A'} \in \mathcal{B}^{\varepsilon}(\rho^*_{AA'})$ as well as
   \begin{align}\label{eqn:projectedinequality}
       \left(\rho_P^*\right)_{A}  \leq \rho_B^*e^{-H_\mathrm{min}(A|B)_{\rho^*} + \log\left(\frac{2}{\varepsilon^2}\right)} = \rho_B^*e^{-H^{\varepsilon'}_\mathrm{min}(A|B)_{\rho} + \log\left(\frac{2}{\varepsilon^2}\right)}~.
   \end{align}
   Note that by construction the purified distance between $\ket{\Psi^*}$ and $\Pi_{A'} \ket{\Psi^*}$ is
   \begin{align}
P(\ket{\Psi^*},\Pi_{A'}\ket{\Psi^*})_A \leq \varepsilon~.
\end{align}
Now, by Lemma B.3 in \cite{dupuis2014one}, there is an operator $T_B$ such that $T_B \ket{\Psi^*}_{AA'} = \ket{\tilde{\Psi}^*}_{AA'}$ where $\ket{\tilde{\Psi}^*}_{AA'}$ is a purification of $\tilde{\rho}_B^*$ onto $AA'$ and
   \begin{align}
       P(\ket{\Psi^*}, \ket{\tilde{\Psi}^*})_{AA'} = P(\ket{\Psi^*}, \ket{\tilde{\Psi}^*})_B~.
   \end{align}
Applying $T_B$ to the states on either side of \eqref{eqn:projectedinequality}, we get 
   \begin{align}
       T_B\left( \rho_{P}^*\right)_AT_B^{\dagger} \leq \tilde{\rho}_B^* e^{-H^{\varepsilon'}_\mathrm{min}(A|B)_{\rho} + \log\left(\frac{2}{\varepsilon^2}\right)} \leq \sigma_C e^{-H^{\varepsilon'}_\mathrm{min}(A|B)_{\rho} -H_{\mathrm{min}}^{\varepsilon''}(B|C)+ \log\left(\frac{2}{\varepsilon^2}\right)}
   \end{align}

   To finish the proof, we thus just need to show that $T_B \left(\rho_P^*\right)_A T_B^{\dagger} \in \mathcal{B}^{\varepsilon + 2\varepsilon' + \varepsilon''}(\rho_A)$, after which the result follows by definition of the smoothed conditional min-entropy. 
   Let $\ket{\Psi}$ be a purification of $\rho_A$.
   Then
   \begin{equation}
   \begin{split}
       &P(T_B \Pi_{A'}\ket{\Psi^*},\ket{\Psi})_A 
       \\&~~~~\leq P(T_B \Pi_{A'}\ket{\Psi^*},\Pi_{A'}\ket{\Psi^*})_A + P( \Pi_{A'}\ket{\Psi^*},\ket{\Psi^*})_A + P(\ket{\Psi^*}, \ket{\Psi})_A~,
    \end{split}
    \end{equation}
    using the triangle inequality.
    Moreover,
    \begin{align}
       &P(T_B \Pi_{A'}\ket{\Psi^*},\Pi_{A'}\ket{\Psi^*})_A \leq P(T_B \Pi_{A'}\ket{\Psi^*},\Pi_{A'}\ket{\Psi^*})_{AA'}  \leq P(T_B \ket{\Psi^*},\ket{\Psi^*})_{AA'} \nonumber \\
       &=P( \ket{\Psi^*},\ket{\tilde{\Psi}^*})_{B} 
   \end{align}
   by monotonicity and the fact that projections decrease the purified distance. Putting this all together we get
   \begin{align}
&P(T_B \Pi_{A'}\ket{\Psi^*},\ket{\Psi})_A  \leq P(\ket{\Psi^*},\ket{\tilde{\Psi}^*})_B + P( \Pi_{A'}\ket{\Psi^*},\ket{\Psi^*})_A + P(\ket{\Psi^*}, \ket{\Psi})_A \nonumber \\
       & \leq P( \ket{\Psi^*},\ket{\Psi})_{B} +P (\ket{\Psi}, \ket{\tilde{\Psi}^*})_B+ P( \Pi_{A'}\ket{\Psi^*},\ket{\Psi^*})_A + P(\ket{\Psi^*}, \ket{\Psi})_A \nonumber \\
       & \leq \varepsilon + 2\varepsilon' + \varepsilon'',
   \end{align}
   where again we used the triangle inequality and the fact that the relevant states are in their respective $\varepsilon$-balls. This is what we needed to show.
\end{proof} 

\subsection{Strong sub-additivity}\label{app:SSA}
In this subsection, we prove Theorem \ref{thm:ssa} which is the statement of strong sub-additivity of the conditional min-, max- and vN entropies. First we recall lemmas about completely positive and trace preserving maps. 

\begin{lem}[Theorem 2 of \cite{Muller-Hermes:2017aa}]\label{lem:MSR}
Let $\rho, \sigma$ be positive operators in some algebra $\M$. Let $\varepsilon: \M \to \mathcal{N}$ be a completely-positive and trace-preserving map (CPTP). Then the sandwiched quantum Renyi divergences from definition \ref{defn:sandwichedrenyi} are monotonically decreasing under action by the map
\begin{align}
S_{\alpha}(\rho || \sigma) \geq S_{\alpha}(\varepsilon(\rho) || \varepsilon(\sigma))~.
\end{align}
\end{lem}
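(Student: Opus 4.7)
The plan is to invoke the result of \cite{Muller-Hermes:2017aa}, which proves this data processing inequality for sandwiched R\'enyi divergences on arbitrary von Neumann algebras. The task then reduces to checking that our finite-dimensional setting of Section \ref{sec:algebras}, with prescribed traces $\tr_\M, \tr_\N$ and CPTP maps defined as trace-preserving completely positive linear maps with respect to these traces, is a special case of the framework used in that paper.

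To verify this, recall that by the structure theorem (Theorem \ref{thm:classification}), both $\M$ and $\N$ decompose into direct sums of matrix factors, with traces rescaled block-by-block by the coefficients $C^\M_\alpha, C^\N_\alpha$ of Remark \ref{rem:tracesectors}. The density matrices absorb these coefficients via $\rho = C^{-1} \rho_{\mathrm{can}}$ (Remark \ref{rem:general_dm}), and substituting into Definition \ref{defn:sandwichedrenyi} yields a weighted sum over sectors of ordinary matrix sandwiched R\'enyi expressions. The non-commutative $L^p$-space construction on a finite-dimensional semifinite algebra with a fixed trace, as used in \cite{Muller-Hermes:2017aa}, reproduces exactly this expression; no infinite-dimensional subtleties (modular theory, Haagerup $L^p$ spaces, etc.) intervene in our setting.

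The heart of the proof in \cite{Muller-Hermes:2017aa} is an application of the Hirschman-Stein three-lines theorem to an analytic family of weighted norms built from complex powers of $\sigma$ and $\varepsilon(\sigma)$, with the adjoint map $\varepsilon^* : \N \to \M$ (defined by $\tr_\N[\varepsilon(m)n] = \tr_\M[m\, \varepsilon^*(n)]$) contracting the trace norm on one boundary line (by trace preservation of $\varepsilon$) and the operator norm on the other (by the Russo-Dye theorem, using that $\varepsilon^*$ is unital and completely positive). Interpolation then yields monotonicity of $S_\alpha$ for $\alpha > 1$; the case $\alpha \in (1/2, 1)$ follows by an analogous argument with the roles of the two norms swapped, or by dualising via Theorem \ref{thm:alphaduality}. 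The range $\alpha \in (0, 1/2)$ is not needed for any application of Lemma \ref{lem:MSR} in the paper.

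The main, and indeed only, point requiring verification in our setting is that the adjoint $\varepsilon^*$ with respect to the non-canonical traces $\tr_\M, \tr_\N$ remains completely positive and unital. Complete positivity is preserved under the adjoint for any pair of faithful traces, and unitality $\varepsilon^*(\mathds{1}_\N) = \mathds{1}_\M$ is equivalent to trace preservation of $\varepsilon$, so both conditions are automatic from the hypotheses.
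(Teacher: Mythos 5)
Your proposal matches the paper's treatment: the paper states Lemma \ref{lem:MSR} purely as a citation to Theorem 2 of \cite{Muller-Hermes:2017aa} and supplies no proof of its own, so invoking that result and checking that the finite-dimensional, non-canonical-trace setting of Section \ref{sec:algebras} is a special case is exactly what is implicitly being done. Your additional verification that the adjoint $\varepsilon^*$ remains completely positive and unital under the modified traces is sound and goes beyond what the paper records; the one caveat is that Theorem \ref{thm:alphaduality} is a duality for \emph{conditional} entropies rather than for the divergences themselves, so the $\alpha\in(1/2,1)$ case should rest on the swapped-norm interpolation (or the Frank--Lieb argument) rather than on that theorem.
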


\begin{cor}
    The purified distance between two density matrices $\rho, \sigma \in \M$ is monotonically decreasing under action by a CPTP map $\varepsilon: \M \to \mathcal{N}$,
    \begin{align}
        P(\varepsilon(\rho) , \varepsilon(\sigma) ) \leq P(\rho, \sigma)~.
    \end{align}
\end{cor}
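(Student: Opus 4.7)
The plan is to reduce the claim to the Müller-Hermes--Reeb monotonicity theorem (Lemma \ref{lem:MSR}) at the specific value $\alpha = 1/2$. First, I would recall from Definition \ref{defn:sandwichedrenyi} that
\begin{equation}
S_{1/2}(\rho\|\sigma) = -2 \log \tr\!\left[\sigma^{1/2}\rho\,\sigma^{1/2}\right]^{1/2} = -2\log \tr\sqrt{\sigma^{1/2}\rho\,\sigma^{1/2}}~,
\end{equation}
and note that by cyclicity of the trace (or by a polar decomposition argument in the algebra), this equals $-2 \log \|\sqrt{\rho}\sqrt{\sigma}\|_1 = -2 \log F(\rho,\sigma)$. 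In other words, the standard fidelity is precisely $F(\rho,\sigma) = \exp\!\left(-\tfrac{1}{2} S_{1/2}(\rho\|\sigma)\right)$, so monotonicity of the sandwiched R\'enyi divergence at $\alpha = 1/2$ translates directly to $F(\varepsilon(\rho), \varepsilon(\sigma)) \ge F(\rho,\sigma)$.

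Next I would upgrade this statement from the standard fidelity $F$ to the generalized fidelity $F_*$ appearing in Definition \ref{def:purifieddistance}. Since $\varepsilon$ is trace-preserving by assumption, we have $\tr \varepsilon(\rho) = \tr \rho$ and $\tr \varepsilon(\sigma) = \tr \sigma$, so the extra term $\sqrt{(1-\tr\rho)(1-\tr\sigma)}$ in $F_*$ is invariant under $\varepsilon$. Combining this with the monotonicity of $F$ yields $F_*(\varepsilon(\rho), \varepsilon(\sigma)) \ge F_*(\rho,\sigma)$. Since $P(\rho,\sigma) = \sqrt{1 - F_*(\rho,\sigma)^2}$ is a monotonically decreasing function of $F_*$, this gives $P(\varepsilon(\rho), \varepsilon(\sigma)) \le P(\rho,\sigma)$, which is the desired inequality.

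The only potentially nontrivial step is verifying the identity $F = e^{-S_{1/2}/2}$ in the algebraic setting with a general trace, since one needs the fact that $\lVert X \rVert_1 = \tr \sqrt{X^\dagger X}$ still coincides with $\tr\sqrt{\sqrt\sigma\,\rho\,\sqrt\sigma}$ when $X = \sqrt{\rho}\sqrt{\sigma}$. This follows from the polar decomposition in finite-dimensional von Neumann algebras and the cyclicity of the trace. Once that identification is made, the corollary is essentially immediate from Lemma \ref{lem:MSR}.
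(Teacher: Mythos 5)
Your proof is correct and follows essentially the same route as the paper's: both reduce the claim to the monotonicity of the sandwiched R\'enyi divergence (Lemma \ref{lem:MSR}) at $\alpha = 1/2$ via the identity $F(\rho,\sigma) = e^{-S_{1/2}(\rho\|\sigma)/2}$. Your version is in fact slightly more careful than the paper's, since you explicitly handle the passage from the standard fidelity $F$ to the generalized fidelity $F_*$ using trace preservation, a step the paper leaves implicit.
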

\begin{proof}
    The fidelity between $\rho, \sigma$ is related to the sandwiched Renyi entropy for $\alpha = 1/2$ as $F(\rho, \sigma)= -S_{1/2} (\rho ||\sigma)$. Given the definition of the purified distance in \eqref{eq:purified_distance} in terms of the fidelity, we see that Lemma \ref{lem:MSR} implies the claim.
\end{proof}

\begin{thm}[Strong subadditivity]\label{thm:appssa}
    Let $\mathcal{M}_{A_0}$, $\mathcal{M}_{A_1}, \mathcal{M}_{B_0}$ and $\mathcal{M}_{B_1}$ be von Neumann algebras with corresponding traces acting on $\mathcal{H}$ with the following inclusion structure: $\mathcal{M}_{A_0} \supset \mathcal{M}_{B_0}\supset \mathcal{M}_{B_1}$ and $\mathcal{M}_{A_0} \supset \mathcal{M}_{A_1}\supset \mathcal{M}_{B_1}$. 
    Let $\tr_{A_0 \to A_1}: \mathcal{M}_{A_0} \to \mathcal{M}_{A_1}$ and $\tr_{B_0 \to B_1}: \mathcal{M}_{B_0} \to \mathcal{M}_{B_1}$ be partial traces such that the restriction $\tr_{A_0 \to A_1}\vert_{B_0}$ is a map $\tr_{A_0 \to A_1} \vert_{B_0}:\M_{B_0} \to \M_{B_1}$ and $\tr_{A_0 \to A_1} \vert_{B_0} \le \tr_{B_0 \to B_1}$. Then
    \begin{align}
   H_\mathrm{min}^{\varepsilon}(A_0|B_0) &\leq H_\mathrm{min}^{\varepsilon}(A_1|B_1)~, \label{eqn:Hminssa} \\ 
   S(A_0|B_0) &\le S(A_1|C_1)~,\\
    H_\mathrm{max}^{\varepsilon}(A_0|B_0) &\leq H_\mathrm{max}^{\varepsilon}(A_1|B_1)~. \label{eqn:Hmaxssa}
\end{align}
\end{thm}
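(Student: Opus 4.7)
My plan is to reduce the whole theorem to a single monotonicity principle: the sandwiched Renyi divergences $S_\alpha$ are monotone under CPTP maps (Lemma \ref{lem:MSR}), and the conditional entropies have the uniform variational characterization $H_\alpha(A|B) = \sup_{\sigma_B} -S_\alpha(\rho_A \| \sigma_B)$ from Definition \ref{defn:HalphaSalpha}. In view of Proposition \ref{prop:HinfHhalf} and $S(A|B) = H_1(A|B)$, it then suffices to establish the unsmoothed SSA $H_\alpha(A_0|B_0) \le H_\alpha(A_1|B_1)$ for $\alpha \in \{1/2, 1, \infty\}$ and afterwards upgrade to the smoothed versions.

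For the unsmoothed step, I fix any sub-normalized candidate $\sigma_{B_0}$ for the supremum defining $H_\alpha(A_0|B_0)$. The partial trace $\tr_{A_0 \to A_1}$ is CPTP by Theorem \ref{thm:parttracerestriction} and sends $\rho_{A_0} \mapsto \rho_{A_1}$ by uniqueness of partial traces, so Lemma \ref{lem:MSR} gives $S_\alpha(\rho_{A_0}\|\sigma_{B_0}) \ge S_\alpha(\rho_{A_1}\|\tr_{A_0 \to A_1}(\sigma_{B_0}))$. The hypothesis $\tr_{A_0 \to A_1}|_{B_0} \le \tr_{B_0 \to B_1}$ bounds the image by $\tilde\sigma_{B_1} := \tr_{B_0 \to B_1}(\sigma_{B_0})$, which is a bona fide sub-normalized density matrix on $\M_{B_1}$ since $\tr_{B_0 \to B_1}$ is trace-preserving. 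Monotonicity of $S_\alpha(\rho\|\cdot)$ in its second argument, which is elementary for our three values of $\alpha$ (from the definition for $\alpha = \infty$, from operator monotonicity of $\sqrt{\cdot}$ applied to $F(\rho,\sigma) = \tr\sqrt{\rho^{1/2}\sigma\rho^{1/2}}$ for $\alpha = 1/2$, and from operator monotonicity of $\log$ for $\alpha = 1$), then yields $S_\alpha(\rho_{A_0}\|\sigma_{B_0}) \ge S_\alpha(\rho_{A_1}\|\tilde\sigma_{B_1}) \ge -H_\alpha(A_1|B_1)$. Taking the supremum over $\sigma_{B_0}$ completes the unsmoothed SSA.

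The smoothed min-entropy inequality is then immediate. Letting $\rho^\varepsilon_{A_0}$ be the maximizer on the LHS, its image $\tr_{A_0 \to A_1}(\rho^\varepsilon_{A_0})$ lies in $\mathcal{B}^\varepsilon(\rho_{A_1})$ by contractivity of the purified distance under CPTP maps (the corollary to Lemma \ref{lem:MSR}), and the unsmoothed min-SSA applied with this image as the global state gives $H^\varepsilon_\mathrm{min}(A_0|B_0) \le H_\mathrm{min}(A_1|B_1)_{\tr(\rho^\varepsilon)} \le H^\varepsilon_\mathrm{min}(A_1|B_1)$.

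The smoothed max-entropy case is the main subtlety, since its smoothing is defined by an infimum and cannot be handled by a direct forward push. I plan to proceed by lifting rather than pushing: given an optimizer $\rho^\varepsilon_{A_1} \in \mathcal{B}^\varepsilon(\rho_{A_1})$ on the smaller algebra, I will construct $\rho^\varepsilon_{A_0} \in \mathcal{B}^\varepsilon(\rho_{A_0})$ with $\tr_{A_0 \to A_1}(\rho^\varepsilon_{A_0}) = \rho^\varepsilon_{A_1}$ using Uhlmann's theorem, choosing a purification of $\rho^\varepsilon_{A_1}$ whose overlap with a fixed purification of $\rho_{A_0}$ is maximal. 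Monotonicity of the purified distance under tracing out the purifier keeps $\rho^\varepsilon_{A_0}$ within the $\varepsilon$-ball; applying the unsmoothed max-SSA to this lifted state gives $H_\mathrm{max}(A_0|B_0)_{\rho^\varepsilon_{A_0}} \le H_\mathrm{max}(A_1|B_1)_{\rho^\varepsilon_{A_1}}$, and taking the infimum on the right yields the desired inequality. The delicate point is ensuring that the purifying isometry is compatible with the algebraic structure so that $\rho^\varepsilon_{A_0}$ really is a density matrix on $\M_{A_0}$; this can be arranged using the isometric-dilation construction of Lemma \ref{lem:5.2}. Alternatively, one can bypass the lifting by applying the smoothed duality of Theorem \ref{thm:smoothedduality} to reduce the max-SSA to the smoothed min-SSA on the commutant algebras, after checking that complementary partial traces inherit the compatibility hypothesis via complementarity of traces (Definition \ref{def:comptrace}) and the bi-module property.
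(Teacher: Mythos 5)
Your proof is correct, and while it rests on the same engine as the paper's---monotonicity under the CPTP partial trace $\tr_{A_0\to A_1}$ combined with the ordering hypothesis $\tr_{A_0\to A_1}\vert_{B_0}\le \tr_{B_0\to B_1}$---it packages the unsmoothed step differently and is noticeably more careful about the smoothed max-entropy. The paper proves the unsmoothed min-inequality directly from the operator-order definition (apply the positive map $\tr_{A_0\to A_1}$ to $\rho_{A_0}\le e^{\lambda_0}\sigma_{B_0}$), obtains the max case by duality, and treats the von Neumann case separately via relative-entropy monotonicity plus operator monotonicity of $\log$; your uniform variational argument through $H_\alpha(A|B)=\sup_{\sigma_B}-S_\alpha(\rho_A\|\sigma_B)$ handles $\alpha\in\{1/2,1,\infty\}$ in one pass, at the price of needing second-argument monotonicity of $S_\alpha$ for each of the three values, which you correctly supply (and your observation that $\tilde\sigma_{B_1}$ remains sub-normalized because $\tr_{B_0\to B_1}$ is trace-preserving is the right normalization check). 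The smoothed min step is identical in both treatments. For the smoothed max, the paper disposes of the issue in a single sentence (``use duality for the smoothed conditional entropies''), which implicitly requires the dual partial traces on the commutant algebras to satisfy the mirrored ordering hypothesis---a point the paper does not verify; your Uhlmann-lifting alternative (purify the max-optimizer $\rho^\varepsilon_{A_1}$ so as to maximize overlap with the global state, reduce to $\mathcal{M}_{A_0}$, and note that monotonicity of the purified distance keeps the lift in the $\varepsilon$-ball while uniqueness of partial traces guarantees it pushes forward to $\rho^\varepsilon_{A_1}$) is self-contained and sidesteps that issue entirely, so your instinct that this is the delicate step is well placed. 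The only caveat is the one you already flag: the purification of a subnormalized $\rho^\varepsilon_{A_1}$ may require an extension of the Hilbert space, and the algebra extensions must be arranged as in the paper's Lemma~\ref{lem:5.2}/Lemma~\ref{lem:5.3} so that the lifted operator is genuinely a density matrix on $\mathcal{M}_{A_0}$.
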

\begin{proof}
    We begin by proving the statements for $\varepsilon =0$. Using Definition \ref{defn:conditional_entropies}, we have that the equation 
    \begin{align}
        H_{\mathrm{min}}(A_0|B_0) = -\min_{\sigma_{B_0}: \tr_{B_0}[\sigma_{B_0}] \le 1} \inf \lbrace \lambda: \rho_{A_0} \leq e^{\lambda} \sigma_{B_0}\rbrace~.
    \end{align}
    Let $\sigma_{B_0}$ and $\lambda_0$ be such that
    \begin{align}
        \rho_{A_0} \leq e^{\lambda_0} \sigma_{B_0}~,
    \end{align}
    where $\lambda_0 = - H_\mathrm{min}(A_0|B_0)$.
    Then by the fact that $\tr_{A_0 \to A_1}$ is a partial trace, it holds that
    \begin{align}
        \rho_{A_1} = \tr_{A_0 \to A_1}[\rho_{A_0}] \leq e^{\lambda_0} \tr_{A_0 \to A_1}[\sigma_{B_0}] \le e^{\lambda_0} \tr_{B_0 \to B_1}[\sigma_{B_0}] = e^{\lambda_0} \sigma_{B_1}~.
    \end{align}
    Therefore $H_\mathrm{min}(A_1|B_1) \ge -\lambda_0$, proving \ref{eqn:Hminssa}.
    To prove inequality \eqref{eqn:Hmaxssa}, we can just use the inequality for $H_{\mathrm{min}}$ together with statement of duality, Theorem \ref{thm:duality}.

    Finally, to prove strong sub-additivity for the von Neumann conditional entropy, we just use that we can write the conditional entropy in terms of the relative entropy as 
    \begin{align}
    S(A_0|B_0) = -S_{\mathrm{rel}}(\rho_{A_0} || \rho_{B_0}) := -\tr_{A_0} [\rho_{A_0} \log \rho_{A_0}] + \tr_{A_0} [\rho_{A_0} \log \rho_{B_0}]~,
    \end{align}
    where $\rho_{B_0} \in \M_{B_0}$ is a viewed as an operator on $\M_{A_0}$ via the standard inclusion of $\M_{B_0} \subset \M_{A_0}$. 
    Since $\tr_{A_0 \to A_1}$ is a completely-positive map by assumption of being a partial trace, it follows that Lemma \ref{lem:MSR} applied in the limit of $\alpha \to 1$ gives
    \begin{align}
        S_{\mathrm{rel}}(\rho_{A_0}||\rho_{B_0}) \geq S_{\mathrm{rel}}(\tr_{A_0 \to A_1}[\rho_{A_0}]|| \tr_{A_0 \to A_1}[\rho_{B_0}]) = S_{\mathrm{rel}}(\rho_{A_1}|| \tr_{A_0 \to A_1}[\rho_{B_0}])~.
    \end{align}
    Now by assumption 
    \begin{align}
        \tr_{A_0 \to A_1}(\rho_{B_0}) \leq \tr_{B_0 \to B_1}(\rho_{B_0}) = \rho_{B_1}~,
    \end{align}
    and so because the log function is an operator monotone,
    \begin{align}
    \log \tr_{A_0 \to A_1}(\rho_{B_0}) \leq \log \rho_{B_1}~.
    \end{align}
    Plugging this in above, we get the desired inequality
    \begin{align}
   S(A_0|B_0) = -S_{\mathrm{rel}}(\rho_{A_0}||\rho_{B_0}) \leq -S_{\mathrm{rel}}(\rho_{A_1}||\rho_{B_1}) = S(A_1|B_1)~.
    \end{align}

    We now prove the statements for $\varepsilon >0$. Let $\rho_{A_0}^* \in \mathcal{B}^{\varepsilon}(\rho_{A_0})$ be a density matrix which optimizes the min-entropy.
    Because the purified distance is monotonically decreasing under completely positive maps \cite{tomamichel2013framework}, it holds that $\tr_{A_0 \to A_1}(\rho_{A_0}^*) \in \mathcal{B}^{\varepsilon}(\rho_{A_1})$ and so we have the inequality
    \begin{align}
    H_{\mathrm{min}}^{\varepsilon}(A_0|B_0)_{\rho} = H_{\mathrm{min}}(A_0|B_0)_{\rho^*} \leq H^{\varepsilon}_{\mathrm{min}}(A_1|B_1)_{\rho}~.
    \end{align}
    The use of duality for the smoothed conditional entropies then proves the corresponding inequality for the max-entropy.
    
\end{proof}

\section{State-specific reconstruction for algebras}\label{app:SSR}

In this appendix we give a formal definition of state-specific reconstruction for finite-dimensional von Neumann algebras and prove some basic results about it. The definitions given here are based heavily on those of \cite{Akers:2021fut} which gave an in-depth discussion of state-specific reconstruction for tensor product Hilbert spaces. We will focus here on the formal details of the generalization to algebras with centers, and refer readers to \cite{Akers:2021fut} for detailed motivation and discussion.

\begin{defn}[Haar unitaries on algebras]
    Let $\mathcal{M}_A$ be a finite-dimensional von Neumann algebra. We say that a unitary $U_A \in \mathcal{M}_A$ is Haar random if
    \begin{align}
    U_A = \oplus_\alpha U_{A_\alpha}
    \end{align}
    with $U_{A_\alpha}$ independently sampled Haar random unitaries on $\mathcal{H}_{A_\alpha}$ (with $\mathcal{H}_{A_\alpha}$ defined as in Theorem \ref{thm:classification}).
\end{defn}

\begin{defn} \label{defn:W_A}
    Let $\mathcal{M}_A$ be a finite-dimensional von Neumann algebra acting on a Hilbert space $\mathcal{H}$ and let $\mathcal{H}_{U_A}$ be the space of square-integrable functions on the group of unitaries $U_A \in \mathcal{M}_A$. We define the isometry $W_A: \mathcal{H} \to \mathcal{H} \otimes \mathcal{H}_{U_A}$ by 
    \begin{align}
    W_A := \int dU_A \ket{U_A}_{U_A} \otimes U_A~,
    \end{align}
    and $dU_A$ is the Haar measure normalized to $\int dU_A = 1$.
\end{defn}

\begin{rem}
    The isometry $W_A$ commutes with $\mathcal{M}_A'$ and hence maps $\mathcal{M}_A$ into $\mathcal{M}_A \otimes \mathcal{L}(\mathcal{H}_{U_A})$ in the sense of Definition \ref{defn:isometry}.
\end{rem}

\begin{lem} \label{lem:peterweyl}
We have $\mathcal{H}_{U_A} \cong \otimes_\alpha \mathcal{H}_{U_{A_\alpha}}$ where $\mathcal{H}_{U_{A_\alpha}}$ is the Hilbert space of square-integrable functions on the unitary group on $\mathcal{H}_{A_\alpha}$. $\mathcal{H}_{U_{A_\alpha}}$ can be decomposed using Peter-Weyl duality as 
\begin{align}
    \mathcal{H}_{U_{A_\alpha}} \cong \bigoplus_\mu \left(\mathcal{H}_\mu \otimes \mathcal{H}_\mu^*\right)~,
\end{align}
where $\{\mu\}$ is the set of irreducible representation of the unitary group on $\mH_{A_{\alpha}}$ and $\mathcal{H}_\mu$ is the Hilbert space on which the representation $\mu$ acts.
Given $\ket{\psi_\alpha} \in \mathcal{H}_{A_\alpha} \otimes \mathcal{H}_{A'_\alpha}$, we have
\begin{align}
W_A \ket{\psi_\alpha} =  \left(\otimes_{\widetilde{\alpha} \neq \alpha} \ket{0}_{U_{A_{\widetilde{\alpha}}}} \right)O_\mathrm{SWAP} \ket{\psi_\alpha} \ket{\mathrm{MAX}}_{U_{A_{\alpha}}},
\end{align}
where $\ket{0}_{U_{A_{\widetilde{\alpha}}}}$ is the trivial representation state in $\mathcal{H}_{U_{A_{\widetilde{\alpha}}}}$, $\ket{\mathrm{MAX}}_{U_{A_{\alpha}}}$ is the canonical maximally entangled state in $\mathcal{H}_{\mu_0} \otimes \mathcal{H}_{\mu_0}^* \subset \mathcal{H}_{U_{A_\alpha}}$ with $\mu_0$ the fundamental representation, and $O_\mathrm{SWAP}$ swaps $\mathcal{H}_{A_\alpha}$ with $\mathcal{H}_{\mu_0}$.
\end{lem}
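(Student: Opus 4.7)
My proof plan proceeds in three movements: first reduce the global integral over unitaries to a product of integrals on each sector, then identify $\mathcal{H}_{U_{A_\alpha}}$ via Peter-Weyl, and finally compute $W_A \ket{\psi_\alpha}$ using Schur orthogonality.

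By the structure theorem (Theorem \ref{thm:classification}), the group of unitaries in $\mathcal{M}_A$ factors as $U(\mathcal{M}_A) = \prod_\alpha U(\mathcal{H}_{A_\alpha})$, and the normalized Haar measure on $U(\mathcal{M}_A)$ factors correspondingly as a product of normalized Haar measures on each $U(\mathcal{H}_{A_\alpha})$. Since $L^2$ of a product of compact groups with product measure is canonically the Hilbert tensor product of the individual $L^2$ spaces, I obtain $\mathcal{H}_{U_A} \cong \bigotimes_\alpha \mathcal{H}_{U_{A_\alpha}}$. The Peter-Weyl decomposition $\mathcal{H}_{U_{A_\alpha}} \cong \bigoplus_\mu (\mathcal{H}_\mu \otimes \mathcal{H}_\mu^*)$ is then just the standard Peter-Weyl theorem applied to the compact Lie group $U(\mathcal{H}_{A_\alpha})$, with the isomorphism sending $\ket{n,\mu,m}$ to the matrix-coefficient function $U \mapsto \sqrt{d_\mu}\, \mu(U)^n_{\,m}$.

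For the identity for $W_A \ket{\psi_\alpha}$, I will first note that since $\ket{\psi_\alpha} \in p_\alpha \mathcal{H}$, the only factor of $U_A = \oplus_{\widetilde\alpha} U_{A_{\widetilde\alpha}}$ that acts nontrivially is $U_{A_\alpha}$. The factored Haar measure then lets me separate the integral:
\begin{equation}
W_A \ket{\psi_\alpha} = \left(\bigotimes_{\widetilde\alpha \neq \alpha} \int dU_{A_{\widetilde\alpha}} \ket{U_{A_{\widetilde\alpha}}}\right) \otimes \int dU_{A_\alpha}\, \ket{U_{A_\alpha}}\, U_{A_\alpha} \ket{\psi_\alpha}.
\end{equation}
The integral $\int dU_{A_{\widetilde\alpha}} \ket{U_{A_{\widetilde\alpha}}}$ is, in the Peter-Weyl basis, precisely the normalized constant function on the group, which is the trivial-representation vector $\ket{0}_{U_{A_{\widetilde\alpha}}}$. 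For the $\alpha$-factor I expand $\ket{\psi_\alpha} = \sum_{ij} c_{ij} \ket{i}_{A_\alpha} \ket{j}_{A'_\alpha}$ and expand the integrand in the Peter-Weyl basis for $\mathcal{H}_{U_{A_\alpha}}$. Schur orthogonality,
\begin{equation}
\int dU\, \overline{\mu(U)^n_{\,m}}\, U^k_{\,i} = \frac{1}{d_{\mu_0}} \delta^{\mu,\mu_0} \delta^{n,k} \delta_{m,i},
\end{equation}
kills every irrep except the fundamental $\mu_0$, and a short index-tracking calculation gives the claimed form with $\ket{\mathrm{MAX}}_{U_{A_\alpha}} = d_{\mu_0}^{-1/2} \sum_n \ket{n,\mu_0}\ket{n,\mu_0}^*$ and $O_\mathrm{SWAP}$ the exchange of $\mathcal{H}_{A_\alpha}$ with the $\mathcal{H}_{\mu_0}$ factor.

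The one subtle point is that the bra-ket notation $\ket{U}$ is not literally an element of $L^2(U(\mathcal{H}_{A_\alpha}))$; it is a distributional object. I will handle this either by interpreting the formal expression $\int dU \ket{U} \otimes U$ directly through its Peter-Weyl components (which is rigorous and is what the computation above amounts to), or equivalently by checking isometricity via $\langle W_A \psi | W_A \phi \rangle = \int dU \braket{\psi|U^\dagger U|\phi} = \braket{\psi|\phi}$ and regarding $W_A$ as defined by the Peter-Weyl expansion. The main bookkeeping obstacle is just keeping track of which Hilbert space factor is which when identifying the output with $O_\mathrm{SWAP}(\ket{\psi_\alpha}\otimes \ket{\mathrm{MAX}})$; this is essentially the observation that the fundamental-representation sector of $L^2(U(\mathcal{H}_{A_\alpha}))$ carries a canonical copy of $\mathcal{H}_{A_\alpha} \otimes \mathcal{H}_{A_\alpha}^*$, and that pushing $U_{A_\alpha}\ket{\psi_\alpha}$ into this sector via Schur orthogonality produces the maximally entangled state together with a swap.
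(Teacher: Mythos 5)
Your proposal is correct and is essentially the argument the paper relies on: the paper's own proof simply cites Lemma 4.4 of \cite{Akers:2021fut} for the Peter--Weyl/Schur-orthogonality computation on a single sector and notes that the only new ingredient is the $\otimes_{\widetilde\alpha\neq\alpha}\ket{0}_{U_{A_{\widetilde\alpha}}}$ factor coming from the trivial action of the other sectors' unitaries on $\ket{\psi_\alpha}$ --- which is exactly how you handle it. Writing out the Haar-measure factorization, the projection onto the fundamental representation, and the identification with $O_\mathrm{SWAP}(\ket{\psi_\alpha}\otimes\ket{\mathrm{MAX}})$ in full is fine; just be careful that the labeling of $\mathcal{H}_{\mu_0}$ versus $\mathcal{H}_{\mu_0}^*$ in your Peter--Weyl convention matches the statement that $O_\mathrm{SWAP}$ exchanges $\mathcal{H}_{A_\alpha}$ with $\mathcal{H}_{\mu_0}$.
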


\begin{proof}
See the proof of Lemma 4.4 in \cite{Akers:2021fut}. The only novel ingredient here is the additional tensor product factor of $\otimes_{\tilde \alpha \neq \alpha} \ket{0}_{U_{A_{\tilde \alpha}}}$ which follows from the fact that $\ket{\psi_\alpha}$ is invariant under unitaries $U_{A_{\widetilde{\alpha}}}$ acting on $\mathcal{H}_{A_{\widetilde{\alpha}}}$ with $\tilde \alpha \neq \alpha$.
\end{proof}
Heuristically, we can think of $W_A$ as extracting all information from $\mathcal{M}_A$ into $\mathcal{H}_{U_A}$.

\begin{defn}[State-specific reconstruction]\label{defn:statespecificalgebra}
    Let $V: \mathcal{H}_\mathrm{code} \to \mathcal{H}_\mathrm{phys}$ be an isometry and let $\mathcal{M}_b \subseteq \mathcal{L}(\mathcal{H}_\mathrm{code})$ and $\mathcal{M}_B \subseteq \mathcal{L}(\mathcal{H}_\mathrm{phys})$ be finite-dimensional von Neumann algebras with commutants $\mathcal{M}_{b'} := \mathcal{M}_b'$ and $\mathcal{M}_{B'} := \mathcal{M}_B'$. 
    We say that $\mathcal{M}_B$ state-specifically reconstructs $\mathcal{M}_b$ for the state $\ket{\psi}$ with error $\varepsilon$ if there exists an isometry $W_B: \mathcal{H}_\mathrm{phys} \to \mathcal{H}_\mathrm{phys} \otimes \mathcal{H}_{U_b}$ mapping $\mathcal{M}_{B}$ to $\mathcal{M}_{B} \otimes \mathcal{L}(\mathcal{H}_{U_b})$ such that for all isometries $T_{b'}: \mathcal{H}_\mathrm{code} \to \mathcal{H}_\mathrm{code} \otimes \mathcal{H}_R$ mapping $\mathcal{M}_{b'}$ to $\mathcal{M}_{b'} \otimes \mathcal{L}(\mathcal{H}_R)$, 
    \begin{align} \label{eq:ssreqn}
    \lVert W_B V T_{b'}\ket{\psi} - V W_b T_{b'} \ket{\psi}\rVert \leq \varepsilon
    \end{align}
    with the isometry $W_b$ defined as in Definition \ref{defn:W_A}. 
\end{defn}

\begin{rem}
    We demand $W_B$ works for all $T_{b'}$ so that the reconstruction depends only on the state within $b$, and not on the state in $b'$.
    Indeed, if the reconstruction of $b$ is allowed to depend on the bulk state outside $b$, there exist known examples where a region $b$ that is larger than the max-EW can be completely reconstructed. See Section 7.3 of \cite{Akers:2020pmf}.
\end{rem}

\begin{rem}
    Definition \ref{defn:statespecificalgebra} (and Theorems \ref{thm:ssr->heis} and \ref{thm:heis->ssr} below) also extends to linear maps $V$ -- such as those studied in \cite{Akers:2022qdl} -- that are not isometric but that nonetheless approximately preserve the normalization of all relevant states.\footnote{For nonisometric codes, it is natural to restrict the isometry $T_{b'}$ to have subexponential complexity. Such a restriction does not materially affect either of the proofs below.}
\end{rem}

Definition \ref{defn:statespecificalgebra} may seem unfamiliar to readers used to definitions of bulk reconstruction involving reconstructing \emph{any} bulk operator with a boundary operator (as in e.g. \eqref{eq:simpleSSrecon}). The following two theorems connect these ideas, showing that being able to reconstruct the single isometry $W_b$ is (morally) equivalent to being able to reconstruct a large class of unitary operators $U_b$ with state-specific boundary unitaries $U_B$. 

It is worth emphasizing that, as a general rule, not all unitaries $U_b$ will be reconstructible even when state-specific reconstruction is possible. Intuitively this is because some $U_b$ make the max-EW smaller and thus exclude themselves from the reconstructible region. This is true even though all unitaries, $U_b$, are integrated over in the definition of $W_b$, and $W_b$ is, by definition, reconstructible! The consistency of these two statements depends crucially on the fact that the reconstruction of $W_b$ is only approximate; see \cite{Akers:2021fut} for detailed discussion of this point.

\begin{thm}[State-specific reconstruction of operators] \label{thm:ssr->heis}
Let $\mathcal{M}_b \subseteq \mathcal{L}(\mathcal{H}_\mathrm{code})$ be state-specifically reconstructible from $\mathcal{M}_B \subseteq \mathcal{L}(\mathcal{H}_\mathrm{phys})$ with error $\varepsilon$ for both the state $\ket{\psi}$ and the state  $U_b \ket{\psi}$ with $U_b \in \mathcal{M}_b$ unitary. 
Then there exists $U_B \in \mathcal{M}_B$ such that for all isometries $T_{b'}$, 
\begin{align}\label{eq:op_SSR}
\lVert U_B V T_{b'}\ket{\psi} - V U_b T_{b'} \ket{\psi} \rVert \le 2 \,\varepsilon + 2\, \varepsilon^{1/2}~.
\end{align}
\end{thm}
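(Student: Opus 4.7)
The plan is to extract $U_b$ from $W_b$ by acting on its $\mathcal{H}_{U_b}$ register, peel off the reconstructing isometries on both sides using state-specific reconstruction for both $\ket{\psi}$ and $U_b\ket{\psi}$, and then upgrade the resulting contraction to a unitary via polar decomposition.

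I would begin by introducing the right-translation unitary $\tilde U_b$ on $\mathcal{H}_{U_b}$ defined by $\tilde U_b \ket{U}_{U_b} := \ket{U U_b^\dagger}_{U_b}$; a change of Haar variables gives the key intertwining relation $\tilde U_b W_b = W_b U_b$. Since $T_{b'}$ commutes with $\mathcal{M}_b$ by Definition \ref{defn:isometry}, both $U_b$ and $W_b$ commute with $T_{b'}$ under the natural tensor-factor identifications. By hypothesis, state-specific reconstruction for $\ket{\psi}$ produces an isometry $W_B^{(1)}$ satisfying
\begin{equation*}
\bigl\|W_B^{(1)} V T_{b'}\ket{\psi} - V W_b T_{b'}\ket{\psi}\bigr\| \leq \varepsilon,
\end{equation*}
while reconstruction for $U_b\ket{\psi}$ produces (a possibly different) isometry $W_B^{(2)}$ satisfying
\begin{equation*}
\bigl\|W_B^{(2)} V T_{b'} U_b\ket{\psi} - V W_b T_{b'} U_b\ket{\psi}\bigr\| \leq \varepsilon.
\end{equation*}
Acting on the first inequality with $\mathds{1}_\mathrm{phys}\otimes \tilde U_b$ converts its right-hand side into $V W_b T_{b'} U_b\ket{\psi}$, which the second inequality then identifies with $W_B^{(2)} V T_{b'} U_b\ket{\psi}$ up to error $\varepsilon$. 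Hitting both sides with the contraction $(W_B^{(2)})^\dagger$, which inverts $W_B^{(2)}$ on its image, yields
\begin{equation*}
\bigl\|Q \, V T_{b'}\ket{\psi} - V T_{b'} U_b\ket{\psi}\bigr\| \leq 2\varepsilon, \qquad Q := (W_B^{(2)})^\dagger(\mathds{1}\otimes \tilde U_b) W_B^{(1)}.
\end{equation*}

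Next I would verify that $Q \in \mathcal{M}_B$ and $\|Q\|\leq 1$. Both $W_B^{(1)}$ and $W_B^{(2)}$ commute with $\mathcal{M}_{B'}$ by Definition \ref{defn:isometry}, and $\tilde U_b$ acts on a disjoint tensor factor, so $Q$ commutes with $\mathcal{M}_{B'}$ and hence lies in $\mathcal{M}_{B'}' = \mathcal{M}_B$; as a composition of (partial) isometries it is a contraction. To promote $Q$ to a unitary I would polar-decompose $Q = U P$ with $P = \sqrt{Q^\dagger Q}$ and $U$ a partial isometry, both in $\mathcal{M}_B$, and extend $U$ sector by sector using the decomposition of Theorem \ref{thm:classification} to a unitary $U_B \in \mathcal{M}_B$ with $U_B P = Q$. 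From $\|Q V T_{b'}\ket{\psi}\|\geq 1 - 2\varepsilon$ one gets $\|P V T_{b'}\ket{\psi}\|^2 \geq 1 - 4\varepsilon$, and the operator inequality $(\mathds{1}-P)^2 \leq \mathds{1}-P^2$ for $0\leq P \leq \mathds{1}$ then gives $\|(\mathds{1}-P) V T_{b'}\ket{\psi}\| \leq 2\sqrt{\varepsilon}$. Combining the identity $U_B V T_{b'}\ket{\psi} - Q V T_{b'}\ket{\psi} = U_B(\mathds{1}-P)V T_{b'}\ket{\psi}$ with the triangle inequality delivers $\|U_B V T_{b'}\ket{\psi} - V T_{b'} U_b\ket{\psi}\| \leq 2\varepsilon + 2\sqrt{\varepsilon}$.

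The main conceptual step is identifying $\tilde U_b$ as the ``Heisenberg lift'' of $U_b$ to the $W_b$ register, which is what allows two independent applications of state-specific reconstruction to be stitched into a single boundary operator realizing $U_b$. The main technical obstacle is the upgrade from the contraction $Q$ to a genuine unitary $U_B$; this is where the non-factor sector decomposition of $\mathcal{M}_B$ from Theorem \ref{thm:classification} is used in an essential way, and is responsible for the $\varepsilon^{1/2}$ contribution to the final error rather than a cleaner $O(\varepsilon)$ bound.
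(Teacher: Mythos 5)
Your proof is correct and follows essentially the same route as the paper's: an operator on $\mathcal{H}_{U_b}$ intertwining $W_b$ with $U_b$, sandwiched between the two reconstruction isometries to form a contraction in $\mathcal{M}_B$ obeying the $2\varepsilon$ bound, then upgraded to a unitary by polar decomposition with the same $2\varepsilon + 2\sqrt{\varepsilon}$ bookkeeping. The only cosmetic differences are that you realize the intertwiner as the right-translation unitary $\ket{U}\mapsto\ket{U U_b^\dagger}$ rather than the paper's Peter--Weyl operator $F_b$, and you control the polar-decomposition error via $(\mathds{1}-P)^2 \le \mathds{1}-P^2$ rather than the paper's $U_B^\dagger O_B \ge O_B^\dagger O_B$; both are equivalent.
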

\begin{proof}

Let $\mathcal{H}_{\mathrm{code}} \cong \oplus_\alpha (\mathcal{H}_{b,\alpha} \otimes \mathcal{H}_{b',\alpha})$ with $\mathcal{M}_b = \oplus_\alpha \mathcal{L}(\mathcal{H}_{b,\alpha})$. We have $U_b = \oplus_\alpha U_{b,\alpha}$. We define the unitary $F_{b,\alpha} \in \mathcal{L}(\mathcal{H}_{U_{b_\alpha}})$ to act as $U_{b,\alpha}^T$ on $\mathcal{H}^*_{\mu_0}$ within the fundamental representation sector and act trivially within all other sectors and define $F_b \in \mathcal{L}(\mathcal{H}_{U_{b}})$ by
\begin{align}
    F_b = \otimes_\alpha F_{b,\alpha}~.
\end{align}
It follows from Lemma \ref{lem:peterweyl} that
\begin{align} \label{eq:Fproperty}
    F_b W_b = W_b U_b~.
\end{align}
By assumption, 
\begin{alignat}{2}
    &\lVert W_B V T_{b'} \ket{\psi} - V W_b T_{b'} \ket{\psi} \rVert &&\le \varepsilon~, \label{eq:psi_rec}\\
    &\lVert \widetilde{W}_B V T_{b'} \ket{\psi} - V W_b U_b T_{b'} \ket{\psi} \rVert &&\le \varepsilon~,\label{eq:Upsi_rec}
\end{alignat}
for isometries $W_B$ and $\widetilde{W}_B$.
Now define $O_B := \widetilde{W}_B^\dagger F_b W_B$. By the triangle inequality, we have
\begin{equation}
\begin{split}
    \lVert O_B V T_{b'} \ket{\psi} - V U_b T_{b'} \ket{\psi} \rVert \le& \lVert O_B V T_{b'} \ket{\psi} - \widetilde{W}_B^\dagger F_b V W_b T_{b'} \ket{\psi} \rVert \\
    &+\lVert \widetilde{W}_B^\dagger F_b V W_b T_{b'} \ket{\psi} - \widetilde{W}_B^\dagger V W_b U_b T_{b'} \ket{\psi} \rVert \\
    &+ \lVert \widetilde{W}_B^\dagger V W_b U_b T_{b'} \ket{\psi} - V U_b T_{b'} \ket{\psi}  \rVert
\end{split}
\end{equation}
for any $T_{b'}$. The first term on the righthand side is upperbounded by $\varepsilon$ because of \eqref{eq:psi_rec} and the fact that $\lVert \widetilde{W}^\dagger_B F_b \rVert_\infty \le 1$~.
The second term vanishes because of \eqref{eq:Fproperty}.
Finally, the third term is upperbounded by $\varepsilon$ because of $\eqref{eq:Upsi_rec}$.
We conclude that
\begin{equation} \label{eq:O_Bbound}
    \lVert O_B V T_{b'} \ket{\psi} - V U_b T_{b'} \ket{\psi} \rVert \le 2 \varepsilon~.
\end{equation}
This is almost what we want, except that $O_B$ is not necessarily unitary.
However, we do have 
\begin{equation} \label{eq:OBnorm}
    \lVert O_B \rVert_\infty \le \lVert \widetilde{W}_B^\dagger \rVert_\infty \cdot \lVert F_b \rVert_\infty \cdot \lVert W_B \rVert_\infty \le  1~.
\end{equation} 
We can define a unitary $U_B \in \mathcal{M}_B$ by
\begin{align} \label{eq:UBdef}
U_B := O_B (O_B^\dagger O_B)^{-1/2}
\end{align}
as in the polar decomposition.\footnote{If $O_B$ is not invertible, we define $U_B$ to act as given in \eqref{eq:UBdef} on the support of $O_B^\dagger O_B$ and as the identity on the kernel of $O_B^\dagger O_B$ to ensure that it is unitary.} We then have 
\begin{align}
U_B^\dagger O_B = O_B^\dagger U_B = (O_B^\dagger O_B)^{1/2} \geq O_B^\dagger O_B~,
\end{align}
where the final inequality follows from \eqref{eq:OBnorm}. Hence
\begin{align}\label{eq:OB-UB}
    \lVert (U_B - O_B) V T_{b'} \ket{\psi}\rVert^2 \leq \lVert U_B V T_{b'} \ket{\psi}\rVert^2 - \lVert O_B V T_{b'} \ket{\psi}\rVert^2 \leq 4 \varepsilon~,
\end{align}
where in the second inequality we used the fact that $\lVert O_B V T_{b'} \ket{\psi}\rVert \geq 1 - 2 \varepsilon$ by \eqref{eq:O_Bbound}. The result then follows by applying the triangle inequality and \eqref{eq:OB-UB} to \eqref{eq:O_Bbound}.

\end{proof}

\begin{defn}[One-design]\label{defn:onedesign}
    A finite set $\mathcal{S} \subseteq \mathcal{M}_A$ of unitary matrices is said to form a one-design for $\mathcal{M}_A$ if
    \begin{align}
    \frac{1}{|\mathcal{S}|}\sum_{\widetilde{U}_A \in \mathcal{S}} P_{(1,1)}(\widetilde{U}_A) = \int dU_A P_{(1,1)}(U_A)
    \end{align}
    where $P_{(1,1)}(U_A)$ is any polynomial of degree at most one in the matrix elements of $U_A$ and at most one in the matrix elements of $U_A^*$, $d U_A$ is the Haar measure on unitaries in $\mathcal{M}_A$ normalized to $\int dU_A = 1$, and $|\mathcal{S}|$ is the size of the set $\mathcal{S}$.
\end{defn}

\begin{rem}
    Let the set $\mathcal{S}_\alpha$ form a one-design for $\mathcal{L}(\mathcal{H}_{A_\alpha})$ (e.g. the generalized Pauli group on $\mathcal{H}_{A_\alpha}$) for each $\alpha$-sector in $\mathcal{M}_A$. Then the set $$\mathcal{S} = \{ \oplus_\alpha \widetilde U_{A_\alpha} :\,\, \widetilde U_{A_\alpha} \in \mathcal{S}_\alpha\}$$ forms a one-design for $\mathcal{M}_A$.
\end{rem}

\begin{rem}
    If $\mathcal{M}_A \cong \mathcal{M}_{A_1} \otimes \dots \mathcal{M}_{A_n}$, the set of product unitaries $U_{A_1} \otimes \dots U_{A_n}$ forms a one-design for $\mathcal{M}_A$.\footnote{Since the set of product unitaries is infinite, it really only satisfies a slight generalization of Definition \ref{defn:onedesign} where the uniform measure on a finite set is replaced by the Haar measure on product unitaries. A true example of a finite one-design satisfying Definition \ref{defn:onedesign} as written is given by the set of products of elements of a one-design for each algebra $\mathcal{M}_{A_i}$.} This set (with the algebras $\mathcal{M}_{A_i}$ each describing operators at a local bulk site) played a central role in \cite{Akers:2021fut} because such operators cannot change the entanglement structure -- and hence the max-EW -- of the state and therefore should always be reconstructible.
\end{rem}

\begin{thm}[One-design reconstruction]\label{thm:heis->ssr}
Let $\mathcal{S} \subseteq \mathcal{M}_b \subseteq \mathcal{L}(\mathcal{H}_\mathrm{code})$ form a unitary one-design for $\mathcal{M}_b$. If for a state $\ket{\psi}$ and every $U_b \in \mathcal{S}$, there exists $U_B \in \mathcal{M}_B$ such that for all $T_{b'}$
\begin{align} \label{eq:UBdef}
    \lVert U_B V T_{b'} \ket{\psi} - V U_b T_{b'} \ket{\psi} \rVert \leq \varepsilon~,
\end{align}
then $\mathcal{M}_b$ can be state-specifically reconstructed from $\mathcal{M}_B$ with error $\varepsilon$ for the state $\ket{\psi}$.
\end{thm}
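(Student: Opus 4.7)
The strategy is to build $W_B$ from a discrete analog $\widetilde{W}_B$ of $W_b$ constructed out of the one-design reconstructions $\widetilde{U}_B$, and then correct for the mismatch between the discrete sum and the Haar integral by composing with a single isometry $Y$ on $\mathcal{H}_{U_b}$ produced by Uhlmann's theorem. Two facts make this work: (i) the one-design property forces the reduced density matrix of the ``bulk discrete'' analog on $\mathcal{H}_\mathrm{code}$ to equal that of $W_b \ket{\psi}$; and (ii) both $W_b$ and its discrete analog commute with every isometry $T_{b'}$, which lets the Uhlmann unitary be chosen once from $\ket{\psi}$ alone and reused across all $T_{b'}$.

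Concretely, fix an orthonormal family $\{\ket{\widetilde{U}_b}_{U_b}\}_{\widetilde{U}_b \in \mathcal{S}}$ inside the infinite-dimensional space $\mathcal{H}_{U_b}$ and introduce the parallel isometries
$$\Xi := \frac{1}{\sqrt{|\mathcal{S}|}} \sum_{\widetilde{U}_b \in \mathcal{S}} \widetilde{U}_b \otimes \ket{\widetilde{U}_b}_{U_b}, \qquad \widetilde{W}_B := \frac{1}{\sqrt{|\mathcal{S}|}} \sum_{\widetilde{U}_b \in \mathcal{S}} \widetilde{U}_B \otimes \ket{\widetilde{U}_b}_{U_b}.$$
A polar decomposition as in the proof of Theorem \ref{thm:ssr->heis} allows us to take each $\widetilde{U}_B \in \mathcal{M}_B$ to be unitary, at the cost of at most an $O(\varepsilon^{1/2})$ inflation of the error constant, after which both $\Xi$ and $\widetilde{W}_B$ are isometries and $\widetilde{W}_B$ manifestly commutes with $\mathcal{M}_B'$. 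The one-design identity $|\mathcal{S}|^{-1} \sum_{\widetilde{U}_b \in \mathcal{S}} \widetilde{U}_b \ket{\psi}\!\bra{\psi} \widetilde{U}_b^\dagger = \int dU_b \, U_b \ket{\psi}\!\bra{\psi} U_b^\dagger$ (which holds because the summand is a degree-$(1,1)$ polynomial in the matrix elements of $U_b$ and $U_b^*$) shows that $\Xi \ket{\psi}$ and $W_b \ket{\psi}$ are purifications on $\mathcal{H}_{U_b}$ of the same state on $\mathcal{H}_\mathrm{code}$. Uhlmann's theorem then supplies a unitary $Y$ on $\mathcal{H}_{U_b}$ with $(\mathds{1}_\mathrm{code} \otimes Y) \Xi \ket{\psi} = W_b \ket{\psi}$, and I set $W_B := (\mathds{1}_B \otimes Y) \widetilde{W}_B$, which remains an isometry that commutes with $\mathcal{M}_B'$.

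The error bound then follows by a direct computation. Since $\Xi$ and $W_b$ are both built from $\mathcal{M}_b$, they commute with every $T_{b'}$, giving $\Xi T_{b'}\ket{\psi} = T_{b'} \Xi \ket{\psi}$ and $W_b T_{b'}\ket{\psi} = T_{b'} W_b \ket{\psi}$. Applying the hypothesis term-by-term yields
$$\lVert \widetilde{W}_B V T_{b'}\ket{\psi} - V \Xi T_{b'}\ket{\psi}\rVert^2 = \frac{1}{|\mathcal{S}|} \sum_{\widetilde{U}_b \in \mathcal{S}} \lVert (\widetilde{U}_B V - V \widetilde{U}_b) T_{b'}\ket{\psi}\rVert^2 \leq \varepsilon^2,$$
and composing on the left with $(\mathds{1}_B \otimes Y)$, commuting $Y$ past $V$ and $T_{b'}$ (which act on different tensor factors), and invoking the Uhlmann identity converts this into $\lVert W_B V T_{b'}\ket{\psi} - V W_b T_{b'}\ket{\psi}\rVert \leq \varepsilon$, as required.

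The main conceptual obstacle I anticipate is justifying that a single Uhlmann unitary $Y$ really suffices for every $T_{b'}$. A naive application of Uhlmann to $V W_b T_{b'}\ket{\psi}$ versus $\widetilde{W}_B V T_{b'}\ket{\psi}$ would produce a $Y^{T_{b'}}$ that depends explicitly on the chosen $T_{b'}$; the resolution is exactly the commutation argument above, which pushes the $T_{b'}$-dependence outside of $Y$ and reduces the required Uhlmann match to the level of $\ket{\psi}$, where a single unitary exists. The remaining items---existence of the orthonormal family inside $\mathcal{H}_{U_b}$, and the polar-decomposition step needed to secure unitarity of $\widetilde{U}_B$---are routine.
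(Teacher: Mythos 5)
Your proposal is correct and follows essentially the same route as the paper: build the discrete analogue of $W_b$ over the one-design, use the one-design identity to show it purifies the same reduced state as $W_b$, lift the resulting equivalence of purifications to an isometry on the register, and add the per-element errors in quadrature using orthonormality of the register states. The only (valid) variation is that the paper obtains its intertwiner $W_{\mathcal{S}}$ as an operator identity $W_{\mathcal{S}}\widetilde{W}_b = W_b$ by evaluating on a maximally entangled state, whereas you obtain $Y$ from Uhlmann applied to $\ket{\psi}$ alone and then extend to all $T_{b'}\ket{\psi}$ by commuting $T_{b'}$ past $\Xi$, $W_b$, and $Y$ --- this works, and your polar-decomposition caveat is unnecessary under the intended reading that the hypothesis' $U_B$ are unitary (without that reading your final bound would degrade to $O(\varepsilon^{1/2})$ rather than the stated $\varepsilon$).
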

\begin{proof}
Define $\mathcal{H}_{\mathcal{S}}$ to be the Hilbert space spanned by the orthonormal basis $\{\ket{\widetilde U_b} : \,\widetilde U_b \in \mathcal{S}\}$. We define the isometry $\widetilde{W}_b: \mathcal{H}_\mathrm{code} \to \mathcal{H}_\mathrm{code} \otimes \mathcal{H}_{\mathcal{S}}$ by
\begin{align}
    \widetilde{W}_b = \frac{1}{\sqrt{|\mathcal{S}|}} \sum_{\widetilde U_b \in \mathcal{S}} \ket{\widetilde U_b}_\mathcal{S}  \otimes \widetilde U_b~.
\end{align}
Let $\ket{\mathrm{MAX}} \in \mathcal{H}_\mathrm{code} \otimes \mathcal{H}_R$ be maximally entangled. We have, by the definition of a unitary one-design,
\begin{align}
\Tr_{\mathcal{S}}[\widetilde{W}_b \ket{\mathrm{MAX}} \!\! \bra{\mathrm{MAX}} \widetilde{W}_b^\dagger] &= \frac{1}{|\mathcal{S}|}  \sum_{\widetilde U_b \in \mathcal{S}} \widetilde U_b \ket{\mathrm{MAX}} \!\! \bra{\mathrm{MAX}} \widetilde U_b^\dagger
\\&= \int dU_b U_b \ket{\mathrm{MAX}} \!\! \bra{\mathrm{MAX}} U_b^\dagger
\\&= \Tr_{U_b}[W_b \ket{\mathrm{MAX}} \!\! \bra{\mathrm{MAX}} W_b^\dagger]~.
\end{align}
Because all purifications are related by an isometry, it follows that there exists an isometry $W_\mathcal{S}: \mathcal{H}_\mathcal{S} \to \mathcal{H}_{U_b}$ such that
\begin{align}
W_\mathcal{S} \widetilde{W}_b \ket{\mathrm{MAX}}= W_b \ket{\mathrm{MAX}}~.
\end{align}
This in turn implies $W_\mathcal{S} \widetilde{W}_b = W_b$. With this result in hand, we can define
\begin{align}
W_B := \frac{1}{\sqrt{|\mathcal{S}|}} \sum_{\widetilde U_b \in \mathcal{S}} W_\mathcal{S} \ket{\widetilde U_b}_\mathcal{S} \otimes \widetilde U_B
\end{align}
where $\widetilde U_B$ satisfies \eqref{eq:UBdef} for $\widetilde U_b$. We then have
\begin{align}
    \lVert W_B V T_{b'}\ket{\psi} - V W_b T_{b'} \ket{\psi}\rVert &= \lVert W_B V T_{b'}\ket{\psi} - W_\mathcal{S} V \widetilde{W}_b \ket{\psi}\rVert
    \\&= \frac{1}{\sqrt{|\mathcal{S}|}}\sum_{\widetilde U_b \in \mathcal{S}} \lVert \widetilde U_B V T_{b'}\ket{\psi} - V \widetilde U_b T_{b'} \ket{\psi}\rVert \\&\leq \varepsilon~.
\end{align}
\end{proof}

\bibliographystyle{jhep}
\bibliography{mybib2021}

\end{document}